\documentclass[11pt,a4paper]{article}
\usepackage[margin=1in]{geometry}
\usepackage[latin1]{inputenc}
\usepackage[T1]{fontenc}   
\usepackage[english]{babel} % Language hyphenation and typographical rules
\usepackage{enumerate}
\usepackage{soul}

\usepackage{appendix}
\usepackage[ruled,linesnumbered]{algorithm2e} % https://tug.ctan.org/macros/latex/contrib/algorithm2e/doc/algorithm2e.pdf
\usepackage[noend]{algpseudocode}
\usepackage{xcolor}
\usepackage{listings}
\usepackage{csvsimple}
\usepackage{graphicx}

\usepackage{amsmath, amssymb, amscd, amsthm, amsfonts}
\usepackage{thm-restate}
\usepackage[colorlinks=true, allcolors=blue]{hyperref}
\usepackage{tikz}
\usetikzlibrary{positioning,shapes.geometric,calc,math}
% Define the command for drawing tangent lines
% Parameters:
% #1: x-coordinate of circle center
% #2: y-coordinate of circle center
% #3: radius of circle
% #4: x-coordinate of external point
% #5: y-coordinate of external point
% #6: options for drawing (color, style, etc.)
\newcommand{\drawTangentLines}[6]{
    \begin{scope}
        % Store the points
        \coordinate (C) at (#1,#2);  % Circle center
        \coordinate (P) at (#4,#5);  % External point
        
        % Calculate distances and angles
        \pgfmathparse{sqrt((#4-#1)^2 + (#5-#2)^2)}
        \let\d\pgfmathresult
        
        \pgfmathparse{acos(#3/\d)} 
        \let\tangentangle\pgfmathresult
        
        \pgfmathparse{atan2(#5-#2,#4-#1)}
        \let\baseangle\pgfmathresult
        
        % Calculate and store the two angles for tangent points
        \pgfmathparse{\baseangle + \tangentangle}
        \let\angleOne\pgfmathresult
        \pgfmathparse{\baseangle - \tangentangle}
        \let\angleTwo\pgfmathresult
        
        % Draw tangent points and lines
        \coordinate (T1) at ($(C) + (\angleOne:#3)$);
        \coordinate (T2) at ($(C) + (\angleTwo:#3)$);
        
        % Draw the lines with given style
        \draw[#6] (P) -- (T1);
        \draw[#6] (P) -- (T2);
        
        % Draw small dots at tangent points
        % \fill[#6] (T1) circle (1pt);
        % \fill[#6] (T2) circle (1pt);
    \end{scope}
}
\usepackage{cellspace}
\usepackage{cleveref}
\crefname{enumi}{step}{steps}
\usepackage{xspace}
\usepackage{csquotes}
\usepackage{enumitem}

\usepackage{multirow}

\setlist{itemsep=1pt}

\usepackage{natbib}

\usepackage{pifont}% http://ctan.org/pkg/pifont
%
%

% \usepackage{booktabs}
% % \usepackage{ragged2e}
% %\usepackage{caption}
% \usepackage{framed}
% \usepackage[framemethod=tikz]{mdframed}

\newtheorem{theorem}{Theorem}[section]

\newtheorem{lemma}[theorem]{Lemma}

\newtheorem{observation}[theorem]{Observation}

\newtheorem{corollary}[theorem]{Corollary}

\newtheorem{question}{Open Question}

\newtheorem{definition}[theorem]{Definition}

\lstset{
columns=flexible,
breaklines=true
}

\usepackage[colorinlistoftodos,prependcaption,textsize=tiny]{todonotes}

\newcommand{\Detect}{\textnormal{\textsc{Detect}}}
\newcommand{\List}{\textnormal{\textsc{List}}}
\newcommand{\MemDetect}{\textnormal{\textsc{MemDetect}}}
\newcommand{\MemList}{\textnormal{\textsc{MemList}}}
\newcommand{\No}{\textnormal{\textsc{No}}}
\newcommand{\Yes}{\textnormal{\textsc{Yes}}}

\newcommand{\SIGNAL}{\textnormal{\texttt{Signal}}}

\newcommand{\Del}{\textnormal{\texttt{Del}}}

\newcommand{\create}{\mathsf{create}}
\newcommand{\firstDiff}{\mathsf{1\textit{-}DistinctBit}}
\newcommand{\diffList}{\mathsf{1\textit{-}DBList}}
\newcommand{\ecc}{\operatorname{ecc}}
\newcommand{\dist}{\operatorname{dist}}
\newcommand{\rad}{\operatorname{rad}}
\newcommand{\diam}{\operatorname{diam}}
\newcommand{\centre}{\operatorname{center}}
\newcommand{\Count}{\operatorname{Count}}
\newcommand{\nerad}{\widetilde{\operatorname{rad}}}
\newcommand{\nediam}{\widetilde{\operatorname{diam}}}
\newcommand{\necenter}{\widetilde{\operatorname{center}}}
\newcommand{\neecc}{\widetilde{\operatorname{ecc}}}
\newcommand{\ID}{\operatorname{ID}}
\newcommand{\poly}{\operatorname{poly}}

\newcommand{\tlast}{t_{\mathsf{final}}}

\newcommand{\LOCAL}{\mathsf{LOCAL}}
\newcommand{\CONGEST}{\mathsf{CONGEST}}

\usepackage{nameref}

\makeatletter
\let\orgdescriptionlabel\descriptionlabel
\renewcommand*{\descriptionlabel}[1]{%
  \let\orglabel\label
  \let\label\@gobble
  \phantomsection
  \edef\@currentlabel{#1\unskip}%
  \let\label\orglabel
  \orgdescriptionlabel{#1}%
}
\makeatother

\title{The Complexity Landscape of \\ Dynamic Distributed Subgraph Finding}
%\author{Anonymous authors}
 \author{
 \hspace{1.5cm}
 Yi-Jun Chang\footnote{National University of Singapore. ORCID: 0000-0002-0109-2432. Email: cyijun@nus.edu.sg} 
 \and Lyuting Chen\footnote{National University of Singapore. ORCID: 0009-0002-8836-6607. Email: e0726582@u.nus.edu}
 \and Yanyu Chen\footnote{National University of Singapore. ORCID: 0009-0008-8068-1649. Email: yanyu.chen@u.nus.edu}
 \hspace{1.5cm}
 \and Gopinath Mishra\footnote{National University of Singapore. ORCID: 0000-0003-0540-0292. Email: gopinath@nus.edu.sg}   
 \and Mingyang Yang\footnote{National University of Singapore. ORCID: 0009-0006-8971-2064. Email: myangat@u.nus.edu} 
 }
\date{}

\begin{document}

\maketitle
\begin{abstract}
Bonne and Censor-Hillel (ICALP 2019) initiated the study of distributed subgraph finding in dynamic networks of limited bandwidth. For the case where the target subgraph is a clique, they determined the tight bandwidth complexity bounds in nearly all settings. However, several open questions remain, and very little is known about finding subgraphs beyond cliques. In this work, we consider these questions and explore subgraphs beyond cliques in the \emph{deterministic} setting.

For finding cliques, we establish an $\Omega(\log \log n)$ bandwidth lower bound for one-round membership-detection under edge insertions only and an $\Omega(\log \log \log n)$ bandwidth lower bound for one-round detection under both edge insertions and node insertions. Moreover, we demonstrate new algorithms to show that our lower bounds are \emph{tight} in bounded-degree networks when the target subgraph is a triangle. Prior to our work, no lower bounds were known for these problems.

For finding subgraphs beyond cliques, we present a \emph{complete characterization} of the bandwidth complexity of the membership-listing problem for every target subgraph, every number of rounds, and every type of topological change: node insertions, node deletions, edge insertions, and edge deletions. We also show partial characterizations for one-round membership-detection and listing. 
\end{abstract}
\thispagestyle{empty}
\newpage
\thispagestyle{empty}
\tableofcontents
\newpage
\pagenumbering{arabic}
\section{Introduction}
Detecting small subgraphs in distributed networks has recently attracted significant research interest~\cite{bonne2019distributed,CensorCLL21,Censor+PODC20,chang2024deterministic,ChangPSZ21,Drucker+PODC14,Eden+DISC19,Fischer+SPAA18,fraigniaud2019distributed,fraigniaud2016distributed,Korhonen+OPODIS17,Izumi+PODC17,liu2022noteimprovedresultsround}. Distributed subgraph finding plays an important role in understanding the \emph{bandwidth limitation} in distributed networks: It is a classical problem where a straightforward $O(1)$-round solution exists with unlimited bandwidth, but becomes significantly more complex when bandwidth constraints are imposed.

Previous works on distributed subgraph findings mostly assumed a model in which the underlying network is \emph{static}. However, distributed systems in real life may undergo topological changes over time: A node might crash, and a new connection might be formed between two existing nodes. To address this gap, Bonne and Censor-Hillel~\cite{bonne2019distributed} initiated the study of distributed subgraph finding
in dynamic networks to better capture the real-world behavior of networks of {limited bandwidth}. For the case where the target subgraph is a clique,
they determined the tight bandwidth complexity bounds in nearly all settings. Later, Liu~\cite{liu2022noteimprovedresultsround} extended this study to dynamic graphs with batched updates and resolved an open question of Bonne and Censor-Hillel~\cite[Open question 4]{bonne2019distributed}. However,
several open questions remain, and very little is known about finding subgraphs beyond
cliques. In this paper, we build upon their works~\cite{bonne2019distributed,liu2022noteimprovedresultsround} to consider the remaining open questions and explore other target subgraphs beyond cliques.

\subsection{Models}\label{subsec:model}
We now formally describe the models considered in this paper, which were introduced by Bonne and Censor-Hillel~\cite{bonne2019distributed}.
 A dynamic network $\mathcal{G}$ is a sequence of graphs $\mathcal{G}=\{G^0, G^1, \ldots \}$. The superscript notation should not be confused with graph powers. The initial graph $G^0$ represents the state of the network at some starting point. For each $i > 1$, the graph $G^i$ is either identical to its preceding graph $G^{i-1}$ or obtained from $G^{i-1}$ by a single topological change. %Let $G^i$ denote the final graph structure after the topological change (if exists) at round $i$.

    Each node $u$ in the network is equipped with a unique identifier $\ID(u)$, and it knows the list of identifiers of all its neighbors. 
    The communication is synchronous. In each round of communication, each node can send to each of its neighbors a message of $B$ bits, where $B$ denotes the \emph{bandwidth} of the network.

We assume that each node initially knows the \emph{entire} topology of the initial graph $G^0$. In particular, the set of all identifiers is global knowledge, so we may assume that the range of the identifiers is exactly $[n]$, where $n$ is the number of nodes in the network.

\paragraph{Topological changes} We consider four types of topological changes: node insertions, node deletions, edge insertions, and edge deletions. In the case of a node insertion, the adversary may connect the new node to an arbitrary subset of the existing nodes.
    Each node $u$ can only \emph{indirectly} deduce that a topological change has occurred by comparing its list of neighbors $N_{G^i}(u)$ in the current round $i$ and its list of neighbors $N_{G^{i-1}}(u)$ in the previous round $i-1$. At most one topological change can occur in each round. Suppose at some round $i$, node $u$ detects that exactly one new node $v$ appears in its neighborhood list, then from this information only, node $u$ cannot distinguish whether edge $\{u,v\}$ is added or node $v$ is added in round $i$, if we are in the model where both edge insertions and node insertions are allowed. Similarly, suppose node $u$ detects that exactly one node $v$ disappears in its list. In that case, $u$ cannot distinguish whether edge $\{u,v\}$ is deleted or node $v$ is deleted, if we are in the model where both edge deletions and node deletions are allowed.
    
\paragraph{Algorithms} An algorithm can be designed to handle only one type of topological change or any combination of them. Throughout this paper, we only consider \emph{deterministic} algorithms. We say an algorithm $\mathcal{A}$ is an $r$-round algorithm if $\mathcal{A}$ works in the following setting.
\begin{itemize}
    \item Each topological change is followed with at least $r-1$ \emph{quiet rounds}. Specifically, if a topological change occurs in round $i$, then we must have $G^{i} = G^{i+1} = \cdots = G^{i+r-1}$. Rounds $i+1, \ldots, i+r-1$ are quiet in the sense that no topological changes occur in these rounds.
    \item The output w.r.t.~$G^i$ must be computed correctly by round $i+r-1$.
\end{itemize}    

A one-round algorithm is not called a zero-round algorithm because, within each round, topological change occurs before any communication takes place. This setup allows the nodes to have one round of communication between the topological change and deciding the output within a single round. For example, if an edge $\{u,v\}$ is inserted in a round, then $u$ and $v$ can immediately communicate with each other along this edge within the same round.

We emphasize that all our algorithms and lower bounds in this paper are \emph{deterministic}, although many of our lower bounds also extend to the randomized setting, see \Cref{sect:random}.

\subsection{Problems}\label{subsec:problems}

We consider the four types of distributed subgraph finding problems introduced by Bonne and Censor-Hillel~\cite{bonne2019distributed}.

\begin{description}
    \item[Membership Listing] For the membership-listing ($\MemList(H)$) problem, each node $v$ lists all the copies of $H$ containing $v$. In other words, for each copy of $H$ and each node $u$ in $H$, node $u$ lists $H$. 
    \item[Membership Detection] For the membership-detection ($\MemDetect(H)$) problem,~ each node $v$ decides whether $v$ belongs to at least one copy of $H$. 
    \item[Listing] For the listing ($\List(H)$) problem, every appearance of $H$ is listed by at least one node in the network. In other words, for each copy of $H$, there exists some node $u$ that lists $H$.
    \item[Detection] For the detection ($\Detect(H)$) problem, the existence of any $H$ must be detected by at least one node. Specifically, if the network does not contain $H$ as a subgraph, then all nodes must output $\No$. Otherwise, at least one node must output $\Yes$.
\end{description}

For both membership-detection and detection, the output of each node is binary ($\Yes/\No$) only,  with no requirement to report the actual target subgraph. For both membership-listing and listing, each node outputs a list of the target subgraphs, using the node identifiers in the network. For the listing problem, the node $u$ responsible for listing $H$ is not required to be in $H$, and it is allowed that each copy of $H$ is listed by multiple nodes.

In the literature, the problem of deciding whether a subgraph isomorphic to $H$ exists is often referred to as \emph{$H$-freeness}, whereas \emph{detection} sometimes denotes the task of outputting a copy of $H$. We emphasize that our use of detection differs from this convention: In both $\MemDetect(H)$ and $\Detect(H)$, outputting a copy of $H$ is not required.

 %We use $\MemList(H), \MemDetect(H),\List(H)$ and $\Detect(H)$ to denote the problems of membership listing, membership detection, listing and detection of subgraph $H$. 
    
\paragraph{Bandwidth complexities} The $r$-round \emph{bandwidth complexity} of a problem is defined as the minimum bandwidth $B$ for which there exists an $r$-round algorithm solving that problem with bandwidth $B$.
Fix a target subgraph $H$, a round number $r$, and some type of topological changes.
Let $B_{\MemList}$, $B_{\MemDetect}$, $B_{\List}$, and $B_{\Detect}$ denote the  $r$-round bandwidth complexities of $\MemList(H)$, $\MemDetect(H)$, $\List(H)$, and $\Detect(H)$, respectively. The following observations were made by Bonne and Censor-Hillel~\cite{bonne2019distributed}.
\begin{observation}[{\cite{bonne2019distributed}}]
Given any target subgraph $H$ and any integer $r$, 
\[B_{\Detect}\leq B_{\List}\leq B_{\MemList}\] under any type of topological change.
\end{observation}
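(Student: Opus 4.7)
The plan is to prove both inequalities by straightforward reductions: any algorithm solving the harder problem can be post-processed, without any extra communication, into one solving the easier problem, so the bandwidth requirement cannot increase.

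For $B_{\List}\leq B_{\MemList}$, I would take an $r$-round algorithm $\mathcal{A}$ that solves $\MemList(H)$ with bandwidth $B$ and simply re-interpret its output as a solution to $\List(H)$. After $\mathcal{A}$ terminates, every node $v$ holds the list of all copies of $H$ containing $v$. For any copy $H'$ in the network and any vertex $u\in V(H')$, node $u$ lists $H'$ by definition of $\MemList$, so in particular at least one node of the network lists $H'$. This satisfies the $\List(H)$ requirement with the same bandwidth $B$ and the same round count $r$.

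For $B_{\Detect}\leq B_{\List}$, I would take an $r$-round algorithm $\mathcal{A}$ solving $\List(H)$ with bandwidth $B$ and convert it locally into an algorithm for $\Detect(H)$. After $\mathcal{A}$ terminates, each node $v$ outputs \Yes{} if its $\List(H)$-output is non-empty and \No{} otherwise; this is a purely local post-processing and uses no extra communication, so the round count and bandwidth remain $r$ and $B$. If the network contains no copy of $H$, every node's list is empty and everyone outputs \No{}. If the network contains some copy $H'$, the $\List(H)$ guarantee ensures at least one node lists $H'$, and that node outputs \Yes{}. This satisfies $\Detect(H)$.

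Neither step presents an obstacle since both reductions are structural and incur no communication overhead. The only mild subtlety is checking that the reductions respect the dynamic-network semantics: the quiet-round schedule and the ``output must be correct by round $i+r-1$'' requirement depend only on when $\mathcal{A}$ finishes, and since the post-processing at each node is instantaneous and local, the deadline is inherited. This is true under any single type of topological change as well as any combination of them, giving the claimed inequalities uniformly.
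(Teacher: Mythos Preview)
Your proposal is correct and is exactly the standard reduction one expects here: any $\MemList(H)$ output is already a valid $\List(H)$ output, and any node with a non-empty $\List(H)$ output can say \Yes{} for $\Detect(H)$, both with zero extra communication. The paper itself does not spell out a proof of this observation (it is simply attributed to~\cite{bonne2019distributed}), so there is nothing further to compare against.
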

\begin{observation}[{\cite{bonne2019distributed}}]\label{obs: detect < memdetect < memlist}
Given any target subgraph $H$ and any integer $r$, 
\[B_{\Detect}\leq B_{\MemDetect}\leq B_{\MemList}\] under any type of topological change. 
\end{observation}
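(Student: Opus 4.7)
The plan is to establish both inequalities through trivial local reductions: an algorithm for the ``stronger'' problem directly solves the ``weaker'' one by having each node post-process its own output, without any additional communication. In particular, the round count $r$, the bandwidth $B$, and the allowed type of topological change are all preserved by the reduction, so the inequalities transfer verbatim from one complexity measure to another.

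For $B_{\MemDetect}\leq B_{\MemList}$, I would take any $r$-round algorithm $\mathcal{A}$ solving $\MemList(H)$ with bandwidth $B$ and run it unchanged. By the end of the $r$-round window following a topological change occurring at some round $i$, every node $v$ knows the (possibly empty) list of copies of $H$ in $G^i$ that contain $v$. To derive a $\MemDetect(H)$ algorithm, each node simply outputs $\Yes$ if and only if its list is non-empty; this post-processing is local, so bandwidth $B$ still suffices and the round budget is unchanged.

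For $B_{\Detect}\leq B_{\MemDetect}$, I would similarly run any $r$-round bandwidth-$B$ algorithm for $\MemDetect(H)$ and have each node reuse its $\MemDetect$ answer as its $\Detect$ answer. Correctness follows directly from the definitions: if $G^i$ contains no copy of $H$, no node belongs to a copy, so every node outputs $\No$; if $G^i$ contains at least one copy of $H$, then each of its vertices outputs $\Yes$ under $\MemDetect(H)$, hence at least one node outputs $\Yes$ as required by $\Detect(H)$.

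There is no real obstacle here, as both reductions are zero-communication and essentially a restatement of the definitions. The only point worth flagging explicitly is that neither reduction depends on $r$ or on which of the four topological-change types is allowed, so the chain $B_{\Detect}\leq B_{\MemDetect}\leq B_{\MemList}$ holds uniformly across all settings considered in this paper.
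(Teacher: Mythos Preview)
Your proposal is correct and is precisely the standard zero-communication reduction argument; the paper itself does not give a proof but simply cites the observation from \cite{bonne2019distributed}, as it is immediate from the problem definitions.
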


\paragraph{Nontrivial target subgraphs} In this paper, we only focus on \emph{nontrivial} target subgraphs $H$, meaning that we implicitly assume that $H$ is \underline{connected and contains at least three nodes}: %\yanyu{it's not really impossible to solve. if G is connected, we can find the connected components and gather them.}\yijun{should be impossible if the network $G$ is allowed to be disconnected, which should be the case in our setting.}
 \begin{itemize}
     \item If $H$ is not connected, then all four problems are trivially impossible to solve, as we allow the network to be disconnected.
     \item If $H$ is connected with exactly two nodes, then all four problems are trivially solvable with zero communication.
 \end{itemize}   

For example, when we say that the one-round bandwidth complexity of $\MemDetect(K_s)$ under edge insertions is $\Omega(\log \log n)$, we implicitly assume that $s \geq 3$.

\subsection{Our Contributions}
%Our contribution is two-fold: We first investigate the remaining open questions of Bonne and Censor-Hillel~\cite{bonne2019distributed} on clique finding and then explore other target subgraphs beyond cliques.
While Bonne and Censor-Hillel~\cite{bonne2019distributed} settled most of the bandwidth complexity bounds for clique finding, they left five open questions, one of which~\cite[Open question 4]{bonne2019distributed} was resolved by Liu~\cite{liu2022noteimprovedresultsround}. In this paper, we investigate the remaining ones.

\paragraph{Finding cliques under edge insertions} In \cite[Open question 1]{bonne2019distributed} and  \cite[Open question 3]{bonne2019distributed}, Bonne and Censor-Hillel asked for the tight bound on the one-round bandwidth complexity of membership-detection for triangles and larger cliques under \emph{edge insertions} only. %\mingyang{under edge insertions}\yijun{fixed} 
For triangles, they obtained two upper bounds $O(\log n)$ and $O(\sqrt{\Delta \log n})$~\cite{bonne2019distributed}. For larger cliques, they obtained an upper bound $O(\sqrt{n})$ which works even for the membership-listing problem~\cite{bonne2019distributed}. %\mingyang{Note: they use the upper bound from membership-listing directly}. 
In this work, we show that these problems admit a bandwidth lower bound of $\Omega(\log \log n)$, which holds even in bounded-degree networks.

\begin{restatable}{thm}{loglogLB}\label{lem: memdect lb ks}\label{lem: memdect lb tri}
    For any constant $s \geq 3$, the one-round bandwidth complexity of $\MemDetect(K_s)$ under edge insertions is $\Omega(\log \log n)$, even in bounded-degree dynamic networks.   
\end{restatable}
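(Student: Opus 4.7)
The strategy is a fooling-set construction combined with a distance-based propagation bound. I will construct a family of $L = \Theta(\log n)$ valid dynamic networks on $n$ nodes under edge insertions only, each maintaining bounded degree throughout, such that a distinguished node $u$ receives identical messages from all its direct neighbors other than $v$ in the final round across all networks in the family, while the correct $\MemDetect(K_s)$ output at $u$ differs for at least two networks in the family.

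First, I design an initial graph $G^0$ in which $u$ has a constant number of neighbors $\{w_1,\ldots,w_c\}$ and in which another node $v$ is placed at graph distance much greater than the number $T$ of rounds in the adversarial sequence. Around $v$ I embed a ``configuration gadget'' of constant size that can be placed into one of $L$ distinct configurations via a carefully chosen sequence of at most $T$ edge insertions all occurring in the vicinity of $v$, none of which lies within distance $T$ of any $w_i$ in $G^0$. The configurations are chosen so that in each one, the final insertion $\{u,v\}$ creates---or fails to create---a $K_s$ containing $u$, with roughly half of the configurations producing each outcome. For $s=3$ the gadget directly encodes, via one of $L$ possible edges incident to $v$, whether $v$'s unique extra neighbor coincides with a specific node in $N_{G^0}(u)$; for $s\geq 4$, the construction is padded with a fixed $(s-3)$-clique in $G^0$ attached to both $u$ and every ``witness'' candidate, reducing the question to the triangle case up to constant factors.

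Next, I apply a standard propagation bound for bounded-degree networks: the state of any node $w$ at the end of round $T$ depends only on $G^0$ and on the edge insertions occurring within graph distance $T$ from $w$. Since the gadget's insertions all lie at distance greater than $T$ from every $w_i$, the states of $u$'s background neighbors $w_1,\ldots,w_c$---and hence the messages they send to $u$ in the final round---are independent of the choice of configuration. The only message to $u$ that can encode the configuration index is therefore the $B$-bit message from $v$ itself in the final round. Applying pigeonhole, a $B$-bit message takes at most $2^B$ distinct values, so if $L>2^{B+1}$ then two configurations---one producing a $K_s$ containing $u$ and one not---must map to the same $v$-message, producing identical views at $u$ yet requiring different outputs, contradicting correctness. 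Taking $L=\Theta(\log n)$ forces $B=\Omega(\log\log n)$.

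The main obstacle is reconciling two competing demands: the gadget must support $\Omega(\log n)$ distinct configurations (giving the required fooling-set size) yet remain local enough to $v$ that its information does not leak to $u$'s background neighbors during the $T$ prior rounds (so that $v$'s final-round message is indeed the sole carrier of discriminating information). This is handled by encoding the configuration index $j\in[L]$ as a choice of one among $L$ candidate identifiers in a small local structure adjacent to $v$, combined with a sufficiently large graph-distance separation between $v$ and the $w_i$'s built directly into $G^0$. Verifying that the $L$ configurations produce a balanced YES/NO split (so that the pigeonhole collision is between two configurations of differing outputs) requires a final combinatorial check on how candidate identifiers are distributed between $N_{G^0}(u)$ and the rest of $[n]$.
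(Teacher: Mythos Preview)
Your construction has an internal contradiction that makes it impossible to realize. You require that (i) all gadget edge insertions lie at graph distance greater than $T$ from every $w_i\in N_{G^0}(u)$, so that the $w_i$'s states are configuration-independent, and (ii) in the YES configurations the final insertion $\{u,v\}$ completes a $K_s$ containing $u$. But for $\{u,v\}$ to complete such a clique, $u$ and $v$ must already share $s-2$ common neighbors, and any common neighbor $x$ satisfies $\{u,x\}\in E$, so $x$ is one of the $w_i$'s (or else $u$ itself would have detected a new incident edge earlier). Hence in every YES configuration some edge $\{v,w_i\}$ must have been inserted, and that insertion is incident to $w_i$---distance zero, not distance $>T$. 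So the $w_i$'s can, with a single bit, tell $u$ whether they acquired a new neighbor, which already separates YES from NO. The premise that ``the only message to $u$ that can encode the configuration index is the $B$-bit message from $v$'' is therefore false, and the pigeonhole step collapses.

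The paper's proof does not attempt to make $a$'s neighbors' states identical across scenarios---that is impossible for the reason above. Instead it \emph{allows} the neighbors of $a$ (the nodes of a small clique $K^i$) to have different second-neighbors in the two scenarios ($b$ versus some other node $c$), and argues that the total bandwidth $O(Bt)$ available over the $O(t)$ rounds of construction is $o(\log n)$, too little for $K^i$ to convey a full identifier to $a$. This alone is not enough, because $b$ could still tell $a$ something useful in the final round; so the construction uses $t=\log^{0.1}n$ parallel clique gadgets and shows by a second pigeonhole that $b$'s single $B$-bit message cannot distinguish which of the $t$ gadgets it was attached to when $B=o(\log t)=o(\log\log n)$. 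It is this two-level argument---bounding information both from $a$'s clique neighbors and from $b$---that produces the $\Omega(\log\log n)$ bound; a one-level fooling-set argument of the kind you propose cannot work here.
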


Prior to our work, no lower bound was known for this problem. Moreover, we complement our lower bound with a new $O(\log \log n)$-bandwidth triangle finding algorithm in bounded-degree networks, which is capable of solving the more challenging problem of membership-listing.

\begin{restatable}{thm}{loglogUB}\label{thm: Delta memlist ub loglogn}
There exists a one-round algorithm of $\MemList(K_3)$ under edge insertions with bandwidth $O(\Delta^2 \log \log n)$, where $\Delta$ is the maximum degree of the dynamic network.
\end{restatable}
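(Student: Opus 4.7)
My plan is to combine two mechanisms that run in parallel within a single round: a single-bit ``new-neighbor'' signaling scheme that handles triangles reported by common-neighbor nodes, and a compact-sketch invariant that lets the endpoints of the inserted edge identify their common neighbors locally. Together, the two mechanisms ensure that each of the three vertices of every newly created triangle outputs it in the round the new edge is inserted.

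\noindent\textbf{Signaling for common-neighbor triangles.} When edge $\{u,v\}$ is inserted in round $i$, both $u$ and $v$ detect a new neighbor (by comparing $N_{G^i}$ with $N_{G^{i-1}}$) and each broadcasts a one-bit flag along all incident edges. Since at most one topological change occurs per round, exactly two nodes raise this flag, and they must be the endpoints of the inserted edge. Any node $x \notin \{u,v\}$ with $u,v \in N(x)$ therefore sees exactly two of its neighbors flagging, unambiguously identifies the new edge as $\{u,v\}$, and lists the triangle $\{u,v,x\}$. This step, which uses only one bit per edge, takes care of every reporter that is a common neighbor.

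\noindent\textbf{Endpoint invariant and its maintenance.} For the endpoint case, I maintain the invariant that every node $y$ stores, for each current neighbor $z \in N(y)$, a compact $O(\Delta \log \log n)$-bit sketch of $N(z)$ that supports deterministic membership queries of the form ``is $w \in N(z)$?''. The sketches are built from an explicit $\Delta$-perfect hash family with $O(\log \log n)$-bit codewords. When edge $\{u,v\}$ is inserted, $u$ and $v$ exchange their sketches along the new edge (each of size $O(\Delta \log \log n)$, well within the $O(\Delta^2 \log \log n)$ bandwidth budget), and each endpoint broadcasts an $O(\log \log n)$-bit hash identifier of its new neighbor along every other incident edge so that existing neighbors can augment their stored sketches. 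Once the updates are absorbed, $u$ queries the sketch of $N(v)$ for each $w \in N(u)$, identifies the common neighbors, and lists the resulting triangles $\{u,v,w\}$; $v$ acts symmetrically.

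\noindent\textbf{Main obstacle.} The principal challenge is constructing the hash family to meet three simultaneous requirements: codewords of length $O(\log \log n)$, injectivity on every $O(\Delta)$-sized subset of $[n]$ so that membership queries are unambiguous, and a way for sender and receiver to agree on the same hash without a coordination round. I expect to use an explicit perfect hash family of polylogarithmic size, indexed deterministically by a rule such as the round number or the pair of endpoint IDs, and to verify correctness by leveraging the single-change-per-round guarantee so that spurious hash collisions cannot produce a false triangle within any single round. Carefully showing that this scheme correctly maintains the invariant across arbitrary insertion sequences, while keeping the bandwidth within $O(\Delta^2 \log \log n)$, will be the crux of the argument.
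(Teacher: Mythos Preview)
Your signaling scheme for the common-neighbor node $w$ is correct and matches the paper. The gap is in the endpoint mechanism: an $O(\Delta\log\log n)$-bit sketch of $N(z)$ cannot support \emph{deterministic} membership queries against an adversarially chosen universe $[n]$. Any hash with $O(\log\log n)$-bit codewords maps $n$ identifiers into a range of size $\mathrm{polylog}(n)$, so collisions are abundant; a $\Delta$-perfect hash \emph{family} only guarantees that for each $\Delta$-set \emph{some} function in the family is injective, but to exploit this the sender must know the full query set $N(v)\cup N(u)$ in advance, which it does not. Choosing the hash by round number or by the pair $(\ID(u),\ID(v))$ does not help: the adversary controls which edges are inserted and when, so it can first observe which hash will be used and then place a node $w'\in N(v)$ with $h(w')=h(w)$ for some $w\in N(u)\setminus N(v)$, causing $u$ to list the nonexistent triangle $\{u,v,w\}$. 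This is an information-theoretic obstruction: a set of $\Delta$ elements from $[n]$ cannot be summarized in $o(\Delta\log n)$ bits while supporting error-free membership tests against arbitrary queries.

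The paper sidesteps this barrier by never attempting a one-shot sketch. Instead it partitions edges by age relative to a threshold $d=\Theta(\log n)$. For edges older than $d$ rounds, the full neighbor-$\ID$ list has already been streamed across (periodically, in blocks of $O(\Delta\log n/d)=O(\Delta)$ bits per round), so exact membership is available. For edges younger than $d$ rounds, the crucial observation is that the \emph{round number of insertion} is a collision-free label---only one topological change per round---so each recent edge is uniquely identified by an element of $[d]$, encodable in $O(\log d)=O(\log\log n)$ bits. The ``recent records'' $R_d^{\mathsf{edge}}$ and $R_d^{\mathsf{signal}}$ exchanged across the new edge are just these round-number lists, of size $O(\Delta\log\log n)$ and $O(\Delta^2\log\log n)$ respectively, and they let each endpoint reconstruct which of its neighbors coincides with a neighbor of the other endpoint. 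The $\Delta^2$ factor in the theorem comes precisely from $R_d^{\mathsf{signal}}$, which records rounds in which \emph{some neighbor} gained an edge. Your plan should replace the hash-sketch idea with this time-stamp mechanism.
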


Combining \Cref{lem: memdect lb ks} and \Cref{thm: Delta memlist ub loglogn}, we obtain a tight bound of membership-detection for triangles in bounded-degree dynamic networks.

%According to \Cref{obs: detect < memdetect < memlist} and \Cref{thm: Delta memlist ub loglogn}, for dynamic networks with bounded degree, the one-round bandwidth of $\MemDetect(K_3)$ under edge insertions is $O(\log\log n)$. The upper bound of \Cref{thm: Delta memlist ub loglogn} matches the lower bound $\Omega(\log \log n)$ of \Cref{lem: memdect lb ks}.
\begin{corollary}\label{cor-1}
     The one-round bandwidth complexity of $\MemDetect(K_3)$ under edge insertions is $\Theta(\log\log n)$ in bounded-degree dynamic networks.
\end{corollary}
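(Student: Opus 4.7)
The plan is to derive the stated tight bound as a direct consequence of the two results just stated (\Cref{lem: memdect lb ks} and \Cref{thm: Delta memlist ub loglogn}) together with the trivial monotonicity recorded in \Cref{obs: detect < memdetect < memlist}. Since both directions are already in place, the proof will essentially be a matching exercise: specializing each bound to the setting of triangle membership-detection in bounded-degree networks.

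For the lower bound, I would simply invoke \Cref{lem: memdect lb ks} with $s = 3$, which yields a bandwidth lower bound of $\Omega(\log\log n)$ for $\MemDetect(K_3)$ under edge insertions, with the guarantee that the hard instances already have bounded degree. No extra work is needed here, since the lower bound holds precisely in the regime of interest.

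For the upper bound, I would apply \Cref{thm: Delta memlist ub loglogn} to obtain a one-round algorithm for $\MemList(K_3)$ under edge insertions with bandwidth $O(\Delta^2 \log\log n)$. In a bounded-degree dynamic network we have $\Delta = O(1)$, so the bandwidth simplifies to $O(\log\log n)$. Then, by \Cref{obs: detect < memdetect < memlist}, we have $B_{\MemDetect} \leq B_{\MemList}$, so the same algorithm (outputting \Yes{} at a node $v$ iff $v$'s output list is nonempty) solves $\MemDetect(K_3)$ with bandwidth $O(\log\log n)$.

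There is no real obstacle in this proof; the heavy lifting has already been done inside \Cref{lem: memdect lb ks} and \Cref{thm: Delta memlist ub loglogn}. The only thing to double-check is that the bounded-degree assumption is used consistently on both sides: the lower bound of \Cref{lem: memdect lb ks} is claimed to hold in bounded-degree networks (so it is not weakened by this restriction), and the upper bound of \Cref{thm: Delta memlist ub loglogn} exploits this restriction to absorb the $\Delta^2$ factor. Combining the two gives the claimed $\Theta(\log\log n)$ bound.
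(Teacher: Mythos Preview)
Your proposal is correct and matches the paper's own approach exactly: the paper simply states that combining \Cref{lem: memdect lb ks} (lower bound, already for bounded-degree networks) with \Cref{thm: Delta memlist ub loglogn} (upper bound, with $\Delta=O(1)$) yields the corollary, and your use of \Cref{obs: detect < memdetect < memlist} to pass from $\MemList$ to $\MemDetect$ is the natural way to spell this out.
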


\paragraph{Finding cliques under edge insertions and node insertions} The remaining two open questions of Bonne and Censor-Hillel~\cite{bonne2019distributed}  considered the model where two types of topological changes are allowed. Specifically, \cite[Open question 2]{bonne2019distributed} and  \cite[Open question 5]{bonne2019distributed} asked for the tight bound on the one-round bandwidth complexity of the listing problem for triangles and larger cliques under both \emph{edge insertions} and \emph{node insertions}.  %\mingyang{under edge insertions and node insertions}\yijun{fixed}
Previously, these problems were known to have an upper bound of $O(\log n)$~\cite{bonne2019distributed}.  In this work, we show that these problems admit a bandwidth lower bound of $\Omega(\log \log \log n)$, which applies to the easier problem of detection and holds in bounded-degree networks.

\begin{restatable}{thm}{logloglogLB}\label{thm:mixed}
       For any constant $s \geq 3$, the one-round bandwidth complexity of $\Detect(K_s)$ under both node insertions and edge insertions is $\Omega(\log \log \log n)$, even in bounded-degree dynamic networks. 
\end{restatable}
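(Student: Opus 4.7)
The plan is to prove the lower bound by constructing, on a fixed bounded-degree initial graph $G^0$, a family of $\Omega(\log\log n)$ single-update scenarios that any one-round detection algorithm must distinguish, and then applying an indistinguishability/pigeonhole argument that forces $2^{O(B)} \geq \Omega(\log\log n)$, i.e., $B = \Omega(\log\log\log n)$.

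First, I would fix $G^0$ to be a bounded-degree backbone on $n-O(1)$ vertices containing a small constant-size ``gadget'': a copy of $K_{s-1}$ on distinguished vertices $c_1,\ldots,c_{s-1}$, plus a witness vertex $w$ adjacent to some (but not all) of the $c_i$. The remainder of $G^0$ is auxiliary — for instance, a disjoint collection of paths or isolated vertices — whose purpose is only to supply a pool of legal identifiers, and in particular to guarantee $|V(G^0)| + 1 \leq n$.

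Second, I would define a family $\{\mathcal{S}_x : x \in X\}$ of round-$1$ updates with $|X| = \Theta(\log\log n)$. Each $\mathcal{S}_x$ is either (i) the insertion of a new vertex $v_x$ with identifier $\phi(x)$, adjacent to $\{c_1,\ldots,c_{s-1}\}$ and producing a copy of $K_s$, or (ii) the insertion of a single edge from $w$ to a pre-existing auxiliary vertex whose identifier equals $\phi(x)$, producing no copy of $K_s$. The encoding $\phi : X \hookrightarrow [n]$ is injective but maps into a subset of $[n]$ of size $\Theta(\log\log n)$, so $\phi(x)$ carries only $O(\log\log\log n)$ bits of distinguishing information. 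By exploiting the node-vs.-edge ambiguity granted by the combined-update model, I would arrange the neighborhood lists of $w$ and of the core vertices to coincide across many scenarios except through the identifier $\phi(x)$, so that the local views immediately after the update are almost identical.

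Third, fix any one-round deterministic detection algorithm of bandwidth $B$. Because $G^0$ has bounded degree and only $O(1)$ vertices are incident to the update, the messages that can causally influence any node's output fit into $O(B)$ bits, so the global \emph{output profile} (i.e., the tuple of all nodes' outputs) takes at most $2^{O(B)}$ distinct values as $x$ ranges over $X$. If $B < c\log\log\log n$ for a sufficiently small constant $c$, then $2^{O(B)} < |X|$, so by pigeonhole two scenarios $x \in X_{\Yes}$ and $x' \in X_{\No}$ yield identical output profiles; since exactly one of them contains a $K_s$, the algorithm errs on one.

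\textbf{Main obstacle.} The delicate part is the construction in the second step: simultaneously ensuring (a) the neighborhoods of $w$ and the $c_i$ are locally indistinguishable across all $|X|$ scenarios up to the new identifier $\phi(x)$, (b) a genuine $K_s$-versus-no-$K_s$ split between the two halves of the family, and (c) that the only channel by which the gadget boundary can break this indistinguishability is transmitting $\phi(x)$, which takes values in a set of size $\Theta(\log\log n)$ and therefore costs $\Theta(\log\log\log n)$ bandwidth. The crucial leverage is that a witness cannot, from its local view in a single round, tell a node insertion (which completes $K_s$ at $\{c_1,\ldots,c_{s-1},v_x\}$) apart from an incident edge insertion to a pre-existing auxiliary vertex (which does not) without coordinating with the core via $B$-bit messages.
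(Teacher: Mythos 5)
Your proposal does not work, and it diverges from the paper's proof in ways that leave unfixable gaps.

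\textbf{The single-update construction has no information bottleneck.} You place all of the gadget ($K_{s-1}$ on $c_1,\dots,c_{s-1}$, the witness $w$, the auxiliary pool) in the fixed initial graph $G^0$, and then perform exactly one update at round~1. But in this model every node knows the entire topology of $G^0$ for free, and every node also knows the exact identifiers of its current neighbors for free, without any communication. In your scenario~(i), each $c_j$ directly observes a new neighbor with identifier $\phi(x)$, while in scenario~(ii) none of the $c_j$'s observes any change at all --- so the $c_j$'s distinguish (i) from (ii) with zero bandwidth, and the newly inserted node $v_x$ (knowing $G^0$ and its own neighbor list $\{c_1,\dots,c_{s-1}\}$) can already output $\Yes$ with no messages whatsoever. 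There is simply nothing for the bandwidth to bottleneck. The paper's proof works precisely because the graph is \emph{built up over $\tlast=\Theta(\log^{0.1}\log n)$ rounds of edge insertions}: the $(s-2)$-cliques $K^i$ never directly see the identifiers of the nodes $a,b$ attached to them --- they only receive $O(Bt)$ bits of messages --- and the independent-set nodes $a,b$ only receive $O(Bt)$ bits from each $K^i$. That staged construction is what makes the pigeonhole feasible; your construction discards it.

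\textbf{The output-profile count is wrong.} You argue ``the global output profile takes at most $2^{O(B)}$ distinct values as $x$ ranges over $X$.'' This is false in your setting: a node's output is a deterministic function of its local view, which includes the full $\Theta(\log n)$-bit identifier $\phi(x)$ of its new neighbor and the entire topology of $G^0$. Compressing $\phi$'s image to $\Theta(\log\log n)$ values does not help --- the relevant node can still read $\phi(x)$ off its neighborhood list for free and compute any function of it. For the counting argument to go through, the distinguishing information must be forced to travel \emph{only} through $B$-bit messages; the paper achieves this by (a) the multi-round buildup, so $K^i$ has only $O(Bt)=o(\log\log n)$ bits about $a,b$, and (b) a Ramsey-style coloring of all ordered pairs of $I$ by their $O(t^2)$-message transcripts, which yields a quadruple $\{a,b,c,d\}$ and two indices $\{i,i^\ast\}$ making \ref{item:SS1} indistinguishable from \ref{item:SS2}/\ref{item:SS3}/\ref{item:SS4} for $a$, $b$, and $K^i$ respectively. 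That is the engine of the lower bound, and it is entirely absent from your plan.

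\textbf{You also never handle the detection semantics.} $\Detect(K_s)$ only requires \emph{some} node to say $\Yes$. The paper carefully covers all candidate detectors --- $a$, $b$, and every node of $K^i$ --- with three separate indistinguishability obligations (scenarios \ref{item:SS2}, \ref{item:SS3}, \ref{item:SS4}). Your proposal does not identify who the candidate detectors are, let alone argue that all of them are fooled simultaneously. In your scenarios, $v_x$ alone is already a trivial detector.

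In short: your intuition about the node-vs-edge ambiguity is on the right track, but the proof requires (1) a multi-round construction to starve the detectors of free identifier information, (2) a Ramsey/pigeonhole argument over the multi-round transcripts to find four nodes and two clique indices that make the yes-instance indistinguishable from three different no-instances, and (3) a case analysis over every node that could possibly report the clique. Your single-shot gadget plus a pigeonhole on $2^{O(B)}$ ``output profiles'' captures none of these.
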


Same as \Cref{lem: memdect lb ks}, prior to our work, no lower bound was known for this problem. Interestingly, we are also able to match this lower bound with a new $O(\log \log \log n)$-bandwidth algorithm for listing triangles in bounded-degree networks. 

\begin{restatable}{thm}{logloglogUB}\label{thm: Delta list lb logloglogn}
There exists a one-round algorithm of $\List(K_3)$ under edge insertions and node insertions with bandwidth $O(\Delta\log\log \log n)$, where $\Delta$ is the maximum degree of the dynamic network.
\end{restatable}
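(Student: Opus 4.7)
The plan is to design a one-round protocol in which each node that detects a new neighbor attempts to identify common neighbors with that new neighbor and lists the resulting triangles. This uniform treatment handles both edge insertions (where only two nodes detect a new neighbor) and node insertions (where the new node together with its $k$ neighbors all detect new neighbors) without any explicit case distinction. In the edge-insertion case $\{u,v\}$, both $u$ and $v$ identify their common neighbors and list the triangles $uvw$. In the node-insertion case where $v$ arrives with neighbors $\{u_1, \ldots, u_k\}$, each existing $u_i$ finds $u_j \in N(u_i) \cap N(v)$ to list the triangle $u_i v u_j$, while $v$ itself does the same using its own view. Every new triangle is therefore listed by at least one node, which is exactly what $\List(K_3)$ requires; triangles formed in earlier rounds have already been listed by induction.

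The technical core is a compact labeling scheme: each node is assigned a label of $O(\log\log\log n)$ bits, derived from a constant-hop coloring of the current graph so that within any bounded-radius neighborhood all labels are distinct. Over every incident edge, each node transmits the labels of its current neighbors, which fits in $O(\Delta \log\log\log n)$ bits per message. Upon receiving the message from a new neighbor $v$, a node $u$ decodes it by matching the incoming labels against the labels of its own neighbors; matches reveal exactly the common neighbors of $u$ and $v$, after which $u$ recovers their global identifiers from its own adjacency list and outputs the corresponding triangles. Since the initial graph $G^0$ is common knowledge, a valid initial labeling is computed offline; newly inserted nodes acquire labels on the fly by picking one that is unused among the labels reported by their immediate neighbors in the insertion round.

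The main obstacle is realizing this labeling scheme with only $O(\log\log\log n)$ bits per label while accommodating dynamic insertions in a strictly one-round algorithm, which affords no quiet rounds for label propagation. In particular, when a new node $v$ arrives, each of its neighbors must simultaneously (a) learn $v$'s label and (b) use it together with $v$'s reported neighborhood to identify common neighbors. I plan to resolve this by exploiting the bidirectional nature of simultaneous communication: $v$ and each of its neighbors exchange labels in the same round, and a deterministic tie-breaking rule ensures that the provisional label $v$ assigns to itself is consistent with what any neighbor can independently derive from $G^0$ plus a publicly fixed local update rule. Fitting the $O(\log\log\log n)$ budget requires that the bounded-distance chromatic number of the evolving graph stay small enough; this is secured by a careful choice of coloring radius and exploitation of the degree bound $\Delta$, ultimately producing an upper bound that matches the $\Omega(\log\log\log n)$ lower bound of \Cref{thm:mixed}.
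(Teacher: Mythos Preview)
Your proposal has a genuine gap: the labeling scheme cannot be kept valid under edge insertions in a one-round algorithm, and once it breaks, the matching step produces false positives.

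Concretely, take $G^0$ to be a set of isolated nodes. With only $2^{O(\log\log\log n)}=\operatorname{poly}(\log\log n)$ labels available, by pigeonhole there exist isolated nodes $x$ and $x'$ receiving the same label. Insert $\{a,x\}$ in round~1 and $\{b,x'\}$ in round~2; after these two rounds the graph consists of two disjoint edges, and any ``constant-hop coloring of the current graph'' you maintain is still valid even with $\mathrm{label}(x)=\mathrm{label}(x')$, since $x$ and $x'$ lie in different components. Now insert $\{a,b\}$ in round~3. Your protocol has $a$ send to $b$ the multiset of labels of $a$'s neighbors, which contains $\mathrm{label}(x)$; node $b$ matches this against $\mathrm{label}(x')$ and lists the nonexistent triangle $\{a,b,x'\}$. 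The point is that correctness of the matching step requires labels to be distinct within the \emph{post-insertion} $2$-neighborhood of $b$, but the insertion of $\{a,b\}$ is precisely what places $x$ and $x'$ in a common $2$-ball, and there is no quiet round in which to repair the coloring before you must output. No ``publicly fixed local update rule'' can rescue this: to guarantee $\mathrm{label}(x)\neq\mathrm{label}(x')$ in advance for every pair that \emph{might} be joined by a future insertion, you would need all labels globally distinct, i.e., $\Omega(\log n)$ bits.

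The paper's proof takes an entirely different route that sidesteps any coloring. It separates ``recent'' from ``old'' incident edges using a threshold $d=\Theta(\log\log n)$. Recent edges are handled by timestamps $R_d^{\mathsf{edge}}$ encoded in $O(\Delta\log d)=O(\Delta\log\log\log n)$ bits. For old edges the key idea is the $\firstDiff$ trick: a node $w$ periodically sends each neighbor $u$ the list of indices $\firstDiff(\ID(u),\ID(u'))$ over all $u'\in N(w)$, which costs $O(\Delta\log\log n)$ bits and is amortized over $T=\Theta(d)$ rounds; when $u$ later acquires a new neighbor $v$, it reports back to $w$ just the single bits $\ID(v)[j]$ at those indices. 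If $v=u'$ then the reported bit differs from $\ID(u)[j]$, whereas if a new node was inserted adjacent to both $u$ and $u'$ the reported bits coincide, so $w$ can distinguish the edge-insertion from the node-insertion case. This mechanism relies only on the fixed global identifiers and needs no dynamically maintained local structure, which is exactly what your approach lacks.
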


Combining \Cref{thm:mixed} and \Cref{thm: Delta list lb logloglogn}, we obtain a tight bound for triangle detection in bounded-degree dynamic networks.

%According to \Cref{obs: detect < memdetect < memlist} and \Cref{thm: Delta list lb logloglogn}, for dynamic networks with bounded degree, the one-round bandwidth of $\Detect(K_3)$ under edge insertions and node insertions is $O(\log\log\log n)$. The upper bound of \Cref{thm: Delta list lb logloglogn} matches the lower bound $\Omega(\log \log \log n)$ of \Cref{thm:mixed}.
\begin{corollary}\label{cor-2}
    The one-round bandwidth complexity of $\Detect(K_3)$ under edge insertions and node insertions is $\Theta(\log\log\log n)$ in bounded-degree dynamic networks.
\end{corollary}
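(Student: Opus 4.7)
The plan is to obtain the corollary as an immediate consequence of the matching bounds established in \Cref{thm:mixed} and \Cref{thm: Delta list lb logloglogn}, after verifying that both apply in the same model (one-round, bounded-degree, under edge insertions and node insertions).

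For the lower bound, I would instantiate \Cref{thm:mixed} with $s = 3$. The theorem is stated for every constant $s \geq 3$ and already asserts that the $\Omega(\log \log \log n)$ bound holds even in bounded-degree dynamic networks, so no additional work is needed on this side.

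For the upper bound, I would invoke \Cref{thm: Delta list lb logloglogn}, which gives a one-round algorithm for $\List(K_3)$ with bandwidth $O(\Delta \log \log \log n)$ under edge insertions and node insertions. Since the statement restricts to bounded-degree networks, we have $\Delta = O(1)$, so the bandwidth simplifies to $O(\log \log \log n)$. Finally, I would apply the trivial inequality $B_{\Detect} \leq B_{\List}$ (a special case of \Cref{obs: detect < memdetect < memlist} chained through listing, or rather the sibling observation $B_{\Detect} \leq B_{\List} \leq B_{\MemList}$ noted just before it): any node that lists a triangle can output \Yes, so the listing algorithm immediately solves detection within the same bandwidth.

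There is no genuine obstacle here, only a bookkeeping step: both cited results must be in the same one-round, bounded-degree, (edge $+$ node)-insertion regime, and a quick inspection of the statements confirms this. The corollary therefore follows by combining the two matching $\Theta(\log \log \log n)$ bounds.
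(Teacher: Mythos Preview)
Your proposal is correct and matches the paper's approach exactly: the paper also derives the corollary by combining \Cref{thm:mixed} (with $s=3$) for the lower bound and \Cref{thm: Delta list lb logloglogn} (with $\Delta = O(1)$) for the upper bound, using $B_{\Detect} \leq B_{\List}$.
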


%\Cref{lem: memdect lb ks} makes progress toward answering \cite[Open question 1]{bonne2019distributed} and  \cite[Open question 3]{bonne2019distributed}, which asks for the tight bound on the  one-round bandwidth complexity of $\MemDetect(H)$ for $H = K_3$ and $H = K_s$.  

%\Cref{thm:mixed} considers the model where two types of topological changes, node insertions and edge insertions, are allowed. As any lower bound for  $\Detect(H)$ is automatically a lower bound for $\List(H)$, similarly, \Cref{thm:mixed} makes progress toward answering \cite[Open question 2]{bonne2019distributed} and  \cite[Open question 5]{bonne2019distributed}, which asks for the tight bound on the  one-round bandwidth complexity of $\List(H)$ for $H = K_3$ and $H = K_s$. Before our work, no lower bound was known for the two problems in \Cref{lem: memdect lb ks,thm:mixed}.  

%For finding cliques, we establish an $\Omega(\log \log n)$ bandwidth lower bound for one-round membership-detection under edge insertions only and an $\Omega(\log \log \log n)$ bandwidth lower bound for one-round detection under both edge insertions and node insertions. Moreover, we demonstrate new algorithms to show that our lower bounds are \emph{tight} in bounded-degree networks when the target subgraph is a triangle. Prior to our work, no lower bounds were known for these problems.

\paragraph{Beyond cliques} It appears to be a challenging task to extend the current results beyond cliques. In the static setting, while the round complexity for $k$-clique listing has been settled~\cite{CensorCLL21,Censor+PODC20,chang2024deterministic,ChangPSZ21,Eden+DISC19,Izumi+PODC17}, far less is known about target subgraphs that are not cliques. In the dynamic setting, Bonne and Censor-Hillel~\cite{bonne2019distributed} highlighted that cliques are \emph{unique} in that they can be found trivially in \emph{one round} if bandwidth is unrestricted. 

In this work, we demonstrate that it is possible to achieve meaningful results beyond cliques, despite the inherent difficulties.
For subgraph finding beyond cliques, we present a \emph{complete characterization} of the bandwidth complexity for the membership-listing problem across all target subgraphs, all numbers of rounds, and all four types of topological changes: node insertions, node deletions, edge insertions, and edge deletions (\Cref{table:memList1}). Moreover, we show partial characterizations for one-round membership-detection (\Cref{tab:memDect}) and listing (\Cref{tab:listing}).%\yijun{Maybe discuss these results a bit more here -- I will do this once I finish editing all the remaining sections.}

Our contribution lies in finding structures in the apparent chaos: We identify relevant graph classes (e.g., complete multipartite graphs) and parameters (e.g., node-edge versions of distance, radius, and diameter) in the area of dynamic distributed subgraph finding. For the cases where a full characterization has yet to be achieved, we identify remaining challenging open problems and outline future research directions. Addressing these challenges will likely require developing novel techniques.

\subsection{Additional Related Work} 
K{\"{o}}nig and Wattenhofer~\cite{koenig2013local} conducted a systematic study of classical network problems under various types of topological changes, such as node or edge insertions and deletions. They said that a combination of a problem and a topological change is \emph{locally fixable} if an existing solution can be repaired in $O(1)$ rounds following a topological change in the network.
 Several subsequent studies have investigated dynamic distributed algorithms for local problems, such as independent set~\cite{assadi2018fully,censor2016optimal}, matching~\cite{solomon2016fully}, and coloring~\cite{parter2016local}. The dynamic-$\LOCAL$ model was formalized in~\cite{akbari2021locality}. Distributed algorithms for highly dynamic networks, where multiple topological changes can occur within a single communication round, were examined in~\cite{bamberger2019local,censorhillel_et_al:LIPIcs.OPODIS.2020.28}. Global distributed problems, such as consensus and information dissemination, have also been studied in the dynamic setting~\cite{dinitz2018smoothed,dutta2013complexity,haeupler2011faster,jahja2020sublinear,kuhn2010distributed,kuhn2011coordinated,yu2018cost}. Dynamic distributed algorithms are closely related to the concept of \emph{self-stabilization}---a key notion in distributed computing---where a distributed network undergoes various changes and must rapidly return to a stable state after some quiet time~\cite{selfstablizationbook}.

%to do: distributed subgraph finding:
%congested clique
%congest
%property testing
%mpc
There is a long line of research studying distributed subgraph finding in the $\CONGEST$ model. The first breakthrough in triangle detection was achieved by Izumi and Le Gall~\cite{Izumi+PODC17}, who demonstrated that triangle detection and listing can be completed in $\tilde{O}(n^{2/3})$ and $\tilde{O}(n^{3/4})$ rounds, respectively. After a series of work~\cite{CensorCLL21,Censor+PODC20,Censor2022deterministic,chang2024deterministic,ChangPSZ21,Eden+DISC19}, it was shown that $\tilde{\Theta}(n^{1-2/k})$ is the tight bound for $k$-clique listing via the use of expander decomposition and routing. Distributed cycle findings have also been extensively studied~\cite{Drucker+PODC14,Eden+DISC19,Fischer+SPAA18,Korhonen+OPODIS17}.
Property-testing variants of the distributed subgraph finding problem have been explored~\cite{Brakerski2009distributed,Censor2019fast,EvenFFGLMMOORT17,fraigniaud2019distributed,fraigniaud2016distributed}. The subgraph finding problem has also been investigated in other computational models beyond distributed and parallel computing~\cite{Alon+SIDMA08,becchetti+KDD08,Eden+SICOMP17}.
 For more on the distributed subgraph finding problem, see the survey by Censor-Hillel~\cite{censorhillel:LIPIcs.ICALP.2021.3}. 

The study of distributed subgraph finding is partially motivated by the fact that many algorithms can be significantly improved if the underlying network does not contain certain small subgraphs. %\yijun{While this is indeed one motivation, this is not really a main one, maybe try editing this later} 
For instance, Pettie and
Su showed distributed coloring algorithms for triangle-free graphs using fewer colors and rounds \cite{PETTIE2015263}. Similarly, Hirvonen, Rybicki, Schmid, and Suomela developed an efficient distributed algorithm to find large cuts in triangle-free graphs~\cite{hirvonen2014largecutslocalalgorithms}. 
%\yijun{discuss prior work on distributed subgraph finding and dynamic distributed graph algorithms}

\subsection{Roadmap}
In \Cref{sect:prelim}, we review essential graph terminology and parameters and demonstrate how certain graph parameters determine the minimum number of rounds required for certain subgraph-finding tasks. In \Cref{sect:overview}, we present a technical overview of our proofs. In \Cref{sect:cliqueLB}, we show lower bounds for finding cliques. In \Cref{sect:trianglesUB}, we show upper bounds for finding triangles.
In \Cref{sect:memList}, we show a complete characterization of the bandwidth complexity of the membership-listing problem for every target subgraph, every round number, and every type of topological change.
In \Cref{sect:memDect}, we show a partial characterization for one-round membership-detection.
In \Cref{sect:list}, we show a partial characterization for one-round listing. 
In \Cref{sect:conclusions}, we conclude the paper with some open questions.
In  \Cref{sect:random}, we show that many of our lower bounds can be extended to the randomized setting.

%\yijun{write paper organization}

\section{Preliminaries}\label{sect:prelim}
In this section, we present the basic definitions used in the paper.
In \Cref{subsec:terminology}, we review essential graph terminology and parameters. 
In \Cref{sect:properties}, we discuss some basic properties of these graph parameters.
In \Cref{subsec: local constraints}, we demonstrate how specific graph parameters determine the minimum number of rounds required for $\MemList(H)$ and $\MemDetect(H)$, independent of any bandwidth constraints.

%and techniques we used in this paper.

%In \Cref{subsec: local constraints}, we investigate the locality constraints of Membership Listing and Membership Detection problem, which is the minimum round requirement for $\MemList(H)$ and $\MemDetect(H)$ regardless any bandwidth constraints. 

\subsection{Graph Terminology}\label{subsec:terminology}
Given a graph $H$, an edge $e=\{u,v\}\in E(H)$, and a node subset $S \subseteq V(H)$, we write $H[S]$ %\yijun{Can we write $H[S]$?}\mingyang{Updated}
to denote the subgraph of $H$ induced by node set $S$, write $H-S$ to denote the subgraph of $H$ induced by node set $V(H)\setminus S$, and write $H-e$ to denote the subgraph of $H$ induced by edge set $E(H)\setminus \{e\}$.
%\gopinath{May be we use $``\setminus"$ instead of $``-"$ through out the paper.}\yanyu{different meaning here}
% \gopinath{May be we expand expand the list of notations used in this paper in this section?}

For a graph $G=(V, E)$, we have the following definitions.
\begin{definition}[Eccentricity, diameter, radius, and center]\label{def:parameters} %We define Eccentricity, diameter, radius, and center as follows.
    \begin{align*}
    &\text{Eccentricity of $u$:} \ & \ecc_G(u)&=\max\{\dist_G(u,v): v\in V\}\\
        &\text{Diameter of $G$:} \ & \diam(G) &= \max \{\ecc_G(u): u\in V\}\\
        &\text{Radius of $G$:} \ & \rad(G) &= \min \{\ecc_G(u): u\in V\}\\
         &\text{Center of $G$:} \ &  \centre(G) &= \{u \in V: \ecc_G(u) = \rad(G)\}
\end{align*}
\end{definition}

In other words, the eccentricity of a node is the maximum of its distance to other nodes, the diameter of a graph is the maximum eccentricity of its nodes, the radius of a graph is the minimum eccentricity of its nodes, and the center of a graph is a node subset containing all nodes with eccentricity equal to the radius.

% \begin{definition}[Eccentricity]
% The eccentricity of a node is the maximum of its distance to other nodes:
% $\ecc_G(u)=\max\{\dist_G(u,v): v\in V\}$.
% \end{definition}
% \begin{definition}[Diameter and Radius]
%     The diameter of $G$ is the maximum eccentricity of its nodes:
%     $\diam(G) = \max \{\ecc_G(u): u\in V\}$.
%     The radius of $G$ is the minimum eccentricity of its nodes:
%     $\rad(G) = \min \{\ecc_G(u): u\in V\}$.
% \end{definition}
% \begin{definition}[Center]
%     The center of $G$ is a node subset containing all nodes with eccentricity equal to the radius:
%     $\centre(G) = \{u \in V: \ecc_G(u) = \rad(G)\}$.
% \end{definition}

We consider the ``node-edge'' version of the definitions above.

\begin{definition}[Node-edge distance]%\mingyang{$\dist_G$}\yijun{changed}
    For any $u \in V$ and any $e=\{v,w\} \in E$, the node-edge distance between them is defined as $\dist_G(u, e) := 1+\min \left\{ \dist_G(u, w), \dist_G(u,v)\right\}$.
\end{definition}

We overload the notation $\dist(\cdot,\cdot)$ since it is easy to distinguish $\dist(u,v)$ and $\dist(u, e)$.

\begin{definition}[Node-edge eccentricity, diameter, radius, and center]\label{def:NE_parameters} %We define Eccentricity, diameter, radius, and center as follows.
    \begin{align*}
    &\text{Node-edge eccentricity of $u$:} \ & \neecc_G(u)&=\max\{\dist_G(u,e): e\in E\}\\
        &\text{Node-edge diameter of $G$:} \ & \nediam(G) &= \max \{\neecc_G(u): u\in V\}\\
        &&&=\max_{u\in V} \max_{e \in E} \dist_G(u,e)\\
        &\text{Node-edge radius of $G$:} \ & \nerad(G) &= \min \{\neecc_G(u): u\in V\}\\
        &&&=\min_{u\in V} \max_{e \in E} \dist_G(u,e)\\
         &\text{Node-edge center of $G$:} \ &  \necenter(G) &= \{u \in V: \neecc_G(u) = \nerad(G)\}\\
         &&&= \{u \in V: \max_{e \in E} \dist_G(u,e) = \nerad(G)\}
\end{align*}
\end{definition}

% \begin{definition}[Node-Edge Radius]
%     $\nerad(G) := \min_{v\in V} \max_{e \in E} \dist_G(v,e)$.
% \end{definition}

% \begin{definition}[Node-Edge Center]
%     The node-edge center of $G$ is a node subset containing all nodes $u$ such that the node-edge distance $\dist_G(u, e)$ between $u$ and the furthest edge $e$ from $u$ is equal to the node-edge radius
%     $\necenter(G) = \{u \in V: \max_{e \in E} \dist_G(u,e) = \nerad(G)\}$.
% \end{definition}

The node-edge distance definition is needed to capture the complexity bounds for various problems discussed in this paper. For example, in $C_4$, any node can detect an edge deletion within one round, whereas in $C_5$, the node opposite the deleted edge cannot detect the topological change in a single round. This is reflected by the fact that $\nerad(C_4) = 2 < 3 = \nerad(C_5)$, while the standard radius definition $\rad(C_4) = 2 = \rad(C_5)$ is insufficient to capture the difference.

\subsection{Properties of Graph Parameters}\label{sect:properties}

% \mingyang{shall we discuss the relation between $\nediam$ and $\diam$ here? It is $\diam\leq \nediam\leq\diam +1$ due to \Cref{obs: r_H r_H'} \Cref{obs:r_H=nediam(H)-1}
% \Cref{obs: r_H'}. Maybe we should change the presentation.}
%\yijun{I rewrote this paragraph.}\yijun{Note: the presentation of the above definitions is really messy -- I will think about how to make them look cleaner}

%The motivation for this node-edge distance definition arises from our observations of complexity differences between $C_4$ and $C_5$ across various problems discussed in this paper. This distinction is exemplified by the fact that $\rad(C_4)=2=\nerad(C_4)$, while $\rad(C_5)=2\neq 3=\nerad(C_5)$. This difference has practical implications: in $C_4$, any node can detect an edge change within 1 round, whereas in $C_5$, the node opposite the deleted edge cannot detect the change in a single round.

These distance-based parameters and their node-edge versions are closely related. Specifically, we have the following observation.
%\yanyu{if okay, remove the color}
\begin{observation}\label{obs: diam nediam}
For any connected graph $H$, $\diam(H) \leq \nediam(H)\leq\diam(H) +1$.
\end{observation}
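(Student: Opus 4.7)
The plan is to verify both inequalities directly from the definitions, using only the triangle inequality. Each direction reduces essentially to a one-line estimate, so the proof will be short.

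For the lower bound $\diam(H)\le\nediam(H)$, I would first pick a pair $u,v\in V(H)$ realizing the diameter, so $\dist_H(u,v)=\diam(H)$. Since $H$ is connected and has at least three nodes, $v$ has at least one neighbor $w$, giving an edge $e=\{v,w\}\in E(H)$. By the triangle inequality, $\dist_H(u,w)\ge \dist_H(u,v)-1$, so $\min\{\dist_H(u,v),\dist_H(u,w)\}\ge \dist_H(u,v)-1$. Substituting into the definition of node-edge distance yields $\dist_H(u,e)\ge 1+(\dist_H(u,v)-1)=\diam(H)$, which is already a witness for $\nediam(H)\ge\diam(H)$.

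For the upper bound $\nediam(H)\le\diam(H)+1$, I would fix any $u\in V(H)$ and any edge $e=\{v,w\}\in E(H)$, and simply estimate
\[
\dist_H(u,e)=1+\min\{\dist_H(u,v),\dist_H(u,w)\}\le 1+\dist_H(u,v)\le 1+\diam(H).
\]
Taking the maximum over all choices of $u$ and $e$ gives the claim.

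There is no real obstacle in this argument: the statement is a routine consequence of the triangle inequality together with the fact that the node-edge distance from $u$ to $e=\{v,w\}$ differs from $\min\{\dist_H(u,v),\dist_H(u,w)\}$ by exactly $1$. The only point worth flagging is verifying that the diameter-realizing endpoint $v$ has a neighbor, which is guaranteed by connectedness of $H$ together with $|V(H)|\ge 2$ (and certainly by the paper's standing assumption that target subgraphs have at least three nodes).
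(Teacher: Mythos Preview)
Your proposal is correct and follows essentially the same approach as the paper: both proofs use the triangle inequality (equivalently, the fact that $|\dist_H(u,w)-\dist_H(u,v)|\le 1$ for any edge $\{v,w\}$) to get the lower bound via an edge incident to a diameter-realizing vertex, and the trivial estimate $\min\le\dist_H(u,v)\le\diam(H)$ for the upper bound. The paper merely compresses the argument into a single chain of inequalities over the $\max$ operators, whereas you spell out the witnessing pair explicitly; the content is identical.
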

\begin{proof}
Since for any node $u$ and any edge $\{v,w\}$, $|\dist_H(u, w)-\dist_H(u,v)| \leq 1$, the observation follows directly from the definition of $\diam$ and $\nediam$. 
%\yanyu{this proof is quite trivial feel free to shorten it}
\begin{align*}
    \diam(H)&=\max_{u\in V} \max_{v \in V} \dist_H(u,v)\\
    % &= \max_{u\in V} \max_{\{v,w\} \in E} 1+\min \left\{ \dist(u, w), \dist(u,v)\right\}\\
    &\leq \max_{u\in V} \max_{\{v,w\} \in E} 1+\min \left\{ \dist_H(u, w), \dist_H(u,v)\right\}= \nediam(H)\\
    &\leq 1 + \max_{u\in V} \max_{v \in V} \dist_H(u,v)=\diam(H)+1.
    \qedhere
\end{align*}
% $$\nediam(H)
%     = \max_{u\in V} \max_{\{v,w\} \in E} 1+\min \left\{ \dist(u, w), \dist(u,v)\right\}
%     \geq \max_{u\in V} \max_{v \in V} \dist(u,v)=\diam(H)$$
% $$\nediam(H)
%     = \max_{u\in V} \max_{\{v,w\} \in E} 1+\min \left\{ \dist(u, w), \dist(u,v)\right\}
%     \leq 1 + \max_{u\in V} \max_{v \in V} \dist(u,v)=\diam(H)+1$$
\end{proof}

The class of \emph{complete multipartite graphs} plays a crucial role in the study of the complexity landscape of dynamic distributed subgraph finding, as it can be characterized in terms of node-edge diameter or node-edge independence.

\begin{definition}[Complete multipartite graphs]
    A graph $G$ is \underline{complete $k$-partite} if its node set $V(G)$ can be partitioned into $k$ independent sets $S_1, S_2, \ldots, S_k$ such that for any two nodes $u,v \in V(G)$, $\{u,v\} \in E(G)$ if and only if $u \in S_i$ and $v \in S_j$ for some $i \neq j$. For any $k$, a complete $k$-partite graph is called a \underline{complete multipartite} graph. 
%\yanyu{shorten definition a little bit}
    % \begin{enumerate}
    %     \item Each $S_i$ is an independent set, i.e. for all nodes $u,v \in S_i$, $\{u,v\} \notin E(G)$.
    %     \item Any pair of distinct parts $S_i$, $S_j$ induces a complete bipartite graph, i.e. for all nodes $u,v \in G$, if $u \in S_i$, $v \in S_j$, and $i \neq j$, then $\{u,v\} \in E(G)$.
    % \end{enumerate}
\end{definition}

% \yanyu{maybe better to put this at 2.1}
\begin{definition} [Node-edge independence]\label{def:neindependent}
    In a graph $G$, a node $w$ and an edge $\{u,v\}$ are \underline{independent} if $w$ is adjacent to neither $u$ nor $v$ in $G$.
    A graph $G$ is \underline{node-edge independent} if {at least one} of its node-edge pairs are independent.
    % if there exist a node $w \in V(G)$ and an edge $e = \{u,v\} \in E(G)$ such that $w$ is adjacent to neither $u$ nor $v$. 
\end{definition}

In other words, a graph $G$ is node-edge independent if and only if it contains a co-$P_3$ (complement of a path with three nodes) as an induced subgraph. See \Cref{fig:co-p3}.

\begin{lemma} \label{lem: P3 free}
    A graph $G$ has no induced $P_3$ if and only if its connected components are cliques.
\end{lemma}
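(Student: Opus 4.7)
The plan is a standard two-direction proof, with the nontrivial direction proceeding by contradiction via a shortest path argument.

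For the easy direction ($\Leftarrow$), I would observe that if every connected component of $G$ is a clique, then any three nodes $a,b,c$ either lie in the same component, in which case the induced subgraph $G[\{a,b,c\}]$ is a triangle rather than a $P_3$, or they are split across components, in which case $G[\{a,b,c\}]$ has at most one edge. In either case it is not isomorphic to $P_3$, so $G$ has no induced $P_3$.

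For the forward direction ($\Rightarrow$), I would argue by contrapositive: assume some connected component $C$ of $G$ is not a clique and exhibit an induced $P_3$. Since $C$ is not a clique, there exist two nodes $u, v \in V(C)$ with $\{u,v\} \notin E(G)$, i.e.\ $\dist_G(u,v) \geq 2$. Because $u$ and $v$ lie in the same connected component, $\dist_G(u,v)$ is finite, so I can pick $u', v' \in V(C)$ minimizing $\dist_G(u', v')$ subject to $\dist_G(u', v') = 2$ (such a pair exists since distances are integer and at least $2$, and any shortest path between $u$ and $v$ contains a pair at distance exactly $2$). Let $w$ be any common neighbor of $u'$ and $v'$. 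Then $G[\{u', w, v'\}]$ contains the edges $\{u', w\}$ and $\{w, v'\}$ but not $\{u', v'\}$ (since $\dist_G(u',v')=2$), and so is exactly an induced $P_3$.

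The proof is entirely routine; there is no real obstacle, since the shortest-path trick immediately produces the forbidden induced subgraph. The only care needed is to emphasize \emph{induced} $P_3$, which is why choosing a pair at distance exactly $2$ (rather than arbitrary nonadjacent nodes in $C$) is essential.
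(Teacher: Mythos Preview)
Your proof is correct and follows essentially the same shortest-path argument as the paper: both directions match, and for the nontrivial direction the paper likewise takes nonadjacent nodes in a common component and reads off an induced $P_3$ from three consecutive vertices on a shortest path between them. The only quibble is the phrasing ``minimizing $\dist_G(u',v')$ subject to $\dist_G(u',v')=2$,'' which is tautological---you presumably meant ``subject to $\dist_G(u',v')\ge 2$''---but your parenthetical makes the intended (and correct) argument clear.
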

\begin{proof}
%\yanyu{now I am a little bit more inclined to omit the proof of \ref{lem: P3 free} and \ref{lem: multipartite iff not node-edge indp}}
    If every connected component of $G$ is a clique, then any three nodes selected cannot induce a $P_3$. 
    Conversely, suppose a graph $G$ has no induced $P_3$ and there is a connected component that is not a clique, then there is a pair of disconnected nodes $u$ and $v$ in the same component. The two nodes $u$ and $v$ must be connected by some minimum-length path $P_{uv}$ whose length is at least $2$, since they are in the same connected component. Any three consecutive nodes on the path induce a $P_3$, which is a contradiction. 
\end{proof}

\begin{lemma} \label{lem: multipartite iff not node-edge indp}
    A graph $G$ is complete multipartite if and only if it is \emph{not} node-edge independent.
\end{lemma}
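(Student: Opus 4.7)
The plan is to prove the equivalence via the complement graph, leveraging Lemma~\ref{lem: P3 free}. The key observation is that ``node-edge independent'' is exactly the statement ``contains an induced co-$P_3$'' (edge plus an isolated vertex), which in the complement becomes ``contains an induced $P_3$''. Meanwhile, the complete multipartite graphs are precisely the graphs whose complements are disjoint unions of cliques. So the lemma should reduce cleanly to Lemma~\ref{lem: P3 free} applied to $\overline{G}$.

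More concretely, I would proceed in three short steps. First, I would verify the structural reformulation: $G$ is complete multipartite with parts $S_1, \dots, S_k$ if and only if $\overline{G}$ is the disjoint union of the cliques on $S_1, \dots, S_k$. This is immediate from the definition, since $\{u,v\} \in E(G)$ iff $u,v$ lie in different parts, so $\{u,v\} \in E(\overline{G})$ iff they lie in the same part. Second, I would translate the node-edge independence condition: a triple $(w, \{u,v\})$ witnesses node-edge independence in $G$ iff $\{u,v\}$ is an edge of $G$ and $\{u,w\}, \{v,w\}$ are non-edges, which is the same as saying $\{u,v,w\}$ induces a path $u\text{--}w\text{--}v$ in $\overline{G}$ (since $\{u,v\}$ becomes a non-edge, while $\{u,w\}$ and $\{v,w\}$ become edges there). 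Hence $G$ is node-edge independent iff $\overline{G}$ contains an induced $P_3$.

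Third, I would chain the equivalences:
\[
G \text{ is complete multipartite} \iff \overline{G} \text{ is a disjoint union of cliques} \iff \overline{G} \text{ has no induced } P_3 \iff G \text{ is not node-edge independent},
\]
where the middle equivalence is Lemma~\ref{lem: P3 free} applied to $\overline{G}$, and the outer two are the reformulations just established.

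I do not expect any serious obstacle here; the only subtlety is being careful with the ``at least one node-edge pair is independent'' phrasing in Definition~\ref{def:neindependent} (a single bad triple already makes $G$ node-edge independent), so the negation I use in the third equivalence is ``every node-edge pair is dependent,'' i.e., for every edge $\{u,v\}$ and every node $w \notin \{u,v\}$, $w$ is adjacent to $u$ or $v$. Matching this with ``$\overline{G}$ is $P_3$-free'' requires only noting that a $P_3$ in $\overline{G}$ uses three distinct vertices, which is automatic.
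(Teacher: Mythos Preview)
Your proposal is correct and follows essentially the same route as the paper: both pass to the complement, identify node-edge independence in $G$ with an induced $P_3$ in $\overline{G}$ (equivalently, an induced co-$P_3$ in $G$), and then invoke Lemma~\ref{lem: P3 free} together with the observation that complete multipartite graphs are exactly those whose complements are disjoint unions of cliques. Your write-up is slightly more detailed but the argument is the same.
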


\begin{proof}
    Observe that a graph is complete multipartite if and only if its complement is a disjoint union of cliques. Hence \Cref{lem: P3 free} implies that a graph $G$ is \emph{not} node-edge independent $\iff$ $G$ does not contain a co-$P_3$ as an induced subgraph $\iff$ complement of $G$ is $P_3$-free $\iff$ $G$ is complete multipartite.
\end{proof}

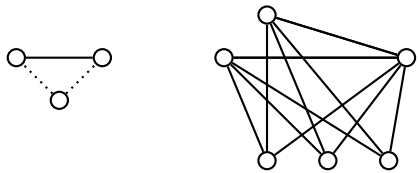
\begin{figure}[!htbp]
    \centering
    \begin{tikzpicture}[node distance={8mm}, thick, main/.style = {draw, circle, inner sep = .8mm}] 
        \node[main] (1) {}; 
        \node[main] (2) [above left of=1] {}; 
        \node[main] (3) [above right of=1] {}; 
        \draw[dotted](1) -- (2); 
        \draw[dotted](1) -- (3); 
        \draw(2) -- (3); 

        \node(4) [right of=3] {};
        \node[main] (11) [right of=4] {}; 
        \node[main] (12) [above right of=11] {}; 
        \node(13) [right of=11] {};
        \node(133) [right of=13] {};
        \node[main] (21) [right of=133] {};
        \node(14) [below right of=11] {}; 
        \node[main] (31) [below of=14] {}; 
        \node[main] (32) [right of=31] {}; 
        \node[main] (33) [right of=32] {}; 

        \draw(11) -- (21); 
        \draw(11) -- (31); 
        \draw(11) -- (32); 
        \draw(11) -- (33); 

        \draw(12) -- (21); 
        \draw(12) -- (31); 
        \draw(12) -- (32); 
        \draw(12) -- (33); 

        \draw(21) -- (11); 
        \draw(21) -- (12); 
        \draw(21) -- (31); 
        \draw(21) -- (32); 
        \draw(21) -- (33); 
    \end{tikzpicture}
    \caption{Co-$P_3$ (left, dotted lines denote non-edges) and a complete multi-partite graph (right).}
    \label{fig:co-p3}
\end{figure}

% Now we are ready to prove the following theorem.
Now we characterize complete multipartite graphs in terms of node-edge diameter.

\begin{observation} \label{thm: nediam=2}
    For any connected graph $H$ with at least three nodes, $\nediam(H)=2$ if and only if $H$ is complete multipartite.
% \yanyu{I feel that as a theorem it is better to include our implicit assumption here}
\end{observation}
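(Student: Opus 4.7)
The plan is to prove both directions directly, leveraging \Cref{lem: multipartite iff not node-edge indp} for the harder direction.

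For the ``$\Leftarrow$'' direction, I would assume $H$ is a complete multipartite graph with parts $S_1, \ldots, S_k$ and show $\nediam(H)\leq 2$ by a short case analysis. Pick an arbitrary node $u\in S_\ell$ and an arbitrary edge $e=\{v,w\}$ with $v\in S_i$, $w\in S_j$ (so $i\neq j$, since $e$ is an edge). If $u\in\{v,w\}$, then $\dist_H(u,e)=1$ by definition. Otherwise $\ell$ cannot equal both $i$ and $j$; if $\ell\neq i$ then $u$ is adjacent to $v$, and if $\ell=i$ then $\ell\neq j$ so $u$ is adjacent to $w$. In either subcase $\min\{\dist_H(u,v),\dist_H(u,w)\}\leq 1$, hence $\dist_H(u,e)\leq 2$. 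To match this with the lower bound $\nediam(H)\geq 2$, I would pick any edge $e=\{v,w\}$ and use $|V(H)|\geq 3$ to find a third node $u\notin\{v,w\}$, which forces $\dist_H(u,e)\geq 2$.

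For the ``$\Rightarrow$'' direction, I would argue the contrapositive. Suppose $H$ is not complete multipartite. By \Cref{lem: multipartite iff not node-edge indp}, $H$ is node-edge independent, so there exists a node $w$ and an edge $e=\{u,v\}$ such that $w$ is adjacent to neither $u$ nor $v$. Consequently $\dist_H(w,u)\geq 2$ and $\dist_H(w,v)\geq 2$, and therefore $\dist_H(w,e)=1+\min\{\dist_H(w,u),\dist_H(w,v)\}\geq 3$, which gives $\nediam(H)\geq 3\neq 2$.

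\textbf{Expected obstacle.} Honestly, there isn't a significant obstacle once \Cref{lem: multipartite iff not node-edge indp} is available; that lemma already does the heavy lifting for one direction. The only subtlety worth handling carefully is the bookkeeping in the forward direction, namely distinguishing the case $u\in e$ (which is handled by the ``$1+$'' in the node-edge distance definition) from $u\notin e$, so that one uses the correct definition of $\dist_H(u,e)$ and does not accidentally claim $\dist_H(u,e)\leq 1$ when $u$ lies on $e$. Other than that, both directions reduce to a one-line argument.
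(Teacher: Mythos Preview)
Your proposal is correct and follows essentially the same approach as the paper: both directions hinge on \Cref{lem: multipartite iff not node-edge indp}, with the $\Rightarrow$ direction argued via the contrapositive (an independent node-edge pair forces $\dist_H(w,e)\geq 3$) and the $\Leftarrow$ direction observing that in a complete multipartite graph every node is adjacent to at least one endpoint of every edge, together with the existence of a third node to get $\nediam(H)\geq 2$. The only cosmetic difference is that the paper phrases the $\Leftarrow$ upper bound by invoking \Cref{lem: multipartite iff not node-edge indp} once more (complete multipartite $\Rightarrow$ not node-edge independent $\Rightarrow$ every $\dist_H(w,e)\leq 2$), whereas you spell out the same fact via an explicit part-index case analysis.
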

\begin{proof}
    % By \Cref{obs:r_H=nediam(H)-1} that $r_H = \nediam(H) - 1$, it suffices to show that $\nediam(H) = 2$ if and only if $H$ is complete multipartite. 
    If $\nediam(H) = 2$, then $H$ is not node-edge independent, since if node $w$ is independent of edge $e=\{u,v\}$, then $\dist_H(w, e)\geq 3$. Hence, $H$ is complete multipartite by \Cref{lem: multipartite iff not node-edge indp}. 
    % By definition of $\nediam$, for any node $w \in V(H)$ and edge $e = \{u,v\} \in E(H)$, $w$ is either one of $u$ or $v$, or adjacent to at least one of them. Hence $H$ is not node-edge independent and $H$ is complete multipartite by \Cref{lem: multipartite iff not node-edge indp}. 
    Conversely, suppose $H$ is complete multipartite, then it is not node-edge independent, and $\dist_H(w, e)\leq 2$ for all $w\in V(H)$ and $e\in E(H)$. Hence, $\nediam(H)\leq 2$. Since $H$ has at least three nodes, for any edge $e$, there is a node $w$ such that $w\notin e$. We have $\dist_H(w,e)\geq 2$ and $\nediam(H)\geq 2$. 
    %\yanyu{shorten a lot, please check coherence.}
    % The proof is complete.
    % For the converse, suppose $H$ is a complete multipartite graph. Consider a node $w \in V(H)$ and an edge $e = \{u,v\} \in E(H)$ where $w,u,v$ are distinct (they must exist since $H$ is connected with at least three nodes). By definition of complete multipartite, $u$ and $v$ must be in different parts. Hence $w$ must be in a different parts from at least one of them. Without loss of generality, suppose $w$ and $u$ are in different parts. Then $\{w,u\} \in E(H)$ so $\dist(w, e) = 2$. Since our choices of $w$ and $e$ were arbitrary, we must have $\nediam(H) = 2$.
\end{proof}

\subsection{Locality Constraints}\label{subsec: local constraints}

In this section, we investigate the minimum number of rounds required for the two problems $\MemList(H)$ and $\MemDetect(H)$ regardless of any bandwidth constraints, for any given target subgraph $H$. 
Due to the \emph{local} nature of topological changes, the appearance or disappearance of a copy of $H$ might not be detectable for some nodes in $H$ in the same round or even after several rounds of communication. 

See \Cref{fig: locality constraints} for an example. Suppose at some round $r$, edge $\{u,v\}$ is inserted, so the current graph is $C_5$. Assuming unlimited bandwidth, after one round of messaging, all the nodes highlighted can detect the appearance of $C_5$ due to the messages from $u$ or $v$. The only remaining node $w$ is unable to detect the appearance of $C_5$ after one round of messaging, as the information about the topological change has not reached $w$.

%Suppose at round $r$, the first copy of $H$ occurs after edge $\{u,v\}$ is inserted. For nodes $u$ and $v$, they may be able to detect the appearance of $H$. However for some other nodes, they may not be able to detect it even after one round of communication. See \Cref{fig: locality constraints} for illustration.

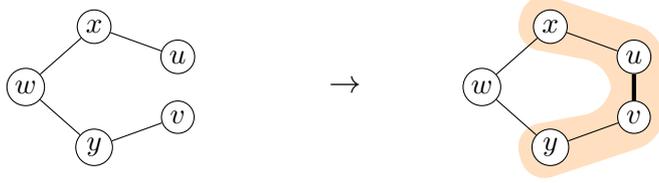
\begin{figure}[htbp]
    \centering
    \begin{tikzpicture}[
vtx/.style = {draw, circle, inner sep=0.6mm, fill=white}, 
bg/.style={fill=orange!25, draw=none}, 
x=1mm, y=1mm
]
    \begin{scope}        
        \node[vtx] (w) at (-2,0) {$w$};
        \node[vtx] (x) at (7,8) {$x$};
        \node[vtx] (y) at (7,-8) {$y$};
        \node[vtx] (u) at (18,4) {$u$};
        \node[vtx] (v)at (18,-4) {$v$};
        \draw (x)--(w);
        \draw (y)--(w);
        \draw (x)--(u);
        \draw (y)--(v);
    \end{scope}

    \node[font=\large] at (40,0) {$\rightarrow$};

    \begin{scope}[xshift=60mm]
    \filldraw[bg, rounded corners=3.5mm] 
            (4, 13) -- (22,7) -- (22,-7) -- (4,-13) -- (2, -5.5) -- (15,-3) -- (15,3) -- (2,5.5) -- cycle;
        \node[vtx] (w) at (-2,0) {$w$};
        \node[vtx] (x) at (7,8) {$x$};
        \node[vtx] (y) at (7,-8) {$y$};
        \node[vtx] (u) at (18,4) {$u$};
        \node[vtx] (v)at (18,-4) {$v$};
        \draw (x)--(w);
        \draw (y)--(w);
        \draw (x)--(u);
        \draw (y)--(v);
        \draw[ultra thick] (u)--(v);
    \end{scope}
\end{tikzpicture}
    \caption{Locality constraint for $\MemDetect(C_5)$.
    }
    \label{fig: locality constraints}
\end{figure}

In the subsequent discussion, we define two graph parameters $r_H$ and $r_H'$, and use them to show the locality constraints for $\MemList(H)$ and $\MemDetect(H)$.

% To handle edge insertions/deletions, 
First, we define threshold $r_H$ and show that $r_H$ is a lower bound on the number of rounds needed for $\MemList(H)$ or $\MemDetect(H)$ under edge insertions or under edge deletions.
% as follows. 
% For any edge $e$ in $H$, we write  $H/e$ to denote the graph resulting from contracting $e$ into one node $v_e$. Compute the eccentricity $\ecc_{H/e}(v_e)$ of $v_e$ in graph $H/e$. 
% Let $r_H$ be the largest eccentricity among all edges $e$.
\begin{definition}[Threshold $r_H$]\label{def: r_H}
For any given graph $H$, define $r_H = \nediam(H) - 1$. 
% \[
% r_H = \max_{e \in E(H)} \ecc_{H/e}(v_e),
% \]
% where $H/e$ is the graph resulting from contracting edge $e$ into one node $v_e$.
\end{definition}

% {\color{orange}
% \begin{observation} \label{obs:r_H=nediam(H)-1}
%     $r_H = \nediam(H) - 1$. 
% \end{observation}
% \begin{proof}
%     Given $H$ and $e\in E$, note that $\ecc_{H/e}(v_e) = \max \{\dist_G(u,e): u\in V\}-1$. By definition of $r_H$, we have $r_H = \max_{e \in E} \ecc_{H/e}(v_e) = 
%     \max_{e \in E} \max_{u\in V} \dist_G(u,e) -1 = \nediam(H) - 1$.
% \end{proof} \yijun{maybe discuss the connection to $\nediam(G)$ and add a proof of the observation below and define complete multipartite graphs}\mingyang{done}
% }

% We show that $r_H$ is a lower bound on the number of rounds needed for $\MemList(H)$ or $\MemDetect(H)$ under edge insertions or under edge deletions.

\begin{theorem}[Locality constraints, edge insertions/deletions]\label{lem: locality edge}
    Given any graph $H$, for any $T < r_H$, there exists no $T$-round algorithm for $\MemList(H)$ or $\MemDetect(H)$ under edge insertions or under edge deletions.
\end{theorem}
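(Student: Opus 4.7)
The plan is to instantiate a standard LOCAL-style indistinguishability argument, exploiting that $\nediam(H)$ measures exactly how many rounds are needed for some node of $H$ to learn about a faraway edge change within its copy. First I would pick a node $u \in V(H)$ and an edge $e = \{v_1, v_2\} \in E(H)$ witnessing the node-edge diameter, so that $\dist_H(u, e) = \nediam(H) = r_H + 1$; equivalently, both endpoints of $e$ lie at distance at least $r_H \geq T+1$ from $u$ in $H$.

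For the edge-insertion case, I would take the initial graph $G^0$ to have node set $V(H)$ and edge set $E(H) \setminus \{e\}$, and compare two dynamic-graph sequences. In Scenario A, the edge $e$ is inserted in round 1, producing $G^1 = H$, so $u$ lies in a copy of $H$, and any correct algorithm for $\MemList(H)$ or $\MemDetect(H)$ must list $H$ at $u$ or output $\Yes$ by round $T$. In Scenario B, no topological change ever occurs, so $G^1 = G^0$ has only $|E(H)|-1$ edges on $|V(H)|$ nodes and contains no subgraph isomorphic to $H$; hence the correct output at $u$ is the empty list or $\No$. The edge-deletion case is symmetric: take $G^0 = H$, let Scenario A delete $e$ in round 1, and let Scenario B do nothing.

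The core step is then to show that $u$'s $T$-round view is identical across the two scenarios, from which determinism forces identical outputs and yields a contradiction. Since both endpoints of $e$ lie at distance at least $T+1$ from $u$ in $H$, no path in $H$ from $u$ of length at most $T$ can traverse $e$, and so the $T$-hop balls in $H$ and in $H-e$ agree as induced subgraphs with matching shortest-path distances. Thus the topological change sits entirely outside $u$'s $T$-hop neighborhood at all times. A straightforward induction on $t = 0, 1, \ldots, T$ would then show that at the end of round $t$ every node within distance $T - t$ of $u$ has identical state under the two scenarios, using determinism together with the fact that the single topological change of round 1 occurs outside every such ball; specializing to $u$ at $t = T$ completes the argument.

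The main obstacle I anticipate is making the inductive step airtight when the communication graph itself differs between the scenarios along the edge $e$. This is handled by observing that any message carrying information about the change must originate at an endpoint of $e$ and propagate along edges at most one hop per round, so after $t$ rounds such information has travelled at most $t$ hops from $e$; since $u$ sits at graph distance strictly greater than $T$ from $e$ in both scenarios, no such message can reach $u$ within $T$ rounds, and $u$'s output must coincide in the two executions, contradicting the differing required outputs.
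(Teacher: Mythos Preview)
Your proposal is correct and follows essentially the same approach as the paper: pick a node--edge pair realizing $\nediam(H)$, compare the dynamic graphs $H-e$ and $H$ (obtained from one another by a single insertion or deletion at round~1), and argue that the witnessing node cannot distinguish the two within $T<r_H$ rounds because both endpoints of $e$ lie at distance at least $r_H\ge T+1$ from it. The paper's proof is terser (it simply asserts that the node ``must receive identical messages in both cases'' without spelling out the induction), but the construction and the indistinguishability reasoning are the same.
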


\begin{proof}
    Suppose $\dist_H(w, e) = \nediam(H)$, for a node $w\in V(H)$ and an edge $e=\{u,u'\}\in E(H)$.
    % Select $e=\{u,u'\}\in E(H)$ such that $r_H = \ecc_{H/e}(v_e)$. There exists a node $w \in V(H)$ such that $r_H = \min \{\dist_H(w,u), \dist_H(w,u')\}$.
    Consider the following dynamic graph $\mathcal{G}$ under edge insertions:
    \begin{enumerate}
        \item
        Initially, $G^0 =H - e$.% is induced by all edges in $H$ except edge $\{u,u'\}$.
        \item At round $1$, consider two cases:
        \begin{enumerate}
            \item The edge $\{u,u'\}$ is inserted.
            Thus $G^1 = H$.
            \item There is no change. Thus $G^1$ does not contain any subgraph isomorphic to $H$.
        \end{enumerate}
    \end{enumerate}

Since $T < r_H = \nediam(H)-1 = \min \{\dist_H(w,u'), \dist_H(w,u)\}$, within $T$ rounds of communication, node $w$ must receive identical messages in both cases and cannot distinguish two cases, so any correct algorithm requires at least $r_H$ rounds. 

A similar proof applies to the case of edge deletion. The only required modification is to start with $G^0 = H$ and then replace the insertion of $e$ with the deletion of $e$.
\end{proof}

Next, we define the threshold $r_H'$ and show that $r_H'$ is a lower bound on the number of rounds needed for $\MemList(H)$ or $\MemDetect(H)$ under node insertions or under node deletions.
% For any node $u$ in $H$, define $N_H^+(u) = N_H(u) \cup \{u\}$ as the union of node $u$ and its neighbors in $H$. Contract $N_H^+(u)$ into one node, which is also called $u$, and form a contracted graph $H/N^+_H(u)$. Compute the eccentricity of $u$ in $H/N^+_H(u)$. Define $r_H'$ to be the largest eccentricity among all choices of $u$.

\begin{definition}[Threshold $r_H'$]\label{def: r_H'}
For any given graph $H$, define 
$r_H'=\diam(H)-1$.

% \[
% r_H' = \max_{u \in V(H)} \ecc_{H/ N^+_H(u)}(u),
% \]
% where $H/N^+_H(u)$ is the graph resulting from contracting $N_H(u) \cup \{u\}$ into one node $u$.
\end{definition}

\begin{theorem}[Locality constraints, node insertions/deletions]\label{lem: locality node deletion}
    Given any graph $H$, for any $T < r_H'$, there exists no $T$-round algorithm for $\MemList(H)$ or $\MemDetect(H)$ under node insertions or under node deletions.
\end{theorem}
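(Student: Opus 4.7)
My plan is to mirror the proof of Theorem~\ref{lem: locality edge}, replacing the node-edge pair $(w, e)$ realizing the node-edge diameter with a pair of nodes $(u, v)$ realizing the ordinary diameter. Concretely, I would fix two nodes $u, v \in V(H)$ with $\dist_H(u, v) = \diam(H)$ and use $u$ as the witness node that cannot distinguish two adversarial scenarios within $T < \diam(H) - 1$ rounds.

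For node insertions, I would set $G^0 = H - v$ (with the remaining nodes keeping their identifiers from $H$), and at round $1$ consider two scenarios: in scenario (a), the adversary inserts $v$ with exactly its $H$-incident edges, yielding $G^1 = H$; in scenario (b) no change occurs, so $G^1 = G^0$. In (a), node $u$ lies in a copy of $H$ and must list or detect it by round $T$; in (b) no copy of $H$ exists, so $u$ must not. The contradiction will come from showing that $u$'s transcript through round $T$ is identical in the two scenarios.

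The heart of the argument is a locality observation. By the triangle inequality, every neighbor of $v$ in $H$ lies at distance at least $\diam(H) - 1 \geq T + 1$ from $u$ in $H$; since removing $v$ only increases distances, the same lower bound holds in $H - v$. Hence no node within distance $T$ of $u$ is adjacent to $v$, so none of these nodes observes any change in its own adjacency list between scenarios (a) and (b). Moreover, the induced subgraph on $u$'s $T$-ball is identical in $H$ and $H - v$, since any shortest path from $u$ to a node at distance $\leq T$ in $H$ cannot pass through $v$ (otherwise this distance would be at least $\diam(H) > T$). A standard inductive $\LOCAL$-style argument, using the fact that the initial topology $G^0$ is common to both scenarios, then shows that every node within distance $T$ of $u$ sends identical messages in both scenarios, so $u$'s view through round $T$ coincides---a contradiction.

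For node deletions, the construction is symmetric: take $G^0 = H$ and let scenario (a) delete $v$ while scenario (b) leaves the graph unchanged; the same indistinguishability analysis applies. The main obstacle I anticipate is checking that the dynamic-model subtleties (global knowledge of $G^0$, the identifier conventions when a node is inserted or removed, and the ``change-before-communication'' ordering within a round) do not interfere with the locality argument; once this bookkeeping is done, the proof reduces to the standard static $\LOCAL$ indistinguishability template already used in Theorem~\ref{lem: locality edge}.
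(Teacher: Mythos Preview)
Your proposal is correct and matches the paper's proof essentially verbatim: the paper also fixes a diameter-realizing pair, removes or inserts one endpoint, and argues that the other endpoint cannot see the change within $T < \diam(H)-1$ rounds because every neighbor of the modified node lies at distance at least $\diam(H)-1 > T$ from the witness. The only cosmetic difference is that the paper names the inserted/deleted node $u$ and the witness $w$, whereas you use $v$ and $u$; your justification of the indistinguishability is in fact slightly more explicit than the paper's.
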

\begin{proof}
    Suppose $\dist_H(u,w)=\diam(H)$ for nodes $u, w \in V(H)$.
    % We focus on two nodes $u$ and $w$ in $H$ with the property:
    % \[r_H' = \ecc_{H/N^+_H(u)}(u) = \min_{u'\in N_H(u)} \dist_H(w,u').\]
    For node insertions, we start with the graph $G^0=H-\{u\}$, the subgraph of $H$ induced by $V(H)\setminus \{u\}$. Consider two cases:
    \begin{enumerate}
        \item Insert node $u$, along with all its incident edges in $H$. Thus $G^1=H$.
        \item No topological change.
    \end{enumerate}
    Since $T < r_H' = \diam(H)-1 = \min \{\dist_H(w,u'): u' \in N_H(u)\}$, node $w$ receives the same information within $T$ rounds in both cases. Therefore, for any $T$-round algorithm, $w$ is unable to correctly decide whether $H$ appears. 
    We can use a similar design to prove the case of node deletions: Start with graph $H$, and then at round $1$, either delete node $u$ or do nothing. Since $T < r_H' = \min \{\dist_H(w,u'): u' \in N_H(u)\}$,  the same analysis shows that node $w$ cannot decide whether $H$ disappears.  
\end{proof}

We show a tighter bound in terms of $r_H$ for the case of node insertions. We remark that the same argument does work for node deletion since nodes are assumed to know the entire topology before any deletion and hence may decide based on one-sided information along the shortest path from $u$ to $x$ in the following example.

\begin{theorem}[Locality constraints, node insertions]\label{lem: locality node insertion}
    Given any graph $H$, for any $T < r_H$, there exists no $T$-round algorithm for $\MemList(H)$ or $\MemDetect(H)$ under node insertions.
\end{theorem}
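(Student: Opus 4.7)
The plan is to upgrade the construction of \Cref{lem: locality node deletion} by choosing the critical pair via the node-edge distance rather than the ordinary distance. Fix a node $w \in V(H)$ and an edge $e = \{u, u'\} \in E(H)$ with $\dist_H(w, e) = \nediam(H)$; by definition this gives $\min\{\dist_H(w, u), \dist_H(w, u')\} = \nediam(H) - 1 = r_H$. Set $G^0 = H - u$, which all nodes know initially, and consider two one-change executions. In scenario $A$, round $1$ inserts $u$ with neighbour set $N_H(u)$, so $G^1_A = H$; in scenario $B$, round $1$ inserts $u$ with neighbour set $N_H(u) \setminus \{u'\}$, so $G^1_B = H - e$. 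Both insertions are legal because the adversary may attach a newly inserted node to an arbitrary subset of current nodes. In $A$ the whole graph is a copy of $H$ containing $w$, so any correct algorithm must make $w$ output \Yes; in $B$ the graph $H - e$ has $|V(H)|$ nodes and only $|E(H)|-1$ edges, hence contains no subgraph isomorphic to $H$ at all, so $w$ must output \No.

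It therefore suffices to show that $w$'s $T$-round view is identical in $A$ and $B$ whenever $T < r_H$. I will track a ``perturbation set'' $P_t$ of nodes whose state at the end of round $t$ may differ between the two executions. Immediately after the round-$1$ change, only $u$ (whose neighbour lists in $A$ and $B$ differ) and $u'$ (which sees $u$ as a new neighbour only in $A$) have differing states; every other node either observes no change or observes the identical event ``$u$ is a new neighbour''. Hence $P_0 \subseteq \{u, u'\}$. In the synchronous model, a node's round-$t$ outgoing message is determined by its state at the \emph{start} of round $t$, so any node outside $P_{t-1}$ transmits identical messages in both executions, and a node can join $P_t$ only by receiving a round-$t$ message from a neighbour in $P_{t-1}$. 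Using $H$ as a common upper bound on the communication graphs of both executions, induction on $t$ yields $P_t \subseteq \{v : \dist_H(v, \{u, u'\}) \leq t\}$. Since $\dist_H(w, \{u, u'\}) = r_H > T$, we obtain $w \notin P_T$; hence $w$'s end-of-round-$T$ view, and therefore its output, coincide in $A$ and $B$, contradicting the required \Yes/\No distinction.

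The main obstacle is executing the perturbation-propagation induction carefully: although $u$'s round-$1$ message to a common neighbour $x \in N_H(u) \setminus \{u'\}$ may already differ between $A$ and $B$, such an $x$ is still only one hop from $\{u, u'\}$, so the $P_t$ recursion indeed yields radius $t$ rather than $t+1$, and this is what makes the bound $r_H$ tight rather than $r_H + 1$. A subtle ancillary point that allows a single construction to cover both $\MemDetect(H)$ and $\MemList(H)$ is the edge-count observation that $H - e$ contains no copy of $H$ whatsoever (not merely no copy through $w$), so $w$'s correct output in $B$ is unambiguously empty for either problem.
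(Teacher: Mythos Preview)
Your proof is correct and uses essentially the same construction as the paper: remove a node $u$ incident to an edge $e=\{u,u'\}$ realizing $\nediam(H)$, then reinsert it either with or without the edge to $u'$, and argue that the distinguished node $w$ cannot see the difference within $T<r_H$ rounds. The paper's version is slightly less direct: it first splits into the cases $r_H=r'_H$ (deferring to \Cref{lem: locality node deletion}) and $r_H=r'_H+1$, and only invokes this construction in the latter case, additionally deriving $\dist_H(x,u)=\dist_H(x,v)=\diam(H)$ there. Your observation that $\min\{\dist_H(w,u),\dist_H(w,u')\}=r_H$ holds unconditionally makes the case split unnecessary, so your argument is a clean, uniform version of the same idea; your explicit perturbation-set induction and the edge-count remark that $H-e$ contains no copy of $H$ are also a bit more carefully spelled out than in the paper.
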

\begin{proof} According to \Cref{obs: diam nediam}, we have either $r_H=\nediam(H)-1 = \diam(H)-1=r'_H$ or $r_H=\nediam(H)-1 = \diam(H)=r_H'+1$. If $r_H=r'_H$, then the proof follows from \Cref{lem: locality node deletion}. 
For the rest of the proof, we focus on the case where $r_H=\nediam(H)-1 = \diam(H)=r_H'+1$. 

% We first find a node $x$ and an edge $e=\{u,v\}$ such that 
% \[
% \dist_H(x,v)= \dist_H(x,u) = r_H = r_H'+1.
% \]
%     Let the initial graph be $G^0 = H$. We consider two cases for the first round: $G^1=H$ or $G^1=H-e$. We claim that node $x$ cannot distinguish the two cases within $T$ rounds of communication if $T< r_H = \dist_H(x,v)= \dist_H(x,u)$.
    Suppose $\dist_H(x, e) = \nediam(H) = \diam(H) + 1$ for node $x\in V(H)$ and edge $e=\{u,v\}\in E(H)$.
    % In the subsequent discussion, we write $w=r'_H$. Let $e=\{u,v\}\in E(H)$ and node $x\in V(H)$ be chosen such that $r_H = \ecc_{H/e}(v_e) = \min \{\dist_H(x,u), \dist_H(x,v)\}=w+1$. 
    We must have $\dist_H(x,u) = \dist_H(x,v) =\diam(H)$. Otherwise, if $\dist_H(x,u)\leq \diam(H) -1$, then $\dist_H(x,e)\leq \diam(H)$, contradicting our assumption.
    % Without loss of generality, we assume that $\dist_H(x,u)=w+1$. We must also have $\dist_H(x,v) = w+1$. 
    % If $\dist_H(x,v) \geq w+2$, then $\diam(H)\geq w+2$, contradicting \Cref{obs: r_H'}. 
    Now, consider the following dynamic graph $\mathcal{G}$.
    \begin{enumerate}
            \item Initially $G^0=H-\{u\}$.% is induced by all nodes in $H$ except $u$.
            \item At round $1$, consider two cases:
        \begin{enumerate}
            \item Node $u$ is inserted with all its incident edges in $H$.
            Thus $G^1=H$.
            \item Node $u$ is inserted  with all its incident edges in $H$ excluding  $\{u,v\}$. Thus $G^1 = H - e$.            
        \end{enumerate}
    \end{enumerate}
    For any $T < r_H = \diam(H)$, node $x$ receives the same information within $T$ rounds in both cases. Therefore, for any $T$-round algorithm, $x$ cannot distinguish the two cases and cannot output correctly.
\end{proof}
%\yanyu{updated, removed contraction}

\paragraph{One-round solvability and complete multipartite graphs} By \Cref{lem: locality edge,lem: locality node insertion} with $T = 1$, we know that $r_H \leq 1$, or equivalently $\nediam(H) \leq 2$, characterizes the class of target subgraphs $H$ that permits one-round $\MemList(H)$ and $\MemDetect(H)$ algorithms for edge insertions, edge deletions, and node insertions.
Observe that $\nediam(H) = 1$ if and only if $H$ is a single edge, so all non-trivial target subgraphs $H$ satisfy $\nediam(H) \geq 2$.
Therefore, \Cref{thm: nediam=2} implies that, excluding trivial target subgraphs, complete multipartite graphs are exactly the class of graphs that permits one-round $\MemList(H)$ and $\MemDetect(H)$ algorithms for edge insertions, edge deletions, and node insertions.

%Applying \Cref{thm: nediam=2}, we know that the complete multipartite graphs are the class of graphs that permits one-round $\MemList(H)$ and $\MemDetect(H)$ algorithms under edge insertions or edge deletions.\yijun{Maybe turn this into a theorem with a proper proof? There are some missing steps, e.g., how to select the parameters to apply the theorem above and what about the case where node-edge diameter is smaller than 2.}
% Hence, we might be interested in the class of graphs which has $r_H = 1$, as this would be the class of graphs for which one-round $\MemList(H)$ and $\MemDetect(H)$ algorithms under edge insertions or edge deletions possibly exist. As shown previously in \Cref{thm: nediam=2}, a connected complete multipartite $H$ with at least three nodes has
% It turns out that this is exactly the class of all complete multipartite graphs. To prove this, we first show some unique properties of complete multipartite graphs.

\section{Technical Overview}\label{sect:overview}

In this section, we present a technical overview of our proofs.
%\yijun{work in progress}

\subsection{Lower Bounds for Finding Cliques}

We start with the lower bounds for finding cliques.

\paragraph{\boldmath $\Omega(\log \log n)$ lower bound} We present the core idea underlying the $\Omega(\log \log n)$ bandwidth lower bound for one-round $\MemDetect(K_s)$ under edge insertions, as shown in \Cref{lem: memdect lb tri}. For simplicity and clarity, we focus on the case of membership-detecting triangles ($K_3$).

We start by describing the hard instance. Consider a set of $t=\log^{0.1} n$ nodes $\{x_1, x_2, \ldots, x_t\}$ and an independent set $I$. By connecting each $x_i$ to two distinct nodes from $I$, we form $t$ disjoint paths of length two, each resembling a triangle missing one edge. The graph is constructed by edge insertions over $2t$ rounds.
 
After constructing these paths, a single edge $e$ is inserted. Two scenarios are possible:
\begin{itemize}
    \item Edge $e$ connects the two endpoints of one of the constructed paths, completing a triangle.
    \item Edge $e$ connects endpoints from two different paths, forming no triangle.
\end{itemize}

Let $(a, x_i, b)$ be one such path, and suppose $a$ is one endpoint of the newly inserted edge $e$. To correctly detect whether a triangle has been formed, node $a$ must determine whether the other endpoint of $e$ is node $b$. We show that if the bandwidth of the algorithm is $B = o(\log \log n)$, then no mechanism exists for $a$ to reliably make this distinction. 

\begin{itemize}
    \item Before the insertion of $e$, node $a$ might try to learn $\ID(b)$ through $x_i$ during the initial $2t$ rounds. However, due to bandwidth limitations, $a$ can receive only $O(Bt) = o(\log n)$ bits, which is far too little to uniquely identify $b$ from the space of possible identifiers $[n]$.
    \item After the insertion of $e$, the two endpoints of $e$ can exchange $B$-bit messages. However, distinguishing whether they are part of the same path among $t$ candidates requires $\Omega(\log t)$ bits, while $B = o(\log \log n) = o(\log t)$ is again insufficient.
\end{itemize}

\paragraph{\boldmath $\Omega(\log \log \log n)$ lower bound} 
We now turn to the proof of \Cref{thm:mixed}, which establishes that the one-round bandwidth complexity of $\Detect(K_s)$ under both node insertions and edge insertions is $\Omega(\log \log \log n)$. For clarity, we again focus on the case of membership-detecting triangles.

 The key reason this lower bound is exponentially smaller than the previous one is that $\Detect(K_s)$ is inherently a much simpler problem than $\MemDetect(K_s)$. In particular, if \emph{only edge insertions are allowed}, then triangle detection is solvable with bandwidth $B = O(1)$: When an edge is inserted, its endpoints can simply broadcast a signal to their neighbors, and any node receiving two such signals can locally infer the existence of a triangle.

However, this strategy breaks down when \emph{node insertions} are also permitted. Again, consider a path of the form $(a, x_i, b)$. Now, it is impossible to distinguish whether an edge $\{a,b\}$ has been inserted---completing a triangle---or whether a new node $c$ has been inserted with incident edges $\{a,c\}$ and $\{b,c\}$, where no triangle is created. The ambiguity arises because both scenarios lead to $x_i$ receiving signals from $a$ and $b$.

 To formalize this, we use a construction similar to the one from the previous lower bound, but with a smaller parameter: $t = \log^{0.1} \log n$. We show that distinguishing the two scenarios requires $\Omega(\log \log \log n)$ bits of bandwidth. 
 
 Here is a brief sketch of the argument. We label each pair $(a, b)$, for $a, b \in I$, according to the messages transmitted across the edges $\{a, x_i\}$ and $\{b, x_i\}$ in the first $2t$ rounds, under the assumption that the path $(a, x_i, b)$ is formed. Since each message consists of $B$ bits and communication lasts $2t$ rounds, the total number of distinct labels is $s = 2^{O(Bt)}$. Since node insertion is allowed, nodes in $I$ without incident edges are considered as not yet inserted.

A Ramsey-type argument then implies the existence of a subset $\{a, b, c\} \subseteq I$ such that all of $(a, b)$, $(a, c)$, $(c, b)$ receive the same label. This means that, from the perspective of node $x_i$, the insertion of an edge $e=\{a,b\}$ versus the insertion of a node $c$ with two incident edges $\{a,c\}$ and $\{b,c\}$ becomes indistinguishable.

For the proof to work, the set $I$ must be sufficiently large relative to the number of distinct labels, which is $s = 2^{O(Bt)}$. This requirement is precisely why we set $t = \log^{0.1} \log n$ instead of the larger value $t = \log^{0.1} n$ used in the previous argument, resulting in a weaker lower bound of $\Omega(\log \log \log n)$. The full proof of \Cref{thm:mixed} is more intricate, as it involves analyzing not only the perspective of each $x_i$, but also the views of the endpoints of the newly inserted edge $e$. For instance, the actual labeling in the proof requires quantifying over all $i \in [t]$, which increases the number of distinct labels to $2^{O(Bt^2)}$.

\subsection{Upper Bounds for Finding Triangles}

We now discuss our two triangle-finding algorithms, both of which achieve optimal bandwidth complexity in bounded-degree networks, matching our lower bounds.

Our algorithms build upon the one-round algorithm for $\MemDetect(K_3)$ under edge insertions by Bonne and Censor-Hillel~\cite{bonne2019distributed}, which operates with bandwidth $B = O(\sqrt{\Delta \log n})$. Their approach uses two types of messages: one with $d$ bits, and another with $O\left(\frac{\Delta \log n}{d}\right)$ bits. The overall bandwidth complexity is optimized by choosing $d = \sqrt{\Delta \log n}$.

\paragraph{\boldmath $O( \log \log n)$ upper bound} To prove \Cref{thm: Delta memlist ub loglogn}, we extend the algorithm of Bonne and Censor-Hillel to design a one-round algorithm for $\MemList(K_3)$ under edge insertions with a bandwidth complexity of $O(\Delta^2 \log \log n)$. In their original algorithm for $\MemDetect(K_3)$, each $d$-bit message is a binary string where the $i$th bit indicates whether an incident edge was inserted $i$ rounds ago. We observe that this binary string can be compactly represented using $O(\Delta \log d)$ bits: For each recently inserted incident edge, use a number in $[d]$ to indicate its insertion time. Setting $d = \log n$ gives a bandwidth complexity of $O(\Delta \log \log n)$ for $\MemDetect(K_3)$.

To handle the more demanding $\MemList(K_3)$ problem, we introduce several modifications to the algorithm. Most notably, we must account not only for the edges incident to a node but also for those incident to its neighbors. This additional layer of information increases the size of the message by a factor of $\Delta$, resulting in a total bandwidth complexity of $O(\Delta^2 \log \log n)$.

\paragraph{\boldmath $O( \log \log \log n)$ upper bound} We now turn to the proof of \Cref{thm: Delta list lb logloglogn}, where we design a one-round algorithm for $\List(K_3)$ that handles both edge and node insertions with bandwidth $O(\Delta \log \log \log n)$. The exponential improvement stems from improving the size of the $O\left(\frac{\Delta\log n}{d}\right)$-bit message in the algorithm of Bonne and Censor-Hillel~\cite{bonne2019distributed} to $O\left(\frac{\Delta\log \log n}{d}\right)$ bits. The purpose of this message is to transmit the list of neighborhood $\ID$s, which contains $O(\Delta\log n)$ bits of information. As the transmission is done in $d$ rounds, the required message size is $O\left(\frac{\Delta\log n}{d}\right)$.

Our key idea lies in a new method for identifying triangles. Recall that a core challenge in our $\Omega(\log \log \log n)$ lower bound proof is to understand the inherent difficulty for a node $x$, with two non-adjacent neighbors $a$ and $b$, to distinguish between the insertion of an edge $\{a,b\}$ and the insertion of a node $c$ with two incident edges $\{a,c\}$ and $\{b,c\}$. 

We develop a new algorithm to handle this instance. We let $x$ select an index $i$ such that the $i$th bits of $\ID(a)$ and $\ID(b)$ differ. Nodes $a$ and $b$ then report the $i$th bit of the identifier of their new neighbor. If an edge $\{a,b\}$ is inserted, then the reported bits will differ. If a node $c$, along with two incident edges $\{a,c\}$ and $\{b,c\}$, is inserted, then the bits will match. This comparison allows $x$ to distinguish between the two cases. 

Sending an index requires only $O(\log \log n)$ bits, which is exponentially more efficient than sending the full identifier, which requires $O(\log n)$ bits. Consequently, the size of the $O\left(\frac{\Delta \log n}{d}\right)$-bit message in the algorithm of Bonne and Censor-Hillel~\cite{bonne2019distributed} is reduced to $O\left(\frac{\Delta \log \log n}{d}\right)$ bits. Setting $d = \log \log n$ then yields the improved bandwidth complexity $O(\Delta \log \log \log n)$.

\subsection{Membership-Listing}

We obtain a \emph{complete characterization} of the bandwidth complexity of $\MemList(H)$ for {every} target subgraph $H$, every number of rounds $r$, and every type of topological change: node insertions, node deletions, edge insertions, and edge deletions. The full characterization is summarized in \Cref{table:memList1}. In this overview, we omit node insertions and deletions, as they closely mirror the respective cases of edge insertions and deletions.

We begin with the case of \emph{edge insertions}. For $r < r_H$, we have an impossibility result from \Cref{lem: locality edge}. For $r \geq r_H$, the $r$-round bandwidth complexity depends on the structure of $H$:
\begin{itemize}
  \item If $H$ is a {complete multipartite graph that is not a clique}, the complexity is $\Theta(n / r)$.
  \item If $H$ is {not} a complete multipartite graph, the complexity increases to $\Theta(n^2 / r)$.
\end{itemize}

\paragraph{Upper bounds} 
The upper bounds follow from the observation that a node $v$ can list all subgraphs $H$ that contain it once it has an accurate view of its $r_H$-radius neighborhood. A brute-force approach would be to flood the entire graph topology using messages of size $O(n^2)$ after each topological change. Within $r_H$ rounds, this ensures all nodes acquire the required local view. If $r \geq r_H$, this communication can be spread over multiple rounds, reducing the bandwidth requirement to $O(n^2 / r)$.

For the special case where $H$ is a complete multipartite graph, we have $r_H = 1$. This allows a more efficient approach: Each node simply broadcasts its list of neighbors as a binary string of length $n$, leading to a reduced bandwidth complexity of $O(n / r)$.

\paragraph{Lower bounds} We first discuss a general $\Omega(n / r)$ lower bound for any non-clique subgraph $H$. Let $\{u,v\}$ be a non-edge in $H$.  Consider the graph resulting from replacing $u$ with an independent set $U'$, so each member of $U'$ corresponds to a copy of $H$. We then construct the graph via edge insertions, ensuring that the edges incident to $v$ are added last, leaving $v$ only $O(r)$ rounds to gather information about the graph. For $v$ to list all copies of $H$, it must learn the set $U'$, which requires $\Omega(n)$ bits of information, yielding a lower bound of $\Omega(n / r)$.

To strengthen the bound to $\Omega(n^2 / r)$ when $H$ is not a complete multipartite graph, we use the fact that such a graph $H$ must contain an edge $e$ and a node $v$ such that neither endpoint of $e$ is adjacent to $v$. Now we apply a similar construction where $e$ is replaced with a bipartite graph, forcing $v$ to learn the bipartite graph in order to list all copies of $H$, which requires $\Omega(n^2)$ bits of information.

\paragraph{Edge deletions} We now consider the case of \emph{edge deletions}. As with insertions, we have an impossibility result for $r < r_H$ from \Cref{lem: locality edge}. For $r \geq r_H$, the $r$-round bandwidth complexity is $\Theta(n / r)$ whenever $H$ is not a clique.

A key difference between insertions and deletions is that edge insertions can merge disjoint components, potentially introducing many new subgraphs and requiring extensive information dissemination. In contrast, to handle deletions, it suffices to propagate the identifiers of the two endpoints of the deleted edge to a radius of $r_H$, which requires only $O(\log n)$ bits of information. This yields an upper bound of $O((\log n) / r)$.

To prove a matching lower bound, we reuse the construction from the $\Omega(n / r)$ lower bound. Suppose an edge deletion causes one of the $|U'|$ copies of $H$ to disappear. For a node $v$ to correctly identify \emph{which} copy was affected, it must learn $\Omega(\log |U'|)$ bits. By letting $U'$ contain a polynomial number of nodes, this yields the desired $\Omega((\log n) / r)$ lower bound.

% Next, we consider the case of edge deletions. In this case, When $r < r_H$, we also already have an impossibility result from \Cref{lem: locality edge}. When $r \geq r_H$, the $r$-round bandwidth complexity is $\Theta(n / r)$ whenever $H$ is not a clique. The intuition for why edge insertions require much higher bandwidth is that edge insertions can cause two disjoint graphs to merge, leading to a lot of new information to learn. When an edge deletion occurs, essentially we just need to propagate the identifier of the two endpoints of the deleted edge up to radius $r_H$, so only $O(\log n)$ bits of information. This explains the $O((\log n) / r)$ upper bound. To prove a matching lower bound, we again use the lower bound graph construction behind the $\Omega(n/r)$ lower bound. An edge deletion can cause one of the $|U'|$ copies of $H$ to disappear. For $v$ to tell specifically which one disappears, $v$ needs to learn $\Omega(\log |U'|)$ bits of information, which gives the desired $\Omega(\log n)$ lower bound by setting $|U'| = n^{\Theta(1)}$.

\subsection{One-Round Membership-Detection}
We now turn to the one-round bandwidth complexity of the $\MemDetect(H)$ problem, for which we provide a partial characterization, see \Cref{tab:memDect}. Compared to $\MemList(H)$, establishing lower bounds for $\MemDetect(H)$ is more challenging, as it is harder to quantify the minimum information a node must obtain to detect the presence of a subgraph. On the algorithmic side, obtaining optimal upper bounds is also trickier: Since detection does not require listing the subgraph, there is greater flexibility in how the subgraph can be found. This flexibility enables a wider range of algorithmic techniques. In this overview, we focus on two representative results. %: one upper bound and one lower bound.

\paragraph{Lower bound.} We show that the one-round bandwidth complexity of $\MemDetect(H)$ under edge insertions is $\Omega(n)$ for any complete multipartite graph $H$ that is neither a star nor a clique. While the corresponding $\Omega(n / r)$ lower bound for $\MemList(H)$ appears inherently tied to the listing requirement, we demonstrate that, with suitable modifications, the core idea can be adapted to the detection setting. 

Since we cannot require a node $v$ to list all copies of $H$ in the constructed graph, we instead frame the argument in a preprocessing-plus-query model. We first build a graph that initially contains no copy of $H$, but is structured so that a copy can be formed in many different ways. The goal is to ensure that, in order for $v$ to detect the presence of $H$ following an edge insertion, it must have already learned a significant amount of information about the initial graph during the construction phase.

Specifically, we identify two non-adjacent nodes $u$ and $v$ in $H$ that share a common neighbor $w$. We construct a graph by removing the edge $\{u, w\}$ from $H$ and replacing node $u$ with an independent set $U'$. The graph is built via edge insertions, with the edges incident to $v$ and $w$ added at the end, leaving them only $O(1)$ rounds to learn about $U'$. We then insert a new edge $e$ incident to $w$, creating two possible scenarios: If the other endpoint of $e$ lies in $U'$, a copy of $H$ is formed; otherwise, it is not. For $v$ to decide correctly, $v$ and $w$ must learn the set $U'$ before the insertion of $e$, which requires $\Omega(n)$ bits of information.

%%%%%%%%%%%%%

% \paragraph{Lower bound} We show that the one-round bandwidth complexity of $\MemDetect(H)$ is $\Omega(n)$ under edge insertions for any complete multipartite graph $H$ that is neither a star nor a clique. The idea is that, while the corresponding $\Omega(n / r)$ lower bound proof for $\MemList(H)$ seems to inherently only work for membership-listing, with some modifications, it can be adapted to work for membership-detection.

% First of all, we can find two non-adjacent nodes $u$ and $v$ with a common neighbor $w$ in $H$. Consider the graph resulting from removing the edge $\{u,w\}$ in $H$ and replacing the node $u$ with an independent set $U'$ of size $\Theta(n)$.
% Again, the graph is constructed by a series of edge insertions, where the edges incident to $v$ and $w$ are added in the end, so $v$ and $w$ have only $O(1)$ rounds to learn the set $U'$. Next, we insert an edge $e$ incident to $w$. There are two cases. If the other endpoint of $e$ is in $U'$, then a copy of $H$ is formed. Otherwise, no copy of $H$ is formed. For $v$ to output correctly, $v$ and $w$ must learn the set $U'$, which requires $\Omega(n)$ bits of information. 

\paragraph{Upper bound} As in the case of $\MemList(H)$, membership-detection becomes significantly easier when the allowed topological change is a deletion rather than an insertion. In particular, we present a one-round $O(1)$-bandwidth algorithm for $\MemDetect(H)$ that applies to any complete multipartite graph $H$, improving upon the $O(\log n)$ bound required for $\MemList(H)$ under the same conditions.

For clarity, we describe our algorithm for the special case $H = C_4$, which captures the key ideas behind the more general algorithm that works for an arbitrary complete multipartite graph $H$. The core observation is that to detect a $C_4$, it suffices for each node $w$ to maintain, for every pair of its neighbors $u$ and $v$, the number of common neighbors they share, excluding $w$. Each such shared neighbor corresponds to a distinct copy of $C_4$ containing $w$, $u$, and $v$.

Maintaining these counters requires only one-bit messages. When a node detects that one of its neighbors has been deleted, it sends a one-bit signal to all its remaining neighbors. If a node $w$ receives such a signal from two neighbors $u$ and $v$ in the same round, it decrements the counter for the pair $\{u, v\}$ by one.

\subsection{One-Round Listing}

We study the one-round bandwidth complexity of the $\List(H)$ problem for edge deletions and node deletions. See \Cref{tab:listing} for a summary of our results. In this overview, we focus on the case of edge deletions, as the case of node deletions is analogous.

We begin with the simpler cases. When $\nerad(H) = 1$, $H$ is a star, so $\List(H)$ is trivially solvable with zero bandwidth by allowing the star center to handle the listing. When $\nerad(H) = 3$, $\List(H)$ cannot be solved in one round, since there exists at least one edge $e$ in $H$ whose deletion cannot be communicated to all nodes within one round.

When $\rad(H) = 1$, the graph $H$ has a center node $v$ adjacent to all other nodes, enabling a simple one-round one-bit algorithm: Whenever an edge is deleted, its endpoints send a signal to all their neighbors, allowing the center of $H$ to determine whether $H$ still exists.

We now turn to the remaining nontrivial case, where $\nerad(H) = 2$ and $\rad(H) = 2$. In this setting, we prove a tight $\Theta(\log n)$ bound. The upper bound is achieved by a simple protocol: When an edge $e = \{u, v\}$ is deleted, both $u$ and $v$ broadcast $\ID(u)$ and $\ID(v)$ to all their neighbors. This guarantees that if the deletion eliminates a copy of $H$, its center, who is responsible for listing the subgraph, can detect the change. 

The lower bound is more involved and requires a novel construction. Let $V(H)=\{u_1$, $u_2$, $\ldots$, $u_m\}$. We replace each node $u_i$ in $H$ with an independent set of $n$ nodes $S_i = \{v_{i,1}, \ldots, v_{i,n}\}$. Moreover, we assume, based on the structure of $H$, that there exists a path of length two from $u_1$ to $u_m$ via $u_{m-1}$, but no direct edge between $u_1$ and $u_m$.

Using symmetry and the pigeonhole principle, we can assume that node $v_{1,1}$ must list at least $\Omega(n)$ distinct copies of $H$ of the form $\{v_{1,i}, v_{2,1}, v_{3,1}, \ldots, v_{m-1,1}, v_{m,j}\}$ for $i, j \in [n]$. Among these copies of $H$, there must be at least $\Omega(\sqrt{n})$ copies with distinct $i$-values, or at least $\Omega(\sqrt{n})$ with distinct $j$-values.

Assume the former holds, and let $I$ be the set of distinct $i$ values. Now consider deleting the edge $\{v_{1,i^*}, v_{m-1,1}\}$ for some $i^* \in I \setminus \{1\}$. Node $v_{1,1}$ must then stop listing all copies of $H$ that include this edge. Since there is no direct connection between $v_{1,1}$ and $v_{1,i^*}$, it must receive a message from $v_{m-1,1}$ that uniquely identifies $i^*$ among $\Omega(\sqrt{n})$ candidates, which requires $\Omega(\log n)$ bits. The argument for the latter case (distinct $j$-values) is similar.

\section{Lower Bounds for Finding Cliques}\label{sect:cliqueLB}

In this section, we prove our two lower bounds, \Cref{lem: memdect lb ks} and \Cref{thm:mixed}, for finding cliques in dynamic networks. We emphasize that both lower bounds hold even for bounded-degree networks. Both lower bounds are proved using the same framework, which is described in \Cref{subsect:setup}. \Cref{lem: memdect lb ks} is proved in \Cref{subsect:edgeIns}. \Cref{thm:mixed} is proved in \Cref{subsect:mixed}.

We begin by discussing the intuition behind \Cref{thm: Delta list lb logloglogn}.
Recall that the problem $\List(K_3)$ has a bandwidth complexity of $B = O(1)$ under any single type of topological change~\cite{bonne2019distributed}. However, when both edge insertions and node insertions are allowed, \Cref{thm:mixed} establishes a bandwidth lower bound of $B = \Omega(\log\log\log n)$. The proof of this lower bound relies on the inherent difficulty for a node $w$, with two non-adjacent neighbors $x$ and $y$, to distinguish between the insertion of an edge $\{x,y\}$ and the insertion of a node  $w'$ with two incident edges $\{x,w'\}$ and $\{y,w'\}$.
%given $B=o(\log\log\log n)$.

We develop a new technique to handle the hard instance with $B = O(\Delta \log \log \log n)$. For $w$ to distinguish between the above two cases, we fix an index $i$ such that the $i$th bit of $\ID(x)$ does not equal the $i$th bit of $\ID(y)$, and then we ask $x$ and $y$ to report to $w$ the $i$th bit of their new neighbor $\ID$. The two bits differ for the case of insertion of $\{x,y\}$ by our choice of $i$. The two bits are identical for the case of node insertion of $w'$ with two incident edges $\{x,w'\}$ and $\{y,w'\}$, as they are both the $i$th bit of the same string $\ID(w')$. 

We implement this approach within the framework of the proof for \Cref{thm: Delta memlist ub loglogn} by modifying the algorithm for \ref{technique: pu}. The improvement from $O(\log \log n)$ to $O(\log \log \log n)$ arises from the observation that transmitting such an index $i$ requires only $O(\log \log n)$ bits, which is exponentially more efficient than sending the entire $\ID$, which requires $O(\log n)$ bits.

\subsection{Hard Instances}\label{subsect:setup}

The construction of the hard instances underlying \Cref{lem: memdect lb ks} and \Cref{thm:mixed} is parameterized by a parameter $t$. We set $t = \log^{0.1} n$ in the proof of \Cref{lem: memdect lb ks} and $t = \log^{0.1} \log n$ in the proof of \Cref{thm:mixed}.

\paragraph{Initialization} At the beginning, the initial graph $G^0$ at round zero consists of $t$ $(s-2)$-cliques $K^1, K^2, \ldots, K^t$ and an independent set $I$ of size $n - (s-2)t$, so the graph contains exactly $n$ nodes. 

\paragraph{Almost-clique creations} During the first $2(s-2)\cdot t$ rounds, we may perform edge insertions to connect some nodes in $I$ to some cliques. For any $i \in [t]$ and two distinct nodes $u \in I$ and $v \in I$, we write $\create(u,K^i,v)$ to denote the operation of adding an edge between each $x \in \{u,v\}$ and each $y \in K^i$ in the $2(s-2)$-round interval $[2(s-2)\cdot (i-1) + 1, \ldots 2(s-2)\cdot i]$. This operation makes $K^i \cup \{u,v\}$ an $s$-clique minus an edge $\{u,v\}$. Observe that the intervals $[2(s-2)\cdot (i-1) + 1, \ldots 2(s-2)\cdot i]$ for all $i \in [t]$ are disjoint. We write $\tlast = 2(s-2)\cdot t + 1$ to denote the first round number after the last time interval.

\paragraph{Hard instances} The construction of our hard instances is based on the description of a given distributed algorithm $\mathcal{A}$ whose existence we want to disprove. In the proof of \Cref{lem: memdect lb ks}, we assume that $\mathcal{A}$ is an $o(\log \log n)$-bandwidth one-round algorithm for $s$-clique membership-detection under edge insertions. Similarly, In the proof of \Cref{thm:mixed}, we assume that $\mathcal{A}$ is an $o(\log \log \log n)$-bandwidth one-round algorithm for $s$-clique detection under both edge insertions and node insertions. In the subsequent discussion, we let $B$ be the bandwidth of $\mathcal{A}$, i.e., $B=o(\log \log n)$ in the proof of \Cref{lem: memdect lb ks} and $B=o(\log \log \log n)$ in the proof of \Cref{thm:mixed}.

Given the description of $\mathcal{A}$, we select four distinct nodes $a$, $b$, $c$, and $d$ from the independent set $I$ and two distinct indices $i \in [t]$ and $i^\ast \in [t]$. We consider the following four scenarios.

\begin{description}
    \item[(S1)\label{item:SS1}] $\create(a,K^i,b)$ and insert the edge $\{a,b\}$ at round $\tlast$.
    \item[(S2)\label{item:SS2}] $\create(a,K^i,c)$, $\create(d,K^{i^\ast},b)$, and insert the edge $\{a,b\}$ at round $\tlast$.
    \item[(S3)\label{item:SS3}] $\create(d,K^i,b)$, $\create(a,K^{i^\ast},c)$, and insert the edge $\{a,b\}$ at round $\tlast$.
    \item[(S4)\label{item:SS4}] $\create(a,K^i,b)$ and insert the node $c$, together with two incident edges $\{a,c\}$ and $\{c,b\}$, at round $\tlast$. 
\end{description} 

For \ref{item:SS4} to make sense, in the setting where node insertion is allowed, nodes in $I$ without incident edges are considered as not yet inserted.

We emphasize that \ref{item:SS4} is the only scenario that involves a node insertion, so \ref{item:SS4} is considered in the proof of  \Cref{thm:mixed} only. 
Observe that \ref{item:SS1} is the only scenario where an $s$-clique is formed. The lower bounds are based on \emph{indistinguishability} arguments between \ref{item:SS1} and other scenarios, which we briefly explain as follows.
\begin{itemize}
    \item In the proof of \Cref{lem: memdect lb ks}, we only use \ref{item:SS1} and \ref{item:SS2}. Since the underlying problem is membership-detection, all members of the $s$-clique $K^i \cup \{a,b\}$ in \ref{item:SS1} need to detect the presence of the clique. We aim to show that it is possible to select the nodes in such a way that $a$ cannot distinguish between \ref{item:SS1} and \ref{item:SS2}, so the considered algorithm $\mathcal{A}$ must fail in at least one of the scenarios.
    \item In the proof of \Cref{thm:mixed}, we use all of \ref{item:SS1}, \ref{item:SS2}, \ref{item:SS3}, and \ref{item:SS4}.  Since the underlying problem is detection, we just need one member of the $s$-clique $K^i \cup \{a,b\}$ in \ref{item:SS1} to detect the presence of the clique. We aim to show that it is possible to select the nodes to satisfy the following indistinguishability requirements: ($a$ cannot distinguish between \ref{item:SS1} and \ref{item:SS2}), ($b$ cannot distinguish between \ref{item:SS1} and \ref{item:SS3}), and ($K^i$ cannot distinguish between \ref{item:SS1} and \ref{item:SS4}).
    % \begin{itemize}
    %     \item $a$ cannot distinguish between \ref{item:SS1} and \ref{item:SS2}.
    %     \item $b$ cannot distinguish between \ref{item:SS1} and \ref{item:SS3}.
    %     \item $K^i$ cannot distinguish between \ref{item:SS1} and \ref{item:SS4}.
    % \end{itemize}
    Therefore, regardless of the choice of the nodes responsible for detecting the $s$-clique $K^i \cup \{a,b\}$ in \ref{item:SS1}, $\mathcal{A}$ must fail in at least one of the scenarios.
\end{itemize}
 
%\subsection{An \texorpdfstring{$\Omega(\log \log n)$}{Omega(log log n)} Lower Bound Under Edge Insertions}
\subsection{Proof of \texorpdfstring{\Cref{lem: memdect lb ks}}{Lemma \ref{lem: memdect lb ks}}}
\label{subsect:edgeIns}

% \begin{theorem}\label{lem: memdect lb tri}
%     The deterministic one-round bandwidth complexity of $MemDetect(K_3)$ under edge insertions is $\Omega(\log \log n)$.
% \end{theorem}

%an $\Omega(\log \log n)$ bandwidth complexity lower bound for one-round algorithms for $s$-clique membership-detection under edge insertions, for any constant $s \geq 3$. The lower bound holds even for bounded-degree graphs. This result 

% \begin{description}
%     \item[Scenario 1\label{item:SS1}] $\create(a,K^i,b)$ and insert the edge $\{a,b\}$ at round $\tlast$.
%     \item[Scenario 2\label{item:SS2}] $\create(a,K^i,b')$, $\create(c,K^{i'},b)$, and insert the edge $\{a,b\}$ at round $\tlast$.
% \end{description} 

We prove \Cref{lem: memdect lb ks} with the parameter choices $B=o(\log \log n)$ and $t = \log^{0.1} n$.

\loglogLB*

As discussed earlier, it suffices to show that, by the end of round $\tlast$, node $a$ cannot distinguish between \ref{item:SS1}, where an $s$-clique $K^i \cup \{a,b\}$ involving $a$ is formed, and \ref{item:SS2}, where $a$ is not contained in an $s$-clique. We briefly explain the intuition behind the indistinguishability proof as follows. Informally, there are two ways that $a$ can distinguish between the two scenarios.
\begin{enumerate}
    \item Node $a$ can try to tell the difference between $\ID(b)$ and $\ID(d)$ through the communication with $K^i$ in the first $\tlast$ rounds. However, the total amount of bits $a$ can receive from $K^i$ is $O(Bt) = o(\log n)$, which is not enough to even learn one identifier.\label{item-1}
    \item Node $a$ can try to tell the difference between $i$ and $i^\ast$ through the $B$-bit message from $b$ immediately after the edge $\{a,b\}$ is inserted. However, this is insufficient to learn one index in $[t]$, as $B = o(\log \log n)=o(\log t)$. \label{item-2}
\end{enumerate}

\paragraph{Remark} The choice of the exponent $0.1$ in $t = \log^{0.1} n$ is arbitrary and can be replaced with any constant in the interval $(0,1)$. This approach inherently leads to a bandwidth lower bound of $\Omega(\log \log n)$. While a substantially larger value of $B$ could still satisfy the constraint $O(Bt) = o(\log n)$ in Item~\ref{item-1}, the condition $B = o(\log t)$ in Item~\ref{item-2} imposes a stricter limitation. Since $t = o(\log n)$ is required by the constraint in Item~\ref{item-1}, it follows that $B = o(\log \log n)$.

\paragraph{Parameter selection} In the subsequent discussion, we show how to select $\{a,b,c,d,i,i^\ast\}$ to ensure that, in the first $\tlast$ rounds, $a$ receives identical messages in both \ref{item:SS1} and \ref{item:SS2}, so algorithm $\mathcal{A}$ fails in at least one of the two scenarios, implying \Cref{lem: memdect lb tri}.

%\paragraph{Friends} 
We first focus on controlling the messages sent from $K^i$ to $a$. For any $w \in I$ and $j \in [t]$, we say that two distinct nodes $u \in I\setminus\{w\}$ and $v \in I\setminus\{w\}$ are \emph{friends} w.r.t.~$(w,j)$ if $w$ receives identical messages from $K^j$ in the first $\tlast$ rounds in both %\ref{item:SS1} and \ref{item:SS2} 
$\create(w,K^j,u)$ and $\create(w,K^j,v)$. If $u \in I$ does not have any friends  w.r.t.~$(w,j)$, then we say that $u$ is \emph{lonely} w.r.t.~$(w,j)$.

\begin{lemma}\label{lem:friends}
There exists a node $b \in I$ and a subset $I^\star \subseteq I$ with $|I^\star| \geq n^{0.99}$ such that $b$ is not lonely w.r.t.~$(w,j)$ for all $w \in I^\star$ and $j \in [t]$.
\end{lemma}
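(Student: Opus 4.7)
The plan is a two-step counting argument: first bound $|L_{w,j}|$ for every pair $(w,j)$ by counting the message transcripts that $w$ can receive from $K^j$, viewed as a function of the ``other'' node $u$; then apply an averaging argument over $b \in I$ to extract a single node that is lonely for only a tiny number of pairs $(w,j)$, and dump the offending $w$'s out of $I^\star$.

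For the first step, fix $w \in I$ and $j \in [t]$, and for each $u \in I \setminus \{w\}$, let $\mu(u)$ be the sequence of messages that $w$ receives along its (up to $s-2$) edges to $K^j$ during the first $\tlast$ rounds in the scenario where $\create(w,K^j,u)$ is the sole topological change. Since $\mathcal{A}$ is deterministic and every other ingredient (the initial graph $G^0$, the $\ID$s of $w$ and of the nodes of $K^j$, the round schedule) is fixed, $\mu$ is a well-defined function of $u$. Each transcript consists of at most $(s-2)\cdot \tlast \cdot B = O(tB)$ bits, so $\mu$ takes at most $2^{O(tB)}$ distinct values. By definition, $u$ is lonely w.r.t.~$(w,j)$ iff $\mu^{-1}(\mu(u)) = \{u\}$, so the set $L_{w,j}$ of lonely nodes injects into the image of $\mu$, giving $|L_{w,j}| \le 2^{O(tB)}$. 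Plugging in $t = \log^{0.1} n$ and $B = o(\log \log n)$ yields $tB = o(\log n)$, hence $|L_{w,j}| \le n^{o(1)}$.

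For the second step, I will double count:
\[
\sum_{b \in I} \bigl|\{(w,j) : b \in L_{w,j}\}\bigr| \;=\; \sum_{w \in I,\, j \in [t]} |L_{w,j}| \;\le\; n \cdot t \cdot n^{o(1)} \;=\; n^{1+o(1)}.
\]
Averaging over $b$ produces some $b \in I$ with $|\{(w,j) : b \in L_{w,j}\}| \le n^{o(1)}$. Setting $W := \{w \in I : \exists\, j \in [t],\ b \in L_{w,j}\}$ gives $|W| \le n^{o(1)}$, and $I^\star := I \setminus (W \cup \{b\})$ satisfies $|I^\star| \ge |I| - n^{o(1)} - 1 \ge n - (s-2)t - n^{o(1)} - 1 \ge n^{0.99}$ for large $n$. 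By construction, $b$ is not lonely w.r.t.~$(w,j)$ for every $w \in I^\star$ and $j \in [t]$, as required.

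The step I expect to be the main obstacle is the bound on $|L_{w,j}|$: although the $s-2$ edges from $K^j$ to $w$ are active for only $O(t)$ rounds, the messages along them can depend on $\ID(u)$ in an intricate way, because the nodes of $K^j$ also communicate with $u$ and among themselves, and these conversations feed back into what they later transmit to $w$. The key realization, which makes the argument go through, is that only the \emph{size of the range} of $\mu$ matters for the friend relation, and this size is controlled by the raw bandwidth budget $(s-2)\cdot \tlast \cdot B$ along the edges reaching $w$, independently of how cleverly $\mathcal{A}$ exploits $\ID(u)$.
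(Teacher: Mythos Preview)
Your proof is correct and follows essentially the same counting approach as the paper: both bound $|L_{w,j}| \le 2^{O(Bt)} = n^{o(1)}$ via the number of possible transcripts and then apply a pigeonhole/union-bound argument. The only cosmetic difference is the order of selection: the paper first fixes an arbitrary $I^\star$ of size $\lceil n^{0.99}\rceil$ and then observes $\sum_{w\in I^\star,\,j\in[t]}|L_{w,j}|<|I|$ to extract $b$, whereas you first average to find $b$ and then carve out $I^\star$ by deleting the at most $n^{o(1)}$ offending $w$'s (which actually yields a larger $I^\star$ than needed).
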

\begin{proof}
Select a subset $I^\star \subseteq I$ of $\lceil n^{0.99} \rceil$ nodes arbitrarily. For any $w \in I^\star$ and $j \in [t]$, the number of lonely nodes w.r.t.~$(w,j)$ is upper bounded by $2^{O(Bt)} = n^{o(1)}$, since $O(B \tlast) = O(Bt)$ is the total number of bits a node can receive within $\tlast=O(t)$ rounds in a bounded-degree network with bandwidth $B$. 

By a summation over all $w \in I^\star$ and $j \in [t]$, the total number of nodes that are lonely w.r.t.~$(w,j)$ for some  $w \in I^\star$ and $j \in [t]$ is upper bounded by $|I^\star|\cdot t \cdot 2^{O(Bt)}$. Observe that \[ |I^\star|\cdot t \cdot 2^{O(Bt)} < |I|,\] because $|I| = \Theta(n)$, $|I^\star| = \lceil n^{0.99} \rceil$, $B=o(\log \log n)$, and $t = \log^{0.1} n$. Therefore, there exists a node $b \in I$ that is not lonely w.r.t.~$(w,j)$ for all $w \in I^\star$ and $j \in [t]$.
\end{proof}

For the rest of the proof, we fix $b \in I$ and $I^\star \subseteq I$  according to \Cref{lem:friends}, and then we select $a$ and $d$ from $I^\star$ and $i$ and $i^\ast$ from $[t]$ in such a way that the message sent from $b$ to $a$ after the insertion of $\{a,b\}$ in both \ref{item:SS1} and \ref{item:SS2} are identical. After fixing all of $\{a,b,d,i,i^\ast\}$, we select $c \in I$ as any friend of $b$ w.r.t.~$(a,i)$, whose existence is guaranteed by \Cref{lem:friends}.

Next, we focus on controlling the messages sent from $K^j$ to $b$, where $j = i$ in \ref{item:SS1} and $j = i^\ast$ in \ref{item:SS2}, which influence the message sent from $b$ to $a$.

\begin{lemma}\label{lem:select}
There exist two distinct nodes $a \in I^\star$ and $d \in I^\star$ such that, for every $j \in [t]$, the messages sent from $K^j$ to $b$ in the first $\tlast$ rounds are identical in both %\ref{item:SS1} and \ref{item:SS2} 
 $\create(a,K^j,b)$ and $\create(d,K^j,b)$.
\end{lemma}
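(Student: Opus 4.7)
The plan is to run a pigeonhole argument in the same spirit as \Cref{lem:friends}, but now with a richer ``label'' that captures the combined behavior of all $t$ cliques at once. For each $a \in I^\star$, I would define $L(a) := (T_{a,1}, T_{a,2}, \ldots, T_{a,t})$, where $T_{a,j}$ is the concatenation of every $B$-bit message sent from a node of $K^j$ to $b$ during the first $\tlast$ rounds of the scenario $\create(a, K^j, b)$ (run as the only create operation during those rounds). Since $\mathcal{A}$ is deterministic and the data $(G^0, b, K^1, \ldots, K^t)$ is fixed, $L(a)$ is a well-defined function of $a$ alone, and any two distinct $a, d \in I^\star$ with $L(a) = L(d)$ witness the lemma.

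Second, I would bound the number of possible labels. In the scenario $\create(a, K^j, b)$, during the first $\tlast$ rounds the only edges ever incident to $b$ belong to $\{b\} \times K^j$, so in any single round $b$ receives at most $|K^j| = s-2 = O(1)$ messages of $B$ bits each. Over $\tlast = O(t)$ rounds this gives $|T_{a,j}| = O(tB)$ bits, hence $|L(a)| = O(t^2 B)$ bits in total, and at most $2^{O(t^2 B)}$ distinct labels. Plugging in $t = \log^{0.1} n$ and $B = o(\log\log n)$ yields $t^2 B = o(\log^{0.2} n \cdot \log\log n) = o(\log n)$, so the number of distinct labels is $n^{o(1)}$, which is strictly smaller than $|I^\star| \geq n^{0.99}$. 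Pigeonhole then delivers the desired pair $a \neq d$ in $I^\star$ with identical labels, giving the lemma.

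The step I expect to require the most care is verifying that $T_{a,j}$ is genuinely a function of $a$ alone, and is insulated from the later choices of $c$, $i$, $i^\ast$ as well as from any interference by the other cliques $K^{j'}$. This is because $\create(a, K^j, b)$ inserts no edges outside $\{a,b\} \times K^j$, and every clique $K^{j'}$ with $j' \neq j$ remains isolated from $b$ throughout the first $\tlast$ rounds; consequently the local state, and hence the outgoing messages, of every node in $K^j$ and of $b$ itself evolves purely as a function of the fixed initial graph and of $a$'s identity. Once this independence is formalized, the counting above immediately yields \Cref{lem:select}, and together with the choice of $c \in I$ as a friend of $b$ w.r.t.~$(a,i)$ guaranteed by \Cref{lem:friends}, completes the selection of all six parameters $\{a,b,c,d,i,i^\ast\}$ required by the lower-bound framework.
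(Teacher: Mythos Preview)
Your proof is correct and follows the same pigeonhole approach as the paper: classify each candidate $a \in I^\star$ by the tuple of transcripts $(T_{a,1},\ldots,T_{a,t})$ and observe that the number of such tuples is $n^{o(1)} < |I^\star|$. Your count of $2^{O(Bt^2)}$ labels is in fact the accurate one---the paper writes $t\cdot 2^{O(Bt)}$, which appears to be a slip for $(2^{O(Bt)})^t$---but either quantity is $n^{o(1)}$ under $t=\log^{0.1}n$ and $B=o(\log\log n)$, so the conclusion goes through identically.
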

\begin{proof}
We classify each node $v \in I^\star$ based on the sequence of messages sent from $K^j$ to $b$ during the first $\tlast$ rounds in $\create(v,K^j,b)$, for all $j \in [t]$. To prove the lemma, it suffices to show that the number of such classes is smaller than $|I^\star|$. This ensures that at least one class contains two or more nodes, allowing us to select any two nodes from that class as $a$ and $d$.

The total number of classes is at most $t \cdot 2^{O(Bt)}$, since there are $t$ possible choices for $j \in [t]$, and the number of distinct message patterns from $K^j$ to $b$ in the first $\tlast$ rounds is bounded by $2^{O(Bt)}$. Indeed, we have
\[t \cdot 2^{O(Bt)} = n^{o(1)} < |I^\star|,\]
as $B=o(\log \log n)$ and $t = \log^{0.1} n$.
\end{proof}

We fix $a \in I^\star$ and $d \in I^\star$ according to \Cref{lem:select}. 

\begin{lemma}\label{lem:select2}
There exist two distinct indices $i \in [t]$ and $i^\ast \in [t]$ such that the message sent from $b$ to $a$ after the insertion of $\{a,b\}$ is the same in both \ref{item:SS1} and \ref{item:SS2}.
\end{lemma}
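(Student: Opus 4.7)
The plan is to reduce the claim to a pigeonhole on the $B$-bit messages that $b$ can send to $a$ at round $\tlast$. Concretely, I will show that this outgoing message is a deterministic function $h:[t]\to\{0,1\}^B$ of a single clique index---namely $i$ in \ref{item:SS1} and $i^\ast$ in \ref{item:SS2}---and then use $t > 2^B$ to locate a collision.

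To define $h$, I would verify that $b$'s view just before it sends to $a$ at round $\tlast$ is entirely determined by the index $j$ of the clique to which $b$ is attached (so $j=i$ in \ref{item:SS1} and $j=i^\ast$ in \ref{item:SS2}). Three observations give this. First, $b$'s neighbor list up to round $\tlast-1$ is precisely $K^j$: the operation $\create(\cdot,K^j,b)$ attaches $b$ only to $K^j$, and in \ref{item:SS2} the auxiliary operation $\create(a,K^i,c)$ involves only the nodes $K^i, a, c$, which are disjoint from $b$'s closed neighborhood. Second, the incident topological events observed by $b$ all fall in the common interval $[2(s-2)(j-1)+1,\,2(s-2)j]$, so the timing and form of the new-neighbor events at $b$ are identical across the two scenarios. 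Third, by \Cref{lem:select} the transcript of messages that $K^j$ sends to $b$ during the first $\tlast$ rounds is identical in $\create(a,K^j,b)$ and $\create(d,K^j,b)$; moreover, the extra operation $\create(a,K^i,c)$ in \ref{item:SS2} cannot perturb this transcript, since none of $K^{i^\ast}$'s neighbors (i.e., members of $K^{i^\ast}\cup\{d,b\}$) observe any event of $\create(a,K^i,c)$ within the first $\tlast$ rounds. Determinism of the algorithm then yields a well-defined value $h(j)\in\{0,1\}^B$ that equals the message $b$ sends to $a$ in both \ref{item:SS1}-with-$i=j$ and \ref{item:SS2}-with-$i^\ast=j$.

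The conclusion is then pigeonhole: $h$ maps $[t]$ into a codomain of size $2^B$, and since $t=\log^{0.1}n$ and $B=o(\log\log n)$ we have $2^B=(\log n)^{o(1)} < t$ for all sufficiently large $n$, so there exist distinct $i\neq i^\ast\in[t]$ with $h(i)=h(i^\ast)$, as required. I expect the main technical care to lie in the third observation above---namely, ruling out any indirect propagation of information about $\create(a,K^i,c)$ to $b$ via $K^{i^\ast}$ within $\tlast$ rounds---since this is the only nontrivial interaction between the two $\create$ operations appearing in \ref{item:SS2}. Once that is handled, the rest is straightforward bookkeeping combined with pigeonhole.
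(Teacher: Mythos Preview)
Your proposal is correct and follows essentially the same approach as the paper's proof: use \Cref{lem:select} to argue that $b$'s outgoing message at round $\tlast$ is a deterministic function of a single clique index $j\in[t]$, then apply pigeonhole with $2^B=(\log n)^{o(1)}<t=\log^{0.1}n$ to find a collision $h(i)=h(i^\ast)$. Your third observation---that the extra operation $\create(a,K^i,c)$ in \ref{item:SS2} cannot reach $b$'s component $K^{i^\ast}\cup\{d,b\}$ before round $\tlast$---is exactly the point the paper handles implicitly when it asserts that the message ``is a function of the index $j$\dots only''; you are right that this is the one place requiring care, and your disconnection argument settles it.
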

\begin{proof}
By \Cref{lem:select}, our choice of nodes $a$ and $d$ ensures that the messages sent from $K^j$ to $b$---where $j = i$ in \ref{item:SS1} and $j = i^\ast$ in \ref{item:SS2}---depend only on the index $j$. In particular, swapping the roles of $a$ and $d$ has no effect on the messages sent from $K^j$ to $b$.

%are the same in both \ref{item:SS1} and \ref{item:SS2}.\yijun{wrong - should be is a function of $j$, and also it has nothing to do with $c$, should be $d$} 

Given that $a$, $b$, and $d$ have been fixed, the $B$-bit message sent from $b$ to $a$ after the insertion of $\{a,b\}$ is a function of the index $j \in \{i, i^\ast\}$ of the $(s-2)$-clique $K^j$ only. By the pigeonhole principle, since the number $t=\log^{0.1} n$ of choices of $j$ is larger than the number $2^B = \log^{o(1)} n$ of possible $B$-bit messages, there exist two distinct indices $i \in [t]$ and $i^\ast \in [t]$ resulting in the same message.
\end{proof}

We fix $i \in [t]$ and $i^\ast \in [t]$ according to \Cref{lem:select2}. Now all of $\{a,b,d,i,i^\ast\}$ have been fixed. As discussed earlier, we select $c \in I$ as any friend of $b$ w.r.t.~$(a,i)$, whose existence is guaranteed by \Cref{lem:friends}. We are now ready to finish the proof of \Cref{lem: memdect lb tri}.

\begin{proof}[Proof of \Cref{lem: memdect lb tri}]
We just need to show that, by the end of round $\tlast$, node $a$ cannot distinguish between \ref{item:SS1} and \ref{item:SS2}. By \Cref{lem:select2}, $a$ receives the same message from $b$ after the insertion of $\{a,b\}$ in both \ref{item:SS1} and \ref{item:SS2}. Since $b$ and $c$ are friends w.r.t.~$(a,i)$, $a$ receives the same message from $K^i$ in the first $\tlast$ rounds in both \ref{item:SS1} and \ref{item:SS2}. Therefore, $a$ cannot distinguish between \ref{item:SS1} and \ref{item:SS2}, so the algorithm $\mathcal{A}$ produces an incorrect output in at least one of the two scenarios. Hence there is no one-round algorithm solving $\MemDetect(K_s)$ with $o(\log \log n)$ bandwidth.
\end{proof}

\subsection{Proof of \texorpdfstring{\Cref{thm:mixed}}{Theorem \ref{thm:mixed}}}
%\subsection{Clique Detection Under Multiple Types of Changes}
\label{subsect:mixed}

We prove \Cref{thm:mixed}  with the parameter choices $B=o(\log \log \log n)$ and $t = \log^{0.1} \log n$.

\logloglogLB*

As discussed earlier, it suffices to select the parameters $\{a,b,c,d,i,i^\ast\}$ in such a way that, by the end of round $\tlast$, the following indistinguishability requirements are met: ($a$ cannot distinguish between \ref{item:SS1} and \ref{item:SS2}), ($b$ cannot distinguish between \ref{item:SS1} and \ref{item:SS3}), and ($K^i$ cannot distinguish between \ref{item:SS1} and \ref{item:SS4}). The parameter selection is done in two steps. In the first step, we select $\{a,b,c,d\}$ to ensure indistinguishability for the messages communicated between the $(s-2)$-cliques and other nodes. In the second step, we select $\{i, i^\ast\}$ to ensure indistinguishability for the messages sent across the edge $\{a,b\}$ inserted in round $\tlast$. 

\paragraph{The first step} Consider the complete directed graph $G^\ast$ over the node set $I$, where any two distinct nodes $x$ and $y$ in $I$ are connected by two directed edges $(x,y)$ and $(y,x)$. We color each edge $(x,y)$ in $G^\ast$ by a vector of \[t \cdot \left(\tlast \cdot 4\cdot (s-2) + 2\right) = O(t^2)\] messages of $B$ bits that correspond to the messages communicated in the first $\tlast$ rounds between  $x$, $y$, and $K^i$ in \ref{item:SS1} with $a=x$ and $b=y$ \underline{for all choices of $i \in [t]$}. Here, we do not consider the messages sent within the edges in $K^i$. The term $+2$ in the formula reflects the two messages sent across the edge $\{a,b\}$ at round $\tlast$ immediately after inserting the edge $\{a,b\}$.

\begin{lemma}\label{lem:first_step}
    There exist $\{a,b,c,d\} \subseteq I$ such that the edges $(a,b)$, $(a,c)$, $(a,d)$, $(c,b)$, and $(d,b)$ are colored the same in $G^\ast$.
\end{lemma}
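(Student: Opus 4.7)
The plan is to reduce the claim to a standard multicolor Ramsey argument on the complete undirected graph over $I$. First I would bound the number of distinct colors used to label edges of $G^\ast$: each color is a vector of $O(t^2)$ messages of $B$ bits, giving at most $N = 2^{O(Bt^2)}$ distinct colors. Plugging in $B = o(\log\log\log n)$ and $t = \log^{0.1}\log n$ yields $Bt^2 = o(\log\log\log n)\cdot(\log\log n)^{0.2} = o(\log\log n)$, and therefore $N = 2^{o(\log\log n)} = (\log n)^{o(1)}$.

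Next I would symmetrize the directed coloring. Since $G^\ast$ is a complete directed graph and the ordered pairs $(x,y)$ and $(y,x)$ carry independent colors, I fix an arbitrary linear order on $I$ and, for every unordered pair $\{x,y\}\subseteq I$ with $x<y$, define the pair-color $\xi(\{x,y\}) := (\chi(x,y), \chi(y,x))$. This induces an edge coloring of the complete undirected graph on $I$ using at most $N^2 = (\log n)^{o(1)}$ colors.

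I would then apply the standard multicolor Ramsey bound $R_k(K_4) \le k^{O(k)}$. Setting $k = N^2$ gives $R_{N^2}(K_4) \le 2^{O(N^2\log N)} = 2^{o(\log n)} = n^{o(1)}$, which is much smaller than $|I| = \Theta(n)$ for all sufficiently large $n$. Hence a monochromatic $K_4$ in the pair-coloring exists: there are $u_1 < u_2 < u_3 < u_4$ in $I$ together with a pair $(\alpha,\beta)$ such that $\chi(u_i,u_j) = \alpha$ and $\chi(u_j,u_i) = \beta$ for every $1 \le i < j \le 4$.

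Finally, I set $a := u_1$, $c := u_2$, $d := u_3$, $b := u_4$, which makes each of the five required edges $(a,b)$, $(a,c)$, $(a,d)$, $(c,b)$, $(d,b)$ a ``forward'' ordered pair $(u_i,u_j)$ with $i<j$, so by construction all five share color $\alpha$. The only substantive step is the quantitative Ramsey bound, but it holds with considerable slack because $N$ itself is only $(\log n)^{o(1)}$; on the combinatorial side the monochromatic $K_4$ translates directly into the desired $5$-edge pattern by picking $a$ as the minimum and $b$ as the maximum of the four vertices.
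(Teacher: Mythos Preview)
Your proof is correct and reaches the same conclusion with the same quantitative slack as the paper, but the packaging differs. The paper does not invoke a Ramsey theorem as a black box; instead it runs by hand the greedy argument that underlies the bound $R_k(K_4)\le k^{O(k)}$: starting from $G^\ast$, it repeatedly picks a vertex $v_i$, keeps only those out-neighbors of $v_i$ in the plurality outgoing color $c_i$, and iterates $2x+1$ times where $x=2^{O(Bt^2)}$ is the number of colors. By pigeonhole three of the selected vertices have matching plurality colors $c_{i_1}=c_{i_2}=c_{i_3}$ and become $a,c,d$, and any survivor in the final restricted set becomes $b$; the size estimate $|V(G_{2x+1})|=\Omega(n/x^{2x+1})=n^{1-o(1)}$ guarantees a survivor exists. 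Your symmetrization step (pairing the two directed colors on each unordered pair and then seeking a monochromatic $K_4$) is a clean way to reduce to the undirected Ramsey statement; the paper avoids it by working only with outgoing colors, which suffices because the five required edges $(a,b),(a,c),(a,d),(c,b),(d,b)$ all point ``forward'' along the selection order anyway. Both routes hit the same $n^{o(1)}$ threshold with ample room to spare, and neither is materially simpler than the other.
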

\begin{proof}
Let $x = 2^{O(Bt^2)}=2^{O(\log^{0.2} \log n)\cdot o(\log \log \log n)} = 2^{O(\log^{0.3} \log n)}$ be the total number of colors.
We select the nodes $v_1, v_2, \ldots. v_{2x+1}$ in $I$ sequentially as follows. The first node $v_1$ is selected arbitrarily, and we select the color $c_1$ that appears the highest number of times among the edges emanating from $v_1$. Now we restrict to the subgraph $G_1$ of $G^\ast$ induced by the neighbors $u$ of $v_1$ such that $(v_1, u)$ is colored by $c_1$. The selection of the remaining nodes is done recursively in the subgraph. The colors $c_2, c_3, \ldots$ and the subgraphs $G_2, G_3, \ldots$ are defined similarly. By the pigeonhole principle, among $\{c_1, c_2, \ldots, c_{2x+1}\}$, there exist three indices $i_1 < i_2 < i_3$ such that $c_{i_1} = c_{i_2} = c_{i_3}$. We set $a = v_{i_1}$, $c = v_{i_2}$, and $d = v_{i_3}$, and select $b$ to be any node in $G_{2x+1}$. To show the correctness of the selection procedure, it suffices to show that $G_{2x+1}$ is non-empty. Since $c_i$ is the color that appears the highest number of times among the edges emanating from $v_i$ in $G_{i-1}$, we must have $|V(G_{i})| \geq |V(G_{i-1})-1|/x$, so $|V(G_{2x+1})| = \Omega(n / x^{2x+1}) = n^{1 - o(1)} > 0$, as $x^{2x+1}=2^{O(x \log x)} = 2^{2^{O(\log^{0.3} \log n)}O(\log^{0.3} \log n)} = 2^{2^{O(\log^{0.3} \log n)}} = n^{o(1)}$.
\end{proof}

We select $\{a,b,c,d\} \subseteq I$ according to \Cref{lem:first_step}. The selection already guarantees that from the perspective of $K^i$, \ref{item:SS1} and \ref{item:SS4} are indistinguishable in the first $\tlast$ rounds, regardless of the choice of $i \in [t]$. However, depending on the choices of $\{i, i^\ast\} \subseteq [t]$, $a$ and $b$ might still be able to distinguish between \ref{item:SS1} and \ref{item:SS4} at round $\tlast$ by the messages they receive from $c$. 

\paragraph{The second step} Now we consider the selection of $\{i, i^\ast\} \subseteq [t]$. While our choice of  $\{a,b,c,d\}$ already ensures that the message sent from $a$ to $b$ at round $\tlast$ is the same in both \ref{item:SS1} and \ref{item:SS2}, the message sent from $b$ to $a$ can be different in these two scenarios, as $i \neq i^\ast$. By properly selecting $\{i, i^\ast\} \subseteq [t]$, we can ensure indistinguishability for the messages sent across the edge $\{a,b\}$ inserted in round $\tlast$ across all scenarios. 

\begin{lemma}\label{lem:second_step}
    There exist two distinct indices $i \in [t]$ and $i^\ast \in [t]$ meeting the two conditions.
    \begin{itemize}
        \item The message sent from $b$ to $a$ at round $\tlast$ is the same in both \ref{item:SS1} and \ref{item:SS2}.
        \item The message sent from $a$ to $b$ at round $\tlast$ is the same in both \ref{item:SS1} and \ref{item:SS3}.
    \end{itemize}
\end{lemma}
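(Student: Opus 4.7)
The plan is to reduce both conditions of the lemma to a single pigeonhole argument by introducing two auxiliary maps $\mu,\nu:[t]\to\{0,1\}^B$ that record the two bandwidth-$B$ messages crossing $\{a,b\}$ at round $\tlast$ in a \emph{canonical} scenario. For each $j\in[t]$, let $\mu(j)$ denote the message $b\to a$ and $\nu(j)$ the message $a\to b$ at round $\tlast$, assuming we execute only $\create(a,K^j,b)$ during the first $\tlast-1$ rounds and then insert $\{a,b\}$ at round $\tlast$. By construction of the coloring, these two strings are precisely the ``$+2$'' entries at the sub-vector of index $j$ inside the color of edge $(a,b)$ in $G^\ast$. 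In \ref{item:SS1} with parameter $i$, the two messages crossing $\{a,b\}$ at round $\tlast$ are therefore $\mu(i)$ and $\nu(i)$ by definition.

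The heart of the plan is to show that \ref{item:SS2} and \ref{item:SS3} each collapse to the canonical scenario \emph{from the perspective of a single endpoint} of the newly inserted edge. Consider \ref{item:SS2}: during the first $\tlast-1$ rounds, node $b$ is only incident to the edges added by $\create(d,K^{i^\ast},b)$ and is never touched by $\create(a,K^i,c)$, so $b$'s internal state at the end of round $\tlast-1$ is a function solely of the messages it receives from $K^{i^\ast}$. The color equality of $(a,b)$ and $(d,b)$ provided by \Cref{lem:first_step}, applied at sub-vector index $i^\ast$, says that these messages coincide with the ones $b$ would receive in the canonical $\create(a,K^{i^\ast},b)$ scenario. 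Hence $b$'s state at the start of round $\tlast$ is identical to that canonical state, and since $b$'s round-$\tlast$ message is a function of this state together with $\ID(a)$ (the ID of the newly inserted neighbor) in both worlds, it must equal $\mu(i^\ast)$ in \ref{item:SS2}. The symmetric argument, using the color equality of $(a,b)$ and $(a,c)$, shows that the message $a\to b$ at round $\tlast$ in \ref{item:SS3} equals $\nu(i^\ast)$.

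With this reduction in hand, the two conditions of the lemma become $\mu(i)=\mu(i^\ast)$ and $\nu(i)=\nu(i^\ast)$, and I would conclude by pigeonhole applied to the combined map $j\mapsto(\mu(j),\nu(j))$, whose range has size $2^{2B}=2^{o(\log\log\log n)}$. Since $t=\log^{0.1}\log n$ asymptotically dominates $2^{2B}$, for all sufficiently large $n$ there must exist two distinct $i,i^\ast\in[t]$ that collide under this combined map, giving the desired pair of indices.

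The main obstacle I expect is rigorously justifying the canonical-scenario reduction. I need to verify that in \ref{item:SS2}, the operation $\create(a,K^i,c)$, which takes place in a disjoint time window whenever $i\neq i^\ast$ and involves no edges incident to $b$, cannot leak any information into $b$'s view, and analogously for $a$ in \ref{item:SS3}. Fortunately, this follows directly from the design of the framework: throughout the first $\tlast-1$ rounds, each of $a$ and $b$ is incident only to the edges explicitly inserted by the relevant $\create$ operation, so their views are confined to interactions with a single $(s-2)$-clique, which is exactly the kind of local interaction that the color vector of $G^\ast$ was defined to record.
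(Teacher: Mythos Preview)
Your proposal is correct and follows essentially the same approach as the paper's proof. Both arguments observe that the two messages crossing $\{a,b\}$ at round $\tlast$ in \ref{item:SS1} are functions of the index $j\in[t]$ alone, apply pigeonhole on the combined map into $\{0,1\}^{2B}$ (using $2^{2B}=\log^{o(1)}\log n<t$) to find a colliding pair $\{i,i^\ast\}$, and then invoke the color equalities $(a,b)=(d,b)$ and $(a,b)=(a,c)$ from \Cref{lem:first_step} to identify the relevant message in \ref{item:SS2} (resp.\ \ref{item:SS3}) with $\mu(i^\ast)$ (resp.\ $\nu(i^\ast)$); your explicit naming of $\mu,\nu$ and the ``canonical scenario'' is just a cleaner packaging of the same reasoning.
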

\begin{proof}
Since $a$ and $b$ are fixed, in \ref{item:SS1}, the two messages sent across $\{a,b\}$ in round $\tlast$ depends only on $i \in [t]$. The number of possibilities for these two messages is $2^{2B} = 2^{o(\log \log \log n)}= \log^{o(1)} \log n$ is smaller than $t = \log^{0.1} \log n$. Therefore, by the pigeonhole principle, there exist two distinct indices $i \in [t]$ and $i^\ast \in [t]$ resulting in identical messages in \ref{item:SS1}. 

To see that the message sent from $b$ to $a$ at round $\tlast$ is the same in both \ref{item:SS1} and \ref{item:SS2}, observe that the messages sent from $b$ in the first $\tlast$ rounds in \ref{item:SS2} remain the same even if we replace $d$ with $a$ (due to \Cref{lem:first_step}) and replace $i^\ast$ with $i$ (due to our choice of $\{i, i^\ast\} \subseteq [t]$). Similarly, we see that the message sent from $a$ to $b$ at round $\tlast$ is the same in both \ref{item:SS1} and \ref{item:SS3}.
\end{proof}

We fix $\{i, i^\ast\} \subseteq [t]$ according to \Cref{lem:second_step}.
Now all the parameters have been fixed. We are ready to finish the proof of \Cref{thm:mixed}.

\begin{proof}[Proof of \Cref{thm:mixed}]
Assuming that $\mathcal{A}$ is correct, at least one node in $K^i \cup \{a,b\}$ is responsible for reporting the $s$-clique $K^i \cup \{a,b\}$ at round $\tlast$ in \ref{item:SS1}. If one node in $K^i$ is responsible for that, then the same node must incorrectly report an $s$-clique in \ref{item:SS4}, as the node cannot distinguish between \ref{item:SS1} and \ref{item:SS4} by \Cref{lem:first_step}. If node $a$ is responsible for that, then the same node must incorrectly report an $s$-clique in \ref{item:SS2}, as node $a$ cannot distinguish between \ref{item:SS1} and \ref{item:SS2} by \Cref{lem:first_step,lem:second_step}. Similarly, if node $b$ is responsible for that, then the same node must incorrectly report an $s$-clique in \ref{item:SS3}, as node $a$ cannot distinguish between \ref{item:SS1} and \ref{item:SS3} by \Cref{lem:first_step,lem:second_step}. Hence $\mathcal{A}$ must be incorrect in at least one scenario, so we conclude that the bandwidth complexity of the considered problem must be $\Omega(\log \log \log n)$.
\end{proof}

\section{Upper Bounds for Finding Triangles}\label{sect:trianglesUB}
In this section, we prove our two upper bounds for finding triangles in dynamic networks. Both two upper bounds match the lower bounds in \Cref{sect:cliqueLB} for bounded-degree networks. The proofs of both theorems are built upon the techniques of Bonne and Censor-Hillel \cite{bonne2019distributed}, which we briefly review in \Cref{subsect:review}. We prove \Cref{thm: Delta memlist ub loglogn} in \Cref{subsec: Delta memlist ub loglogn} and prove \Cref{thm: Delta list lb logloglogn} in \Cref{subsect:logloglogUB}.

%\paragraph{Organization}  
\subsection{Review of the Previous Approach}\label{subsect:review}
Bonne and Censor-Hillel~\cite{bonne2019distributed} designed a one-round algorithm for $\MemDetect(K_3)$ under edge insertions with bandwidth $B=O(\sqrt{\Delta\log n})$, where $\Delta$ is the maximum degree of the dynamic network. Their algorithm is parameterized by a threshold $d$. The intuition is that the information about the topological changes within the last $d$ rounds is handled differently. In their algorithm, after an edge $\{u,v\}$ is inserted, there are two main types of messages communicated over the edge $\{u,v\}$. %,\gopinath{Edited a bit.}
    \begin{itemize}
        \item In the same round where $\{u,v\}$ is inserted, $u$ and $v$ exchange their \underline{recent records}. Here the recent record of a node $w$ consists of two binary strings of length $d$. The $i$th bit in one string indicates whether $w$ obtains a new edge $i$ rounds ago.  The $i$th bit in the other string indicates whether some neighbor of $w$ obtains a new edge $i$ rounds ago.
        
        %Recent records. For each of $u$ and $v$, it has two strings of length $d$ where the $i$th bit indicates whether or not it obtains a new edge $i$ rounds ago and whether or not its neighbor obtains a new edge $i$ rounds ago.
        
        % Construct $R^e(u)$ as the set of all previous $d$ rounds when $u$ obtains a new edge. Construct $R^s(u)$ as the set of all previous $d$ rounds when $u$ has a neighbor obtaining a new edge. Encode each set using $d$ bits.
        \item After the insertion of $\{u,v\}$, $u$ and $v$ use $d$ rounds of communication to exchange their \ul{lists of neighborhood $\ID$s} in the current graph by splitting the information into $d$ blocks of equal size $O\left(\frac{\Delta\log n}{d}\right)$.
        %Neighborhood $\ID$s. For each of $u$ and $v$, it computes the $\ID$s of all its current neighbors and split them into $d$ blocks equally. Each block takes $O(\Delta\log n/d)$ bits. Send one block at each following round.
    \end{itemize}
    
    For the correctness of the algorithm, suppose an edge insertion of $\{u,v\}$ leads to a triangle $\{u,v,w\}$. Without loss of generality, assume $t_{uv}>t_{vw}>t_{uw}$, where $t_{xy}$ is the round number where $\{x,y\}$ is inserted. There are two cases.
    %it has two cases according to the interval between the second-inserted edge and final-inserted edge.
    \begin{description}
    \item[Case 1\label{item: short interval}] If  $t_{uv} - t_{vw} \leq d$, then both $u$ and $v$ can identify $w$ as their common neighbor and list the triangle $\{u,v,w\}$ using the recent records communicated along the edge $\{u,v\}$.
    \item[Case 2\label{item: long interval}] Otherwise, $t_{uv} - t_{vw} > d$, so $v$ has already received from $w$ the complete list of neighborhood $\ID$s of $w$ in round $t_{vw}$, which includes $\ID(u)$. Therefore, $v$ can list the triangle $\{u,v,w\}$ without any further communication. Moreover, $v$ can  \emph{inform} $u$ that a triangle involving $u$ is detected using one bit of communication along the edge $\{u,v\}$. 
    \end{description}
    
    The overall complexity $B=O\left(d+\frac{\Delta\log n}{d}\right)$ can be optimized to $B=O(\sqrt{\Delta\log n})$ by taking  $d=\Theta(\sqrt{\Delta\log n})$. This algorithm works for $\MemDetect(K_3)$ and not $\MemList(K_3)$  since $v$ can only \emph{inform} $u$ the existence of a triangle in \ref{item: long interval}. Due to the bandwidth constraint, $v$ cannot include $\ID(w)$ in the message.

\subsection{Proof of \texorpdfstring{\Cref{thm: Delta memlist ub loglogn}}{Theorem \ref{thm: Delta memlist ub loglogn}}}
%{An \texorpdfstring{$O(\Delta^2 \log\log n)$}{O(Delta2 log log n} Upper Bound of \texorpdfstring{$\MemList(K_3)$}{\MemList(K3)} Under Edge Insertions}
\label{subsec: Delta memlist ub loglogn}

In this section, we prove \Cref{thm: Delta memlist ub loglogn}.

\loglogUB*

% \begin{theorem}\label{thm: Delta memlist ub loglogn}
%     For any graph $G$ with maximum degree $\Delta$, there exists an one-round algorithm of $\MemList(K_3)$ under edge insertions with bandwidth $O(\Delta^2 \log \log n)$.
% \end{theorem}
\begin{proof} 
The proof relies on modifying the two main types of messages presented in \Cref{subsect:review}:
    \begin{description}
        \item[Recent records\label{technique: rr}] Given an integer $d$, we define $R_d^{\mathsf{edge}}(u)\subseteq [d]$ as the subset of all previous $d$ rounds where $u$ obtains a new edge, and we define $R_d^{\mathsf{signal}}(u)\subseteq [d]$ as the subset of all previous $d$ rounds where $u$ has a neighbor obtaining a new edge. 
        
        Due to the degree upper bound $\Delta$, $\left|R_d^{\mathsf{edge}}(u)\right| \leq \Delta$ and $\left|R_d^{\mathsf{signal}}(u)\right| \leq \Delta^2$. As each element of $[d]$ can be encoded using $O(\log d)$ bits, the recent records $R_d^{\mathsf{edge}}(u)$ and $R_d^{\mathsf{signal}}(u)$ can be encoded into a message of $B=O(\Delta^2\log d)=O(\Delta^2\log\log n)$ bits, for any $d=O(\log n)$. This encoding is more efficient than the one of Bonne and Censor-Hillel~\cite{bonne2019distributed} when $\Delta$ is small.
        
        \item[Periodic updates\label{technique: pu}] For each edge $\{u,v\}$ in the network, we let $u$ and $v$ update to each other their list of neighbors periodically with period length $T$, where $T$ is some given integer.

        %For each node $u$ with edge $\{u,v\}$, as graph is changing, node $u$ can keep updating its neighborhood information to $v$ periodically. Given integer $T\geq 1$, we define a period with length $T$ as $T$ consecutive rounds of communication. 

        Suppose node $u$ detects that an edge $\{u,v\}$ is inserted, then $u$ sends to $v$ its current list of neighborhood $\ID$s by splitting the information into $T$ blocks of equal size $O\left(\frac{\Delta\log n}{T}\right)$ and sending one block to $v$ in each of following $T$ rounds. After the transmission is done, the procedure is repeated for a new period, and so on. 
        %It requires $B=O(\Delta\log n/T)$.
        %We call $T$ the period length.

The difference between our approach and the one of Bonne and Censor-Hillel~\cite{bonne2019distributed} here is that we make the exchange of neighborhood $\ID$s periodic. As we will later see, the modification allows us to make the algorithm work for not only $\MemDetect(K_3)$ but also $\MemList(K_3)$.
%    By taking periodic updates of neighborhood, we claim that both two nodes have the complete neighbor $\ID$s in \ref{item: long interval} so that it works for $\MemList(K_3)$. See formal proof below.
    \end{description}

\paragraph{Algorithm} We now describe our algorithm for $\MemList(K_3)$. We choose the parameters $d= \lceil \log n \rceil$ and $T=\lfloor d/2\rfloor$. If node $u$ detects that an edge $\{u,v\}$ is inserted at round $t$, then $u$ performs the following steps.
        \begin{enumerate}
            \item In round $t$, $u$ sends $(\SIGNAL)$ to all its neighbors. 
            \item In round $t$, $u$ sends $R_d^{\mathsf{edge}}(u)$ and $R_d^{\mathsf{signal}}(u)$ to $v$.
            \label{step: recent record}
            \item $u$ starts the periodic neighborhood updates for $\{u,v\}$ with period length $T$ from round $t$.\label{step: periodic update}
            \end{enumerate}

We emphasize that one of the purposes of $(\SIGNAL)$ is to help the neighbors $w$ of $u$ to prepare their recent records  $R_d^{\mathsf{signal}}(w)$.

      \paragraph{Correctness} Suppose a triangle $\{u,v,w\}$ appears at some round. We claim that all three nodes in the triangle have enough information to list the triangle after the communication within the same round. Without loss of generality, we assume $t_{vw} < t_{uw}<t_{uv}$, where $t_{xy}$ is the round number when the edge $\{x,y\}$ is inserted.        
        \begin{itemize}
            \item We first consider node $w$. At the round where edge $\{u,v\}$ is inserted, node $w$ receives $(\SIGNAL)$ from both $u$ and $v$. Thus node $w$ can determine the existence of edge $\{u,v\}$ and list the triangle $\{u,v,w\}$.
            \item For node $u$, we consider two cases:
            \begin{itemize}
                \item Suppose $t_{uv} - t_{uw} > d$. At round $t_{uw} + T-1 \leq t_{uv}$, due to periodic updates, node $u$ receives all neighborhood $\ID$s of $w$ from $w$, which contains $\ID(v)$. Therefore, $u$ is aware of the two edges $\{u,w\}$ and $\{w,v\}$. Immediately after edge $\{u,v\}$ is inserted, $u$ can list the triangle $\{u,v,w\}$.
                \item Suppose $t_{uv} - t_{uw} \leq d$. 
                Set $j=t_{uv} - t_{uw}$. 
               At round $t_{uw}=t_{uv}-j$, $v$ receives $(\SIGNAL)$ from $w$ due to the insertion of $\{u,w\}$ in that round. When edge $\{u,v\}$ is inserted, node $u$ receives $R_d^{\mathsf{signal}}(v)$, with $j\in R_d^{\mathsf{signal}}(v)$, from $v$. Since $j\in R_d^{\mathsf{edge}}(u)$, $u$ can infer the existence of edge $\{v,w\}$ by the fact that in each round at most one topological change occurs. 
                Thus $u$ can list the triangle $\{u,v,w\}$.
            \end{itemize}
            \item For node $v$, we again consider two cases:
            \begin{itemize}
                \item Suppose $t_{uv} - t_{uw} > d$. The interval $[t_{uw}, t_{uw}+2T-1]$ covers one period of neighborhood updates from $w$ to $v$ along the edge $\{v,w\}$ entirely. Therefore, at round $t_{uw} + 2T-1\leq t_{uv}$, node $v$ already knows that the list of neighborhood $\ID$s of $w$ contains $\ID(v)$, so $v$ can infer the existence of edge $\{u,w\}$. Immediately after edge $\{u,v\}$ is inserted, $v$ can list the triangle $\{u,v,w\}$. This case is the reason why we set $T=\lfloor d/2\rfloor$ and not simply $T=d$.%\yijun{More detail needed...note: the argument here should be that an interval of $2T$ guarantees to contain a full period.}
                \item Suppose $t_{uv} - t_{uw} \leq d$. Set $j=t_{uv} - t_{uw}$. 
                %Since edge $\{v,w\}$ is inserted and $u$ receives $(\SIGNAL)$ from $w$ at round $t_{vw}=t_{uv}-j$, 
                At round $t_{uw}=t_{uv}-j$, $v$ receives $(\SIGNAL)$ from $w$ due to the insertion of $\{u,w\}$ in that round.  
                When edge $\{u,v\}$ is inserted, node $v$ receives $R_d^{\mathsf{edge}}(u)$ with $j\in R_d^{\mathsf{edge}}(u)$. Since $j\in R_d^{\mathsf{signal}}(v)$, $v$ can infer the existence of edge $\{u,w\}$ by the fact that in each round at most one change occurs. 
                Thus $v$ can list the triangle $\{u,v,w\}$. This case is similar but not identical to the corresponding case of node $u$: Its correctness relies on $v$ receiving $R_d^{\mathsf{edge}}(u)$ from $u$ and not $u$ receiving $R_d^{\mathsf{signal}}(v)$ from $v$\label{analysis: R^signal} %\yijun{This case is repeated... I think the presentation will be simpler if you do it in a way similar to Section 4.1.}\mingyang{This one is not exactly the same as above: previously $u$ receives $R_d^{\mathsf{signal}}(v)$ and now $u$ receives $R_d^{\mathsf{edge}}(v)$. I thought I should do it separately for the formal analysis.}\yijun{I mean the item that is even above that case, which is symmetrical to this case.}
            \end{itemize}
        \end{itemize}

        \paragraph{Bandwidth} The message sizes for the three steps of the algorithm are $1$, $O(\Delta^2\log d) = O(\Delta^2\log\log n)$, and $O\left(\frac{\Delta\log n}{T}\right) = O(\Delta)$, so the overall bandwidth is $B=O(\Delta^2 \log\log n)$.
\end{proof}

\subsection{Proof of \texorpdfstring{\Cref{thm: Delta list lb logloglogn}}{Theorem \ref{thm: Delta list lb logloglogn}}}
%{An \texorpdfstring{$O(\Delta \log\log\log n)$}{O(Delta log log log n} Upper Bound of \texorpdfstring{$\List(K_3)$}{List(K3)} Under Edge Insertions and Node Insertions}
\label{subsect:logloglogUB}
In this section, we prove \Cref{thm: Delta list lb logloglogn}.
%\yijun{I have not checked this section yet but I find it strange that the section and the section above look quite different. My feeling is that the proof and the algorithm should be exactly the same with just one difference: change neighborhood ID list with the special distinct bit stuff and adjust the parameters accordingly. Ideally, this should be mentioned very clearly and we should not repeat the same proof.}

%we give an one-round algorithm of $\List(K_3)$ under edge insertions and node insertions with bandwidth $O(\Delta\log\log\log n)$ for any dynamic graph with $n$ nodes and maximum degree $\Delta$.
% \begin{theorem}\label{thm: Delta list lb logloglogn}
%     For any graph $G$ with maximum degree $\Delta$, there exists an one-round algorithm of $\List(K_3)$ under edge insertions and node insertions with bandwidth $O(\Delta\log\log \log n)$.
% \end{theorem}

\logloglogUB*

%\paragraph{Intuition} 
We begin by discussing the intuition behind \Cref{thm: Delta list lb logloglogn}.
Recall that the problem $\List(K_3)$ has a bandwidth complexity of $B = O(1)$ under any single type of topological change~\cite{bonne2019distributed}. However, when both edge insertions and node insertions are allowed, \Cref{thm:mixed} establishes a bandwidth lower bound of $B = \Omega(\log\log\log n)$. The proof of this lower bound relies on the inherent difficulty for a node $w$, with two non-adjacent neighbors $x$ and $y$, to distinguish between the insertion of an edge $\{x,y\}$ and the insertion of a node  $w'$ with two incident edges $\{x,w'\}$ and $\{y,w'\}$.
%given $B=o(\log\log\log n)$.

We develop a new technique to handle the hard instance with $B = O(\Delta \log \log \log n)$. For $w$ to distinguish between the above two cases, we fix an index $i$ such that the $i$th bit of $\ID(x)$ does not equal the $i$th bit of $\ID(y)$, and then we ask $x$ and $y$ to report to $w$ the $i$th bit of their new neighbor $\ID$. The two bits differ for the case of insertion of $\{x,y\}$ by our choice of $i$. The two bits are identical for the case of node insertion of $w'$ with two incident edges $\{x,w'\}$ and $\{y,w'\}$, as they are both the $i$th bit of the same string $\ID(w')$. 

We implement this approach within the framework of the proof for \Cref{thm: Delta memlist ub loglogn} by modifying the algorithm for \ref{technique: pu}. The improvement from $O(\log \log n)$ to $O(\log \log \log n)$ arises from the observation that transmitting such an index $i$ requires only $O(\log \log n)$ bits, which is exponentially more efficient than sending the entire $\ID$, which requires $O(\log n)$ bits.

%We design an algorithm to handle the above hard instance. Roughly speaking, if the time interval between the last two changes is short, one of $x$ and $y$ can do the listing using \ref{technique: rr} technique. Otherwise, both $x$ and $y$ sends one specific bit of its new neighbor ID to $w$ with following property: the reported bits are the same if and only if the two new neighbors are the same. Since it is a long interval, node $w$ can use \ref{technique: pu} to ensure $x,y$ knows the specific bit before the last change. We use $\firstDiff(\ID(x), \ID(y))$ as the specific bit index with definition below:
\begin{definition}[$\firstDiff$]
    For any two distinct bit strings $x$ and $y$, define $\firstDiff(x, y)$ as the first index whose bit value differs in $x$ and $y$.%\yijun{there is some redundancy. Can just define this for two strings}
\end{definition}

For example, if $x=011011$ and $y=011101$, then $\firstDiff(x, y) = 4$, as $x[i]=y[i]$ for all $i\in\{1,2,3\}$ and $x[4] = 0 \neq 1 =  y[4]$. If $x$ and $y$ are $O(\log n)$-bit identifiers, then $\firstDiff(x, y)$ is $O(\log n)$, which can be written as an $O(\log \log n)$-bit string.

Observe that selecting $i = \firstDiff(\ID(u), \ID(v))$ satisfies the requirement of index $i$ in the discussion above.

\begin{definition}[$\diffList$]
For any node $u$ and any neighbor $v\in N(u)$, define $\diffList_v(u)$ as the list of indices $\firstDiff(\ID(v), \ID(u'))$ for all $u'\in N(u)\setminus \{v\}$. 
\end{definition}

 Since each element of $\diffList_v(u)$ can be encoded using $O(\log \log n)$ bits, $\diffList_v(u)$ can be represented using $O(\Delta\log \log n)$ bits. Next, we present the modified algorithm for \ref{technique: pu}.

 \paragraph{Periodic updates of $\diffList_v(u)$} Suppose node $u$ detects that an edge $\{u,v\}$ is inserted, then $u$ sends to $v$ its current $\diffList_v(u)$ by splitting the information into $T$ blocks of equal size $O\left(\frac{\Delta\log \log n}{T}\right)$ and sending one block to $v$ in each of following $T$ rounds. After the transmission is done, the procedure is repeated for a new period, and so on. Compared with \ref{technique: pu} in the proof of \Cref{thm: Delta memlist ub loglogn}, the only difference is that the list of neighborhood $\ID$s is replaced with  $\diffList_v(u)$.

% \begin{description}
%     \item[Property of one-DistinctBit\label{item: firstDiff}] Given node $w$ with two non-adjacent neighbors $x,y$. Set $j=\firstDiff(\ID(x), \ID(y))$. Suppose both $x,y$ know the value of $j$. At some round, node $x$ has a new neighbor $x'$ and $y$ has a new neighbor $y'$. If it is node insertions, $x'=y'$, the $j$th bit of $\ID(x')$ and $j$th bit of $\ID(y')$ is the same. If it is edge insertions, $x'=y$ and $y'=x$. Since $j=\firstDiff(\ID(x), \ID(y))$, the $j$th bit of $\ID(x')$ differs from that of $\ID(y')$. Therefore, node $u$ can distinguish two cases according to the received $j$th bit value from $x$ and $y$.
%     \item[one-DBList] Given node $u$ with its neighborhood $N(u)$, for each $v\in N(u)$, we construct a list of bit index $\firstDiff(\ID(v), \ID(u'))$ for each $u'\in N(u)\setminus \{v\}$. Denote it by $\diffList_v(u)$.
%     Since each $\ID$ has length $O(\log n)$, its index can be encoded using $O(\log \log n)$ bits. Given maximum degree $\Delta$, $\diffList_v(u)$ takes total $O(\Delta\log \log n)$ bits.
%     \item[Periodic updates of one-DBList] Starting from edge $\{u,v\}$ inserted, $u$ computes $\diffList_v(u)$, split into $T$ blocks evenly and send one block to $v$ in each of following $T$ rounds. When it finishes at the end of one period, repeat it for a new period. 
% \end{description}

\begin{proof}[Proof of \Cref{thm: Delta list lb logloglogn}]
Our algorithm for $\List(K_3)$ is as follows.
\paragraph{Algorithm} We choose the parameters $d= \lceil \log \log n \rceil$ and $T=\lfloor d/2\rfloor$. If node $u$ detects that an edge $\{u,v\}$ is inserted at round $t$, then $u$ performs the following steps.
        \begin{enumerate}
            \item In round $t$, $u$ sends $(\SIGNAL)$ to all its neighbors.  \label{step1}
            \item In round $t$, for each $u' \in N(u)$, $u$ sends to $u'$ the $j$th bit of $\ID(v)$ for each $j\in \diffList_u(u')$ according to the latest $\diffList_u(u')$ that $u$ received from $u'$. \label{step:report j-th value}
            \item In round $t$, $u$ sends $R_d^{\mathsf{edge}}(u)$ to $v$ using $O(\Delta \log d)$ bits.\label{step:rr edge only}
            \item $u$ starts the periodic neighborhood updates of $\diffList_v(u)$ for $\{u,v\}$ with period length $T$ from round $t$.\label{step:pu diffList}
            \end{enumerate}
Other than the use of $\diffList$ and the parameter choice $d= \lceil \log \log n \rceil$, a major difference between the algorithm here and the algorithm of \Cref{thm: Delta memlist ub loglogn} is that the nodes do not communicate $R_d^{\mathsf{signal}}$ in the algorithm above, which allows us to improve the $O(\Delta^2)$ factor in the bandwidth complexity to $O(\Delta)$. The communication of $R_d^{\mathsf{signal}}$ is not required here because the problem under consideration is listing and not membership-listing.

% Take $d= \lceil \log \log n \rceil$. For each existing node $u$, suppose $u$ has a new neighbor $v$ at round $t$, send the following messages:
%         \begin{enumerate}
%             \item $u$ sends $(\SIGNAL)$ to all its neighbors.\yijun{why not $v$?}\mingyang{fixed.}
%             \item If $u$ receives the latest $\diffList_u(u')$ from its neighbor $u'$, $u$ sends $u'$ the $j$th bit of $\ID(v)$ for each $j\in \diffList_u(u')$.\label{step:report j-th value}
%             \item $u$ sends $R_d^{\mathsf{edge}}(u)$ to $v$ using $O(\Delta \log d)$ bits.\label{step:rr edge only}
%             \item $u$ starts the periodic updates of $\diffList_v(u)$ to $v$ with period length $T=\lfloor d/2\rfloor$. \label{step:pu diffList}
%         \end{enumerate}
    \paragraph{Correctness} Suppose a triangle  $\{u,v,w\}$ appears at some round $t$. We show that at least one node has enough information to list the triangle after the communication in the same round.
    \begin{description}
        \item [Case 1: node insertion\label{item: node insertion}] Suppose the triangle $\{u,v,w\}$ is formed by a node insertion. Without loss of generality, we assume that edge $\{u,v\}$ is already there before round $t$, and then at round $t$, node $w$ is inserted along with incident edges $\{w,v\}$ and $\{w,u\}$. In \Cref{step1}, $u$ and $v$ send $(\SIGNAL)$ to each other. This happens if and only if $u$ and $v$ obtain a common new neighbor $w$. Thus both $u$ and $v$ can list the triangle $\{u,v,w\}$ in this case. 
        \item [Case 2: edge insertion\label{item: edge insertion}] Suppose the triangle $\{u,v,w\}$ is formed by an edge insertion. We claim that one node can list the triangle after the communication within the same round. Without loss of generality, we assume $t_{vw} < t_{uw}<t_{uv} = t$, where $t_{xy}$ is the round number when the edge $\{x,y\}$ is inserted.        
        \begin{itemize}
            \item Suppose $t_{uv} - t_{uw} > d$. We claim that $w$ can correctly list the triangle $\{u,v,w\}$ in this case. The interval $[t_{uw}, t_{uw}+2T-1]$ covers one period of updates of $\diffList$ entirely, from $w$ to both $u$ and $v$. Therefore, at round $t_{uw} + 2T-1\leq t_{uv} = t$, both the list $\diffList_u(w)$ that node $u$ receives from $w$ and the list $\diffList_v(w)$ that node $v$ receives from $w$ already contain $\firstDiff(\ID(u), \ID(v))$. In the subsequent discussion, we write $j = \firstDiff(\ID(u), \ID(v))$. In  \Cref{step:report j-th value} of round $t$, node $w$ receives the $j$th bit of $\ID(v)$ from $u$ and the $j$th bit of $\ID(u)$ from $v$. Based on the value of these two bits, $w$ can correctly tell whether the triangle $\{u,v,w\}$ is formed due to the insertion of edge $\{u,v\}$.
            \begin{itemize}
            \item Observe that the edge $\{u,v\}$ is inserted \emph{if and only if} $w$ receives $(\SIGNAL)$ from exactly two of its neighbors $u$ and $v$ \emph{and} the two  $\firstDiff(\ID(u), \ID(v))=j$th bits reported from $u$ and $v$ are distinct. In this case, the triangle $\{u,v,w\}$ is formed, and $w$ can correctly list the triangle. 
                \item On the other hand, if the two reported bits are the same, this indicates that a new node was inserted along with edges incident to both $u$ and $v$, allowing $w$ to conclude that the triangle $\{u,v,w\}$ is not formed.
            \end{itemize}
            
            \item Suppose $t_{uv} - t_{uw} \leq d$. By using the same argument as the case of $t_{uv} - t_{uw} \leq d$ for node $v$ in the proof of \Cref{thm: Delta memlist ub loglogn}, we infer that $v$ can correctly list the triangle $\{u,v,w\}$ in this case. We emphasize that the argument only relies on the fact that $v$ can locally calculate $R_d^{\mathsf{signal}}(v)$ and does not require $v$ to send  $R_d^{\mathsf{signal}}(v)$ to $u$, so the proof still works here.%\yijun{I don't know why this case was commented before. Now added back and will check later...}
        \end{itemize}
        \item[Bandwidth] We analyze the bandwidth complexity of our algorithm. The cost for \Cref{step1} is $B = 1$. The cost for \Cref{step:report j-th value} is $B=O(\Delta)$  since 
        $\diffList_{u}(u')$ contains at most $\Delta-1$ indices.
        The cost for  \Cref{step:rr edge only} is $B=O(\Delta \log d)=O(\Delta\log\log\log n)$ since $d= \lceil \log \log n \rceil$.
        The cost for \Cref{step:pu diffList} is $B= O\left(\frac{\Delta\log \log n}{T}\right)=O(\Delta)$ since $T=\lfloor d/2\rfloor$ and $d= \lceil \log \log n \rceil$. Therefore, the overall bandwidth is  $B=O(\Delta\log \log\log n)$.\qedhere
    \end{description}
\end{proof}

%\newpage
\section{Membership-Listing}\label{sect:memList}
In this section, we establish a complete characterization of the bandwidth complexity of $r$-round dynamic $\MemList(H)$ for any target subgraph $H$, for any number of rounds $r$, under any single type of topological change. We emphasize that, while $H$ is assumed to be a constant-size graph, here we allow $r$ to be a function of $n$. See \Cref{table:memList1} for a summary of our results. Refer to \Cref{def: r_H,def: r_H'} for the definition of $r_H$ and $r_H'$.

% Please add the following required packages to your document preamble:
% \usepackage{multirow}
% Please add the following required packages to your document preamble:
% \usepackage{multirow}
\begin{table}[ht]
\begin{center}
\begin{tabular}{llllll}
\cline{2-6}
\multicolumn{1}{l|}{} & \multicolumn{4}{l|}{$r\geq r_H$} & \multicolumn{1}{l|}{\multirow{4}{*}{$r<r_H$}} \\ \cline{2-5}
\multicolumn{1}{l|}{} & \multicolumn{3}{l|}{Complete multipartite graphs $(r_H=1)$} & \multicolumn{1}{l|}{\multirow{3}{*}{$r_H \geq 2$}} & \multicolumn{1}{l|}{} \\ \cline{2-4}
\multicolumn{1}{l|}{} & \multicolumn{2}{l|}{Cliques} & \multicolumn{1}{l|}{\multirow{2}{*}{Others}} & \multicolumn{1}{l|}{} & \multicolumn{1}{l|}{} \\ \cline{2-3}
\multicolumn{1}{l|}{} & \multicolumn{1}{l|}{$r = 1$} & \multicolumn{1}{l|}{$r\geq 2$} & \multicolumn{1}{l|}{} & \multicolumn{1}{l|}{} & \multicolumn{1}{l|}{} \\ \cline{2-6} 
\multicolumn{1}{l|}{\multirow{2}{*}{\textbf{Edge insertions}}} & \multicolumn{1}{l|}{$\Theta(\sqrt{n})$} & \multicolumn{1}{l|}{$\Theta(1)$} & \multicolumn{1}{l|}{$\Theta(n/r)$} & \multicolumn{1}{l|}{$\Theta(n^2/r)$} & \multicolumn{1}{l|}{Impossible} \\
\multicolumn{1}{l|}{} & \multicolumn{1}{l|}{\cite{bonne2019distributed}} & \multicolumn{1}{l|}{\cite{bonne2019distributed}} & \multicolumn{1}{l|}{[\ref{Lem: memlist Multipartite graph}][\ref{lem: memlist lb edge insertions complete}]} & \multicolumn{1}{l|}{[\ref{Lem: memlist other graph}][\ref{lem: memlist lb edge insertions other}]} & \multicolumn{1}{l|}{[\ref{lem: locality edge}][\ref{lem: locality node insertion}]} \\ \cline{2-5}
\multicolumn{1}{l|}{\multirow{2}{*}{\textbf{Node insertions}}} & \multicolumn{2}{l|}{$\Theta(n/r)$} & \multicolumn{1}{l|}{$\Theta(n/r)$} & \multicolumn{1}{l|}{$\Theta(n^2/r)$} & \multicolumn{1}{l|}{} \\
\multicolumn{1}{l|}{} & \multicolumn{2}{l|}{\cite{bonne2019distributed}} & \multicolumn{1}{l|}{[\ref{Lem: memlist Multipartite graph}][\ref{lem: memlist lb node insertions complete}]} & \multicolumn{1}{l|}{[\ref{Lem: memlist other graph}][\ref{lem: memlist lb node insertions other}]} & \multicolumn{1}{l|}{} \\ \cline{2-5}
\multicolumn{1}{l|}{\multirow{2}{*}{\textbf{Edge deletions}}} & \multicolumn{2}{l|}{$\Theta(1)$} & \multicolumn{2}{l|}{$\Theta((\log n)/r)$} & \multicolumn{1}{l|}{} \\
\multicolumn{1}{l|}{} & \multicolumn{2}{l|}{\cite{bonne2019distributed}} & \multicolumn{2}{l|}{[\ref{lem: memlist edge deletion}][\ref{lem: memlist lb edge deletion}]} & \multicolumn{1}{l|}{} \\ \cline{2-6} 
 &  &  &  &  &  \\ \cline{2-6} 
\multicolumn{1}{l|}{} & \multicolumn{4}{l|}{$r\geq r_H'$} & \multicolumn{1}{l|}{\multirow{2}{*}{$r<r_H'$}} \\ \cline{2-5}
\multicolumn{1}{l|}{} & \multicolumn{2}{l|}{Cliques $(r_H'=0)$} & \multicolumn{2}{l|}{Others $(r_H'\geq 1)$} & \multicolumn{1}{l|}{} \\ \cline{2-6} 
\multicolumn{1}{l|}{\multirow{2}{*}{\textbf{Node deletions}}} & \multicolumn{2}{l|}{$0$} & \multicolumn{2}{l|}{$\Theta((\log n)/r)$} & \multicolumn{1}{l|}{Impossible} \\
\multicolumn{1}{l|}{} & \multicolumn{2}{l|}{\cite{bonne2019distributed}} & \multicolumn{2}{l|}{[\ref{lem: memlist node deletion}][\ref{lem: memlist lb node deletion}]} & \multicolumn{1}{l|}{[\ref{lem: locality node deletion}]} \\ \cline{2-6} 
\end{tabular}
\caption{The bandwidth complexity of $r$-round $\MemList(H)$.\label{table:memList1}}
\end{center}
\end{table}

In \Cref{subsec: ub}, we present our two new upper bounds  $O(n/r)$ and $O(n^2/r)$ for $\MemList(H)$ under any single type of topological change. 
In \Cref{subsec: lb edge insertion}, we present the  corresponding matching lower bounds for edge insertions.
In \Cref{subsec: lb node insertion}, we present the  corresponding matching lower bounds for node insertions.
In \Cref{subsec: deletion}, we show the tight $r$-round bandwidth complexity bound $\Theta((\log n)/r)$ for $\MemList(H)$ for edge deletions and node deletions.

\subsection{Upper Bounds}\label{subsec: ub}

In this section, we prove \Cref{Lem: memlist Multipartite graph,Lem: memlist other graph}, which show \emph{tight} $r$-round bandwidth complexity upper bounds for complete multipartite graphs and other graphs, respectively.

%In this section, we prove \Cref{Lem: memlist Multipartite graph}, which gives $r$-round bandwidth $O(n/r)$ of $\MemList(H)$ under any type of topological change for complete multipartite graph $H$. For all other graphs, we prove \Cref{Lem: memlist other graph}, which gives $r$-round bandwidth $O(n^2/r)$ for $r\geq r_H$.

\begin{theorem}\label{Lem: memlist Multipartite graph}
    For any connected complete multipartite graph $H$ and for any $r \geq 1$, the $r$-round bandwidth complexity of $\MemList(H)$ under any type of topological change is $B=O(n/r)$.
\end{theorem}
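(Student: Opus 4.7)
The plan is to exploit the fact that any connected complete multipartite $H$ with at least three nodes has $\nediam(H)=2$ by \Cref{thm: nediam=2}, so $r_H=1$ and $r$-round algorithms are not ruled out by the locality constraints. The key structural observation I would use is that if $H$ has multipartite parts $V_1,\ldots,V_k$ and node $v$ plays the role of some vertex in $V_i$ in a prospective copy of $H$, then every node playing a role in $V_j$ for $j\neq i$ must be a neighbor of $v$ in $G$, while any other node playing a role in $V_i$ must be a common neighbor of all already-chosen nodes in $\bigcup_{j\neq i}V_j$. Consequently, the local neighbor list $N(v)$ together with $N(u)$ for each $u\in N(v)$ fully determines which copies of $H$ contain $v$.

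The algorithm is therefore straightforward: each node $v$ maintains a bit-vector representation of $N(v)$ in $\{0,1\}^n$ (identifiers lie in $[n]$) and broadcasts this $n$-bit string to every neighbor by partitioning it into $r$ blocks of $\lceil n/r\rceil$ bits and sending one block per round in a rolling fashion. This yields bandwidth $O(n/r)$. Each node keeps, for every current neighbor $u$, the most recently completed transmission of $N(u)$, and once every current neighbor's transmission has been fully received, it enumerates all copies of $H$ containing itself locally using the sufficiency observation above. The initialization is seeded by the assumption that every node already knows the topology of $G^0$.

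For correctness under the $r$-round quiet-round guarantee, I would argue that if a topological change occurs in round $i$, then no further change occurs during $[i,i+r-1]$, so every node with a modified neighborhood---the endpoints of an inserted or deleted edge, an inserted new node, or the neighbors of a deleted node---has a full window of $r$ consecutive rounds to push its updated $n$-bit neighbor vector across each incident edge. By round $i+r-1$, every $v$ possesses the correct $N(u)$ for every $u\in N_{G^i}(v)$ and can produce the correct output. I would dispatch the four change types in turn: edge-insertion or -deletion endpoints detect the change immediately; the former neighbors of a deleted node detect the disappearance and purge it from their records; a newly inserted node learns its own neighbor list from the model and can begin broadcasting right away.

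The main subtlety will be the newly inserted node: such a $v$ has no prior knowledge of $N(u)$ for any $u\in N(v)$, so it must wait for each neighbor's full block cycle to complete before listing. Since the insertion is followed by $r-1$ quiet rounds, each neighbor has exactly $r$ rounds to push its full $n$-bit vector to $v$ at rate $O(n/r)$, which suffices. I do not expect further obstacles; verifying that the rolling-broadcast schedule aligns correctly with the adversary's change schedule is routine once we use the quiet-round guarantee to rule out any transmission being interrupted before $r$ rounds elapse.
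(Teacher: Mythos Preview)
Your proposal is correct and follows essentially the same approach as the paper: broadcast the $n$-bit neighbor vector in $r$ blocks of size $O(n/r)$, and use the structural fact that for complete multipartite $H$ the one-radius local view (i.e., $N(v)$ together with $N(u)$ for all $u\in N(v)$) already contains every copy of $H$ through $v$. The paper packages this structural fact as \Cref{obs: isomorphic} and the transmission as \Cref{lem: 2-N}, but the content is identical to what you wrote.
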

\begin{theorem}\label{Lem: memlist other graph}
    For any connected graph $H$ and for any $r\geq r_H$, the $r$-round bandwidth complexity of $\MemList(H)$ under any type of topological change is $B=O(n^2/r)$.
\end{theorem}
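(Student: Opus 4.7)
I propose to let every node maintain the subgraph induced on a sufficiently large neighborhood of itself, and to refresh that view after every topological change by a pipelined flood whose per-change payload is $O(n^2)$ bits spread across the $r$ rounds following the change.

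\textbf{Step 1 (locality of $H$-copies).} Since $r_H=\nediam(H)-1$, in any copy of $H$ through a node $v$, every edge is at node-edge distance at most $r_H+1$ from $v$ and hence has an endpoint in $N^{r_H}(v)$, while every node of that copy lies within graph distance $\diam(H)\le r_H+1$ from $v$ (using \Cref{obs: diam nediam}). Hence $v$ can enumerate locally all copies of $H$ through $v$ from the subgraph $G^t[N^{r_H+1}_{G^t}(v)]$, an object describable by at most $\binom{n}{2}=O(n^2)$ bits.

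\textbf{Step 2 (invariant and correctness).} The invariant I maintain is: by the end of round $t+r-1$, every node $v$ knows $G^t[N^{r_H+1}_{G^t}(v)]$. It holds at $t=0$ because $G^0$ is globally known, and correctness of $\MemList(H)$ is immediate from Step~1. Hence it suffices to propagate each update fast enough to restore the invariant within $r$ rounds.

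\textbf{Step 3 (pipelined flood).} When a change occurs at round $t$, each directly affected endpoint knows the change and, by the invariant at round $t-1$, can locally describe the new $(r_H+1)$-ball of everything incident to it. Each such endpoint then initiates a BFS-style broadcast of this $O(n^2)$-bit description along all its edges, with every other node forwarding the received chunks to its neighbors in a standard pipelined BFS. With bandwidth $B=O(n^2/r)$, each edge transports $rB=O(n^2)$ bits in the $r$ rounds following the change, so the entire payload reaches depth $r_H+1$ within $r_H+O(n^2/B)=O(r)$ rounds; choosing the constant in $B=O(n^2/r)$ appropriately makes this at most $r$. The mandatory $r-1$ quiet rounds between successive changes ensure that pipelines of distinct changes never collide on any edge.

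\textbf{Main obstacle.} The delicate technical point is the pipeline-timing estimate when $r$ is close to $r_H$: one must verify that the latency $r_H$ plus the throughput $O(n^2/B)$ jointly fit within the $r$-round budget for $B=\Theta(n^2/r)$. For $r=r_H=\Theta(1)$ this is trivial, because bandwidth $O(n^2)$ in a single round already allows each node to broadcast its entire view to all its neighbors; for $r\gg r_H$ the splitting argument yields the desired $O(n^2/r)$, matching the $\Omega(n^2/r)$ lower bound proved later (for edge insertions in \Cref{lem: memlist lb edge insertions other} and analogously for the other change types). The same protocol handles node insertions, node deletions, edge insertions, and edge deletions uniformly, since in all four cases the change admits an $O(n)$-bit description that is dwarfed by the $O(n^2)$-bit ball description being flooded.
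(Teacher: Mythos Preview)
Your Step~3 contains a genuine gap. You claim that after a change each affected endpoint ``can locally describe the new $(r_H+1)$-ball of everything incident to it,'' but this fails for insertions: if the edge $\{u,v\}$ is inserted, then $u$'s \emph{new} $(r_H+1)$-ball may contain a large portion of $v$'s old neighborhood that $u$ has never seen (and symmetrically for $v$); for a node insertion the new node knows nothing but its neighbor list. So the $O(n^2)$-bit payload you propose to flood is not locally computable at the source. The scheme is salvageable---each endpoint can broadcast its \emph{old} ball together with the change, and recipients assemble their new balls from their own old ball plus the received ones---but this is not what you wrote, and verifying that the union of the two endpoints' old balls suffices for every affected node is exactly the work that is missing.

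The paper sidesteps all of this with a simpler \emph{non-incremental} protocol: after each change, \emph{every} node re-runs a $t$-round flooding with $t=r_H$, where in each round every node broadcasts the full set of edges it has learned so far (an $O(n^2)$-bit message). This builds $B^{r_H}_G(u)$ from scratch in $r_H$ rounds, and each of those rounds is then simulated over $\lfloor r/r_H\rfloor$ real rounds by chunking, giving bandwidth $O(n^2 r_H/r)=O(n^2/r)$ since $r_H=O(1)$. There is no invariant to maintain, no need to identify ``affected endpoints,'' and no separate pipeline-versus-latency case analysis at $r=r_H$. Your incremental approach would ultimately give the same bound once repaired, but buys nothing here because the payload is already $\Theta(n^2)$ bits in the worst case.
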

The proofs of the above theorems rely on the notion of the \emph{$r$-radius local view} of a node $u$, which is defined as the subgraph that $u$ can possibly learn after $r$ rounds of communication with unlimited bandwidth, assuming that each node initially only knows its neighbors. A more precise definition is as follows.
\begin{description}
    \item[Initial local view] In a graph $G$, without any communication, each node $u\in V(G)$ initially knows its neighbors $N_G(u)$, so we define the zero-radius local view $B_G^0(u)$ as follows.
    \begin{itemize}
        \item $V(B_G^0(u))=N_G(u)\cup \{u\}= \{w\in V(G): \dist_G(u,w)\leq 1\}$.
        \item $E(B_G^0(u))=\{\{u,u'\}: u' \in N_G(u)\}$.
    \end{itemize}
    %We say each node $u$ has a $0$-radius local view, $B_G^0(u)$, without any communication.
    \item [Local view of a given radius] After $r$ rounds of communication, a node $u$ can receive information from all nodes with distance at most $r$, each of which has a zero-radius local view initially, so we define the $r$-radius local view $B_G^r(u)$ as the union of $B_G^0(v)$ over all nodes $v$ such that $\dist_G(u,v)\leq r$.
    In other words, $B_G^r(u)$ is induced by the set of edges with at least one endpoint whose distance from $u$ is at most $r$.
    %For each integer $r\geq 1$, we define $B_G^r(u)\subseteq G$ as follows:
    \begin{itemize}
        \item $V(B_G^r(u)) = \{w\in V(G): \dist_G(u,w)\leq r+1\}$.
        \item $E(B_G^r(u)) = \{\{w,v\}\in E(G): \min \{\dist_G(u,w), \dist_G(u,w)\} \leq r \}$.
    \end{itemize}
    We emphasize that, in general, $B^r_G(u)$ is not an induced subgraph, as $B^r_G(u)$ does not contain edges whose both endpoints are at exactly distance $r+1$ from $u$.
\end{description}
In the subsequent discussion, we say that a node $u$ has the $r$-radius local view $B_G^r(u)$ if $u$ knows $B_G^r(u)$. We make the following observation.

\begin{observation}\label{obs: isomorphic}
    A graph $H$ is a connected complete multipartite graph if and only if $B^1_H(u) = H$ for all $u \in V(H)$.
\end{observation}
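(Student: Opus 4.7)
The plan is to translate the condition $B^1_H(u)=H$ into a statement about the node-edge diameter $\nediam(H)$ and then appeal directly to \Cref{thm: nediam=2}. First, I would unpack the definition of $B^1_H(u)$. The equality $V(B^1_H(u))=V(H)$ is equivalent to $\ecc_H(u)\le 2$, and $E(B^1_H(u))=E(H)$ is equivalent to every edge $e=\{v,w\}\in E(H)$ satisfying $\min\{\dist_H(u,v),\dist_H(u,w)\}\le 1$, i.e., $\dist_H(u,e)\le 2$. Since the edge condition already forces each vertex of $H$ to lie within distance $2$ of $u$ (take any edge incident to that vertex, noting that an isolated vertex distinct from $u$ would break connectedness), the two requirements collapse to the single condition $\neecc_H(u)\le 2$. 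Quantifying over $u\in V(H)$ and using $\diam(H)\le\nediam(H)$ from \Cref{obs: diam nediam}, the hypothesis ``$B^1_H(u)=H$ for all $u\in V(H)$'' is equivalent to $H$ being connected and $\nediam(H)\le 2$.

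Second, I would close the loop using \Cref{thm: nediam=2}. For a connected $H$ with $|V(H)|\ge 3$ and at least one edge, there is always an edge with a non-incident vertex, so $\nediam(H)\ge 2$; combined with the reformulation above, $B^1_H(u)=H$ for all $u$ is equivalent to $\nediam(H)=2$, which by \Cref{thm: nediam=2} is equivalent to $H$ being complete multipartite. The boundary cases $|V(H)|\le 2$ (a single vertex and a single edge) are handled by direct inspection: both are connected complete multipartite graphs and trivially satisfy $B^1_H(u)=H$.

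Putting both directions together: if $H$ is connected complete multipartite, then either $|V(H)|\le 2$ (direct check) or $\nediam(H)=2$ by \Cref{thm: nediam=2}, and in either case $\neecc_H(u)\le 2$ so $B^1_H(u)=H$ for every $u$. Conversely, $B^1_H(u)=H$ for all $u$ gives $\nediam(H)\le 2$ and connectedness of $H$, and \Cref{thm: nediam=2} (with the small-case check) then forces $H$ to be complete multipartite.

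The only genuinely non-cosmetic step is the translation $B^1_H(u)=H \iff \neecc_H(u)\le 2$, which requires keeping careful track of which edges the $1$-radius view actually captures (namely, exactly those with an endpoint in $N_H(u)\cup\{u\}$). Once that translation is in place, the claim follows essentially for free from the already-established characterization of complete multipartite graphs via node-edge diameter, so I do not anticipate a substantial obstacle.
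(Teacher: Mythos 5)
Your proof is correct, but it takes a genuinely different route from the paper's. The paper proves \Cref{obs: isomorphic} directly: for the forward direction it observes that in a complete multipartite graph every edge has at least one endpoint adjacent to $u$, and for the reverse direction it defines the non-adjacency relation $\leftrightarrow$, uses the hypothesis $B^1_H(u)=H$ to show $\leftrightarrow$ is transitive (hence an equivalence relation), and reads off the multipartition from the equivalence classes. You instead translate $B^1_H(u)=H$ for all $u$ into the quantitative condition ``$H$ is connected and $\nediam(H)\le 2$'' and then invoke the already-established \Cref{thm: nediam=2}. Both arguments are short and sound; yours emphasizes that this observation is really a repackaging of the $\nediam=2$ characterization, while the paper's is self-contained at the cost of re-deriving (via the equivalence-relation step) essentially the same co-$P_3$-freeness combinatorics that underlies \Cref{lem: P3 free}, \Cref{lem: multipartite iff not node-edge indp}, and \Cref{thm: nediam=2}. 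One small caution: you invoke \Cref{obs: diam nediam}, which is stated only for connected $H$, so in the ``if'' direction you should first extract connectedness from $V(B^1_H(u))=V(H)$ (i.e., $\ecc_H(u)\le 2$ for all $u$) before applying it; your parenthetical remark about isolated vertices suggests you are aware of this, and it closes the gap.
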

\begin{proof}
    For the forward direction, for any given $u \in V(H)$ in any complete multipartite graph $H$, at least one endpoint $v$ of each edge $e \in E(H)$ belongs to a part that does not contain $u$, so $\dist(u,v)=1$ and $e \in E(B_H^1(u))$. For the opposite direction, suppose $B^1_H(u) = H$ for all $u \in V(H)$. We write $u \leftrightarrow v$ if $u$ and $v$ are non-adjacent. Observe that $\leftrightarrow$ is an equivalence relation: If $\{u,v\}\in E(H)$, $u \leftrightarrow w$, and $v \leftrightarrow w$, then $\{u,v\} \notin E(B_H^1(u))$, contradicting the assumption $B^1_H(u) = H$. Therefore, $H$ is a complete multipartite graph where the parts are the equivalence classes of $\leftrightarrow$.
\end{proof}

\begin{lemma}\label{lem: 2-N}
     For any $r \geq 1$, there exists an algorithm that uses $r$ rounds of communication with bandwidth $B=O(n/r)$ to let every node have the one-radius local view.
\end{lemma}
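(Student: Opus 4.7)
The plan is to let every node broadcast its own adjacency information to each of its neighbors, spreading the transmission across the available $r$ rounds so that the per-round bandwidth stays within $O(n/r)$. Since $B_G^1(u)$ is exactly the union of $B_G^0(v)$ over all $v \in N_G(u) \cup \{u\}$, it suffices for $u$ to learn $N_G(v)$ for every $v \in N_G(u)$; $u$ already knows $N_G(u)$ by assumption.

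Concretely, each node $u$ encodes its neighbor list $N_G(u)$ as a binary indicator vector $x_u \in \{0,1\}^n$, where the $j$-th bit is $1$ iff $j \in \ID(N_G(u))$. This representation is possible because the identifier range is exactly $[n]$ and this fact is global knowledge. The algorithm then splits $x_u$ into $r$ consecutive blocks $x_u^{(1)}, \ldots, x_u^{(r)}$, each of length $\lceil n/r \rceil$, and in round $i \in [r]$ every node $u$ transmits the block $x_u^{(i)}$ on each of its incident edges. The per-round message size is $O(n/r)$, which matches the claimed bandwidth.

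After round $r$, every node $u$ has received the full vector $x_v$, hence $N_G(v)$, from each neighbor $v \in N_G(u)$. Therefore $u$ can locally reconstruct $B_G^0(v)$ for each such $v$, and together with its own $B_G^0(u)$ these give exactly $B_G^1(u) = \bigcup_{v \in N_G(u) \cup \{u\}} B_G^0(v)$ by the definition recalled above. Correctness then follows from the definition of the one-radius local view, and the round and bandwidth bounds are immediate from the block decomposition.

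There is no substantive obstacle here; the only mild subtlety is to verify that the broadcast-style encoding really produces $B_G^1(u)$ and not some larger or smaller set, which is handled by the observation that an edge $\{w,v\} \in E(B_G^1(u))$ must have at least one endpoint in $N_G(u) \cup \{u\}$, and that endpoint contributes the edge through its transmitted vector. No assumption on the structure of $G$ (e.g.\ complete multipartiteness) is needed, so the lemma will be usable as a black box both in the proof of \Cref{Lem: memlist Multipartite graph} (where the one-radius local view already determines the entire graph by \Cref{obs: isomorphic}) and as a building block toward the more expensive $O(n^2/r)$ procedure of \Cref{Lem: memlist other graph}.
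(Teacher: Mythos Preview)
Your proposal is correct and follows essentially the same approach as the paper: encode each node's neighbor set as an $n$-bit indicator vector, split it into $r$ equal-size blocks, and broadcast one block per round so that every node learns $N_G(v)$ for every neighbor $v$ and can reconstruct $B_G^1(u)$. The paper presents it as a one-round $O(n)$-bandwidth algorithm followed by the standard block-splitting simulation, whereas you describe the $r$-round version directly, but this is only a cosmetic difference.
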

\begin{proof}
    Consider the following one-round algorithm: Each node $u$ encodes $N_G(u)$ as an $n$-bit string and sends it to all its neighbors $N_G(u)$. Before the algorithm starts, each node $u$ has the local view $B^0_G(u)$. For each neighbor $w\in N_G(u)$, node $u$ receives $N_G(w)$ from $w$. Thus, $u$ learns the set of all edges incident to any neighbor of $u$. By combining the messages from all neighbors, node $u$ learns $E(B_G^1(u))$ and $V(B_G^1(u))$, which gives the one-radius local view $B_G^1(u)$. 
    
    The aforementioned one-round algorithm can be simulated using $r$ rounds with bandwidth $B=O(n/r)$ by breaking each $n$-bit message into $r$ blocks of equal size $O(n/r)$ and sending them using $r$ rounds.
\end{proof}

\begin{proof}[Proof of \Cref{Lem: memlist Multipartite graph}]   
    Let $G$ be the graph resulting from the last topological change. For each node $u\in V(G)$, let $\mathcal{H}_{u}$ be the collection of all copies of $H$ in $G$ containing node $u$. To solve the membership-listing problem, we need to let every node $u$ learn $\mathcal{H}_{u}$.
    Using the algorithm of \Cref{lem: 2-N}, after $r$ rounds of communication, every node $u\in V(G)$ has the local view $B_{G}^1(u)$. By \Cref{obs: isomorphic}, this is sufficient for $u$ to output $\mathcal{H}_{u}$ correctly. 
    % For each $H'\in \mathcal{H}_u$, it has $H'\subseteq B^1_G(u)$ since $H$ is a complete multipartite graph based on \Cref{obs: isomorphic}. 
    % For each $H'\in B^1_G(u)$ containing node $u$, $H'\subseteq G$ and $H'\in \mathcal{H}_u$. 
    % Thus each node $u$ can list all elements in $\mathcal{H}_u$ correctly based on $B^1_G(u)$.
\end{proof}
\begin{lemma}\label{lem: algo 2}
     For any $t \geq 1$ and $r \geq t$, there exists an algorithm that uses $r$ rounds of communication with bandwidth $B=O(n^2t/r)$ to let every node have the $t$-radius local view.
\end{lemma}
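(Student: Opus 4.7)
The plan is to generalize the one-round ``broadcast your adjacency list'' strategy from \Cref{lem: 2-N} by iterating it for $t$ rounds and then simulating the resulting $t$-round protocol within the $r$ available rounds. More precisely, I would first consider an ideal $t$-round protocol in which, at each round $i \in [t]$, every node $u$ sends to each of its neighbors an encoding of its current $(i-1)$-radius local view $B_G^{i-1}(u)$, under unlimited bandwidth.

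Correctness of the ideal protocol follows from the recursive identity
\[
B_G^i(u) \;=\; B_G^0(u) \;\cup\; \bigcup_{v \in N_G(u)} B_G^{i-1}(v),
\]
which is immediate from the definitions: a vertex (respectively, edge) at distance at most $i+1$ (respectively, at most $i$) from $u$ is either already in $B_G^0(u)$, or arises as a witness in the $(i-1)$-radius local view of some neighbor of $u$. By induction on $i$, at the end of round $i$ every node $u$ has enough information to reconstruct $B_G^i(u)$, so after round $t$ every node possesses the $t$-radius local view.

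Each message in the ideal protocol can be encoded in $O(n^2)$ bits, since $B_G^{i-1}(u)$ is a subgraph on at most $n$ vertices with at most $\binom{n}{2}$ edges, so its vertex set (a subset of $[n]$) and edge set (a subset of $\binom{[n]}{2}$) together fit in $O(n^2)$ bits. To turn this into an $r$-round algorithm with small bandwidth, I would partition the $r$ available rounds into $t$ consecutive phases of length $\lfloor r/t \rfloor$, and simulate round $i$ of the ideal protocol during phase $i$ by splitting the $O(n^2)$-bit message into $\lfloor r/t \rfloor$ blocks of size $O(n^2 t / r)$ and transmitting one block per round. This yields the claimed bandwidth bound $O(n^2 t / r)$.

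I do not anticipate any substantive obstacle: the only point to verify is that the block-wise simulation respects the ordering of the ideal rounds, which is automatic since phase $i+1$ begins only after phase $i$ has fully terminated, so each node has already reassembled $B_G^{i-1}(u)$ and computed $B_G^i(u)$ via the recursion above by the time it needs to transmit at the start of phase $i+1$. The constraint $r \geq t$ ensures $\lfloor r/t \rfloor \geq 1$, so every phase actually has at least one round in which to transmit.
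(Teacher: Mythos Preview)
Your proposal is correct and follows essentially the same approach as the paper: run a $t$-round ideal protocol in which each node broadcasts its accumulated edge set (equivalently, its current local view) to all neighbors, then simulate each ideal round over $\lfloor r/t \rfloor$ actual rounds by splitting the $O(n^2)$-bit message into equal-size blocks. The paper phrases the recursion in terms of an edge set $M_u$ rather than the local view $B_G^{i-1}(u)$, but these are the same object and the argument is identical.
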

\begin{proof}
We can let every node $u$ learn $B^t_G(u)$ in $t$ rounds by letting every node broadcast to its neighbors all the edges it has learned in every round. Specifically, we write $M_u$ to denote the set of edges that $u$ has learned. Initially, $M_u = E(B^0_G(u))$.
For each of $t$ rounds, each node $u$ sends $M_u$ to all its neighbors. At the end of each round, each node $u$ updates $M_u$ to $M_u \cup \bigcup_{v \in N_G(u)} M_v$. Since $M_u$ can be encoded as an $n(n-1)/2$-bit string, the algorithm has bandwidth complexity $O(n^2)$.

Similar to the proof of \Cref{lem: 2-N}, we can reduce the bandwidth complexity from $O(n^2)$ to $O(n^2t/r)$ at the cost of increasing the number of rounds from $t$ to $r$. This is achieved by breaking each $O(n^2)$-bit message into $d=\lfloor \frac{r}{t}\rfloor$ blocks of equal size $O(n^2t/r)$ and sending them using $d$ rounds.
    % We claim that for any node $u$, $\{w,w'\}\in E(B_G^t(u))$ if and only if $\{w,w'\}\in M_u$ at the end of round $t$. 
    % Suppose $\{w,w'\}\in E(B_G^t(u))$, let $w$(w.l.o.g.) satisfy that $\dist_G(u,w)\leq t$. Therefore edge $\{w,w'\}$ is added into $M_u$ within $t$ rounds. On the other hand, suppose $M_u$ contains $\{w,w'\}$ after $t$ rounds, either $\dist_G(u,w)\leq t$ or $\dist_G(u,w')\leq t$. It implies that $\{w,w'\}\in E(B_G^t(u))$.
    % Now consider the general $r$-round algorithm with any $r\geq t$. Set $d=\lfloor \frac{r}{t}\rfloor$. For each round in the above $t$-round algorithm, it can be simulated using $d$ rounds: break the $n(n-1)/2$-bit string into $d$ blocks equally and send it to all its neighbors, one block on every round.
    % Therefore, there exists an $r$-round algorithm for $t$-radius local view with $B=O(n^2t/r)$.
\end{proof}

\begin{proof}[Proof of \Cref{Lem: memlist other graph}]
    Let $G$ be the graph resulting from the last topological change. 
    %Fixed any node $u$. 
    Set $t=r_H$ and consider any $r\geq t$. The $r$-round algorithm of \Cref{lem: algo 2} lets every node $u$ have the local view $B^{t}_G(u)$ with bandwidth $B=O(n^2t/r) = O(n^2 /r)$, as $t=r_H = O(1)$. 
    
    We claim that the edge set of $B^{t}_G(u)$ is sufficient for $u$ to list all copies of $H$ in $G$ containing $u$. 
    %re exists subgraph $H$ containing $u$ if and only if there exists $H\subseteq B^{r_H}_G(u)$. 
    Fix any subgraph $H$ of $G$ that contains $u$. Consider any edge $e=\{v,w\}\in E(H)$. Observe that \[1 + \min\{\dist_{G}(u,w), \dist_{G}(u,v)\} \leq 1 + \min\{\dist_{H}(u,w), \dist_{H}(u,v)\} \leq \nediam(H) 
 = 1+ r_H= 1+t.\] 
    Therefore, the distance from $u$ to at least one of $v$ and $w$ is at most $t$, so $e=\{v,w\}\in B^{t}_G(u)$.
    % Since $\ecc_{H/e}(v_e) \leq r_H$, it has $\dist_{H/e}(u,v_e)\leq r_H$, It implies either $\dist_{G}(u,w)\leq r_H$ or $\dist_{G}(u,v)\leq r_H$. Therefore, $H \subseteq B^{r_H}_G(u)$. On the other hands, For each copy $H\subseteq B^{r_H}_G(u)\subseteq G$, graph $G$ contains $H$. Therefore we claim that $u$ can output correctly using $B^{r_H}_G(u)$.
\end{proof}

\subsection{Lower Bounds Under Edge Insertions}\label{subsec: lb edge insertion}
In this section, we show $r$-round bandwidth complexity lower bounds for $\MemList$ under edge insertions that match the upper bounds in \Cref{Lem: memlist Multipartite graph,Lem: memlist other graph}. We show that the bandwidth complexity is $B=\Omega(n/r)$ for any target subgraph that is not a clique (\Cref{lem: memlist lb edge insertions complete}) and is $B=\Omega(n^2/r)$ for any target subgraph that is not a complete multipartite graph (\Cref{lem: memlist lb edge insertions other}). 

\begin{theorem}\label{lem: memlist lb edge insertions complete}
    For any graph $H$ that is not a clique and for any $r\geq 1$, 
    the $r$-round bandwidth complexity of $\MemList(H)$ under edge insertions is $B=\Omega(n/r)$.
\end{theorem}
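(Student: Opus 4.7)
The plan is to establish the lower bound via an information-theoretic argument based on a family of hard instances parameterized by a hidden subset. Since $H$ is not a clique, fix two non-adjacent nodes $u, v \in V(H)$; note that $d := \deg_H(v)$ and $c := |V(H)|$ are constants. Reserve $c - 1$ identifiers in $[n]$ for the ``fixed'' nodes $V(H) \setminus \{u\}$, and let $X \subseteq [n]$ consist of the remaining $\Theta(n)$ identifiers, which will serve as potential substitutes for $u$.

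For each subset $U' \subseteq X$ of size $m = \lfloor |X|/2 \rfloor$, I would construct a dynamic network as follows. The initial graph $G^0$ realizes the subgraph of $H$ induced on $V(H) \setminus \{u, v\}$ on the fixed identifiers, while $v$ and all nodes in $X$ are isolated. Phase~1 inserts, for each $u_i \in U'$, the edges connecting $u_i$ to $N_H(u) \subseteq V(H) \setminus \{u, v\}$, one edge per topological change separated by $r-1$ quiet rounds; this is possible since $\{u,v\} \notin E(H)$, so no edge of Phase~1 touches $v$. Phase~2 inserts the $d$ edges between $v$ and $N_H(v)$, one at a time with $r-1$ quiet rounds between consecutive insertions; the algorithm is required to output by round $r-1$ after the last insertion.

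Next, I would argue that in the final graph, the copies of $H$ containing $v$ are in one-to-one correspondence with $U'$: each $u_i \in U'$ yields the copy $(V(H) \setminus \{u\}) \cup \{u_i\}$ with $u_i$ playing the role of $u$, and any copy containing $v$ must have this form because the only neighbors of $v$ are in the fixed core and because any node taking the role of $u$ must be adjacent to all of $N_H(u)$, forcing it to lie in $U'$. Hence a correct algorithm must allow $v$ to recover $U'$, and since there are $\binom{|X|}{m} = 2^{\Omega(n)}$ choices of $U'$, $v$ must receive $\Omega(n)$ bits of information about the execution in total.

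Finally, I would bound the information $v$ can actually receive. Throughout Phase~1, $v$ is isolated and receives no messages, so its state is independent of $U'$ until Phase~2 begins. During Phase~2 and the $r-1$ trailing quiet rounds, only $dr = O(r)$ rounds elapse, and in each such round $v$ communicates with at most $d = O(1)$ neighbors, each sending at most $B$ bits. Thus $v$ acquires only $O(rB)$ bits in total, and combining with the information-theoretic bound yields $rB = \Omega(n)$, i.e., $B = \Omega(n/r)$. I expect the main technical obstacle to be verifying that no unintended copies of $H$ containing $v$ arise from automorphisms of $H$ or overlaps between the fixed core and the $u_i$'s; this amounts to checking that $v$'s constrained neighborhood in the construction forces it to play only the role of $v$ in any copy, which may require adjusting the fixed-node identifiers or slightly enlarging the construction to preserve the clean bijection with $U'$.
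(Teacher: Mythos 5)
Your proposal matches the paper's proof essentially exactly: same choice of non-adjacent $u,v$, same construction replacing $u$ by an independent set $U'$ with edges to $N_H(u)$ added first and the $O(1)$ edges incident to $v$ added last, and the same counting of $\binom{|X|}{m}$ choices against the $O(rB)$ bits $v$ can receive. The concern you flag at the end about unintended copies is harmless and needs no fix: injectivity of the map $U' \mapsto (\text{output of } v)$ already follows because $u_i \in U'$ if and only if $u_i$ is non-isolated, hence appears in the copy $(V(H)\setminus\{u\})\cup\{u_i\}$ listed by $v$, so extra copies cannot collapse distinct $U'$'s to the same output.
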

\begin{proof}
Since $H$ is not a clique, there exist two nodes $u,v\in V(H)$ such that $\{u,v\} \notin E(H)$. Let $W=V(H)\setminus \{u,v\}$ be the set of all nodes in $H$ excluding $u$ and $v$. 
    To obtain the desired bandwidth complexity lower bound $B=\Omega(n/r)$, the proof idea is to consider the graph resulting from replacing $u$ in $H$ with an independent set $U'$ of size $\Omega(n)$. This graph contains $\Omega(n)$ copies of $H$ that share all nodes except for $u$. 
    We show that if the graph is constructed in such a way that the edges incident to $v$ are added in the end, then $B=\Omega(n/r)$ is necessary for $v$ to learn all $\Omega(n)$ copies of $H$.

    We now give the precise construction of the dynamic network. Refer to \Cref{fig:memlist-ind-pair} for an illustration.   
    %The graph is constructed as follows. Starting from the subgraph of $H$ induced by $W$, we choose $\Omega(n)$ copies of $u$ and each copy is connected to all neighbors of $u$ in $H$. Finally, we add edges to connect $v$ with all its neighbors in $H$. See Figure \ref{fig:memlist-ind-pair} for illustration. We prove that node $v$ requires $B=\Omega(n/r)$ to ensure correctness. See the formal description below:
    Set $U = \{u_1, u_2, \ldots u_n\}$. Consider the dynamic graph $\mathcal{G}$ with node set $W \cup \{v\} \cup U$ of size $O(n)$. Here is construction sequence of $\mathcal{G}$:
    \begin{enumerate}
        \item Start from the induced subgraph $H[W]=H-\{u,v\}$ with all other nodes isolated.
        \item Pick $n/2$ nodes from $U = \{u_1, u_2, \ldots u_n\}$ to form node set $U'$. For each $u_i \in U'$, add edges to connect $u_i$ with $N_H(u)\subseteq W$.\label{step: V'}
        \item Add edges to connect node $v$ with $N_H(v) \subseteq W$. We denote the resulting graph by $G_{U'}$.
    \end{enumerate}   
    Intuitively, each $u_i\in U'$ can be seen as a duplicate of $u$, as for each $w \in W$, $\{u_i,w\} \in E(G_{U'})$ if and only if $\{u,w\} \in E(H)$. 
    Indeed, $G_{U'}$ is the result of replacing $u$ in $H$ with an independent set $U'$, so each $u_i\in U'$ belongs to one unique copy of $H$, which contains node $v$. We write $H_i$ to denote this copy of $H$, so $\mathcal{H}_{U'}=\{H_i: u_i \in U'\}$ is the collection of all copies of $H$ containing node $v$ in $G_{U'}$. To solve the membership-listing problem, node $v$ is required to output $\mathcal{H}_{U'}$ after $r$ rounds of communication from the last topological change. 
    
    There are $\binom{n}{n/2}$ distinct choices of $U'\subseteq U$ at \Cref{step: V'}, where each choice of $U'$ corresponds to a distinct output $\mathcal{H}_{U'}$ of node $v$.
    Since $v$ is not adjacent to any $u_i \in U'$, the output of $v$ only depends on the messages that $v$ receive from $N_H(v)$. Set $d=|N_H(v)| = O(1)$. The $d$ edges incident to $v$ are added in the last $d$ topological changes. Since each topological change is followed by $r$ rounds of communication, $v$ receives $x = dr + (d-1)r + \cdots + r= O(r)$ messages from its neighbors in total.
    To ensure the correctness of the output from $v$, we must have
    \[
    2^{x \cdot B} \geq \binom{n}{n/2} = 2^{\Omega(n)}, 
    \]
    where $B$ is the message size.
    Therefore, we have $B = \Omega(n/r)$.
\end{proof}

\input{tikz_diagrams/memlist-ind-pair} 

\begin{theorem}\label{lem: memlist lb edge insertions other}
    For any graph $H$ that is not a complete multipartite graph and for any $r\geq 1$, 
    the $r$-round bandwidth complexity of $\MemList(H)$ under edge insertions is $B=\Omega(n^2/r)$.
\end{theorem}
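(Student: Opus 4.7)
The plan is to adapt the construction from the proof of \Cref{lem: memlist lb edge insertions complete}, but this time replacing \emph{both} endpoints of a suitably chosen edge with independent sets. The free bipartite graph between the two sides will boost the information content from $\Omega(n)$ to $\Omega(n^2)$ bits while keeping the total number of nodes $O(n)$.

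First, since $H$ is not complete multipartite, \Cref{lem: multipartite iff not node-edge indp} guarantees that $H$ contains a node-edge independent pair: a node $v\in V(H)$ and an edge $e=\{a,b\}\in E(H)$ such that $v$ is adjacent in $H$ to neither $a$ nor $b$. Let $W = V(H)\setminus\{v,a,b\}$. For each bipartite graph $B \subseteq U_1 \times U_2$, where $U_1 = \{a_1, \ldots, a_{n/4}\}$ and $U_2 = \{b_1, \ldots, b_{n/4}\}$ are independent sets of ``duplicates'' of $a$ and $b$, I would construct a dynamic network $\mathcal{G}_B$ on $O(n)$ nodes via the following sequence of edge insertions: start from $H[W]$ with every other node isolated; for each $a_i\in U_1$ insert edges connecting $a_i$ to $N_H(a)\cap W$, and analogously connect each $b_j\in U_2$ to $N_H(b)\cap W$; insert the edges of $B$ between $U_1$ and $U_2$; finally, in the last $|N_H(v)|=O(1)$ topological changes, insert the edges connecting $v$ to $N_H(v)\subseteq W$.

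Next, I would argue that distinct choices of $B$ force distinct outputs at $v$. For each $\{a_i,b_j\}\in B$, the vertex set $S_{i,j} = W \cup \{v, a_i, b_j\}$ hosts a copy of $H$ via the natural mapping $a\mapsto a_i$, $b\mapsto b_j$, and identity on $W\cup\{v\}$, and this copy must appear in $v$'s output. Conversely, when $\{a_i,b_j\}\notin B$, the induced subgraph on $S_{i,j}$ is isomorphic to $H-e$ and hence contains only $|E(H)|-1$ edges, so $S_{i,j}$ cannot carry any copy of $H$ in $\mathcal{G}_B$. Consequently, the set of copies of $H$ containing $v$ uniquely determines $B$, producing $2^{|U_1|\cdot|U_2|}=2^{\Omega(n^2)}$ pairwise distinct outputs at $v$.

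The counting step then mirrors the one in \Cref{lem: memlist lb edge insertions complete}: since $v$ is non-adjacent to every node in $U_1\cup U_2$, its output depends only on messages received through its $O(1)$ neighbors in $W$, and those edges are inserted in the last $O(1)$ topological changes, so $v$ accumulates only $O(r)$ messages of $B$ bits in total. The inequality $2^{O(Br)}\geq 2^{\Omega(n^2)}$ then yields $B = \Omega(n^2/r)$. The main technical hurdle I expect is verifying cleanly that no unintended copy of $H$ appears on $S_{i,j}$ when $\{a_i,b_j\}\notin B$ (for instance, via some alternative role assignment of the vertices); the edge-count argument above closes this gap, but its validity hinges on $v$, $a_i$, $b_j$ being pairwise non-adjacent in $\mathcal{G}_B$, which is exactly the reason the node-edge independence of $(v, e)$ in $H$ is the right structural hypothesis to invoke.
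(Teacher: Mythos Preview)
Your proposal is correct and follows essentially the same approach as the paper: identify a node $v$ and an edge $\{a,b\}$ in $H$ that are independent (via \Cref{lem: multipartite iff not node-edge indp}), blow up both endpoints of the edge into independent sets of $\Theta(n)$ duplicates, let an arbitrary bipartite graph between them encode $\Omega(n^2)$ bits, and insert the edges incident to $v$ last so that $v$ can receive only $O(r)$ messages. The only cosmetic differences are that the paper fixes the bipartite graph to have exactly $n^2/2$ edges (giving $\binom{n^2}{n^2/2}$ choices) whereas you allow all $2^{(n/4)^2}$ subsets, and your explicit edge-count argument for why $S_{i,j}$ carries no copy of $H$ when $\{a_i,b_j\}\notin B$ is a detail the paper leaves implicit.
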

\begin{proof}
    Since $H$ is not a complete multipartite graph, there exist a node $v\in V(H)$ and an edge $\{u,w\}\in E(H)$ such that $\{v,u\} \notin E(H)$ and $\{v,w\} \notin E(H)$. Let $S=V(H)\setminus \{v,u,w\}$ be the set of all nodes in $H$ excluding $v$, $u$, and $w$. 
    To obtain the desired bandwidth complexity lower bound $B=\Omega(n^2/r)$, the proof idea is to consider the graph resulting from replacing $\{u,w\}$ in $H$ with a bipartite graph with $\Omega(n^2)$ edges, where each edge is contained in a unique copy of $H$. We show that if the graph is constructed in such a way that the edges incident to $v$ are added in the end, then $B=\Omega(n^2/r)$ is necessary for $v$ to learn all $\Omega(n^2)$ copies of $H$.

    % we use a graph containing $\Omega(n^2)$ copies of $H$, which share all nodes except $u$ and $w$. Denote it by $G_C$. 
    % It is constructed as follows: starting from graph $H[S]$, we set up a set $U$ containing $n$ copies of $u$, each connected to all neighbors of $u$ in $H$. Similarly, set up a set $W$ containing $n$ copies of $w$, each connected to all neighbors of $w$ in $H$. 
    % We construct $\Omega(n^2)$ copies of crossing edge $\{u,w\}$ between $U$ and $W$. Finally, we add edges to connect $v$ with all its neighbors in $H$. 
    %See \Cref{fig: memlist lb other} for an illustration. We prove that node $v$ requires $B=\Omega(n^2/r)$ to ensure correctness. See the formal description below:

        We now give the precise construction of the dynamic network. Refer to \Cref{fig: memlist lb other} for an illustration.   
    Set $U = \{u_1, u_2, \ldots u_n\}$ and $W = \{w_1, w_2, \ldots w_n\}$. Consider the dynamic graph $\mathcal{G}$ with node set $V(G)=S \cup \{v\} \cup U \cup W$ of size $O(n)$. Here is construction sequence of $\mathcal{G}$:
    \begin{enumerate}
        \item Start from induced subgraph $H[S]=H-\{v,u,w\}$ with all other nodes isolated.
        \item For each $u_i \in U = \{u_1, u_2, \ldots u_n\}$, add edges to connect $u_i$ with $N_H(u)\subseteq S$.
        \item For each $w_j \in W = \{w_1, w_2, \ldots w_n\}$, add edges to connect $w_j$ with $N_H(w)\subseteq S$.
        \item Among all $n^2$ pairs $(u_i , w_j)$ with $u_i\in U$ and $w_j\in W$, choose a subset $C$ of $n^2/2$ pairs. For each chosen pair $(u_i,w_j) \in C$, add edge $\{u_i,w_j\}$.\label{step: C}
        \item Add edges to connect node $v$ with $N_H(v) \subseteq S$. We denote the resulting graph by $G_C$.
    \end{enumerate}   
    Each edge $\{u_i,w_j\}\in C$ can be seen as a duplicate of the edge $\{u,w\}$ in the sense that $N_{G_C}(u_i)=N_H(u)$ and $N_{G_C}(w_j)=N_H(w)$.
    Moreover, for each pair $(u_i,w_j)\in C$, its corresponding edge $\{u_i,w_j\}$ belongs to one unique copy of $H$ containing node $v$. We write such a copy of $H$ as $H_{i,j}$, then $\mathcal{H}_{C}=\{H_{i,j}: (u_i, w_j) \in C\}$ is the collection of all copies of $H$ in $G_C$ containing node $v$. To solve the membership-listing problem, node $v$ is required to output $\mathcal{H}_{C}$ after $r$ rounds of communication from the last topological change.

 There are $\binom{n^2}{n^2/2}$ distinct choices of $C$ at \Cref{step: C}, where each choice of $C$ corresponds to a distinct output $\mathcal{H}_{C}$ of node $v$.
    Since $v$ is not adjacent to each $u_i \in U$ and each $w_j \in W$, the output of $v$ only depends on the messages that $v$ receive from $N_H(v)$. Similar to the proof of \Cref{lem: memlist lb edge insertions other}, set $d=|N_H(v)| = O(1)$. The $d$ edges incident to $v$ are added in the last $d$ topological changes. Since each topological change is followed by $r$ rounds of communication, $v$ receives $x = dr + (d-1)r + \cdots + r= O(r)$ messages from its neighbors in total.
    To ensure the correctness of the output from $v$, we must have
    \[
    2^{x \cdot B} \geq \binom{n^2}{n^2/2} = 2^{\Omega(n^2)}, 
    \]
    where $B$ is the message size.
    Therefore, we have $B = \Omega(n^2/r)$.
    % Note that there are $\binom{n^2}{n^2/2}$ distinct choices of $C$ at \Cref{step: C}. Each $C$ corresponds to a distinct output $\mathcal{H}_{C}$ of node $v$.
    % Since $v$ is not adjacent to each $u_i \in U$ and each $w_j \in W$, the output of $v$ only depends on the received messages from $N_H(v)$. Set $d=|N_H(v)|$. In the last $d$ rounds, there are $d$ edges added to $v$ and $v$ receives $dr + (d-1)r + \cdots + r=\Theta(d^2)\cdot r$ messages from all its neighbors.
    % To ensure the correctness of the output from $v$, it requires
    % \[
    % 2^{\Theta(d^2)\cdot Br} \geq \binom{n^2}{n^2/2} = 2^{\Omega(n^2)} 
    % \]
    % Since $d$ is negligible compared with $n$, it has $B = \Omega(n^2/r)$ for any $r\geq 1$.
\end{proof}
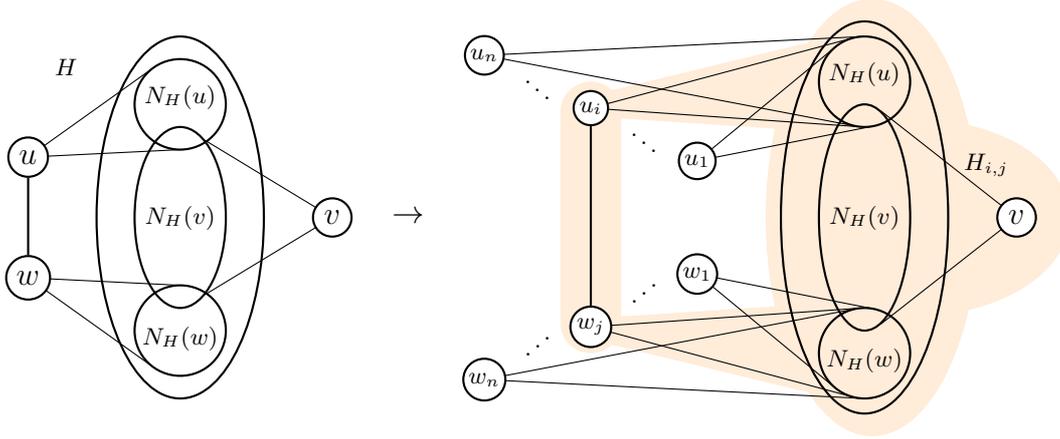
\begin{figure}[htbp]
    \centering
    \begin{tikzpicture}[node distance={10mm},font=\footnotesize,thick, 
vtx/.style = {draw, circle, inner sep=0.8mm, font=\large, fill=white}, 
dup/.style= {draw, circle, inner sep=0.4mm, font=\footnotesize, fill=white}, 
bg/.style={fill=orange!15, draw=none}, 
cone/.style={black,thin}, x=1mm, y=1mm
]
\tikzmath{\dx = 0.7;} 
    \begin{scope}
        \node at (-5,20) {$H$};
        \drawTangentLines{10}{15}{6}{-10}{8}{cone}
        \drawTangentLines{10}{-15}{6}{-10}{-8}{cone}
        \drawTangentLines{07}{0}{12}{30}{0}{cone}
        \draw (10,15) circle (6);
        \draw (10,-15) circle (6);
        \node at (10,16) {$N_H(u)$};
        \node at (10,-16) {$N_H(w)$};
        % \draw (1,0) circle (1);
        \draw (10,0) ellipse [x radius=06mm,y radius=12mm];
        \draw (10,0) ellipse [x radius=11mm,y radius=24mm];
        \node at (10,0) {$N_H(v)$};
        
        \node[vtx] (u) at (-10,8) {$u$};
        \node[vtx] (w) at (-10,-8) {$w$};
        \node[vtx] at (30,0) {$v$};
        \draw (u)--(w);
    \end{scope}

    \node[font=\large] at (40,0) {$\rightarrow$};

    \begin{scope}[xshift=90mm]
        \draw[bg] (11mm,0mm) ellipse (14mm and 29mm);
        \draw[bg, rounded corners=4mm] (-30,-18) rectangle (-22,18);

        % \filldraw[bg, rounded corners=20mm] 
        %     (0mm, 30mm) -- (0mm, -30mm) -- (45mm,0mm) -- cycle;
        \draw[bg] (17mm,0mm) ellipse (20mm and 13mm);
        \filldraw[bg, rounded corners=9mm] 
            (-35mm, 14mm) -- (10mm,25mm) -- (10mm,11mm) -- cycle;
        \filldraw[bg, rounded corners=9mm] 
            (-35mm, -14mm) -- (10mm,-25mm) -- (10mm,-11mm) -- cycle;
    
        \node at (26,7) {$H_{i,j}$};
        \drawTangentLines{1}{0}{18}{30}{0}{cone}
        \draw (10,18) circle (6);
        \draw (10,-18) circle (6);
        \node at (10,19) {$N_H(u)$};
        \node at (10,-19) {$N_H(w)$};
        % \draw (1,0) circle (1);
        \draw (10,0) ellipse [x radius=6mm,y radius=15mm];
        \draw (10,0) ellipse [x radius=11mm,y radius=26mm];
        \node at (10,0) {$N_H(v)$};

        \foreach \x/\y/\z in {50/n/n,30/i/j,10/1/1}{
            \drawTangentLines{10}{18}{6}{-5-\x*\dx}{4+\x*\dx*0.5}{cone}
            \drawTangentLines{10}{-18}{6}{-5-\x*\dx}{-4-\x*\dx*0.5}{cone}
            \node[dup] (u\x) at (-5-\x*\dx,4+\x*\dx*0.5) {$u_{\y}$};
            \node[dup] (w\x) at (-5-\x*\dx,-4-\x*\dx*0.5) {$w_{\z}$};
        }
        \node at (-33,18)    {$\ddots$};
        \node at (-33,-16) {\reflectbox{$\ddots$}};
        \node at (-19,11)  {$\ddots$};
        \node at (-19,-9)   {\reflectbox{$\ddots$}};
        
        \node[vtx] at (30,0) {$v$};
        \draw (u30)--(w30);
    \end{scope}
\end{tikzpicture}
    \caption{Left: graph $H$, node $v$, and edge $\{u,w\}$. Right: graph $G_C$, where the unique copy $H_{i,j}$ of $H$ that contains $\{u_i, w_j\}$ is highlighted.}
    %\caption{Graph $H$ is on the left with edge $\{u,w\}$ and node $v$ such that node $v$ is not adjacent to $u$ and $w$. Graph $G_C$ is on the right. It has node set $S\cup \{v\}\cup U \cup W$. It contains $n^2/2$ copies of $H$ which share node set $S \cup \{v\}$. For each pair $(u_i,w_j)\in C$, there exists one unique copy of $H$, $H_{i,j}$, which contains the edge $\{u_i,w_j\}$. For each subset $C$ of size $n^2/2$, it corresponds to a distinct graph $G_C$ and a distinct output $\mathcal{H}_C$ of node $v$.}
    \label{fig: memlist lb other}
\end{figure}

\subsection{Lower Bounds Under Node Insertions}\label{subsec: lb node insertion}
In this section, we establish the same $r$-round bandwidth complexity lower bounds in \Cref{subsec: lb edge insertion} under \emph{node insertions}. The proofs are very similar to the proofs in \Cref{subsec: lb edge insertion}. The modification required is to replace edge insertions with node insertions in the construction of the dynamic network. %Due to the modification, the condition for the lower bound $B=\Omega(n^2/r)$ changes from being a complete multipartite graph to being node-edge independent (\Cref{def:neindependent}).
%we offer a lower bound of $r$-round bandwidth complexity for $\MemList$ under node insertions, where $B=\Omega(n/r)$ for graphs except cliques, presented by \Cref{lem: memlist lb node insertions complete} and $B=\Omega(n^2/r)$ for graph except complete multipartite graphs, presented by \Cref{lem: memlist lb node insertions other}. The proofs are similar to the proofs used in \Cref{subsec: lb edge insertion}. Instead of edge insertions, the construction sequence uses node insertions.

\begin{theorem}\label{lem: memlist lb node insertions complete}
    For any graph $H$ that is not a clique and for any $r\geq 1$, 
    the $r$-round bandwidth complexity of $\MemList(H)$ under node insertions is $B=\Omega(n/r)$.
\end{theorem}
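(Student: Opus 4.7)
The plan is to adapt the proof of Theorem \ref{lem: memlist lb edge insertions complete} to node insertions, keeping the same high-level structure but replacing the edge-insertion construction with a node-insertion one. Since $H$ is not a clique, I would fix two non-adjacent nodes $u, v \in V(H)$ and let $W = V(H) \setminus \{u, v\}$. The initial graph is $G^0 = H[W]$, the induced subgraph on $W$. For each subset $U' \subseteq U = \{u_1, \ldots, u_n\}$ of size $n/2$, the dynamic graph $\mathcal{G}_{U'}$ is built by node insertions: first, each $u_i \in U'$ is inserted as a new node with incident edges to $N_H(u) \subseteq W$, and finally $v$ is inserted as a new node with incident edges to $N_H(v) \subseteq W$. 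Each topological change is followed by $r - 1$ quiet rounds.

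Exactly as in the edge-insertion case, each $u_i \in U'$ plays the role of a duplicate of $u$, so in the final graph $G_{U'}$ the copies of $H$ containing $v$ are exactly $\mathcal{H}_{U'} = \{W \cup \{v, u_i\} : u_i \in U'\}$. There are $\binom{n}{n/2} = 2^{\Omega(n)}$ choices of $U'$, each giving a distinct required output, so to solve $\MemList(H)$ the node $v$ must acquire $\Omega(n)$ bits of information. Since $v$ is non-adjacent to every $u_i$, its output is a function only of messages received from $N_H(v) \subseteq W$, a set of size $d = |N_H(v)| = O(1)$. Because $v$ is inserted by the last topological change and the algorithm is $r$-round, $v$ participates in exactly $r$ rounds of communication and therefore receives at most $d \cdot r \cdot B = O(rB)$ bits. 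The inequality $2^{O(rB)} \geq 2^{\Omega(n)}$ then forces $B = \Omega(n/r)$.

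I do not anticipate a significant obstacle, as the construction transfers cleanly once we observe that each $u_i \in U'$ can be introduced in a single node-insertion step, bundling all its incident edges together. The only point requiring care is the convention from Section~\ref{subsect:setup} that isolated nodes are considered ``not yet inserted'' in the node-insertion model: the would-be universe $U$ of potential duplicates cannot sit around in $G^0$ as isolated vertices, so only the chosen $u_i \in U'$ are ever realized as actual nodes. This does not affect the counting, since the number of distinct adversarial constructions indexed by $U'$ remains $2^{\Omega(n)}$, and the bound $O(rB)$ on information reaching $v$ is unchanged.
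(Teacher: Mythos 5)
Your proposal is correct and matches the paper's own proof essentially step for step: same choice of non-adjacent $u,v$, same initial graph $H[W]$, same construction in which each $u_i \in U'$ is a node-inserted duplicate of $u$ with edges to $N_H(u)$, same final insertion of $v$, and the same information-theoretic count that $v$ receives only $O(rB)$ bits against $2^{\Omega(n)}$ distinguishable targets. Your closing remark about isolated nodes being treated as not-yet-inserted is a sensible clarification that the paper handles implicitly by never placing $U$ in $G^0$.
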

\begin{proof}
We use the same dynamic network construction in the proof of \Cref{lem: memlist lb edge insertions complete} shown in \Cref{fig:memlist-ind-pair}, replacing edge insertions with node insertions.
    %Suppose two nodes $u,v\in V(H)$ are not adjacent. Let $W=V(H)\setminus \{u,v\}$.
    %To get $B=\Omega(n/r)$, we construct graph $G_{U'}$ as illustrated in Figure \ref{fig:memlist-ind-pair} and it is constructed by node insertions with node $v$ last inserted. Set up $U = \{u_1, u_2, \ldots u_n\}$. Here is the insertion sequence of $\mathcal{G}$:
    \begin{enumerate}
        \item  Start from the induced subgraph $H[W]=H-\{u,v\}$.
        \item Pick $n/2$ nodes from $U = \{u_1, u_2, \ldots u_n\}$ to form node set $U'$. For each $u_i \in U'$, add node $u_i$ together with edges incident to $N_H(u)\subseteq W$.
        \item Add node $v$ together with edges incident to $N_H(v)\subseteq W$.  We denote the resulting graph by $G_{U'}$.
    \end{enumerate}   
    Since $v$ is inserted in the last topological change with $d = |N_H(v)|$ incident edges, $v$ receives $dr = O(r)$ messages before it needs to output all copies of $H$ in $G_{U'}$ containing $v$.
    Using the same analysis of \Cref{lem: memlist lb edge insertions complete}, for $v$ to produce the correct output, it is required that 
    \[
    2^{Bdr} \geq \binom{n}{n/2} = 2^{\Omega(n)},
    \]
    so $B = \Omega(n/r)$.
\end{proof}

\begin{theorem}\label{lem: memlist lb node insertions other}
For any graph $H$ that is not a complete multipartite graph and for any $r\geq 1$, 
    the $r$-round bandwidth complexity of $\MemList(H)$ under node insertions is $B=\Omega(n^2/r)$.
\end{theorem}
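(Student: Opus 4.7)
\medskip

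\noindent\textbf{Proof proposal.} The plan is to adapt the construction from \Cref{lem: memlist lb edge insertions other} so that every edge incident to a newly appearing node is added as part of a single node insertion, while still producing $2^{\Omega(n^2)}$ distinct collections of $H$-copies at node $v$. Since $H$ is not complete multipartite, \Cref{lem: multipartite iff not node-edge indp} gives a node $v \in V(H)$ and an edge $\{u,w\} \in E(H)$ with $\{v,u\},\{v,w\} \notin E(H)$. Let $S = V(H)\setminus\{v,u,w\}$ and introduce two disjoint sets $U=\{u_1,\dots,u_n\}$ and $W=\{w_1,\dots,w_n\}$ of fresh nodes.

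The construction proceeds via node insertions as follows. First, start from $H[S]$ with all other nodes not yet inserted. Next, for each $i\in[n]$, insert $u_i$ together with edges to $N_H(u)\subseteq S$. Then fix an arbitrary set $C\subseteq U\times W$; for each $j\in[n]$ in turn, insert $w_j$ together with edges to $N_H(w)\subseteq S$ \emph{and} to every $u_i$ such that $(u_i,w_j)\in C$ (all of these edges come for free inside the single node-insertion operation that introduces $w_j$). Finally, insert $v$ with edges to $N_H(v)\subseteq S$, obtaining the graph $G_C$. By construction, each $u_i$ looks like a duplicate of $u$, each $w_j$ like a duplicate of $w$, and for every $(u_i,w_j)\in C$ the induced subgraph on $S\cup\{v,u_i,w_j\}$ is isomorphic to $H$ with $u_i$ playing the role of $u$ and $w_j$ the role of $w$; moreover this is the unique copy of $H$ containing both $u_i$ and $w_j$ through $v$. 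Thus $\mathcal{H}_C = \{H_{i,j} : (u_i,w_j)\in C\}$ is precisely the set of copies of $H$ through $v$, and $\mathcal{H}_C$ uniquely determines $C$.

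For the information-theoretic bound, there are $2^{n^2}$ possible choices of $C\subseteq U\times W$, giving $2^{n^2}$ distinct required outputs at $v$. Since $v$ is adjacent only to $N_H(v)\subseteq S$ (it is non-adjacent to every $u_i$ and every $w_j$), the output of $v$ depends only on messages received across its $d:=|N_H(v)|=O(1)$ incident edges. Because $v$ is inserted in the last topological change and is followed by exactly $r$ quiet rounds of communication, $v$ receives at most $d\cdot r = O(r)$ messages of $B$ bits in total. Correctness forces
\[
2^{O(r)\cdot B} \;\geq\; 2^{n^2},
\]
hence $B = \Omega(n^2/r)$.

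The main obstacle compared with the edge-insertion proof is that we cannot first drop in the duplicate nodes and then sprinkle the $C$-edges one by one via edge insertions; we must commit to all edges incident to a freshly inserted node at the moment of its insertion. The resolution above is to defer \emph{all} of $v$'s choices into the $w_j$-insertion phase, packing the $j$-th column of $C$ into the single node insertion that introduces $w_j$. This preserves the $2^{\Omega(n^2)}$ counting lower bound while keeping $v$'s communication window of length $O(r)$, which is exactly what the bandwidth argument needs.
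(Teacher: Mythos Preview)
Your proposal is correct and follows essentially the same approach as the paper: both pack the $C$-edges into the node insertions of the $w_j$'s, insert $v$ last, and bound the $O(r)$ messages $v$ can receive against the $2^{\Omega(n^2)}$ distinct outputs. The only cosmetic difference is that the paper restricts $C$ to subsets of size exactly $n^2/2$ (yielding $\binom{n^2}{n^2/2}=2^{\Omega(n^2)}$ choices) while you allow all $2^{n^2}$ subsets, which makes no asymptotic difference.
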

\begin{proof}
We use the same dynamic network construction in the proof of \Cref{lem: memlist lb edge insertions other} shown in \Cref{fig: memlist lb other}, replacing edge insertions with node insertions.
    %Suppose node $v\in V(H)$ and edge $\{u,w\}\in E(H)$ are independent.
    % satisfy that $\{v,u\} \notin E(H)$ and $\{v,w\} \notin E(H)$. 
    %Let $S=V(H)\setminus \{v,u,w\}$.
    %To get $B=\Omega(n^2/r)$, we construct graph $G_C$ as illustrated in \Cref{fig: memlist lb other} and it is constructed by node insertions with $v$ last inserted.
    %Set up $U = \{u_1, u_2, \ldots u_n\}$ and $W = \{w_1, w_2, \ldots w_n\}$. Let $C$ be a set of arbitrary $n^2/2$ pairs of $(u_i,w_j)$ with $u_i\in U, w_j\in W$. Here is the insertion sequence for $\mathcal{G}$:
        \begin{enumerate}
        \item Start from induced subgraph $H[S]=H-\{v,u,w\}$.
        \item For each $u_i \in U = \{u_1, u_2, \ldots u_n\}$, add node $u_i$ together with edges incident to $N_H(u)$.
            \item Among all $n^2$ pairs $(u_i , w_j)$ with $u_i\in U$ and $w_j\in W$, choose a subset $C$ of $n^2/2$ pairs. 
        \item For each $w_j \in W = \{w_1, w_2, \ldots w_n\}$, add node $w_j$ together with edges incident to $N_H(w)$. If $(u_i, w_j)\in C$, we also add the edge $\{u_i, w_j\}$.
        \item Add node $v$ together with edges incident to $N_H(v)$.  We denote the resulting graph by $G_{C}$.
    \end{enumerate}     
    Since $v$ is inserted in the last topological change with $d = |N_H(v)|$ incident edges, $v$ receives $dr = O(r)$ messages before it needs to output all copies of $H$ in $G_C$ containing $v$.
    Using the same analysis in the proof of \Cref{lem: memlist lb edge insertions other}, for $v$ to produce the correct output, it is required that 
    \[
    2^{Bdr} \geq \binom{n^2}{n^2/2} = 2^{\Omega(n^2)},
    \]
    so $B = \Omega(n^2/r)$.
\end{proof}

\subsection{Deletion Models}\label{subsec: deletion}

In this section, we finish the complexity landscape of the membership-listing problem by considering edge deletions and node deletions. For an arbitrary target subgraph $H$, we show $r$-round algorithms with bandwidth complexity $B=O(\log/r)$ for $\MemList(H)$ under edge insertions (\Cref{lem: memlist edge deletion}) and under node insertions (\Cref{lem: memlist node deletion}), and then we show a matching lower bound $\Omega((\log n)/r)$ for any target subgraph that is not a clique (\Cref{lem: memlist lb node deletion}).

\begin{theorem}\label{lem: memlist edge deletion}
    For any connected graph $H$ and for any $r\geq r_H$,  
    the $r$-round bandwidth of $\MemList(H)$ under edge deletions is $B=O((\log n)/r)$.
\end{theorem}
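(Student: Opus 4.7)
The plan is to reduce $\MemList(H)$ under edge deletions to maintaining, at every node $v$, the correct $r_H$-radius local view $B_G^{r_H}(v)$, and then to exploit the fact that an edge deletion is uniquely identified by only $O(\log n)$ bits. By the same argument used in the proof of \Cref{Lem: memlist other graph}, once node $v$ knows $B_G^{r_H}(v)$, it can enumerate every copy of $H$ containing $v$, since for any edge $\{x,y\}$ of such a copy we have $\min(\dist_G(v,x),\dist_G(v,y)) \le \nediam(H)-1 = r_H$. Because each node initially knows the entire graph $G^0$, every node begins with the correct local view, so the task reduces to updating these views after each deletion.

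When edge $\{u,w\}$ is deleted in round $t$, both endpoints detect the change immediately by comparing adjacency lists. The set of nodes whose $r_H$-local view contains $\{u,w\}$ is exactly $\{v : \min(\dist(v,u),\dist(v,w)) \le r_H\}$, i.e.\ the union of the $r_H$-balls around $u$ and $w$. The only information such a node needs in order to update its view is the pair $(\ID(u),\ID(w))$, which occupies $O(\log n)$ bits. The algorithm therefore treats $\{u,w\}$ as a single source set and performs a pipelined breadth-first broadcast of this pair: the $O(\log n)$-bit message is split into $k := r - r_H + 1$ equal blocks of size $O((\log n)/k)$, and each node, upon receiving its first block, forwards one block per subsequent round to all of its neighbors.

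Standard BFS pipelining then shows that a node at distance at most $r_H$ from the source set receives block $j$ no later than round $t + r_H + j - 2$, so all $k$ blocks arrive by round $t + r_H + k - 2 = t + r - 1$, exactly within the available window. Since $H$ has constant size, $r_H = O(1)$, and whenever $r \ge r_H$ we have $k = r - r_H + 1 = \Theta(r)$; hence the per-round bandwidth is $O((\log n)/r)$. By round $t + r - 1$ every affected node has the correct $B_{G^t}^{r_H}(v)$ and can output its updated membership list. The $r-1$ quiet rounds between consecutive topological changes ensure that the pipeline triggered at round $t$ completes before any new deletion can enter the system, so pipelines for distinct deletions never contend for bandwidth.

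The main obstacle, which is essentially bookkeeping rather than conceptual, is confirming that merging the two propagation fronts originating from $u$ and $w$ does not inflate the effective pipeline depth: this is handled by analysing the broadcast as a single multicast from the source set $\{u,w\}$, for which the relevant distance is $\min(\dist(v,u),\dist(v,w)) \le r_H$ for all affected $v$. The quantitative point that makes the whole argument work is precisely that $r_H$ is constant, so the pipeline slack $r - r_H + 1$ is $\Theta(r)$ across the entire range $r \ge r_H$, and the target bound $O((\log n)/r)$ is achieved uniformly.
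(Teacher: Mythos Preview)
Your core idea matches the paper's exactly: broadcast the pair $(\ID(u),\ID(w))$ to all nodes within distance $r_H$ of $\{u,w\}$, then let each such node update its $r_H$-local view. The paper, however, spreads this over $r$ rounds differently: it takes the basic $r_H$-round $O(\log n)$-bandwidth flooding algorithm and simulates each of its $r_H$ rounds by $d=\lfloor r/r_H\rfloor$ sub-rounds (store-and-forward, splitting each $O(\log n)$-bit message into $d$ blocks), rather than using BFS pipelining. The store-and-forward version is cleaner precisely because it automatically terminates after $r_H\cdot d\le r$ rounds with no lingering traffic.

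Your pipelining variant has a genuine gap in the contention argument. You assert that ``the pipeline triggered at round $t$ completes before any new deletion can enter the system,'' but as written this is false: you let \emph{every} node forward upon receiving its first block, so a node at distance $D$ from $\{u,w\}$ is still forwarding through round $t+D+k-1$. For $D>r_H$ this extends past round $t+r-1$, and an adversary choosing deletions whose sources get progressively closer to some fixed node $v$ can make arbitrarily many pipelines overlap at $v$, blowing up the bandwidth. The fix is routine---attach an $O(\log r_H)=O(1)$-bit hop counter to each block and stop forwarding once it reaches $r_H$---and with that addition the last transmission by any node indeed occurs at round $t+r-1$, so consecutive pipelines do not overlap. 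You should state this explicitly; without it the bandwidth bound is not established.
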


\begin{proof}
We begin with showing an $r_H$-round algorithm with bandwidth $O(\log n)$. Whenever an edge $\{u,v\}$ is deleted, both $u$ and $v$ send $\{\ID(u), \ID(v)\}$, which can be encoded as a message of $O(\log n)$ bits, to all their neighbors in the next round of communication. During the remaining $r_H - 1$ rounds of communication, all nodes that have received the message $\{\ID(u), \ID(v)\}$ broadcast the message to all their neighbors. 
For the correctness of the algorithm, let $G$ be the graph immediately before the deletion of $\{u,v\}$, and fix a subgraph $H$ of $G$ that contains the edge $\{u,v\}$. Consider any node $w \in V(H)$. Observe that $\min\{\dist_H(w,u), \dist_H(w,v)\}\leq r_H$, so within $r_H$ rounds of communication, $w$ receives $\{\ID(u), \ID(v)\}$, so $w$ has sufficient information to detect the deletion of $\{u,v\}$ so that $w$ can stop listing $H$.

%, node $u$ and $v$ can stop listing $H$ without communication needed. For each remaining node $w\in V(H)$, $w$ receives $\{\ID(u), \ID(v)\}$ within $r_H$ rounds of communication, since either $\dist_H(w,u)\leq r_H$ or $\dist_H(w,v)\leq r_H$. Once it receives $\{\ID(u), \ID(v)\}$, $w$ stops listing all copies of $H$ containing edge $\{u,v\}$.

To generalize the above algorithm to an $r$-round algorithm with bandwidth $O((\log n)/r)$, all we need to do is to break each round of the above algorithm into $d=\left\lfloor \frac{r}{r_H}\right\rfloor = \Theta(r)$ rounds by breaking the $O(\log n)$-bit message into $d$ blocks of equal size $O((\log n)/d) = O((\log n)/r)$ and send them using $d$ rounds.
%Consider the general $r$-round algorithm with any $r\geq r_H$. Set $d=\left\lfloor \frac{r}{r_H}\right\rfloor$. For each round of the above $r_H$-round algorithm, it can be simulated using $d$ rounds: break the edge $\ID$ into $d$ blocks equally and send it to all its neighbors, one block on every round. Therefore the $r$-round bandwidth of $\MemList(H)$ under edge insertions is $O((\log n)/r)$ for any $r\geq r_H$.
\end{proof}

\begin{theorem}\label{lem: memlist node deletion}
    For any connected graph $H$ and for any $r \geq r_H'$, 
    the $r$-round bandwidth of $\MemList(H)$ under node deletions is $B=O((\log n)/r)$.
\end{theorem}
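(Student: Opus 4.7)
The plan is to closely mirror the proof of Theorem~\ref{lem: memlist edge deletion}, adapted to the node-deletion setting, using $r_H' = \diam(H) - 1$ in place of $r_H = \nediam(H) - 1$ as the relevant radius. First I would describe an $r_H'$-round algorithm with bandwidth $O(\log n)$: whenever a node $u$ is deleted at some round $t$, each former neighbor $u'$ of $u$ detects the deletion (since $u$ disappears from its neighbor list) and immediately sends $\ID(u)$, encoded in $O(\log n)$ bits, to all of its current neighbors during round $t$. Over the subsequent $r_H' - 1$ quiet rounds, every node that has received the message $\ID(u)$ forwards it to all of its neighbors. After these $r_H'$ rounds, every node within graph distance $r_H'$ of some former neighbor of $u$ has learned that $u$ was deleted, and can update its list by removing every copy of $H$ containing $\ID(u)$.

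For correctness, I would fix an arbitrary copy of $H$ in the graph $G$ just prior to the deletion of $u$ that contains $u$, and an arbitrary node $w \in V(H) \setminus \{u\}$. Taking a shortest path in $H$ from $w$ to $u$, let $u'$ be its second-to-last vertex; then $u'$ is a neighbor of $u$ in $H$ (hence in $G$) and $\dist_G(w, u') \leq \dist_H(w, u') = \dist_H(w, u) - 1 \leq \diam(H) - 1 = r_H'$, so within $r_H'$ rounds $w$ receives $\ID(u)$ and can correctly remove this copy of $H$ from its list. Notice that the copies of $H$ in $G$ that do not contain $u$ are unaffected by the deletion and therefore require no update.

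Finally I would upgrade the $r_H'$-round algorithm to an $r$-round algorithm with bandwidth $O((\log n)/r)$ via the same message-slicing trick as in the edge-deletion proof: simulate each round of the basic algorithm using $d = \lfloor r / r_H' \rfloor = \Theta(r)$ actual rounds by splitting the $O(\log n)$-bit broadcast into $d$ equal-size blocks of $O((\log n)/d) = O((\log n)/r)$ bits and transmitting one block per round. The degenerate case $r_H' = 0$ corresponds to $H$ being a clique, where every node of any copy of $H$ containing $u$ is itself a neighbor of $u$ and observes the deletion locally, so no communication is needed. I do not anticipate any substantive obstacle: the only real changes from the edge-deletion argument are which vertices initiate the broadcast and the use of $\diam(H) - 1$ rather than $\nediam(H) - 1$ as the relevant radius. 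The single mildly delicate point is just the observation that the penultimate vertex on a shortest $w$-to-$u$ path in $H$ is simultaneously a neighbor of $u$ and within distance at most $r_H'$ of $w$, which is what makes $r_H'$ rounds of flooding suffice.
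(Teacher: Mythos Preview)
Your proposal is correct and follows essentially the same approach as the paper: flood $\ID(u)$ from the deleted node's neighbors for $r_H'$ rounds, argue via $\diam(H)$ that every member $w$ of an affected copy of $H$ has some neighbor $u'\in N_H(u)$ within distance $r_H'$, and then apply the same message-slicing trick to reduce the bandwidth to $O((\log n)/r)$. Your explicit identification of $u'$ as the penultimate vertex on a shortest $w$--$u$ path in $H$ and your handling of the degenerate clique case ($r_H'=0$) are nice touches that the paper leaves implicit.
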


\begin{proof}
The proof is similar to the proof of \Cref{lem: memlist edge deletion}. The main difference is that here we spread the $\ID$ of the deleted node and not the $\ID$ of the endpoints of the deleted edge.
%For node deletions, instead of spreading edge $\ID$ to all its neighbors, it can spread the deleted node $\ID$ directly. 
Again, we begin with describing an $r_H'$-round algorithm with bandwidth $O(\log n)$. Whenever a node $u$ is deleted, all $v\in N_G(u)$ send $\ID(u)$, which can be encoded as a message of $O(\log n)$ bits, to all their neighbors in the next communication round.
During the remaining $r_H' - 1$ rounds of communication, all nodes that have received the message $\ID(u)$ broadcast the message to all their neighbors. 
%and all received nodes forward it to all its neighbors during $r_H'$ rounds of communication. It requires $B=O(\log n)$.
For the correctness of the algorithm, let $G$ be the graph immediately before the deletion of $u$, and fix a subgraph $H$ of $G$ that contains $u$. Consider any node $w \in V(H)$. Observe that $\dist_H(w,v)\leq r_H'$ for some $v\in N_H(u)$, so within $r_H'$ rounds of communication, $w$ receives $\ID(u)$, so $w$ has sufficient information to detect the deletion of $u$ so that $w$ can stop listing $H$. 
%For each subgraph $H$ in $G$ containing node $u$ and each remaining node $w\in V(H)$, $w$ receives $\ID(u)$ within $r_H'$ rounds of communication, since $\dist_H(w,v)\leq r_H'$ for some $v\in N_H(u)$. Once it receives $\ID(u)$, $w$ stops listing all copies of $H$ containing node $u$. 
Similar to the proof of \Cref{lem: memlist edge deletion}, the above algorithm can be generalized to an $r$-round algorithm with bandwidth $O((\log n)/r)$.
%Given any $r\geq r_H'$, we can use $r$-round algorithm to simulate above $r_H'$-round algorithm with bandwidth $B=O((\log n)/r)$.
\end{proof}

\begin{theorem}\label{lem: memlist lb edge deletion}\label{lem: memlist lb node deletion}
    For any connected graph $H$ that is not a clique and for any $r \geq 1$,
    the $r$-round bandwidth of $\MemList(H)$ under edge deletions and node deletions is $B=\Omega((\log n)/r)$.
\end{theorem}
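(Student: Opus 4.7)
The plan is to recycle the construction from \Cref{lem: memlist lb edge insertions complete} (cf.\ \Cref{fig:memlist-ind-pair}), specialized to the deletion models. Since $H$ is not a clique, fix two non-adjacent nodes $u, v \in V(H)$, set $W = V(H)\setminus\{u,v\}$, and (because $H$ is connected and nontrivial) fix some neighbor $x \in N_H(u) \subseteq W$. Take the initial graph $G^0 := G_{U'}$ with $U' = \{u_1, \ldots, u_n\}$ obtained by replacing $u$ with an independent set connected to $N_H(u)$; this graph contains exactly $n$ copies of $H$ through $v$, one per $u_i$, which we denote $H_1, \ldots, H_n$.

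Unlike the insertion case, $G^0$ is fully known to every node from the start, so we cannot limit $v$'s knowledge of the gadget by postponing incident edges. Instead, the idea is to manufacture ambiguity at the output level by considering $n$ scenarios indexed by $i \in [n]$: in scenario $i$ we apply a single topological change at round $1$, either deleting the edge $\{u_i,x\}$ (edge-deletion model) or deleting the node $u_i$ itself (node-deletion model). Because $u_1, \ldots, u_n$ are an independent set and each $u_i$ belongs to exactly one copy of $H$ in $G^0$, the deletion eliminates precisely the copy $H_i$, so $v$ is required to output the list $\{H_j : j\neq i\}$, and these $n$ required outputs are pairwise distinct.

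The lower bound then follows from a direct counting argument. Since $v$ is not adjacent to any $u_i$, its neighbors in $G^0$ are exactly $N_H(v) \subseteq W$, a set of constant size $d = |N_H(v)| = O(1)$. During the $r$ communication rounds following the deletion, $v$ receives at most $d \cdot r \cdot B$ bits in total, and its output is a deterministic function of this transcript together with the scenario-independent initial view $G^0$. Distinguishing the $n$ scenarios therefore demands $2^{drB} \geq n$, which yields $B = \Omega((\log n)/(dr)) = \Omega((\log n)/r)$. The only delicate point is to ensure that the two deletion types are handled uniformly; this is immediate because in both models the single deleted object (either the edge $\{u_i,x\}$ or the node $u_i$) is incident only to $u_i$ among $\{u_1, \ldots, u_n\}$, so exactly one copy of $H$ through $v$ is destroyed per scenario, making the map from $i$ to the required output set injective.
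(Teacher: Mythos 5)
Your proposal is correct and follows essentially the same route as the paper: replace $u$ by an independent set of size $n$ so that each $u_i$ gives a distinguished copy $H_i$ through $v$, delete one edge incident to $u_i$ (or $u_i$ itself), observe that the map $i \mapsto$ ($v$'s required output) is injective because $H_j$ is preserved exactly when $j \neq i$, and bound the information $v$ can receive through its $O(1)$ neighbors over $r$ rounds. The only cosmetic overstatement is the claim that $G^0$ has ``exactly $n$'' copies of $H$ through $v$ — there can be more (e.g.\ for $H = K_{2,3}$) — but this is harmless since the counting argument only needs injectivity of the output map, which follows from $H_i \notin O_i$ and $H_j \in O_i$ for $j \neq i$; the paper's own phrasing of this step is equally loose.
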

\begin{proof}
    Since $H$ is not a clique, there exist $u,v\in V(H)$ such that $\{u,v\}\notin E(H)$.     
    To prove the desired bandwidth complexity lower bound $B=\Omega((\log n)/r)$, we consider a dynamic network starting from a graph $G^*$ containing $\Omega(n)$ copies of $H$ such that they have \emph{distinct} nodes $u$ and a \emph{common} node $v$. Such a graph $G^\ast$ can be constructed by simply replacing $u$ in $H$ with an independent set $U=\{u_1, u_2, \ldots, u_n\}$, so $V(G^*)=V(H)\cup U \setminus \{u\}$, and for each $u_i\in U$, $\{u_i, w\}\in E(G^*)$ if and only if $\{u, w\} \in E(H)$ for some node $w\in V(H)$. 
 Therefore, each $u_i\in U$ belongs to one unique copy of $H$ in $G^*$, which contains node $v$. Denote that copy as $H_i$. Set $\mathcal{H} = \{H_1, H_2, \ldots, H_n\}$. Consider the following dynamic graph $\mathcal{G}$:
    \begin{enumerate}
        \item Start from the graph $G^*$.
        \item Pick an arbitrary node $u_i \in U$. 
        \begin{itemize}
            \item For the case of edge deletion, pick any node $w\in N_H(u)$ and delete the edge $\{u_i,w\}$.
            \item For the case of node deletion, delete the node $u_i$.
        \end{itemize}
        In either case, $v$ should stop listing $H_i$ after $r$ rounds of communication.
    \end{enumerate}
    Observe that there are $n$ distinct choices of $u_i$ from $U$. Each choice of $u_i\in U$ corresponds to a distinct output $\mathcal{H}\setminus \{H_i\}$ of node $v$. Since $v$ is not adjacent to any $u_j \in U$, the output of $v$ only depends on the messages that $v$ receive from $N_H(v)$. Set $d=|N_H(v)|$. 
    In $r$ rounds of communication, node $v$ receives $dr = O(r)$ messages, each with $B$ bits.
    To ensure the correctness of the output from $v$, it is required that
    \[
    2^{Bdr} \geq n,
    \]
    so $B = \Omega((\log n)/r)$.% for $\MemList(H)$ under edge deletions and node deletions respectively for any $r\geq 1$. 
\end{proof}

% \begin{figure}[htbp]
%     \centering
%     \includegraphics[scale = 0.2]{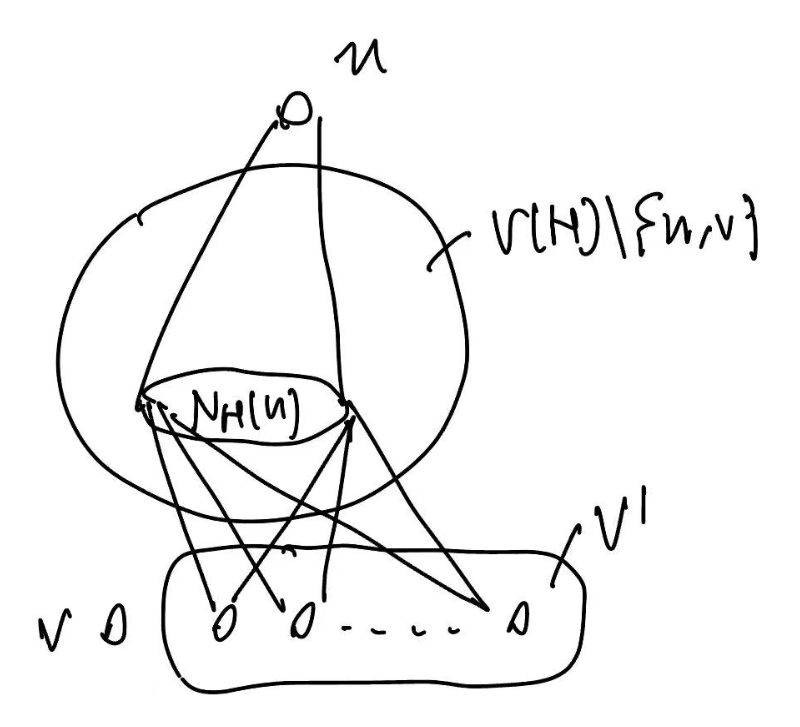}
%     \caption{The labelings are wrong}
%     \label{fig:Memlist Deletion}
% \end{figure}

%\newpage
\section{One-Round Membership-Detection}\label{sect:memDect}
In this section, we investigate the one-round bandwidth complexity of the $\MemDetect(H)$ problem under any single type of topological change. A summary of our results is provided in \Cref{tab:memDect}. For the definitions of $\rad$, $\diam$, and $\nediam$, see \Cref{def:parameters,def:NE_parameters}. 
%\yijun{Some issues: How do we get rid of $\nediam>3$ in the case of $\rad=2$? Maybe just write $\nediam(H) \geq 3$ for the sake of simplicity? Similarly, how do I know that for $\diam=2$, we must have $\rad \in \{1,2\}$? This has to be discussed in the paper somewhere. Some lower bounds are written in terms of $r_H$ and $r_H'$ but the table is written in terms of other graph parameters -- need to cite the observations that connect these terminologies.}
%\yanyu{relationship between $\diam$ and $\nediam$ is addressed in \Cref{obs: diam nediam}; $\rad<=\diam$ should be textbook/ direct from definition. }
%\yanyu{edited impossibly column with additional reference for why complete multipartite}
%In this section, we provide a partial characterization of the bandwidth complexity of one-round dynamic \MemDetect{} for any graph $H$ under any single type of topological change. See \Cref{tab:memDect} for a summary of our results. \mingyang{add an overview}

\begin{table}[ht!]
\begin{center}
\begin{tabular}{lllllll}
\cline{2-7}
\multicolumn{1}{l|}{\textbf{}} & \multicolumn{5}{l|}{Complete multipartite graphs} & \multicolumn{1}{l|}{\multirow{3}{*}{Others}} \\ \cline{2-6}
\multicolumn{1}{l|}{\textbf{}} & \multicolumn{1}{l|}{\multirow{2}{*}{Stars}} & \multicolumn{3}{l|}{$s$-cliques} & \multicolumn{1}{l|}{\multirow{2}{*}{Others}} & \multicolumn{1}{l|}{} \\ \cline{3-4}
\multicolumn{1}{l|}{\textbf{}} & \multicolumn{1}{l|}{} & \multicolumn{1}{l|}{$s=3$} & \multicolumn{1}{l|}{$s \geq 3$} & \multicolumn{1}{l|}{} & \multicolumn{1}{l|}{} & \multicolumn{1}{l|}{} \\ \cline{2-7} 
\multicolumn{1}{l|}{\multirow{2}{*}{\textbf{Edge insertions}}} & \multicolumn{1}{l|}{$\Theta(1)$} & \multicolumn{1}{l|}{$O(\log n)$} & \multicolumn{1}{l|}{$O(\sqrt{n})$} & \multicolumn{1}{l|}{$\Omega(\log\log n)$} & \multicolumn{1}{l|}{$\Theta(n)$} & \multicolumn{1}{l|}{Impossible} \\
\multicolumn{1}{l|}{} & \multicolumn{1}{l|}{[\ref{Lem:Memdect star}]} & \multicolumn{1}{l|}{\cite{bonne2019distributed}} & \multicolumn{1}{l|}{\cite{bonne2019distributed}} & \multicolumn{1}{l|}{[\ref{lem: memdect lb tri}]} & \multicolumn{1}{l|}{[\ref{Lem: memlist Multipartite graph}][\ref{lem: memdect lb edge insertion}]} & \multicolumn{1}{l|}{[\ref{thm: nediam=2}]} \\ \cline{3-6}
\multicolumn{1}{l|}{\multirow{2}{*}{\textbf{Node insertions}}} & \multicolumn{1}{l|}{[\ref{memdetect_trivial_LB1}]} & \multicolumn{3}{l|}{$\Theta(n)$} & \multicolumn{1}{l|}{$\Theta(n)$} & \multicolumn{1}{l|}{[\ref{lem: locality edge}]} \\
\multicolumn{1}{l|}{} & \multicolumn{1}{l|}{[\ref{memdetect_trivial_LB2}]} & \multicolumn{3}{l|}{\cite{bonne2019distributed}} & \multicolumn{1}{l|}{[\ref{Lem: memlist Multipartite graph}][\ref{lem: memdect lb node insertion}]} & \multicolumn{1}{l|}{[\ref{lem: locality node insertion}]} \\ \cline{3-6}
\multicolumn{1}{l|}{\multirow{2}{*}{\textbf{Edge deletions}}} & \multicolumn{1}{l|}{} & \multicolumn{3}{l|}{$\Theta(1)$} & \multicolumn{1}{l|}{$O(\log n)$} & \multicolumn{1}{l|}{} \\
\multicolumn{1}{l|}{} & \multicolumn{1}{l|}{} & \multicolumn{3}{l|}{\cite{bonne2019distributed}} & \multicolumn{1}{l|}{[\ref{lem: memlist edge deletion}]} & \multicolumn{1}{l|}{} \\ \cline{2-7} 
\textbf{} &  &  &  &  &  &  \\ \cline{2-7} 
\multicolumn{1}{l|}{\textbf{}} & \multicolumn{1}{l|}{\multirow{3}{*}{$\diam = 1$}} & \multicolumn{4}{l|}{$\diam = 2$} & \multicolumn{1}{l|}{\multirow{3}{*}{$\diam \geq 3$}} \\ \cline{3-6}
\multicolumn{1}{l|}{\textbf{}} & \multicolumn{1}{l|}{} & \multicolumn{2}{l|}{\multirow{2}{*}{$\rad = 1$}} & \multicolumn{2}{l|}{$\rad = 2$} & \multicolumn{1}{l|}{} \\ \cline{5-6}
\multicolumn{1}{l|}{\textbf{}} & \multicolumn{1}{l|}{} & \multicolumn{2}{l|}{} & \multicolumn{1}{l|}{$\nediam = 2$} & \multicolumn{1}{l|}{$\nediam = 3$} & \multicolumn{1}{l|}{} \\ \cline{2-7} 
\multicolumn{1}{l|}{\multirow{2}{*}{\textbf{Node deletions}}} & \multicolumn{1}{l|}{$0$} & \multicolumn{2}{l|}{$\Theta(1)$} & \multicolumn{1}{l|}{$\Theta(1)$} & \multicolumn{1}{l|}{$O(\log n)$} & \multicolumn{1}{l|}{Impossible} \\
\multicolumn{1}{l|}{} & \multicolumn{1}{l|}{\cite{bonne2019distributed}} & \multicolumn{2}{l|}{[\ref{memdetect_trivial_LB2}][\ref{thm: memdect node deletion r=1}]} & \multicolumn{1}{l|}{[\ref{memdetect_trivial_LB2}][\ref{thm: memdect node deletion M}]} & \multicolumn{1}{l|}{[\ref{lem: memlist node deletion}]} & \multicolumn{1}{l|}{[\ref{thm: diam geq 3 impossible memdetect}]} \\ \cline{2-7} 
\end{tabular}
\end{center}
    \caption{The bandwidth complexity of one-round $\MemDetect(H)$.}
    \label{tab:memDect}
\end{table}

To see that the table for node deletions covers all cases, observe that $\diam=2$ implies $\rad \in \{1,2\}$ because $\rad \leq \diam$, and from \Cref{obs: diam nediam} we infer that $\diam=2$ implies $\nediam \in \{2,3\}$. In the table for the remaining three types of topological changes, refer to the paragraph at the end of \Cref{subsec: local constraints} for why complete multipartite graphs characterize one-round solvability.

%In the table, the class of graphs $H$ with $\diam(H) = \rad(H) = \nediam(H) = 2$ corresponds to the ``proper'' %\yijun{Now, I prefer ``proper'' over ``rich''}
%complete multipartite graphs, which are complete multipartite graphs where each part contains at least two nodes. 
For node deletions, our $O(1)$-bandwidth algorithm for the case where $\diam(H) = \rad(H) = \nediam(H) = 2$ works for all complete multipartite graphs $H$. Recall from \Cref{thm: nediam=2} that a connected graph $H$ with at least three nodes is complete multipartite if and only if $\nediam(H) = 2$.
For node deletions, the only case where we are unable to obtain a tight bound is when $\diam(H) = \rad(H) = 2$ and $\nediam(H) = 3$.
A notable example of such a graph $H$ is the $5$-cycle $C_5$.

In \Cref{sect:easy}, we collect some easy observations about the membership-detection problem. In \Cref{sect:memdect_LB}, we show our new bandwidth complexity lower bounds.
In \Cref{sec:1r-memd-upp-nd}, we show our new bandwidth complexity upper bounds.
%In \Cref{sect:memdect_open}, we discuss the remaining open problems.

\subsection{Easy Observations}\label{sect:easy}

In this section, we collect some easy observations about the one-round bandwidth complexity of the membership-detection problem.

\begin{observation}\label{Lem:Memdect star}
    For any star graph $H=K_{1,s}$ with $s \geq 2$, the one-round bandwidth complexity of $\MemDetect(H)$ is $O(1)$ under any type of topological change: edge insertions, node insertions, edge deletions, and node deletions.
\end{observation}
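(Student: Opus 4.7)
The plan is to exploit the fact that a node $v$ lies in a copy of $K_{1,s}$ if and only if either $\deg(v) \geq s$ (so that $v$ itself serves as a centre) or some neighbour $u$ of $v$ satisfies $\deg(u) \geq s$ (so that $v$ is a leaf of a star centred at $u$). Consequently, it is enough for each $v$ to know its own bit $b_v := \mathbf{1}[\deg(v) \geq s]$ together with the bit $b_u$ for every current neighbour $u$; from these alone $v$ outputs the membership decision. Since the nodes are given the initial topology $G^0$, all these bits are available for free at time zero.

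After any topological change, I would use only $O(1)$ bits per edge to maintain the invariant. Each node $v$ locally inspects its (updated) neighbour list, recomputes $b_v$, and---if $b_v$ flipped---broadcasts the new value to every current neighbour using one bit. In parallel, whenever an incidence $\{u,v\}$ becomes new in this round---either through an edge insertion or through a node insertion that places $v$ into $u$'s fresh neighbour list---$u$ and $v$ additionally transmit their current $b$-values along this new edge (one bit each). Walking through the four change types, one verifies that by the end of the round every node holds the up-to-date bit of itself and of each current neighbour. The only mild point to check is that on a node insertion the freshly inserted $u$ computes its own degree from the list of edges it arrived with and sends $b_u$ on each incident edge, while each existing endpoint simultaneously dispatches its possibly refreshed bit; and on a node deletion, each surviving neighbour of the deleted node detects the change locally and, if its own bit flipped, broadcasts one bit to its remaining neighbours.

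Since every message has length one and each node sends at most one bit per incident edge per round, the bandwidth stays $O(1)$ under all four types of topological change. I do not expect any genuine obstacle beyond the routine case-by-case bookkeeping; the proof is essentially just the characterisation in the first paragraph plus a verification that the per-edge one-bit update rule preserves the invariant ``every node knows its own bit and the bit of every current neighbour.''
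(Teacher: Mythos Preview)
Your proposal is correct and takes essentially the same approach as the paper: both rely on the characterisation that $v$ lies in a $K_{1,s}$ iff $v$ or some neighbour has degree at least $s$, and both maintain this via a single degree-threshold bit per node. The paper's implementation is slightly simpler---every node unconditionally broadcasts its current bit $b_v = \mathbf{1}[\deg(v)\ge s]$ in every round, rather than tracking flips and new incidences---but the underlying idea and the $O(1)$ bandwidth conclusion are identical.
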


\begin{proof}
    Let $H=K_{1,s}$. 
    In each round, after the topological change, each node sends a one-bit message $b$ to all its neighbors. The message $b$ is an indicator with value $1$ if its degree is at least $s$ and value $0$ otherwise. For each node $u$, if its current degree is less than $s$ and all the received messages in the current round are $0$, it outputs \No{}.
    Otherwise, $u$ outputs \Yes{}. This algorithm works under any type of topological change.%\yijun{lower bound missing}
    %\yijun{Shall we also have a theorem for node deletion and $\diam=1$?}
    %\mingyang{$diam=1$ is a clique. we can use result from \cite{bonne2019distributed}}
\end{proof}

\begin{observation} \label{thm: diam geq 3 impossible memdetect}
    For any target subgraph $H$ with $\diam(H) \geq 3$, there is no one-round algorithm for $\MemDetect(H)$ under node deletions.
\end{observation}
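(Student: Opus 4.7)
The plan is to derive the observation as an immediate corollary of Theorem~\ref{lem: locality node deletion}. Recall that the threshold $r_H'$ is defined as $\diam(H) - 1$, so the hypothesis $\diam(H) \geq 3$ gives $r_H' \geq 2$, hence $T = 1 < r_H'$. Applying the locality constraint for node deletions in this regime immediately rules out any one-round $\MemDetect(H)$ algorithm, which is precisely the statement to prove.

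For self-containedness, one can also unpack the underlying argument. First, pick $u, w \in V(H)$ with $\dist_H(u, w) = \diam(H) \geq 3$. Next, consider the dynamic network that starts from $G^0 = H$ and at round $1$ either (a) deletes $u$ together with its incident edges, yielding $G^1 = H - u$, or (b) performs no topological change, leaving $G^1 = H$. In case (a), $w$ must output \No{} because $G^1$ has strictly fewer than $|V(H)|$ nodes and therefore cannot contain a copy of $H$; in case (b), $w$ must output \Yes{} since the copy $H$ itself contains $w$. Finally, for the indistinguishability step, observe that after a single round of communication $w$ only sees messages originating from its immediate neighbors, and each such neighbor lies at distance at least $\diam(H) - 1 \geq 2$ from $u$, so none of them is aware of $u$'s deletion. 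Consequently, $w$ receives identical messages in (a) and (b), forcing any one-round algorithm to err in at least one of the two scenarios.

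I anticipate no real obstacle: the argument is essentially the $T = 1$ specialization of a locality lower bound already established in the preliminaries, and the key numeric fact ($\diam(H) - 1 \geq 2$) is immediate from the hypothesis. The only minor verification is that $w$'s global knowledge of $G^0$ does not help disambiguate the two scenarios, which is trivial since $G^0$ is identical in both.
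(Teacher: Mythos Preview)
Your proposal is correct and matches the paper's proof exactly: the paper simply cites \Cref{def: r_H'} and \Cref{lem: locality node deletion} with $T=1$, which is precisely your first paragraph, and your unpacked argument reproduces the node-deletion case of that theorem's proof verbatim.
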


\begin{proof}
    % This follows directly from \Cref{obs: r_H'} and \Cref{lem: locality node deletion}.
    This observation follows directly from \Cref{def: r_H'} and \Cref{lem: locality node deletion} with $T = 1$. %\yanyu{old equivalent between contraction and node-edge distance not needed now.}
\end{proof}

\begin{observation} \label{memdetect_trivial_LB1}
    If $|E(H)| \geq 2$, then any algorithm for $\MemDetect(H)$ under edge insertions or edge deletions must have non-zero bandwidth complexity.
\end{observation}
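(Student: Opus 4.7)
The plan is to give a direct indistinguishability argument against any zero-bandwidth algorithm. By hypothesis $|E(H)| \geq 2$, and since the paper restricts attention to nontrivial target subgraphs (connected with at least three vertices), we have $|V(H)| \geq 3$. I would fix an arbitrary edge $e = \{u,v\} \in E(H)$ together with a vertex $w \in V(H) \setminus \{u,v\}$; such a $w$ exists by the vertex count.

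For the edge-insertion case I would exhibit two dynamic networks sharing the initial graph $G^0 = H - e$ (padded with isolated vertices if the network size $n$ is required to exceed $|V(H)|$). In Scenario~A no topological change ever occurs, so the current graph remains $H - e$; because $H - e$ has $|V(H)|$ vertices and only $|E(H)|-1$ edges, any subgraph isomorphic to $H$ would have to span all $|V(H)|$ vertices and use $|E(H)|$ edges, which is impossible. Hence $w$ lies in no copy of $H$ and must output \No. In Scenario~B the edge $e$ is inserted at round~$1$, producing the graph $H$, so $w \in V(H)$ lies in a copy of $H$ and must output \Yes. Since $w \notin \{u,v\}$, the insertion of $e$ does not touch $w$'s neighbor list in any round, so the full sequence of neighborhoods that $w$ observes is identical across the two scenarios. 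When $B = 0$ no messages are ever exchanged, so $w$'s output is a deterministic function of that observed sequence and must agree across the two scenarios, contradicting the required outputs.

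The edge-deletion case is symmetric: take $G^0 = H$ and compare ``no change'' (the current graph stays $H$, forcing $w$ to output \Yes) with ``delete $e$ at round~$1$'' (the current graph becomes $H - e$, which again has too few edges to contain $H$, forcing $w$ to output \No). The same observation that $w \neq u,v$ means $w$'s neighborhood is unaffected by the deletion, so the two scenarios are indistinguishable to $w$ under zero bandwidth and the same contradiction arises.

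The only bookkeeping step is the edge-count argument showing that $H - e$ contains no copy of $H$, which uses nothing more than the fact that the initial graph has exactly $|V(H)|$ vertices, so any embedding of $H$ must be surjective on vertices and thus require $|E(H)|$ edges. I do not anticipate any substantive obstacle beyond verifying this simple counting fact and that $w$ exists, both of which follow immediately from the hypothesis $|E(H)| \geq 2$ together with connectedness of $H$.
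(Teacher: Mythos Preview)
Your proposal is correct and follows essentially the same approach as the paper's proof: pick an edge $e$ and a node $w$ not incident to $e$, then observe that $w$ cannot detect the insertion or deletion of $e$ without any communication. The paper's own proof is a two-sentence sketch of exactly this idea, whereas you spell out the two indistinguishable scenarios and the edge-counting argument that $H-e$ contains no copy of $H$; these extra details are all sound.
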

\begin{proof}
    If $|E(H)| \geq 2$, then there exist a node $u$ and an edge $e$ in $H$ such that $u$ is not an endpoint of $e$. Without communication, $u$ cannot detect the deletion or insertion of $e$. 
\end{proof}

\begin{observation}\label{memdetect_trivial_LB2}
    If $\diam(H) \geq 2$, then any algorithm for $\MemDetect(H)$ under node insertions or node deletions must have non-zero bandwidth complexity.
\end{observation}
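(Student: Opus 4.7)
The plan is to exhibit, for both node insertions and node deletions, a pair of one-round scenarios that are indistinguishable to some node $w$ in the absence of any communication, but in which the correct $\MemDetect(H)$ output at $w$ differs. This is the same indistinguishability template used in \Cref{lem: locality edge,lem: locality node deletion,lem: locality node insertion}, but now specialised so that only the \emph{zero-bandwidth} distinguishing power of $w$ is being ruled out.

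Since $\diam(H) \geq 2$, there exist two nodes $u, w \in V(H)$ with $\dist_H(u,w) \geq 2$; in particular $u \notin N_H(w)$. For node insertions, I would take $G^0 = H - \{u\}$ and, in round $1$, consider two cases: (a) insert $u$ together with all its edges to $N_H(u)$, so that $G^1 = H$; (b) perform no topological change, so $G^1 = H - \{u\}$. In case (a) the node $w$ lies in a copy of $H$ and must output $\Yes$, whereas in case (b) the graph has only $|V(H)|-1$ vertices and therefore contains no subgraph isomorphic to $H$, forcing $w$ to output $\No$. The key observation is that $u \notin N_H(w)$, so $N(w)$ is identical in the two scenarios throughout round $1$; hence with bandwidth $0$ (no messages exchanged) node $w$ cannot distinguish the two cases, contradicting correctness.

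For node deletions the construction is symmetric: start with $G^0 = H$ and in round $1$ either delete $u$ (together with its incident edges) or do nothing. Again $w$'s neighborhood is unchanged because $u \notin N_H(w)$, so without any communication $w$ must produce the same output, but the correct output flips between $\No$ (after the deletion, since fewer than $|V(H)|$ vertices remain) and $\Yes$ (no change, $w$ still lies in $H$).

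I do not expect a substantive obstacle here: the statement is essentially the $r = 1$ specialization of \Cref{lem: locality node insertion,lem: locality node deletion} refined to exclude zero bandwidth rather than zero rounds, and both cases reduce to the single observation that a node at distance $\geq 2$ from the inserted or deleted node sees no local change. The only small point to be careful about is noting that the ``do nothing'' graph in each construction genuinely contains no copy of $H$, which holds simply because it has $|V(H)| - 1$ vertices.
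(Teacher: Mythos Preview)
Your proposal is correct and takes essentially the same approach as the paper: pick two non-adjacent nodes of $H$ (which exist since $\diam(H)\geq 2$) and observe that one of them cannot detect the insertion or deletion of the other without communication. The paper's proof is a one-liner expressing exactly this idea, while you spell out the two indistinguishable scenarios explicitly.
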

\begin{proof}
    If $\diam(H) \geq 2$, then there exists an independent set $S=\{u,v\}$ of two nodes in $H$. Without communication, $u$ cannot detect the deletion or insertion of $v$. 
\end{proof}

%\yijun{Can someone clean up this section? Remove the results that we cannot prove, have a discussion about the remaining open problems, and fix all the issues to make the writing of this section ``complete.''}

\subsection{\texorpdfstring{$\Omega(n)$}{Omega(n)} Lower Bounds Under Node and Edge Insertions}\label{sect:memdect_LB}

In this section, we show that, for both node insertions and edge insertions, the one-round bandwidth complexity of $\MemDetect(H)$ is $\Omega(n)$ for any complete multipartite graph $H$ that is neither a star nor a clique.

%In this section, we offer a lower bound of bandwidth complexity for one-round $\MemDetect(H)$ under edge insertions, where $B=\Omega(n)$ bits for any complete multipartite graph except cliques or stars.
\paragraph{Proof idea} Before presenting formal proofs, we first describe the high-level ideas behind the proofs. For any given graph $H$ that is not a \emph{clique}, we can find two non-adjacent nodes $u$ and $v$ with a common neighbor $w$. Consider the network $G$ resulting from removing the edge $\{u,w\}$ in $H$ and replacing the node $u$ with an independent set $U'$ of size $\Theta(n)$.
The network $G$ is constructed by a series of edge insertions, where the edges incident to $v$ and $w$ are added \emph{last}. This ensures that $v$ and $w$ have only $O(1)$ rounds to learn the set $U'$.

Next, we insert an edge $\{u_j, w\}$ to the network $G$ for some choice of $u_j \notin V(H) \setminus \{u\}$. If $u_j \in U'$, then $\{u_j\} \cup V(H)\setminus \{u\}$ induces a copy of $H$, so $v$ should output $\Yes$. If $u_j \notin U'$, then no copy of $H$ is formed, so $v$ should output $\No$. Immediately after the insertion of $\{u_j, w\}$, $w$ can inform $v$ of $\ID(u_j)$. For $v$ to output correctly for all possible choices of $\ID(u_j)$, $v$ must learn the set $U'$, which requires $\Omega(n)$ bits of information. As $v$ is incident to $O(1)$ edges only, we obtain a one-round bandwidth complexity lower bound $\Omega(n)$ under edge insertions. The proof for node deletions is similar.

To realize the proof idea, the selection of $u$ and $v$ needs to be done carefully. In particular, it is required that $N_H(u) \setminus \{w\} \neq \emptyset$, since otherwise removing $\{u,w\}$ from $H$ renders $u$ an isolated node. This explains why $H$ cannot be a \emph{star}.

%Given graph $H$ satisfying above constraints, we can find two non-adjacent nodes $u,v$ with their common neighbor $w$. To get lower bound $\Omega(n)$, we start from the node set $V(H)$ and duplicate $u$ into $U'$, an arbitrary node set of size $\Omega(n)$. For each node $u_i\in U'$, we construct a unique copy of $H$ with the edge $\{w, u_i\}$ missing. Finally, pick one node $u_j$ and insert the edge $\{w, u_j\}$. The bandwidth constraint implies that $v$ cannot distinguish between two choices of $U'$, so once we complete the construction with a certain choice  $u_j \in U'$, $v$ cannot tell if this $u_j$ is really in $U'$ because there is another choice of $U'$ not containing $u_j$ that $v$ cannot distinguish.\yijun{need to discuss the need for the assumption that the graph is not a star}

\begin{theorem}\label{lem: memdect lb edge insertion} For any complete multipartite graph $H$ that is neither a clique nor a star, the one-round bandwidth complexity of $\MemDetect(H)$ under edge insertions is $\Omega(n)$.
\end{theorem}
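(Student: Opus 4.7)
The plan is to adapt the replacement trick of Theorem~\ref{lem: memlist lb edge insertions complete} to the detection setting via a preprocessing-plus-query gadget. I build a family of instances indexed by subsets $U' \subseteq U$ of a fresh pool of $\Theta(n)$ nodes, and arrange that a single final ``query'' edge insertion $\{u_j, w\}$ creates a copy of $H$ through $v$ exactly when $u_j \in U'$. Because $v$'s $\MemDetect$ answer must be correct for every such query, it must in effect memorize $U'$; but $v$ and its sole relay $w$ are active only in the last $O(1)$ rounds and each have $O(1)$ neighbors, so together they learn only $O(B)$ bits during construction. Matching this against the $2^{\Omega(n)}$ possible $U'$s will yield $B = \Omega(n)$.

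Write $H$ as a complete multipartite graph with parts $P_1, \ldots, P_k$ and take $P_1$ to be a part of \emph{maximum} size. Since $H$ is not a clique, $|P_1| \ge 2$, so I pick distinct $u,v \in P_1$ (non-adjacent in $H$); since $H$ is not a star, $|V(H) \setminus P_1| \ge 2$, so I pick $w \in V(H) \setminus P_1$ with $S_u := N_H(u) \setminus \{w\}$ nonempty. Let $U = \{u_1, \ldots, u_m\}$ be a fresh independent set of $m = n - |V(H)| + 1$ nodes. For each $U' \subseteq U$, I construct $G_{U'}$ by inserting edges in the following order: first all edges of $H[V(H) \setminus \{u,v,w\}]$; then, for each $u_k \in U'$, all edges $\{u_k, y\}$ with $y \in S_u$; then all edges $\{w,y\}$ for $y \in N_H(w) \setminus \{u\}$; finally all edges $\{v,y\}$ for $y \in N_H(v)$. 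The query is one further insertion $\{u_j, w\}$ for some $u_j \in U$.

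The gadget property I need is: after inserting the query edge, $v$ lies in a copy of $H$ iff $u_j \in U'$. The ``if'' is immediate, since $\{u_j\} \cup (V(H) \setminus \{u\})$ then induces $H$ via the identification $u \mapsto u_j$. For the converse I claim (a) $G_{U'}$ alone contains no copy of $H$, and (b) if $u_j \notin U'$ then $u_j$ has degree $1$ after the query and so lies in no copy of $H$, since $H$ (not being a star) has no degree-$1$ vertex. The delicate step is (a): a priori, an isomorphism to $H$ could use several $u_k \in U'$ simultaneously because they are pairwise non-adjacent. Any such copy would send some $x \in V(H)$ to some $u_k$, but $u_k$'s neighborhood in $G_{U'}$ is $S_u$, of size $\deg_H(u) - 1$, which forces $\deg_H(x) \le \deg_H(u) - 1$, i.e., $|P(x)| > |P_1|$, contradicting the maximality of $|P_1|$. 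Combining (a) and (b) via the case split (a post-query copy either uses the new edge, hence $u_j$, or does not), no copy of $H$ contains $v$ when $u_j \notin U'$.

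For the information bound, let $\alpha(U')$ denote the concatenation of all messages $v$ receives in the construction and in the query round \emph{except} the message from $w$ in the query round, and let $\gamma(U')$ denote $w$'s local state at the start of the query round. Since $v$ and $w$ each have $O(1)$ neighbors and are isolated until the last $O(1)$ rounds, $|\alpha(U')|$ and $|\gamma(U')|$ are both $O(B)$. The query-round message from $w$ to $v$ is a function $g(\gamma(U'), \ID(u_j))$, so $v$'s output is determined by $(\alpha(U'), \gamma(U'), \ID(u_j))$. If $(\alpha(U'), \gamma(U')) = (\alpha(U''), \gamma(U''))$, then for every $u_j \in U$ the outputs agree, forcing $\mathbf{1}[u_j \in U'] = \mathbf{1}[u_j \in U'']$ and hence $U' = U''$. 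Injectivity of $U' \mapsto (\alpha(U'), \gamma(U'))$ then gives $2^{O(B)} \geq 2^m = 2^{\Omega(n)}$, so $B = \Omega(n)$. The main obstacle is the structural argument (a); the rest is a standard locality-plus-pigeonhole calculation once the gadget is in place.
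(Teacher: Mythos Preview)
Your proof is correct and follows essentially the same construction and information-theoretic argument as the paper: build $G_{U'}$ by duplicating $u$ as an independent set connected to $N_H(u)\setminus\{w\}$, add $v$'s and $w$'s edges last, then query with $\{u_j,w\}$ and argue that $(\alpha(U'),\gamma(U'))$ (the paper's $\mathcal{M}_1,\mathcal{M}_2$) must distinguish all $U'$, forcing $B=\Omega(n)$.

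The one noteworthy difference is your structural argument for why no copy of $H$ exists when $u_j\notin U'$. The paper argues via independent sets: $U'\cup S\setminus\{u\}$ is independent, so any copy uses at most $|S|$ of them, hence all of $V(H)\setminus S$ and some $u_i\in U'$; then $u_i,v,w$ are forced into one part of the copy while $\{v,w\}$ is an edge. Your degree argument is more direct: any $u_k\in U'$ has degree $|S_u|=\deg_H(u)-1$ in $G_{U'}$, but the minimum degree of $H$ is $|V(H)|-|P_1|=\deg_H(u)$, so no vertex of a copy of $H$ can land on $u_k$; combined with the observation that $u_j$ has degree $1$ and $H$ (not a star) has no degree-$1$ vertex, this cleanly rules out all copies. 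Both arguments hinge on the maximality of $|P_1|$, but yours avoids the part-membership case analysis.
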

\begin{proof}
Let $S \subseteq V(H)$ be a \emph{largest} part of the complete multipartite graph $H$. Since $H$ is not a clique, $|S|\geq 2$. Select $u$ and $v$ to be any two distinct nodes in $S$. Observe that $u$ and $v$ are non-adjacent, as $S$ is an independent set. Select $w$ to be any node in $N_H(u)$. Since $H$ is not a star, $N_H(u) = N_H(v) = V(H) \setminus S$ contains at least two nodes, so $N_H(u) \setminus \{w\} \neq \emptyset$.

\paragraph{Dynamic network} Consider the dynamic network $\mathcal{G}$ defined by the following construction:
    \begin{enumerate}
        \item The initial graph $G^0$ is the subgraph of $H$ induced by $V(H) \setminus \{u,v,w\}$ together with a set of $n$ isolated nodes $U=\{u_1,u_2,\ldots, u_n\}$ and two isolated nodes $v$ and $w$.
        \item Select $U' \subseteq U$ as an arbitrary subset of size exactly $n/2$. 
        
        For each $u_i\in U'$, insert an edge between $u_i$ and each node in $N_H(u) \setminus \{w\}$. \label{step: set up U'}
        \item Insert an edge between $v$ and each node in $N_H(v) \setminus \{w\}$. \label{step: node v added}
        \item Insert an edge between $w$ and each node in $N_H(w)\setminus\{u\}$. \label{step: w edges}
        \item Pick an arbitrary node $u_j\in U$. 
        
        Insert an edge $\{w, u_j\}$.\label{step: node w added} %\yanyu{ is the one edge excepted or connected, seems to be the latter.
        %Edited, please verify}
    \end{enumerate}
If a step involves more than one edge insertion, then these edge insertions can be done sequentially in any order. We emphasize that our choice of $\{u,v,w\}$ ensures that $N_H(u) \setminus \{w\} \neq \emptyset$, so \Cref{step: set up U'} is not vacuous. The network right before \Cref{step: node w added} consists of a set of isolated nodes $U\setminus U'$ and the graph $G$ resulting from removing the edge $\{u,w\}$ in $H$ and replacing the node $u$ with an independent set $U'$. 

\paragraph{Output} We make the following observations about the output of $v$.
\begin{itemize}
    \item If $u_j \in U'$, then $v$ belongs to a subgraph isomorphic to $H$ after the insertion of $\{w, u_j\}$ in \Cref{step: node w added}, so $v$ should output $\Yes$.
    \item If $u_j \notin U'$, then $v$ does not belongs to a subgraph isomorphic to $H$ after the insertion of $\{w, u_j\}$ in \Cref{step: node w added}, so $v$ should output $\No$.
\end{itemize}
If $u_j \in U'$, then indeed $\{u_j\} \cup V(H)\setminus \{u\}$ induces a subgraph isomorphic to $H$. Now consider the case $u_j \notin U'$ and suppose that $v$ belongs to a subgraph $H'$ isomorphic to $H$ after the insertion of $\{w, u_j\}$ in \Cref{step: node w added}. Recall that $v \in S$, where $S$ is chosen as a \emph{largest} part of the complete multipartite graph $H$. Since $U'\cup S \setminus\{u\}$ is an independent set, at most $|S|$ nodes from $U'\cup S \setminus\{u\}$ can be included in $H'$. To ensure that $|V(H')| = |V(H)|$, the following two statements hold.
\begin{itemize}
    \item $H'$ contains all nodes in $V(H) \setminus S$. In particular, $w$ is included in $H'$.
    \item $H'$ contains exactly $|S|$ nodes in $U'\cup S \setminus\{u\}$. In particular, at least one node $u_i \in U'$ is included in $H'$.
\end{itemize}
Since both $\{u_i, v\}$ and $\{u_i, w\}$ are non-edges, $u_i$, $v$, and $w$ belong to the same part of the complete multipartite graph $H'$, contradicting the fact that $\{v,w\}$ is an edge. Therefore, such a subgraph $H'$ does not exist.

\paragraph{Information} The output of $v$ at the end of \Cref{step: node w added} depends on the messages that $v$ receive in \Cref{step: node v added,step: w edges,step: node w added}. We divide them into two parts.
\begin{itemize}
    \item The first part consists of all the messages that $v$ receives in \Cref{step: node v added,step: w edges,step: node w added}, excluding the message sent from $w$ to $v$ in \Cref{step: node w added}. We write $\mathcal{M}_1$ to denote the collection of these messages. Since $u_j$ is not adjacent to $v$, $\mathcal{M}_1$ is {independent} of the choice of $u_j$ and only depends on the choice of $U'$.
    \item The second part is the message sent from $w$ to $v$ in \Cref{step: node w added}. This message depends on $\ID(u_j)$ and all the messages that $w$ receives in \Cref{step: w edges}. We write $\mathcal{M}_2$ to denote the collection of the messages that $w$ receives in \Cref{step: w edges}. Observe that $\mathcal{M}_2$ is {independent} of the choice of $u_j$ and only depends on the choice of $U'$.
\end{itemize}
To summarize, the output of $v$ at the end of \Cref{step: node w added} is a function of $\mathcal{M}_1$, $\mathcal{M}_2$, and $\ID(u_j)$, where $\mathcal{M}_1$ and $\mathcal{M}_2$ are {independent} of the choice of $u_j$ and only depends on the choice of $U'$. Since $|N_H(v)|$ and $|N_H(w)|$ are both $O(1)$, $\mathcal{M}_1$ and $\mathcal{M}_2$ consist of $O(1)$ messages of $B$ bits, where $B$ is the bandwidth of the algorithm. Intuitively, this means that $v$ decides its output according to $\ID(u_j)$ and $O(B)$ bits of information extracted from the choice of $U'$.

\paragraph{Bandwidth complexity} There are $\binom{n}{n/2} = 2^{\Theta(n)}$ choices of  $U'\subseteq U$. Suppose the bandwidth complexity is $B = o(n)$, then there exist two distinct choices $U_1'$ and $U_2'$ of $U'\subseteq U$ that lead to identical $\mathcal{M}_1$ and $\mathcal{M}_2$. Select $u_p$ as any node in $U_1' \setminus U_2'$. Consider the following two scenarios:
    \begin{itemize}
        \item Pick $U' = U_1'$ in \Cref{step: set up U'} and pick $u_j = u_p$ in \Cref{step: node w added}. Since $u_p \in U_1'$, $v$ should output $\Yes$.
        \item Pick $U' = U_2'$ in \Cref{step: set up U'} and pick $u_j = u_p$ in \Cref{step: node w added}. Since $u_p \notin U_2'$, $v$ should output $\No$.
    \end{itemize}
    However, the output of $v$ is identical in both scenarios, as  $v$ receives the same messages in both scenarios, so the algorithm is incorrect. Therefore, we must have $B = \Omega(n)$.
\end{proof}

\begin{theorem}\label{lem: memdect lb node insertion}
    For any complete multipartite graph $H$ that is neither a clique nor a star, the one-round bandwidth complexity of $\MemDetect(H)$ under node insertions is $\Omega(n)$.
\end{theorem}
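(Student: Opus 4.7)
The plan is to lift the edge-insertion proof of \Cref{lem: memdect lb edge insertion} to the node-insertion model. The only step of that proof that truly uses an edge insertion is the very last one, where an edge $\{w,u_j\}$ is inserted to potentially complete a copy of $H$. I would replicate this effect by replacing $w$ with a freshly inserted duplicate $w^\ast$ whose initial neighbourhood is exactly $(N_H(w)\setminus\{u\})\cup\{u_j\}$. In this way the ``ambiguous'' extra edge $\{w,u_j\}$ is smuggled in as one of $w^\ast$'s birth edges rather than as a stand-alone edge insertion, and the remaining steps of \Cref{lem: memdect lb edge insertion} can be imitated using node insertions.

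Concretely, I would pick $u,v\in S$ (a largest part) and $w\in N_H(u)$ as in \Cref{lem: memdect lb edge insertion}, and construct the dynamic network as follows: (i) initial graph $G^0=H[V(H)\setminus\{u,v,w\}]$; (ii) for each $i\in[n]$, insert $u_i$ with edges to $N_H(u)\setminus\{w\}$ if $u_i\in U'$ and as an isolated node otherwise; (iii) insert $v$ with edges to $N_H(v)\setminus\{w\}$; (iv) insert $w^\ast$ with edges to $(N_H(w)\setminus\{u\})\cup\{u_j\}$ for a chosen $u_j\in U$. Inserting isolated nodes in step (ii) is a valid node insertion with empty adjacency, which ensures every $u_j\in U$ is an existing vertex by the time we reach step (iv). The $\Yes$/$\No$ analysis then mirrors that of \Cref{lem: memdect lb edge insertion}: if $u_j\in U'$, the map $u\mapsto u_j$, $w\mapsto w^\ast$ exhibits a copy of $H$ through $v$; if $u_j\notin U'$, then $u_j$ has degree only $1$ (whereas $H$ not being a star forces minimum degree at least $2$), and the largest-part argument of \Cref{lem: memdect lb edge insertion} rules out any alternative copy that tries to use some $u_i\in U'$ or $w^\ast$ in place of $u$.

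For the information-theoretic bound, $v$ is present only in rounds (iii) and (iv) and has $|N_H(v)|=O(1)$ neighbours, so it receives $O(B)$ bits in total. Let $\mathcal{M}_1$ be the messages $v$ receives from $N_H(v)\setminus\{w\}$ across rounds (iii) and (iv), and let $\mathcal{M}_2$ be the single $B$-bit message $v$ receives from $w^\ast$ in round (iv). The key claim is that $\mathcal{M}_1$ depends only on $U'$ while $\mathcal{M}_2$ depends only on $\ID(u_j)$; consequently, fixing $u_j=u_p$ makes the output of $v$ a function of $\mathcal{M}_1$ alone. If $B=o(n)$, then $2^{O(B)}<\binom{n}{n/2}$, so pigeonhole yields $U'_1\neq U'_2$ with identical $\mathcal{M}_1$; choosing $u_p$ in their symmetric difference produces two scenarios that are indistinguishable to $v$ yet require opposite outputs, a contradiction. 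Hence $B=\Omega(n)$.

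The main technical obstacle is justifying that $\mathcal{M}_1$ depends only on $U'$ and not on $u_j$. In round (iv), the nodes in $N_H(v)\cap N_H(w)$ acquire $w^\ast$ as a new neighbour and could in principle learn $\ID(u_j)$ from $w^\ast$; however, under the synchronous single-round semantics their outgoing messages to $v$ in round (iv) are computed from their start-of-round knowledge, which includes the mere existence of $w^\ast$ as a new neighbour but not $w^\ast$'s other adjacencies. This temporal separation is precisely what makes the adaptation go through, and verifying it carefully is the only nontrivial ingredient not already present in \Cref{lem: memdect lb edge insertion}.
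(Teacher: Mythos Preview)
Your proposal is correct and essentially identical to the paper's proof: the paper uses the same choice of $u,v,w$, the same four-step node-insertion sequence (with $w$ itself inserted last carrying the extra edge to $u_j$, rather than a renamed copy $w^\ast$), and the same $\mathcal{M}_1/\mathcal{M}_2$ decomposition. Your explicit discussion of the synchronous-round timing---why the messages $v$ receives from $N_H(v)\setminus\{w\}$ in the final round cannot depend on $u_j$---is a point the paper handles more tersely, but the argument is the same.
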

\begin{proof}
The proof is similar to the proof of \Cref{lem: memdect lb edge insertion}, with the same choice of $u$, $v$, and $w$. The main difference is that here the dynamic network is constructed using node insertions and not edge insertions.
%Let $S \subseteq V(H)$ be a \emph{largest} part of the complete multipartite graph $H$. Since $H$ is not a clique, $|S|\geq 2$. Select $u$ and $v$ to be any two distinct nodes in $S$. Observe that $u$ and $v$ are non-adjacent, as $S$ is an independent set. Select $w$ to be any node in $N_H(u)$. Since $H$ is not a star, $N_H(u) = N_H(v) = V(H) \setminus S$ contains at least two nodes, so $N_H(u) \setminus \{w\} \neq \emptyset$.

\paragraph{Dynamic network} Consider the dynamic network $\mathcal{G}$ defined by the following construction:
    \begin{enumerate}
        \item The initial graph $G^0$ is the subgraph of $H$ induced by $V(H) \setminus \{u,v,w\}$. 
        \item Let $U=\{u_1,u_2,\ldots, u_n\}$ be a set of $n$ nodes that are currently not in $G^0$. 
        
        Select $U' \subseteq U$ as an arbitrary subset of size exactly $n/2$. 
        \begin{itemize}
            \item For each $u_i\in U'$, insert the node $u_i$ together with edges incident to  $N_H(u) \setminus \{w\}$.
            \item For each $u_i\in U \setminus U'$, insert the node $u_i$ without any incident edges.
        \end{itemize}
         \label{step: set up U'2}
        \item Insert the node $v$ together with edges incident to $N_H(v) \setminus \{w\}$. \label{step: node v added2}
        \item Pick an arbitrary node $u_j\in U$. 
        
        Insert the node $w$ together with edges incident to $\{u_j\} \cup N_H(w)\setminus\{u\}$. \label{step: w edges2}
        %\item Pick an arbitrary node $u_j\in U$. Insert an edge $\{w, u_j\}$.\label{step: node w added} %\yanyu{ is the one edge excepted or connected, seems to be the latter.
        %Edited, please verify}
    \end{enumerate}
If a step involves more than one node insertion, then these node insertions can be done sequentially in any order.  

\paragraph{Output} Same as the proof of \Cref{lem: memdect lb edge insertion}, we have the following observations.
\begin{itemize}
    \item If $u_j \in U'$, then $v$ belongs to a subgraph isomorphic to $H$ after the insertion of $w$ in \Cref{step: w edges2}, so $v$ should output $\Yes$.
    \item If $u_j \notin U'$, then $v$ does not belongs to a subgraph isomorphic to $H$ after the insertion of $w$ in \Cref{step: w edges2}, so $v$ should output $\No$.
\end{itemize}

\paragraph{Information} The output of $v$ at the end of \Cref{step: w edges2} depends on the messages that $v$ receives in \Cref{step: node v added2,step: w edges2}. We divide them into two parts.
\begin{itemize}
    \item The first part consists of all the messages that $v$ receives in \Cref{step: node v added2,step: w edges2}, excluding the message sent from $w$ to $v$ in \Cref{step: w edges2}. We write $\mathcal{M}$ to denote the collection of these messages. Since $u_j$ is not adjacent to $v$, $\mathcal{M}$ is {independent} of the choice of $u_j$ and only depends on the choice of $U'$.
    \item The second part is the message sent from $w$ to $v$ in \Cref{step: w edges2}. This message is {independent} of the choice of $U'$ and only depends on the choice of $u_j$.
\end{itemize}
To summarize, the output of $v$ at the end of \Cref{step: w edges2} is a function of $\mathcal{M}$ and $\ID(u_j)$. Since $|N_H(v)| = O(1)$, $\mathcal{M}$ consists of $O(1)$ messages of $B$ bits, where $B$ is the bandwidth of the algorithm. Same as the proof of \Cref{lem: memdect lb edge insertion}, this means that $v$ decides its output according to $\ID(u_j)$ and $O(B)$ bits of information extracted from the choice of $U'$.

\paragraph{Bandwidth complexity} The rest of the proof is the same as the proof of \Cref{lem: memdect lb edge insertion}. There are $\binom{n}{n/2} = 2^{\Theta(n)}$ choices of $U'\subseteq U$. Suppose the bandwidth complexity is $B = o(n)$, then there exist two distinct choices $U_1'$ and $U_2'$ of $U'\subseteq U$ that lead to identical $\mathcal{M}$. Select $u_p$ as any node in $U_1' \setminus U_2'$. Consider the following two scenarios:
    \begin{itemize}
        \item Pick $U' = U_1'$ in \Cref{step: set up U'2} and pick $u_j = u_p$ in \Cref{step: w edges2}. Since $u_p \in U_1'$, $v$ should output $\Yes$.
        \item Pick $U' = U_2'$ in \Cref{step: set up U'2} and pick $u_j = u_p$ in \Cref{step: w edges2}. Since $u_p \notin U_2'$, $v$ should output $\No$.
    \end{itemize}
    However, the output of $v$ is identical in both scenarios, as  $v$ receives the same messages in both scenarios, so the algorithm is incorrect. Therefore, we must have $B = \Omega(n)$.  
\end{proof}

\subsection{\texorpdfstring{$O(1)$}{O(1)} Upper Bounds Under Node Deletions}
\label{sec:1r-memd-upp-nd}

In this section, we show two new $O(1)$ upper bounds on the one-round bandwidth complexity of $\MemDetect(H)$ under node deletions.

%In this section, we characterize the bandwidth complexity of one-round algorithms for $\MemDetect(H)$ under node deletions.

\begin{theorem}
\label{thm: memdect node deletion r=1}
    For any target subgraph $H$ satisfying $\diam(H)=2$ and $\rad(H)=1$, there is a one-round and $O(1)$-bandwidth algorithm for $\MemDetect(H)$ under node deletions.
\end{theorem}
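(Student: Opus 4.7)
The plan is to exploit the universal vertex of $H$: since $\rad(H) = 1$, there is a vertex $c \in V(H)$ adjacent to every other vertex of $H$. The key structural observation is that every copy $H^\ast$ of $H$ appearing as a subgraph of the current network $G$ contains at least one vertex $c^\ast$ adjacent to every other vertex of $H^\ast$, and for any node $v \in V(H^\ast)$, such a center vertex necessarily satisfies $c^\ast \in N_G(v) \cup \{v\}$ (the closed neighborhood of $v$).

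Based on this, the algorithm has each potential center broadcast a one-bit membership testimony to each of its neighbors. Concretely, in each round, after any possible deletion, each node $c^\ast$ computes, for every current neighbor $v \in N_G(c^\ast)$, whether there exists a subset $S \subseteq N_G(c^\ast) \setminus \{v\}$ of size $|V(H)| - 2$ such that $\{c^\ast\} \cup \{v\} \cup S$ forms a copy of $H$ with $c^\ast$ playing the role of $c$; node $c^\ast$ then sends a single bit to $v$ reporting the outcome. In parallel, $v$ performs the analogous self-check with itself in the role of $c$, testing whether there exists $S \subseteq N_G(v)$ of size $|V(H)| - 1$ such that $\{v\} \cup S$ forms a copy of $H$. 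Node $v$ outputs $\Yes$ iff its self-check succeeds or at least one received bit equals $1$. Correctness in both directions follows immediately from the structural observation: every copy containing $v$ has a center in $N_G(v) \cup \{v\}$, which is detected either by $v$'s self-check (when the center is $v$) or by the $1$-bit message from the center neighbor; conversely, any $\Yes$ output is witnessed by an explicit copy enumerated by the relevant node.

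The one point requiring justification is that $c^\ast$ (and similarly $v$) genuinely has enough local information to run its check, i.e., knows the subgraph of $G$ induced on $N_G(c^\ast) \cup \{c^\ast\}$ after any sequence of deletions. This is the technical heart of the proof, but it is clean in the node-deletion-only model: no edge is ever inserted or removed, so for any two surviving nodes $w_1, w_2 \in N_G(c^\ast)$, the pair $\{w_1, w_2\}$ is an edge of $G$ iff it is an edge of $G^0$, which is known globally from the initialization assumption. Combined with $c^\ast$'s direct observation of its current neighbor list, this pins down the induced subgraph on its closed neighborhood exactly, making the enumeration over candidate subsets $S$ a purely local computation. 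Bandwidth is a single bit per message, giving the claimed $O(1)$ bound. The main obstacle is really just identifying the center-in-the-closed-neighborhood structural observation and noticing that intra-neighborhood adjacency is preserved under node-only deletions; once both are in hand, the protocol and its analysis are essentially immediate.
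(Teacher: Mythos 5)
Your proof is correct and rests on the same essential idea as the paper's: since $\rad(H)=1$, every copy of $H$ contains a universal vertex, and that vertex lies in the closed neighborhood of every node of the copy, so a single bit from each potential center suffices. The paper phrases this statefully (each center maintains explicit per-copy lists and sends a notification bit once all copies with a given fringe node are destroyed), while you give a stateless, re-check-every-round variant justified by the observation that under node-only deletions a node can reconstruct the exact induced subgraph on its current closed neighborhood from $G^0$ and its live neighbor list; the two are functionally equivalent.
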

\begin{proof}
    Since $\rad(H)=1$, there is a node in any copy of $H$ that is adjacent to every other node in the same copy of $H$. 
    We pick one such node for each copy and call this node the central node and other nodes the fringe nodes. 
    Initially, all nodes list every copy of $H$ they belong to, and output \Yes{} if the number of copies is at least one. 
    In subsequent rounds, using one-bit messages, the central node $u$ in every copy informs each fringe node $v$ once it detects that all copies of $H$ with $u$ as the central node and $v$ as a fringe node have been destroyed due to the deletion of a fringe node. When a central node $u$ of a copy is removed, the fringe nodes remove all the copies of $H$ such that $u$ is the central node in that copy.
    Each node outputs \Yes{} if and only if it is still in some copy of $H$, either as a central node or as a fringe node.
    The bandwidth complexity is $O(1)$ since the algorithm only requires sending one-bit messages. 
\end{proof}

Next, we present our main contribution of this section: a one-round and $O(1)$-bandwidth algorithm for $\MemDetect(H)$ that works for any complete multipartite graph $H$.

\paragraph{Warm up} Before presenting the algorithm for an arbitrary complete multipartite graph $H$, we warm up with the special case $H=C_4$. 
For notational simplicity, in the subsequent discussion, we write $N^r(u)=N_{G^r}(u)$ to denote the neighborhood of $u$ in $G^r$, where $G^r$ is the network at the end of the $r$th round. % and define the degree of $u$ after round $r$ as $d_u^r = |N^r(u)|$.
% \yijun{This notation $N^r(u)$ is too specialized and is only used in Section 6, so I moved it here. As far as I check, the notation  $d_u^r$ is not used, so I commented it out for now.} 

For a node $w$, two distinct neighbors $u\in N^r(w)$ and $v\in N^r(w)$, and a round number $r$, let $\Count_w^r(u,v)$ denote the number of nodes in $V(G^r) \setminus \{w\}$ that are adjacent to both $u$ and $v$. Observe that $w$ is contained in a $C_4$ in $G^r$ if and only if there exist two distinct neighbors $u\in N^r(w)$ and $v\in N^r(w)$ such that $\Count_w^r(u,v) \geq 1$. Therefore, to design a one-round algorithm to solve $\MemDetect(C_4)$, we just need to make sure that at the end of each round $r$, each node $w$ learns $\Count_w^r(u,v)$ for all $u\in N^r(w)$ and $v\in N^r(w)$.

We show that this can be done using one-bit messages, so the one-round bandwidth complexity of $\MemDetect(C_4)$ is $O(1)$.
Initially, since $G^0$ is known to all nodes, each node $w$ can locally compute $\Count_w^0(u,v)$ for all $u\in N^0(w)$ and $v\in N^0(w)$.
In each round, if a node detects that one of its neighbors is deleted, it sends a signal \Del{} to all its neighbors. 
If a node $w$ receives \Del{} from two neighbors $u$ and $v$ in round $r$, then $w$ updates $\Count_w^r(u,v) = \Count_w^{r-1}(u,v)-1$, as this indicates that a common neighbor of $u$ and $v$ in $V(G^{r-1}) \setminus \{w\}$ is deleted in round $r$. Otherwise, we have $\Count_w^r(u,v) = \Count_w^{r-1}(u,v)$. 

%At the end of each round, $w$ outputs \Yes{} if it has at least 1 pair of neighbors $u,v$ such that $C_w^r(u,v) > 0$, and \No{} otherwise. 
% \yanyu{updated warm up with short proof}

% This is a one-round, $O(1)$-bandwidth algorithm for detecting $C_4$ under node deletions. 
% The bandwidth is $O(1)$ as we only send one-bit \Del{} messages. 
% For correctness, we prove that $C_w^r(u,v)= \Count_w^r(u,v)$ holds for all nodes $w$ and its neighbors $u,v$ after each round $r$. 
% This suffices since $w$ is in some copy of $C_4$ at the end of round $r$ if and only if $\Count_w^r(u,v)>0$ for some pair of neighbors $u,v$. 
% The invariant holds initially by our computation of $C_w^0(u,v)$. 
% For any round $r$, let the deleted node be $y$, and consider a fixed node $x$. 
% If $y$ is a neighbor of $x$, then the pair $y,z$ is no longer in $N^r(x)$ for any other neighbor $z$ of $x$ so the invariant holds vacuously. 
% If $y$ is at a distance of $3$ or more from $x$, neither $C_x^r(w,z)$ nor the true value $\Count_x^r(w,z)$ will change for any pair of neighbors $w,z$, so the invariant holds as well. 
% Finally, if $y$ is at a distance 2 from $x$, then it is adjacent to 1 or more neighbors of $x$. 
% By our algorithm, node $x$ receives \Del{} from both neighbors $w$ and $z$ if and only if a node adjacent to both $w$ and $z$ was deleted, so both $C_x^r(w,z)$ and $\Count_x^r(w,z)$ decrease by 1, maintaining the invariant.\yijun{Correctness proof is kind of trivial and tedious - can be removed}

\paragraph{Proof idea} 
To generalize the above algorithm from $C_4$ to an arbitrary complete multipartite graph $H$, we observe that, for any $v \in V(H)$, every node $u\in V(H)\setminus \{v\}$ is either a neighbor of $v$ or is adjacent to \emph{all} nodes in $N_H(v)$. 
    We call $V(H) \setminus (\{v\} \cup N_H(v))$ the \emph{independent} nodes of $v$, as they form an independent set because they belong to the same part as $v$ in the complete multipartite graph $H$.
    %, as  since they are the nodes that are in the same independent set as $v$ in $H$. 
    
    The deletion of a neighbor of $v$ in $H$ can be detected by $v$ immediately.
    The deletion of any independent node of $v$ in $H$ can be detected by \emph{every} neighbor of $v$ immediately without communication and can be detected by $v$ in one round of communication using one-bit messages.

    Similar to the algorithm for $C_4$, for each node $v$, each subset $S \subseteq N^r(v)$, and each round number $r$, we can let $v$ count the number of nodes in $V(G^r) \setminus (S \cup \{v\})$ that are adjacent to all nodes in $S$ at the end of round $r$. This information is sufficient for $v$ to decide if it belongs to a copy of $H$.

Now we present the formal proof realizing the above proof idea. By \Cref{thm: nediam=2}, the following theorem applies to any target subgraph $H$ with $\nediam(H) = 2$.

\begin{theorem}
\label{thm: memdect node deletion M}
For any connected complete multipartite graph $H$, 
    %If $H\in \mathcal{M}$ is a complete multipartite graph, 
    there is a one-round and $O(1)$-bandwidth algorithm for $\MemDetect(H)$ under node deletions.%\yijun{maybe just write $\nediam(H) = 2$ and $\diam(H)=\rad(H)=2$ as the condition of the theorem}
\end{theorem}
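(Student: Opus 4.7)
The plan is to generalize the $C_4$ warm-up, exploiting the structural property that in a complete multipartite graph $H$ with parts $V_1, \ldots, V_k$, the non-neighbors of any vertex $x \in V_i$ lie entirely within $V_i$. Let $h = |V(H)| = O(1)$. The algorithm will rest on the following characterization: $v$ lies in a copy of $H$ (as a subgraph of $G^r$) playing the role of some $x \in V_i$ iff there exists a set $S \subseteq N^r(v)$ of size $h - |V_i|$ admitting a partition $S = \bigsqcup_{j \neq i} U_j$ with $|U_j| = |V_j|$ and every cross-part edge of this partition present in $G^r$, together with at least $|V_i| - 1$ additional nodes in $V(G^r) \setminus (S \cup \{v\})$ each adjacent to every element of $S$. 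Because $h$ is constant and, under node deletions only, $v$ can maintain the induced subgraph on $N^r(v)$ exactly (an edge of $G^0$ survives iff both endpoints are still present, which $v$ can read off from its initial knowledge of $G^0$ and the current neighbor list), the ``multipartite witness'' portion of this check is a constant-time local computation.

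The only nontrivial bookkeeping is to maintain, for each $v$ and each $S \subseteq N^r(v)$ with $|S| \leq h - 1$, the counter $\Count_v^r(S)$ equal to the number of nodes in $V(G^r) \setminus (S \cup \{v\})$ adjacent to every node of $S$ in $G^r$. Initial values are computable locally from $G^0$. I would update them using the same one-bit signaling protocol as in the warm-up: whenever a node observes a missing neighbor, it sends $\Del$ to all its surviving neighbors. If $w$ is the unique node deleted in round $r$ (by the at-most-one-change-per-round assumption) and $A$ denotes the set of $v$'s surviving neighbors that sent $\Del$ in round $r$, a short case analysis on whether $w \in N^{r-1}(v)$ shows $A = N^{r-1}(v) \cap N^{r-1}(w)$ in both cases. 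Consequently, for any $S \subseteq N^{r-1}(v)$ with $w \notin S$, the deleted $w$ contributed to $\Count_v^{r-1}(S)$ precisely when $S \subseteq A$, so $v$ decrements exactly those counters; counters with $w \in S$ are simply discarded as $S \not\subseteq N^r(v)$.

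At the end of each round, $v$ outputs $\Yes$ iff there exist $i \in [k]$ and $S \subseteq N^r(v)$ passing the multipartite witness test with $\Count_v^r(S) \geq |V_i| - 1$. All messages are single bits, so the bandwidth is $O(1)$. The main technical obstacle will be cleanly arguing the identity $A = N^{r-1}(v) \cap N^{r-1}(w)$ uniformly across the two subcases---when $w$ itself is a neighbor of $v$ (so $v$ detects the deletion directly from its neighbor list) versus when $w$ is at graph-distance at least $2$ from $v$ (so $v$'s only evidence is the pattern of $\Del$ signals)---using crucially that at most one topological change occurs per round so that all $\Del$ messages in a given round reference the same deletion event. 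A secondary subtlety to address is that, since $\MemDetect(H)$ regards $H$ as a (not-necessarily-induced) subgraph, the witness condition only needs to enforce the cross-part edges; any extra $G^r$-edges inside some $U_j$ or inside $U_i \setminus \{v\}$ do not invalidate the copy.
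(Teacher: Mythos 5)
Your proposal matches the paper's proof essentially step for step: the same counters $\Count_v^r(S)$ for $S\subseteq N^r(v)$, the same one-bit $\Del$-signaling update rule (your explicit identity $A=N^{r-1}(v)\cap N^{r-1}(w)$ is exactly the justification the paper leaves implicit), and the same witness characterization---your partition condition ``$S=\bigsqcup_{j\neq i}U_j$ with all cross-part edges present'' is precisely the paper's condition that $G^r[S]$ contains a subgraph isomorphic to $H-S_i$, while your ``$|V_i|-1$ additional common neighbors'' is the paper's threshold $\Count_v^r(S,i)\geq |S_i|-1$. The secondary subtlety you flag (only cross-part edges must be enforced, since $H$ is sought as a not-necessarily-induced subgraph) is likewise handled identically in the paper. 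Both proofs are correct, and yours is, if anything, slightly more explicit about the per-round bookkeeping.
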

\begin{proof}
Let $k$ be the number of parts of the complete multipartite graph $H$. Since $H$ is connected, $k \geq 2$. For each $i \in [k]$, let $S_i \subseteq V(H)$ be the $i$th part of $H$. To realize the proof idea above, for each node $v$ in $G^r$ and each $S \subseteq N^r(v)$, consider the following terminology:
\begin{itemize}
    \item $\Count_v^r(S) =$ the number of nodes in $V(G^r) \setminus (S\cup \{v\})$ that are adjacent to all nodes in $S$.
    \item For each $i \in [k]$, \[\Count_v^r(S,i)=\begin{cases}
			\Count_v^r(S), & \text{if $G^r[S]$ contains a subgraph isomorphic to $H - S_i$,}\\
            0, & \text{otherwise.}
		 \end{cases}\] 
\end{itemize}
We claim that there exists a subgraph of $G^r$ containing $v$ isomorphic to $H$ if and only if there exist $i \in [k]$ and $S \subseteq N^r(v)$ such that $\Count_v^r(S,i) \geq |S_i| - 1$.
\begin{itemize}
    \item Suppose $\Count_v^r(S,i) > |S_i| - 1$, then a subgraph of $G^r$ containing $v$ isomorphic to $H$ can be obtained by combining the following pieces:
   \begin{itemize}
       \item Any subgraph of $G^r[S]$ isomorphic to $H - S_i$.
       \item Any $|S_i| -1$ nodes $U=\{u_1, \ldots, u_{|S_i| -1}\}$ in $V(G^r) \setminus (S\cup \{v\})$ that are adjacent to all nodes in $S$.
       \item The node $v$.
   \end{itemize} 
    Here $U \cup \{v\}$ plays the role of the $i$th part $S_i$ of $H$.
    \item Suppose there exists a subgraph of $G^r$ containing $v$ isomorphic to $H$. Let $X$ be the node set of the subgraph. Suppose $v$ is mapped to a node in $S_i$ in the isomorphism and $U=\{u_1, \ldots, u_{|S_i| -1}\}$ are the other nodes in $G^r$ that are mapped to $S_i$, then $\Count_v^r(X\setminus (U\cup\{v\})) \geq |U|=|S_i|-1$. This is because $G^r[X\setminus (U\cup\{v\})]$ contains a subgraph isomorphic to $H-S_i$ and all nodes in $U \cup\{v\}$ are adjacent to all nodes in $X\setminus (U\cup\{v\})$. 
\end{itemize}

By the above claim, to design a one-round algorithm for $\MemDetect(H)$, it suffices that each node $v$ learns $\Count_v^r(S,i)$ for all $i \in [k]$ and all $S \subseteq N^r(v)$ at the end of each round $r$. Since $G^r[S] = G^0[S]$ is already known to $v$, 
$\Count_v^r(S,i)$ can be calculated from $\Count_v^r(S)$. Similar to the algorithm for $C_4$, using one-bit messages, we can let each node $v$ learn $\Count_v^r(S)$ for all $S \subseteq N^r(v)$, as follows.

Initially, since $G^0$ is known to all nodes, each node $v$ can locally compute $\Count_v^0(S)$ for all $S \subseteq N^0(v)$.
In each round, if a node detects that one of its neighbors is deleted, it sends a signal \Del{} to all its neighbors. 
If a node $v$ receives \Del{} from all nodes in $S \subseteq N^r(v)$, then $v$ updates $\Count_v^r(S) = \Count_v^{r-1}(S)-1$, as this indicates that a common neighbor of all nodes in $S$ in $V(G^{r-1}) \setminus (S \cup \{v\})$ is deleted in round $r$. Otherwise, we have $\Count_v^r(S) = \Count_v^{r-1}(S)$. Hence the one-round bandwidth complexity of $\MemDetect(H)$ under node deletions is $O(1)$.
%Let $C = \{c_1, \ldots, c_k\}$ be the multiset of the sizes of the $k$ parts of $H$. 
%be $c_1, \ldots, c_k$, then we write $H = K_{c_1, \ldots, c_k}$ or $H = K_C$, where $C = \{c_1, \ldots, c_k\}$ is a multiset.  
\end{proof}

\section{One-Round Listing}\label{sect:list}
In this section, we investigate the one-round bandwidth complexity of the problem of $\List(H)$ for edge deletions and node deletions. For edge deletions, we obtain a complete characterization. For node deletions, we obtain an almost complete characterization, except for the case where $\rad(H)= \diam(H) = 2$.  See \Cref{tab:listing} for a summary of our results. Refer to \Cref{def:parameters,def:NE_parameters} for the definition of $\rad$, $\nerad$, and $\diam$.

\begin{table}[ht]
\begin{center}
% Please add the following required packages to your document preamble:
% \usepackage{multirow}
\begin{tabular}{lllll}
\cline{2-5}
\multicolumn{1}{l|}{\textbf{}} & \multicolumn{1}{l|}{\multirow{2}{*}{$\nerad = 1$}} & \multicolumn{2}{l|}{$\nerad = 2$} & \multicolumn{1}{l|}{\multirow{2}{*}{$\nerad \geq 3$}} \\ \cline{3-4}
\multicolumn{1}{l|}{\textbf{}} & \multicolumn{1}{l|}{} & \multicolumn{1}{l|}{$\rad=1$} & \multicolumn{1}{l|}{$\rad=2$} & \multicolumn{1}{l|}{} \\ \cline{2-5} 
\multicolumn{1}{l|}{\multirow{2}{*}{\textbf{Edge deletions}}} & \multicolumn{1}{l|}{$0$} & \multicolumn{1}{l|}{$\Theta(1)$} & \multicolumn{1}{l|}{$\Theta(\log n)$} & \multicolumn{1}{l|}{Impossible} \\
\multicolumn{1}{l|}{} & \multicolumn{1}{l|}{[\ref{lem: list edge del rad=1}]} & \multicolumn{1}{l|}{[\ref{list_trivial_LB}][\ref{lem: list edge del rad=2 O(1)}]} & \multicolumn{1}{l|}{[\ref{thm list edel nerad=rad=2}][\ref{logn list edge del}]} & \multicolumn{1}{l|}{[\ref{thm list rad 3}]} \\ \cline{2-5} 
\textbf{} &  &  &  &  \\ \cline{2-5} 
\multicolumn{1}{l|}{\textbf{}} & \multicolumn{1}{l|}{\multirow{2}{*}{$\rad = 1$}} & \multicolumn{2}{l|}{$\rad = 2$} & \multicolumn{1}{l|}{\multirow{2}{*}{$\rad \geq 3$}} \\ \cline{3-4}
\multicolumn{1}{l|}{\textbf{}} & \multicolumn{1}{l|}{} & \multicolumn{1}{l|}{$\diam = 2$} & \multicolumn{1}{l|}{$\diam \geq 3$} & \multicolumn{1}{l|}{} \\ \cline{2-5} 
\multicolumn{1}{l|}{\multirow{2}{*}{\textbf{Node deletions}}} & \multicolumn{1}{l|}{$0$} & \multicolumn{1}{l|}{$O(\log n)$} & \multicolumn{1}{l|}{$\Theta(\log n)$} & \multicolumn{1}{l|}{Impossible} \\
\multicolumn{1}{l|}{} & \multicolumn{1}{l|}{[\ref{thm:list_zero}]} & \multicolumn{1}{l|}{[\ref{thm:ND_list_UB}]} & \multicolumn{1}{l|}{[\ref{thm:ND_list_LB}][\ref{thm:ND_list_UB}]} & \multicolumn{1}{l|}{[\ref{lem: list node del rad>=3}]} \\ \cline{2-5} 
\end{tabular}
\end{center}%\mingyang{Maybe we can update this table.}
    \caption{The bandwidth complexity of one-round $\List(H)$.}
    \label{tab:listing}
\end{table}

In \Cref{sect:list_ED}, we examine the listing problem under edge deletions.
In \Cref{sect:list_ND}, we examine the listing problem under node deletions.
%In \Cref{sect:list_Problems}, we discuss the remaining open problem of determining the right bandwidth complexity bound for the case where $\rad(H)= \diam(H) = 2$ under node deletions.

\subsection{Listing Under Edge Deletions}\label{sect:list_ED}
In this section, we give a complete characterization of the bandwidth complexity of one-round algorithms for $\List(H)$ under edge deletions.% in  \Cref{tab:listingED}.

%\mingyang{table edited}\mingyang{Maybe add a caption}

\begin{theorem}\label{thm list rad 3}
    If $\nerad(H) \geq 3$, then there is no one-round algorithm for $\List(H)$ under edge deletions.
\end{theorem}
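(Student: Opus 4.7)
The plan is a direct indistinguishability argument on the initial graph $G^0 = H$ itself, paralleling the locality proof of \Cref{lem: locality edge} but using $\nerad(H)$ in place of $\nediam(H)$.

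First, I would take $G^0$ to be exactly $H$, so that the network consists of the $|V(H)|$ nodes and the edges of $H$. Consider Scenario~A in which no topological change occurs at round $1$, so $G^1 = H$; by correctness of any supposed one-round $\List(H)$ algorithm $\mathcal{A}$, at least one node $v$ must list the natural copy of $H$ (the one with vertex set $V(H)$) at round $1$. Since $\neecc_H(v) \geq \nerad(H) \geq 3$, I would then pick an edge $e \in E(H)$ with $\dist_H(v, e) \geq 3$.

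Next, I would consider Scenario~B in which $e$ is deleted at round $1$, so $G^1 = H - e$ contains no copy of $H$. The crux of the argument is that $v$ receives identical round-$1$ messages in A and~B: a neighbor $v'$ of $v$ can transmit different messages across the two scenarios only if $v'$ locally witnesses the deletion, which in the edge-deletion model requires $v'$ to be an endpoint of $e$; but $\dist_H(v, e) \geq 3$ forces both endpoints of $e$ to lie at distance at least $2$ from $v$ in $G^0$, ruling out any such $v'$. Consequently $v$'s internal state at the end of round $1$ coincides in A and~B, so $v$ still lists $V(H)$ as a copy of $H$ in~B, while $V(H)$ induces $H - e$, which is not isomorphic to $H$; this contradicts the correctness of $\mathcal{A}$.

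The conceptual point, distinguishing this from \Cref{lem: locality edge}, is that the adversary may pick $e$ \emph{after} seeing which node $v$ the algorithm appoints to list the Scenario-A copy, so only the min-over-nodes quantity $\nerad(H)$ is required rather than the max-over-nodes quantity $\nediam(H)$. I do not foresee a genuine obstacle; the sole routine verification is that only the endpoints of $e$ can originate a differing first-round message, which is immediate from the local nature of edge deletions.
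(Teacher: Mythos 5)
Your proof is correct and follows essentially the same approach as the paper's: construct $G^0 = H$, let $v$ be a node that the algorithm appoints to list $H$, use $\nerad(H)\geq 3$ to pick an edge $e$ with $\dist_H(v,e)\geq 3$, and argue that $v$ receives identical round-1 messages whether or not $e$ is deleted. Your explicit Scenario-A/Scenario-B framing is a bit more careful than the paper's (which informally speaks of the node listing $H$ ``initially'' and then ``not being able to stop''), and your closing remark correctly identifies why the adaptive choice of $e$ after seeing $v$ is what lets $\nerad$ rather than $\nediam$ govern this impossibility.
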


\begin{proof}
    Let $H$ be a connected graph with $\nerad(H) \geq 3$, and the initial graph $G^0$ be a single copy of $H$.
    At least one node must be listing this copy of $H$ initially, call this node $u$. Since $\nerad(H) \geq 3$, there exists some edge $\{v,w\}$ such that neither $v$ nor $w$ is a neighbor of $u$.
    Suppose we delete this edge $\{v,w\}$. Only $v$ and $w$ can detect the edge deletion immediately and perhaps send a message to their respective neighbors in this round. However, $u$ is neither a neighbor of $v$ nor $w$, so it cannot learn of this edge deletion by the end of this round. Hence $u$ is not able to correctly stop listing the copy of $H$.
\end{proof}

\Cref{thm list rad 3} allows us to restrict our attention to the case where $\nerad(H) \in \{1,2\}$. We observe that the case $\nerad(H) = 1$ can be solved with zero bandwidth, leaving $\nerad(H) = 2$ as the only nontrivial scenario.

\begin{theorem} \label{lem: list edge del rad=1}
    If $\nerad(H) = 1$, then there exists a one-round algorithm for $\List(H)$ under edge deletions, with zero bandwidth required.
\end{theorem}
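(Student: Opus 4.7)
The plan is to first observe that the hypothesis $\nerad(H) = 1$ completely pins down the structure of $H$. Unfolding \Cref{def:NE_parameters}, there is a node $u \in V(H)$ with $\max_{e \in E(H)} \dist_H(u,e) = 1$. Since $\dist_H(u,\{v,w\}) = 1 + \min\{\dist_H(u,v), \dist_H(u,w)\}$, this is equivalent to requiring that $\min\{\dist_H(u,v), \dist_H(u,w)\} = 0$ for every edge $\{v,w\} \in E(H)$, i.e., $u$ is an endpoint of every edge of $H$. Combined with the standing assumption that $H$ is connected and has at least three nodes, this forces $H$ to be a star $K_{1,s}$ with $s \geq 2$ and $u$ as its unique center.

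Given this structural fact, the algorithm is immediate and uses no communication whatsoever. At every round, each node $v$ knows its current neighborhood $N(v)$ exactly: the entire topology of $G^0$ is known at initialization, and any subsequent edge deletion incident to $v$ is reflected as a disappearance in $v$'s list of neighbors, which $v$ detects locally by comparing $N_{G^{i-1}}(v)$ with $N_{G^i}(v)$. Hence $v$ can locally enumerate, for every $s$-subset $S \subseteq N(v)$, the star consisting of $v$ as center with leaves $S$, and list it as a copy of $H$.

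For correctness, I will argue that every copy of $H$ in $G^i$ is listed by some node. In any copy of $K_{1,s}$ in $G^i$ with $s \geq 2$, the center is uniquely identifiable: within that copy, the center is the unique node of degree $s \geq 2$, while each leaf has star-degree $1$. Therefore, each copy of $H$ present in $G^i$ is listed by its unique center, which satisfies the $\List(H)$ requirement. Since no messages are exchanged, the bandwidth is zero. There is no real obstacle here; the only point requiring a small verification is the structural claim that $\nerad(H) = 1$ forces $H$ to be a star, which is a direct unpacking of the definitions.
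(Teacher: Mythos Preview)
Your proposal is correct and takes essentially the same approach as the paper: both observe that $\nerad(H)=1$ forces $H$ to be a star, and both have the center of each star copy do the listing using only its locally known neighborhood, which requires no communication. The only cosmetic difference is that you recompute the list of stars from $N(v)$ each round, whereas the paper phrases it as maintaining the initial list and dropping copies when an incident edge is deleted; these produce identical outputs.
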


\begin{proof}
    If $\nerad(H) = 1$, then $H$ must be a star graph with a center $u$ that is adjacent to all other nodes in $H$.

    \paragraph{Algorithm} Initially, each copy $H_i$ of $H$ in the initial graph $G^0$ is listed by its center $u_i$. Whenever an edge is deleted, any center $u_j$ that is adjacent to the deleted edge stops listing $H_j$ immediately. Note that a node can be a center for multiple subgraphs: It is possible that a node $x$ is both $u_j$ and $u_k$ at the same time, with $j \neq k$.

    \paragraph{Proof of correctness} Initially, all copies $H_i$ of $H$ are correctly listed by exactly one node $u_i$.
    Suppose during a round, $H_j$ is destroyed due to the deletion of some edge $e$. Since $H_j$ is a star with center $u_j$, $e$ must be adjacent to $u_j$, so $u_j$ can stop listing $H_j$ correctly.
    Lastly, no new copies of $H$ can be formed by edge deletions. Hence the algorithm is correct.
\end{proof}

The condition $\nerad(H) = 1$ in \Cref{lem: list edge del rad=1} is necessary, as non-zero bandwidth is required whenever $\nerad(H) \geq 2$.

\begin{observation} \label{list_trivial_LB}
    If $\nerad(H) \geq 2$, then any algorithm for $\List(H)$ under edge deletions must have non-zero bandwidth complexity.
\end{observation}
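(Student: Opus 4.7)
The plan is a direct indistinguishability argument. I would assume for contradiction that a zero-bandwidth algorithm $\mathcal{A}$ exists and take the initial graph to be $G^0 = H$ itself, so that exactly one copy of $H$ is present and, by correctness of listing, at least one node must list it at the end of every round. I would then compare two futures from round $0$: Scenario~(A), in which no topological change occurs at round $1$, and Scenario~(B), in which a carefully chosen edge $e \in E(H)$ is deleted at round $1$.

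First I would simulate $\mathcal{A}$ under (A), examine the (necessarily nonempty) set $U_1$ of nodes that list the copy of $H$ at the end of round $1$, and pick any $u_0 \in U_1$. Invoking the assumption $\nerad(H) \geq 2$ together with the definition of $\neecc_H(u_0)$, there exists an edge $e \in E(H)$ with $\dist_H(u_0, e) \geq 2$; unpacking the node-edge distance, this is exactly the statement that $u_0$ is not an endpoint of $e$. I would then run $\mathcal{A}$ under Scenario~(B) using precisely this $e$.

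The core step is to observe that $u_0$ sees identical histories in (A) and (B): its neighborhood list $N(u_0)$ is the same at rounds $0$ and $1$ in both scenarios (since $u_0 \notin e$), so by the model's indirect-detection rule $u_0$ cannot deduce that any topological change has occurred, and zero bandwidth means it receives no communication either way. Determinism then forces $u_0$ to produce the same round-$1$ output in both scenarios, so $u_0$ still lists the copy of $H$ under~(B). But in~(B) that copy has been destroyed, so no node should list it, contradicting correctness. The only subtlety worth flagging, and certainly not an obstacle, is that $u_0$ (and hence $e$) must be chosen after inspecting the Scenario-(A) execution rather than a priori; since $\mathcal{A}$ is deterministic and $U_1$ is nonempty, this choice is well-defined, and no further case analysis is required.
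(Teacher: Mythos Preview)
Your proof is correct and follows the same approach as the paper's (very terse) proof. You are in fact slightly more careful: since $\List(H)$ only requires \emph{some} node to list each copy, you correctly identify the lister $u_0$ by simulating the algorithm in the quiet scenario before invoking $\nerad(H)\geq 2$ to find an edge $e$ not incident to $u_0$, an order of quantifiers the paper's two-sentence sketch glosses over.
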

\begin{proof}
    If $\nerad(H) \geq 2$, then there exist a node $u$ and an edge $e$ in $H$ such that $u$ is not an endpoint of $e$. Without communication, $u$ cannot detect the deletion of $e$. 
\end{proof}

\begin{theorem}\label{lem: list edge del rad=2 O(1)}
    If %$\nerad(H) = 2$ and 
    $\rad(H) = 1$, then there exists a one-round algorithm for $\List(H)$ under edge deletions, with bandwidth complexity $O(1)$.
\end{theorem}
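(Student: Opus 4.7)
The plan is to exploit the fact that $\rad(H) = 1$ guarantees each copy of $H$ has a \emph{center} node adjacent to every other node in that copy. This center will be the natural candidate to be in charge of listing that copy, since every edge of the copy either touches the center or lies between two neighbors of the center. I would assign, at the beginning, to each copy $H_i$ of $H$ in the initial graph $G^0$ a fixed center $v_i$ with $N_{H_i}(v_i) = V(H_i) \setminus \{v_i\}$; choose one such $v_i$ arbitrarily for each copy. Since $G^0$ is globally known, this assignment can be done without communication, and each center $v_i$ locally stores the vertex set of $H_i$.

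The algorithm itself uses a single bit per edge per round. Whenever a node $u$ detects that one of its incident edges has just been deleted, it broadcasts the one-bit signal $\Del$ to all its remaining neighbors in the same round. This is the only communication. A center $v_i$ updates its list as follows: if the deleted edge is incident to $v_i$, then $v_i$ detects it directly from the change in its neighborhood; otherwise, $v_i$ looks at which of its neighbors sent a $\Del$ signal in the current round and uses the rule that at most one topological change per round occurs. If exactly two neighbors $a, b$ of $v_i$ both report $\Del$, then the deleted edge must be $\{a,b\}$; if no two neighbors simultaneously report $\Del$ and no edge incident to $v_i$ disappears, then no edge between two neighbors of $v_i$ was deleted. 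In each case, $v_i$ checks whether the deleted edge belongs to $H_i$ and, if so, stops listing $H_i$.

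For correctness, the key observation is that every edge of $H_i$ is of one of the two types above, since all of $V(H_i) \setminus \{v_i\} \subseteq N_G(v_i)$ by the assumption $\rad(H) = 1$. Therefore any deletion that destroys $H_i$ is detected by $v_i$ within one round, and $v_i$ can pinpoint the deleted edge exactly, thanks to the single-change-per-round convention. Moreover, edge deletions cannot create new copies of $H$: any subgraph of $G - e$ is already a subgraph of $G$, so the initial assignment of centers remains valid for every surviving copy. Hence at every round, every surviving copy of $H$ is listed by its designated center, and every destroyed copy is removed from some node's list, so $\List(H)$ is solved. The bandwidth is $O(1)$ since only a single $\Del$ bit is ever transmitted.

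There is no substantial technical obstacle here; the argument is a direct generalization of the warm-up for $C_4$ used in the proof of \Cref{thm: memdect node deletion M}, specialized to listing rather than detection and to edge deletions rather than node deletions. The only small subtlety to double-check in the write-up is the unambiguous identification of the deleted edge from the $\Del$ signals, which relies crucially on the at-most-one-topological-change-per-round assumption stated in \Cref{subsec:model}.
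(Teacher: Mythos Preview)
Your proposal is correct and follows essentially the same approach as the paper: assign a center of each copy as the listing node, have endpoints of a deleted edge broadcast a one-bit $\Del$ signal, and let the center infer the deleted edge either directly or from receiving $\Del$ from two neighbors. The paper's write-up is slightly terser, but the algorithm and correctness argument are identical.
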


\begin{proof}
    If $\rad(H) = 1$, then $H$ must be a graph with a center $u$ adjacent to all other nodes, and with some edges between the nodes in  $V(H) \backslash \{u\}$. The choice of the center is not unique in general.

    \paragraph{Algorithm}
    \begin{itemize}
        \item Initially, each copy $H_i$ of $H$, in the initial graph $G^0$ is listed by any one $u_i$ of its centers.
        \item Whenever an edge $\{v,w\}$ is deleted, if $v = u_j$ or $w = u_j$ is the center responsible for listing $H_h$, then they stop listing $H_j$ immediately.
        \item Both $v$ and $w$ send a one-bit message $\Del$ to their respective neighbors, indicating that they have an incident edge deleted.
        \item If any center $u_j$ receives the message $\Del$ from two distinct neighbors $v$ and $w$ and $\{v,w\}$ is an edge in $H_j$, then $u_j$ stops listing $H_j$ immediately.
    \end{itemize}

    \paragraph{Proof of correctness} Initially, every copy $H_i$ of $H$ is correctly listed by exactly one node $u_i$.
    Suppose during a round, $H_j$ is destroyed due to the deletion of some edge $e$. If $e$ was adjacent to $u_j$, then $u_j$ stops listing $H_j$ correctly. Otherwise, $e = \{v,w\}$ where both $v$ and $w$ are neighbors of $u_j$, as $u_j$ is a center of $H_j$. In this case, $u_j$ receives the message $\Del$ from both $v$ and $w$, allowing it to stop listing $H_j$ correctly as well.
    Again, same as the proof of \Cref{lem: list edge del rad=1}, no new copies of $H$ can be formed by edge deletions. Hence the algorithm is correct.
\end{proof}

\Cref{list_trivial_LB,lem: list edge del rad=2 O(1)} together establish that the tight one-round bandwidth complexity bound for  $\List(H)$ under edge deletions is $\Theta(1)$ for any target subgraph $H$ with $\nerad(H) = 2$ and $\rad(H) = 1$. Next, we show that this bound increases to $\Theta(\log n)$ for any target subgraph $H$ with $\nerad(H) = 2$ and $\rad(H) = 2$.

\begin{theorem} \label{thm list edel nerad=rad=2}
    If $\nerad(H) = 2$ and $\rad(H) = 2$, then any one-round algorithm for $\List(H)$ under edge deletions must have bandwidth complexity $\Omega(\log n)$.
\end{theorem}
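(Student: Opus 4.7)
Since $\rad(H) = 2$, I will choose two nodes $u_1, u_m \in V(H)$ at distance exactly $2$ in $H$ together with a common neighbor $u_{m-1}$, so that $\{u_1,u_{m-1}\}, \{u_{m-1}, u_m\} \in E(H)$ while $\{u_1, u_m\} \notin E(H)$. Write $V(H) = \{u_1, u_2, \ldots, u_m\}$. The hard instance is the blow-up $\tilde G$ of $H$: replace each node $u_i$ by an independent set $S_i = \{v_{i,1}, \ldots, v_{i,n}\}$ of size $n$, and connect $v_{i,a}$ to $v_{j,b}$ in $\tilde G$ if and only if $\{u_i, u_j\} \in E(H)$. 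I will take the initial graph $G^0$ to be $\tilde G$ itself (built up by edge insertions during a preliminary phase); note $\tilde G$ has $mn = \Theta(n)$ nodes. The key family of copies of $H$ is $\mathcal F = \{H_{i,j} : (i,j) \in [n]^2\}$, where $H_{i,j} = \{v_{1,i}, v_{2,1}, \ldots, v_{m-1,1}, v_{m,j}\}$ induces a subgraph of $\tilde G$ isomorphic to $H$; different $(i,j)$ yield distinct node sets, so $|\mathcal F| = n^2$.

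Because each of these $n^2$ copies must be listed by at least one of the $\Theta(n)$ nodes, pigeonhole hands me a node $v^\star$ that lists $|L| = \Omega(n)$ copies from $\mathcal F$, where $L \subseteq [n]^2$ records the indices of the copies $v^\star$ lists. Using the automorphisms of $\tilde G$ that permute labels within each $S_p$, I will argue that we may assume $v^\star = v_{1,1}$, so in particular $v^\star \in S_1$. Writing $I$ and $J$ for the two coordinate projections of $L$, the inequality $|L| \leq |I|\,|J|$ forces at least one of $|I|, |J|$ to be $\Omega(\sqrt n)$, so by the symmetry between the roles of $u_1$ and $u_m$ it suffices to treat one of the two cases.

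Suppose $|I| = \Omega(\sqrt n)$. For each $i^\star \in I \setminus \{1\}$, the adversary deletes the edge $\{v_{1,i^\star}, v_{m-1,1}\}$; this edge exists in $\tilde G$ since $\{u_1, u_{m-1}\} \in E(H)$. The deletion destroys every $H_{i^\star, j}$ with $(i^\star, j) \in L$, so $v^\star$ must update its list in an $i^\star$-dependent way. Because $v^\star$ and $v_{1,i^\star}$ both lie in the independent set $S_1$, they are non-adjacent; and because every neighbor of $v^\star$ other than $v_{m-1,1}$ fails to be incident to the deleted edge, only the single message from $v_{m-1,1}$ to $v^\star$ carries any information about $i^\star$. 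Forcing this one $B$-bit message to distinguish $\Omega(\sqrt n)$ possible values of $i^\star$ yields $B = \Omega(\log n)$. In the symmetric case $|J| = \Omega(\sqrt n)$ we instead delete $\{v_{m-1,1}, v_{m,j^\star}\}$; here the crucial non-adjacency between $v^\star = v_{1,1}$ and $v_{m,j^\star}$ uses exactly the assumption $\{u_1, u_m\} \notin E(H)$ guaranteed by the $\rad(H)=2$ choice above.

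The step I expect to be the main obstacle is cleanly justifying the ``WLOG $v^\star \in S_1$'' reduction. Without care, pigeonhole might hand us a heavy-listing node $v^\star \in S_p$ for some $p$ with $\{u_p, u_1\} \in E(H)$; then $v^\star$ is adjacent to every node in $S_1$ and could identify $i^\star$ simply from which of its many $S_1$-neighbors sends a one-bit $\Del$-signal, bypassing the informant-counting argument. To address this I plan to combine $\mathcal F$ with the analogous families obtained by moving the varying free indices into different parts of the blow-up (which exist in abundance because $\rad(H)=2$ forces many distance-$2$ pairs), and to choose the adversarial deletion so that its endpoints match the part containing the heavy-listing node. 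In every case the endpoints of the deleted edge can be selected so that exactly one of them lies in $N_{\tilde G}(v^\star)$, forcing $\Omega(\sqrt n)$ possibilities to be encoded in a single $B$-bit message and giving the desired $B = \Omega(\log n)$.
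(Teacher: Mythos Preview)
Your construction (the blow-up $\tilde G$), the final informant-counting step, and the $\sqrt{n}$ split into the $I$/$J$ cases are all exactly as in the paper. The gap is precisely the one you flagged, and your proposed workaround does not close it.

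The issue is the order in which you fix things. You choose the distance-$2$ triple $(u_1,u_{m-1},u_m)$ \emph{first}, then restrict to the $n^2$ copies varying only in positions $1$ and $m$, and only \emph{afterwards} pigeonhole to find the heavy lister $v^\star$. At that point $v^\star$ may lie in some $S_p$ with $\{u_p,u_1\}\in E(H)$, and your automorphism argument cannot repair this: the within-part automorphisms of $\tilde G$ map each $S_p$ to itself, so they cannot move $v^\star$ into $S_1$; and cross-part automorphisms of $\tilde G$ exist only when $H$ itself has a corresponding automorphism, which you have no right to assume. Your fallback of ``combining $\mathcal F$ with analogous families'' is also not sound as stated: a node that is heavy for the union of $m$ such families need not be heavy for the particular family whose free coordinates match its own part.

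The paper reverses the order. It first considers \emph{all} $n^m$ canonical copies $\{v_{1,i_1},\ldots,v_{m,i_m}\}$ and pigeonholes to get a node listing at least $n^{m-1}/m$ of them; by pure relabeling of the parts of $H$ (not an automorphism) this node is called $v_{1,1}$. Only \emph{now} does it invoke $\rad(H)=2$: since $\ecc_H(u_1)\ge 2$, there is some $u_m$ at distance exactly $2$ from $u_1$ with a common neighbor $u_{m-1}$, and these are relabeled to positions $m-1,m$. Finally, $m-2$ further pigeonhole steps fix coordinates $2,\ldots,m-1$ to $1$, leaving $\ge n/m$ copies of the form $\{v_{1,i},v_{2,1},\ldots,v_{m-1,1},v_{m,j}\}$ still listed by $v_{1,1}$. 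From here your $I$/$J$ argument goes through verbatim. The point is that the distance-$2$ pair is chosen \emph{relative to} the heavy node's part, which is always possible because $\rad(H)=2$ gives every vertex eccentricity at least $2$.
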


\begin{proof}
    Let $|V(H)| = m$ and $V(H) = \{u_1, u_2, \ldots, u_m\}$. We construct the initial graph $G^0$ in the following way:
    
    \begin{itemize}
        \item Start with a single copy of $H$.
        \item Let $n$ be any positive integer. Replace each node $u_i$ with an independent set of $n$ nodes $S_i = \{u_{i,1}, u_{i,2}, \ldots, u_{i,n}\}$. 
        \item In other words, $V(G^0) = \bigcup_{i \in [m]} S_i$ and for each pair of nodes $u_{i, a}$ and $u_{j, b}$, $\{u_{i, a},u_{j, b}\} \in E(G^0)$ if and only if $i \neq j$, $\{u_i,u_j\} \in E(H)$. That is, $S_i \cup S_j$ induces a complete bipartite graph if and only if $\{u_i,u_j\} \in E(H)$.
        \item Now we have $|V(G^0)| = mn = \Theta(n)$, and $G^0$ contains $n^m$ copies of $H$ in the form $\{u_{1,i_1}, u_{2,i_2}, \ldots, u_{m,i_m}\}$. There may be more copies of $H$ in other forms, but they are irrelevant to our proof.
    \end{itemize}

    By the pigeonhole principle, one of the nodes must list at least $n^m / mn = n^{m-1} / m$ copies of $H$ initially. Without loss of generality, let this node be $u_{1,1}$.
    
    Since $\rad(H) = \nerad(H) = 2$, there must exist two nodes $u_i, u_j \in V(H)$ such that $\{u_1,u_i\}$, $\{u_i, u_j\} \in E(H)$, and $\{u_1, u_j\} \notin E(H)$. Without loss of generality, let $u_i$ and $u_j$ be $u_{m-1}$ and $u_m$, respectively. 

    Now, consider the set of $n^{m-1}/m$ copies of $H$ being listed by $u_{1,1}$ initially. Since there are $n$ copies of the node $u_2$ in $G^0$, by the pigeonhole principle again, at least $1/n$ of the copies of $H$ in this set must have the same $u_{2,i}$. Without loss of generality, let it be $u_{2,1}$.

    Repeating the same argument for $u_3, \ldots, u_{m-1}$, we may assume that $u_{1,1}$ must be listing at least $n/m$ copies of $H$ in the form $\{u_{1,i}, u_{2,1}, u_{3,1}, \ldots, u_{m-1, 1}, u_{m, j}\}$ with $i \in [n]$ and $j\in [n]$. Let this set of $n/m$ copies of $H$ be $L$. Since $L$ has at least $n/m$ distinct elements, there must be at least $\sqrt{n/m}$ copies of $H$ with distinct values of $i$, or at least $\sqrt{n/m}$  copies of $H$ with distinct values of $j$ in $L$. 

    \paragraph{Case 1} Suppose $u_{1,1}$ is listing at least $\sqrt{n/m}$ copies of $H$ in the form $\{u_{1,i}$, $u_{2,1}$, $u_{3,1}$, $\ldots$, $u_{m-1, 1}$, $u_{m, j}\}$ where the values of $i$ are distinct. Let $I$ be the set of distinct values of $i$ here.

    Recall that $\{u_1, u_{m-1}\} \in E(H)$, $\{u_{m-1}, u_m\} \in E(H)$, and $\{u_1, u_m\} \notin E(H)$. Hence $\{u_{1,i}, u_{m-1, 1}\} \in E(G^0)$ for all $i \in I$. Suppose we now delete the edge $\{u_{1,i^*} , u_{m-1, 1}\}$ for some $i^* \in I \setminus\{ 1\}$, then $u_{1,1}$ must stop listing any copies of $H$ containing this edge by the end of this round, and there is at least one such copy of $H$ in $L$. Since there is no edge between $u_{1,i^*}$ and $u_{1,1}$, $u_{m-1, 1}$ must send a message to $u_{1,1}$ that allows $u_{1,1}$ to distinguish $i^*$ from the total pool of at least $\sqrt{n/m}-1  = \Omega(\sqrt{n})$ possibilities, thus requiring $\Omega(\log \sqrt{n}) = \Omega(\log n)$ bandwidth.

    \paragraph{Case 2} Suppose $u_{1,1}$ is listing at least $\sqrt{n/m}$ copies of $H$ in the form $\{u_{1,i}$, $u_{2,1}$, $u_{3,1}$, $\ldots$, $u_{m-1, 1}$, $u_{m, j}\}$ where the values of $j$ are distinct. Let $J$ be the set of distinct values of $j$ here.

    Suppose we now delete the edge $\{u_{m-1,1}, u_{m, j^*}\}$ for some $j^* \in J$, then by a similar argument to the previous case, since there is no edge between $u_{m, j^*}$ and $u_{1,1}$, $u_{m-1, 1}$ must send a message to $u_{1,1}$ that allows it to distinguish $j^*$ from the total pool of at least $\sqrt{n/m}$ possibilities, thus requiring $\Omega(\log n)$ bandwidth as well.
\end{proof}

With a simple algorithm, we can show that the lower bound of \Cref{thm list edel nerad=rad=2} is \emph{tight}.

\begin{theorem} \label{logn list edge del}
    If $\nerad(H) = 2$ and $\rad(H) = 2$, then there exists a one-round algorithm for $\List(H)$ under edge deletions, with bandwidth complexity $O(\log n)$.
\end{theorem}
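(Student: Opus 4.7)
The plan is to mirror the broadcast strategy used in the proof of \Cref{lem: memlist edge deletion}, but assign the listing responsibility to a node-edge center of each copy so that one round of local communication suffices. Since $\nerad(H)=2$, every node-edge center $c$ of $H$ satisfies $\neecc_H(c)=2$, meaning $\dist_H(c,e)\leq 2$ for every edge $e\in E(H)$; equivalently, for every edge $\{u,v\}\in E(H)$, at least one of $u,v$ equals $c$ or is a neighbor of $c$ in $H$. This will be the key structural ingredient.

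The algorithm is as follows. For every copy $H'$ of $H$ appearing in the initial graph $G^0$, we fix an isomorphism $\phi\colon H\to H'$ and let the node $\phi(c)$ list $H'$, where $c$ is any node-edge center of $H$. When some edge $e=\{u,v\}$ is deleted in the current round, both $u$ and $v$ broadcast the ordered pair $(\ID(u),\ID(v))$ to all of their remaining neighbors; this costs $O(\log n)$ bits. Then, every listing node $w$ inspects its list of currently listed copies and removes any $H'$ for which $e$ is one of its edges, using either the direct observation that $w$ itself lost an incident edge, or the message it just received from a neighbor.

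For correctness, note that edge deletions cannot create new copies of $H$, so we never need to begin listing anything during the execution. If the deletion of $e=\{u,v\}$ destroys some copy $H'$ listed by $w=\phi(c)$, then by the choice of $c$ as a node-edge center of $H$, we have $\dist_{H'}(w,e)\leq 2$, hence $\min\{\dist_{H'}(w,u),\dist_{H'}(w,v)\}\leq 1$. If that minimum is $0$, then $w\in\{u,v\}$ and $w$ observes the deletion directly. If it is $1$, then one of $u,v$, say $u$, is a neighbor of $w$ in $H'\subseteq G$ right before the deletion, so $u$ is a neighbor of $w$ in $G$ at the start of the round and the broadcast $(\ID(u),\ID(v))$ from $u$ reaches $w$. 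In either case, $w$ learns both $\ID(u)$ and $\ID(v)$ within the round and can consult its locally maintained list of copies to identify and drop every affected $H'$.

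The only routine point to verify is that $w$ can locally determine, from the identifiers of the deleted edge's endpoints, which of its listed copies contain $e$: this follows because each listed copy is stored as a set of identifiers, so $w$ simply removes every copy whose identifier set contains both $\ID(u)$ and $\ID(v)$ and whose structure has them adjacent. No communication beyond the $O(\log n)$-bit broadcast of $(\ID(u),\ID(v))$ along the (at most two) edges incident to the deleted edge's endpoints is needed, giving the claimed $O(\log n)$ bandwidth. Combined with \Cref{thm list edel nerad=rad=2}, this establishes the tight $\Theta(\log n)$ bound for this regime.
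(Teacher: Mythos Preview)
Your proof is correct and follows essentially the same approach as the paper: assign listing responsibility to a node-edge center of each copy, have endpoints of a deleted edge broadcast both $\ID$s to their neighbors, and use $\nerad(H)=2$ to guarantee that the responsible node is within one hop of at least one endpoint. Your version is slightly more explicit in fixing an isomorphism and in verifying that the relevant edge to the center is still present in $G$ (since the copy was intact before this round), but the underlying idea is identical.
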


\begin{proof}
    The proof idea is that, with $O(\log n)$ bandwidth, a node can send the exact identity of the deleted edge to all its neighbors.

    \paragraph{Algorithm}
    Initially, each copy $H_i$ of $H$ in the original graph $G^0$ is listed by exactly one node $u_i$ from $\necenter(H_i)$. Recall from \Cref{def:NE_parameters} that $u_i$ is a node such that the node-edge distance  $\dist(u_i, e) \leq \nerad(H) = 2$ for all $e \in E(H_i)$. Whenever an edge $\{v,w\}$ is deleted, both endpoints $v$ and $w$ send a $O(\log n)$-bit message to their respective neighbors containing the $\ID$s of $v$ and $w$, allowing them to determine exactly which edge has been deleted. If some $u_i$ determines that an edge $e \in E(H_i)$ has been deleted, then it stops listing $H_i$ immediately.

    \paragraph{Proof of correctness}
    Initially, all copies $H_i$ of $H$ are correctly listed by exactly one node $u_i$.
        Suppose during a round, $H_j$ is destroyed due to the deletion of some edge $e$. Since $\nerad(H_j) = 2$ with $u_j$ being a node from $\necenter(H_j)$, $e$ must be incident to $u_j$ or incident to some neighbor of $u_j$. In either case, $u_j$ can determine the exact identity of $e$ and deduce that it is part of $H_j$, thus stopping listing $H_j$ immediately. As no new copies of $H$ can be formed by edge deletions, the algorithm is correct.
\end{proof}

\subsection{Listing Under Node Deletions}\label{sect:list_ND}
In this section, we show that many proofs from \Cref{sect:list_ED} can be adapted to the case of node deletions with minor modifications. However, we are unable to give a tight bound in the case where $\rad(H) = \diam(H) = 2$.
%See \Cref{tab:listingND} for a summary of our results for one-round \List\ under node deletions.

\begin{theorem} \label{lem: list node del rad>=3}
    If $\rad(H) \geq 3$, then there is no one-round algorithm for $\List(H)$ under node deletions.
\end{theorem}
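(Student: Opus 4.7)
The plan is to mimic the indistinguishability argument used in \Cref{thm list rad 3}, but with node deletion replacing edge deletion and the ordinary radius playing the role of the node-edge radius. First I would take the initial graph $G^0$ to be a single copy of $H$ on the node set $V(H)$. Since the listing problem requires this copy to be reported, the execution on $G^0$ with no further topological change must produce a non-empty set $T$ of nodes that include this copy in their output after round $1$; pick any $u \in T$. Because $\rad(H) \geq 3$, the definition of radius gives a node $v \in V(H)$ with $\dist_H(u,v) \geq 3$, and this $v$ will be the vertex I ask the adversary to delete.

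The contradiction will come from comparing two executions that $u$ cannot distinguish. In Scenario~A, the adversary deletes $v$ in round $1$; in Scenario~B, no topological change occurs. In both scenarios $u$ starts with full knowledge of $G^0$, and in the single round of communication $u$'s output depends only on the messages it receives from its neighbors in $N_H(u)$. Every such neighbor $w$ satisfies $\dist_H(w,v) \geq \dist_H(u,v) - 1 \geq 2$, so $v \notin N_H(w)$; hence the deletion of $v$ leaves $N_{G^1}(w) = N_H(w)$, meaning $w$ observes no local change and therefore sends $u$ the same message in both scenarios. Consequently $u$'s entire transcript, and hence its output, is identical in Scenarios~A and~B.

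Since $u \in T$ lists the copy of $H$ in Scenario~B, the same output must appear in Scenario~A. But in Scenario~A the graph $G^1$ has only $|V(H)| - 1$ nodes and therefore contains no copy of $H$, so $u$ listing this copy is incorrect, yielding the desired contradiction. The main subtlety I expect to have to state cleanly is the one-round propagation bound: information about the deletion of $v$ can reach only nodes at distance at most $2$ from $v$, because only the deleted node's neighbors perceive the change directly, and they in turn have only one message's worth of communication to pass it along within the same round. Once this locality fact is in place, the hypothesis $\rad(H) \geq 3$ immediately guarantees a witness $u$ that cannot learn of the deletion, completing the impossibility proof.
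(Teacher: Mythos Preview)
Your proof is correct and follows essentially the same approach as the paper's own proof: start from a single copy of $H$, pick a node $u$ responsible for listing it, use $\rad(H)\ge 3$ to find a node $v$ at distance at least $3$ from $u$, and argue that $u$ cannot perceive the deletion of $v$ within one round. Your version spells out the indistinguishability step (that every neighbor $w$ of $u$ has $v\notin N_H(w)$ and hence sends the same message) a bit more explicitly than the paper does, but the argument is the same.
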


\begin{proof}
%The proof is similar to the proof of \Cref{thm list rad 3}.
    Let $H$ be a connected graph with $\rad(H) \geq 3$, and the initial graph $G^0$ be a single copy of $H$.    
    At least one node must be listing this copy of $H$ initially, and call this node $u$. Since $\rad(H) \geq 3$, there exists some node $v$ such that $\dist(u,v) \geq 3$.
    Suppose we delete the node $v$. Since $\dist(u,v) \geq 3$, $v$ is not a neighbor of $u$ and is not adjacent to a neighbor of $u$. Thus $u$ cannot learn of the deletion of $v$ by the end of this round. Hence $u$ is not be able to correctly stop listing this copy of $H$.
\end{proof}

\begin{theorem}\label{thm:list_zero}
    If $\rad(H) = 1$, then there exists a one-round algorithm for $\List(H)$ under node deletions, with zero bandwidth required.
\end{theorem}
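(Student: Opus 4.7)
The plan is to mirror the structure of the proof of \Cref{lem: list edge del rad=1}, which handled the analogous edge-deletion case with $\nerad(H) = 1$. Since $\rad(H) = 1$, every copy of $H$ contains at least one center node that is adjacent to every other node in that copy, and we will assign the listing responsibility to such a center.

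Concretely, I would proceed as follows. First, observe that $\rad(H)=1$ means that for each copy $H_i$ of $H$ appearing in the initial graph $G^0$, we can fix a center $u_i \in \centre(H_i)$ whose neighborhood in $G^0$ contains $V(H_i) \setminus \{u_i\}$. Since $G^0$ is common knowledge, this assignment can be computed without any communication, and initially $u_i$ lists $H_i$. Next, I would describe the behavior under a node deletion: when a node $w$ is deleted in some round, any center $u_i$ currently responsible for a copy $H_i$ with $w \in V(H_i)$ must stop listing $H_i$. There are two cases. If $w = u_i$ itself, then $u_i$ is no longer present and the copy is trivially no longer listed. Otherwise, since $u_i$ is adjacent to every other node of $H_i$, in particular $w \in N_{G^{\mathrm{prev}}}(u_i)$, so $u_i$ detects the disappearance of $w$ from its own neighbor list without receiving any message, and simply removes $H_i$ from its list.

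For correctness, I would note two things. Initially every copy is listed by exactly one node by construction. When a copy $H_i$ is destroyed by a node deletion, that deletion must remove some $w \in V(H_i)$, and by the argument above $u_i$ immediately stops listing $H_i$ in the same round. Finally, node deletions cannot create new copies of $H$, so no new listings need ever be started. Since all decisions are made locally from the change in each node's own neighbor list, no bits are ever transmitted, giving bandwidth exactly $0$.

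The main obstacle is really a non-obstacle: the key conceptual point is that with $\rad(H)=1$ the center already sees all other vertices of its copy as direct neighbors, so the event ``some vertex of $H_i$ is deleted'' is locally observable at $u_i$ with no help from anyone else. This is the reason the analogous edge-deletion case required $\nerad(H)=1$ (only then is every edge of a copy incident to the listing center), while for node deletions the weaker condition $\rad(H)=1$ already suffices, because a deleted vertex need only be \emph{a neighbor} of the center, not the endpoint of an incident edge under the center's direct observation through signals.
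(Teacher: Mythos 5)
Your proposal is correct and follows essentially the same approach as the paper: assign each copy of $H$ to a universal-center node, which can locally observe the disappearance of any vertex of that copy (since every such vertex is either the center itself or one of its neighbors), so no messages are needed. In fact your write-up is slightly more careful than the paper's, which incorrectly opens by asserting that $\rad(H)=1$ forces $H$ to be a star; as you correctly note, $\rad(H)=1$ only requires a universal vertex (e.g.\ $K_3$ qualifies), and the argument goes through for any such $H$.
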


\begin{proof}
    If $\rad(H) = 1$, then $H$ must be a star graph. 
    The algorithm and proof of correctness are almost identical to those in \Cref{lem: list edge del rad=1}. The only difference is that each $u_j$ stops listing $H_j$ when it detects that some node (rather than an edge) in $H_j$ is deleted.
\end{proof}

\begin{theorem}\label{thm:ND_list_LB}
    If $\rad(H) = 2$ and $\diam(H) \geq 3$, then any one-round algorithm for $\List(H)$ under node deletions must have bandwidth complexity $\Omega(\log n)$.
\end{theorem}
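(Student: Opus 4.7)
The plan is to adapt the construction of \Cref{thm list edel nerad=rad=2} to node deletions, exploiting the assumption $\diam(H)\geq 3$. Since $\diam(H) \geq 3$, $H$ contains a shortest path of length at least three; take any four consecutive vertices $u_a, u_b, u_c, u_d$ along such a path. Shortest paths are chord-free, so they automatically induce a $P_4$, giving $\dist_H(u_a,u_d) \geq 3$ and $\{u_a,u_d\} \notin E(H)$. I would build the initial graph $G^0$ by replacing $u_a$ with an independent set $S_a=\{a_1,\ldots,a_n\}$ and $u_d$ with $S_d=\{d_1,\ldots,d_n\}$, keeping every other node of $H$ as a singleton and lifting adjacencies in the natural blow-up fashion; this yields $n^2$ canonical copies of $H$, indexed by pairs $(i,j)\in[n]^2$.

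The next step is to identify the admissible listeners. A listener $v$ of a copy $C$ must satisfy $\dist_{G^0}(v,u)\leq 2$ for every $u\in C$---otherwise a deletion inside $C$ is invisible to $v$ in one round. A short case analysis on the blow-up metric rules out every node in $S_a\cup S_d$ (their distance to the opposite blown-up side is $\geq 3$), so $v$ must be a singleton whose $H$-role $u_p$ is a center, i.e., $\ecc_H(u_p)\leq 2$. The count $c_H$ of such singleton centers is $O(1)$, and $c_H \geq 1$ since $\rad(H)=2$ while $\ecc_H(u_a),\ecc_H(u_d)\geq 3$. Pigeonhole then produces a single listener $v^* = u_p$ listing a set $L$ of at least $n^2/c_H$ pairs; using $|L|\leq|I|\cdot|J|$ and $|I|,|J|\leq n$ for the coordinate projections, \emph{both} $|I|$ and $|J|$ are $\Omega(n)$.

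The key structural observation is that $u_p$ cannot be adjacent to both $u_a$ and $u_d$, since that would force $\dist_H(u_a,u_d)\leq 2$. Assume WLOG $u_a \notin N_H(u_p)$; together with $\ecc_H(u_p)\leq 2$ this gives $\dist_H(u_p, u_a)=2$. Varying the deletion across $a_{i^*}$ for $i^* \in I$ removes a distinct non-empty subset of $L$ from the required output, so $v^*$ must produce $|I|=\Omega(n)$ different outputs, requiring $\Omega(\log n)$ bits of distinguishing information. If the pigeonhole instead forces the opposite non-adjacency, the symmetric argument deletes $d_{j^*}$ for $j^* \in J$.

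The main obstacle I expect is bounding the information $v^*$ actually collects in one round. Only neighbors of $v^*$ that are also neighbors of $a_{i^*}$ perceive the deletion; every other neighbor's $B$-bit message is independent of $i^*$. By construction these common neighbors are the singletons in $N_H(u_p)\cap N_H(u_a)\setminus\{u_a,u_d\}$, a set of constant size since $H$ is fixed. Hence $v^*$ receives $O(B)$ bits of $i^*$-dependent information, forcing $B = \Omega(\log n)$. The delicate part of the plan is pinning down the admissible-listener characterization and verifying that \emph{both} projections $|I|,|J|$ are $\Omega(n)$---an asymmetric split would be problematic exactly in the scenario where $v^*$ is adjacent to the only coordinate we could otherwise target.
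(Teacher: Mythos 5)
Your proof is correct and follows essentially the same approach as the paper: blow up two nodes of $H$ at distance $\geq 3$ into independent sets, use the one-round locality constraint to confine listers to the $O(1)$ singleton centers, pigeonhole to a single heavily-loaded lister $v^*$, and argue that deleting a node in the blown-up side reaching $v^*$ only through $O(1)$ common neighbors forces $\Omega(\log n)$ bits. The only (cosmetic) deviation is that you derive $|I|, |J| = \Omega(n)$ directly from $|I|\cdot|J| \geq n^2/c_H$ with $|I|,|J|\leq n$, whereas the paper instead applies a second pigeonhole to fix the $w$-coordinate before extracting $\Omega(n)$ distinct $i$-values; both routes yield the same bound.
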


\begin{proof}
    Let $v,w \in V(H)$ be two nodes such that $\dist(v,w) \geq 3$. Let $|V(H)| = m$, and $V(H) = \{v, w, u_1, u_2, \ldots, u_{m-2}\}$. We construct the initial graph $G^0$ in the following way:
    
    \begin{itemize}
        \item Start with a single copy of $H$.
        \item  Let $n$ be any positive integer. Replace $v$ with an independent set of $n$ nodes $S_v = \{v_1, v_2, \ldots, v_n\}$. Replace $w$ with an independent set of $n$ nodes $S_w = \{w_1, w_2, \ldots, w_n\}$. Therefore, $V(G^0) = S_v \cup S_w \cup \{u_1, u_2, \ldots, u_{m-2}\}$.
        %For each of $v$ and $w$, duplicate it $n-1$ times to get the sets of nodes $S_v = \{v_1, v_2, \ldots, v_n\}$ and $S_w = \{w_1, w_2, \ldots, w_n\}$.
        % More formally, $V(G^0) = S_v \cup S_w \cup \{u_1, u_2, \ldots, u_{m-2}\}$ and $E(G^0)$ consists of the following edges.
        % \begin{itemize}
        %     \item $\{u_i, u_j\} \in E(G^0)$ if $\{u_i, u_j\} \in E(H)$.
        %     \item $\{u_i, v_j\} \in E(G^0)$ if $\{u_i, v\} \in E(H)$.
        %     \item $\{u_i, w_j\} \in E(G^0)$ if $\{u_i, w\} \in E(H)$.
        %     \item $\{v_i, w_j\} \in E(G^0)$ for all $v_i$ and $w_j$.
        % \end{itemize}
        \item Each $v_i \in S_v$ has the same set of neighbors as $v$ in $H$. Each $w_i \in S_w$ has the same set of neighbors as $w$ in $H$.
        % \item i.e., $\{v_i, u_j\} \in E(G^0)$ if and only if $\{v,u_j\} \in E(H)$. Same for $w_i$.
        \item Now we have $|V(G^0)| = 2n+m-2 = \Theta(n)$, and $G^0$ contains $n^2$ copies of $H$ in the form $\{v_{i}, w_{j}, u_1, u_2, \ldots, u_{m-2}\}$. There may be more copies of $H$ in other forms, but they are irrelevant to our proof.
    \end{itemize}

    Observe that if the eccentricity of a node is at least $3$ in $G^0$, it cannot list any of the $n^2$ copies of $H$ initially. Otherwise, by the same argument as in \Cref{lem: list node del rad>=3}, the node is not able to correctly stop listing some copy of $H$ when a node at distance at least $3$ away is deleted. Since $\dist(v,w) \geq 3$ in $H$, the eccentricity of all nodes in $S_v$ and $S_w$ are at least $3$ in $G^0$. Hence at most $m-2$ nodes $\{u_1, u_2, \ldots, u_{m-2}\}$ can be listing any of the $n^2$ copies of $H$ initially.

    By the pigeonhole principle, one of them must list at least $n^2 / (m-2)$ copies of $H$ initially. Without loss of generality, let this node be $u_1$.
    
    Consider the relationship between $u_1$, $v$, and  $w$ in $H$. Since $\dist(v,w) \geq 3$, $u_1$ cannot be adjacent to both $v$ and $w$. Without loss of generality, suppose $u_1$ is not adjacent to $v$. However, as we have argued above, the eccentricity of $u_1$ cannot be at least $3$ in $G^0$, so we must have $\dist(u_1, v) = 2$ in $H$.

    Now consider the set of $n^2/(m-2)$ copies of $H$ being listed by $u_1$ initially. Since there are $n$ copies of the node $w$ in $G^0$, by the pigeonhole principle again, at least $1/n$ of the copies of $H$ in this set must have the same $w_i$. Without loss of generality, let it be $w_1$.

    Hence, $u_1$ must be listing at least $n/(m-2)$ copies of $H$ in the form $\{v_i, w_1, u_1, u_2, \ldots, u_{m-2}\}$. That is, $u_1$ is listing at least $n/(m-2)$ copies of $H$ with distinct values of $i$. Let $I$ be the set of distinct values of $i$.

    Recall that $\dist(u_1, v) = 2$ in $H$. Hence $\dist(u_1, v_i) = 2$ for all $i \in I$ in $G^0$. Suppose we now delete the node $v_{i^*}$ for some $i^* \in I$, then $u_1$ must stop listing any copies of $H$ containing $v_{i^*}$ by the end of this round, and there is at least one such copy. However, $v_{i^*}$ is not adjacent to any other $v_i$ or any $w_j$, so $u_1$ and $v_{i^*}$ have at most $m-3$ common neighbors, $\{u_2, \ldots, u_{m-2}\}$, in $G^0$ that can detect the deletion of $v_{i^*}$ immediately. Thus, $u_1$ can only receive at most $m-3$ messages to help it distinguish $i^*$ from the total pool of at least $n/(m-2) = \Omega(n)$ possibilities, requiring $\Omega(\log n)$ bandwidth.
\end{proof}

With a simple algorithm, we can show that the above lower bound is tight for any target subgraph $H$ with $\rad(H) = 2$.

\begin{theorem}\label{thm:ND_list_UB}
    If $\rad(H) = 2$, then there exists a one-round algorithm for $\List(H)$ under node deletions, with bandwidth complexity $O(\log n)$.
\end{theorem}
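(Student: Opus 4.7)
The plan is to adapt the algorithm from Theorem~\ref{logn list edge del} to the node-deletion setting, with only cosmetic modifications. Concretely, I would initially assign each copy $H_i$ of $H$ in the starting graph $G^0$ to exactly one node $u_i \in \centre(H_i)$; since $\rad(H_i) = \rad(H) = 2$, every node of $H_i$ lies within distance at most $2$ of $u_i$. The centers $u_i$ are responsible for listing their copies throughout the execution.

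For the dynamic behavior, whenever a node $v$ is deleted in some round, every neighbor $w$ of $v$ in the previous graph detects the deletion immediately and, within the same round, broadcasts $\ID(v)$ to all its current neighbors. This broadcast uses a message of size $O(\log n)$ bits. Upon receiving such a broadcast, any center $u_i$ that recognizes $\ID(v)$ as a node of the copy $H_i$ it is currently listing stops listing $H_i$.

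For correctness, suppose a copy $H_i$ is destroyed in some round due to the deletion of some $v \in V(H_i)$. Because $u_i \in \centre(H_i)$ and $\rad(H_i) = 2$, we have $\dist_{H_i}(u_i, v) \leq 2$. If $v \in N_{H_i}(u_i)$, then $u_i$ detects the deletion of its own neighbor $v$ directly from its updated neighbor list, so it stops listing $H_i$. Otherwise, there is some $w \in N_{H_i}(u_i) \cap N_{H_i}(v)$; this $w$ survives the deletion (only $v$ is removed in this round), sees $v$ disappear from its neighbor list, and broadcasts $\ID(v)$ to $u_i$ in the same round, again letting $u_i$ correctly stop listing $H_i$. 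Finally, node deletions can only destroy copies of $H$, never create new ones, so no copy is ever missed.

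There is no serious obstacle here: the only point to double-check is that the messages sent by neighbors of the deleted node reach $u_i$ \emph{within the same round} as the deletion, which is guaranteed by our model (within a round, a topological change occurs before communication, so $w$ can react to the deletion of $v$ by sending $\ID(v)$ to $u_i$ in that very round). The bandwidth used is $O(\log n)$ for the single $\ID$ broadcast, matching the theorem statement.
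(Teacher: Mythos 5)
Your proposal is correct and follows essentially the same approach as the paper: assign each copy to a center, have neighbors of a deleted node broadcast its $\ID$ in $O(\log n)$ bits, and use $\rad(H)=2$ to guarantee the center is within one hop of some surviving neighbor of the deleted node. The case split on whether the deleted node is a neighbor of $u_i$ or at distance two is exactly the paper's argument.
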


\begin{proof}
    The algorithm is similar to the $O(\log n)$-bandwidth algorithm for $\List(H)$ under edge deletions in \Cref{logn list edge del}.
    
    \paragraph{Algorithm} Initially, each copy $H_i$ of $H$ in the initial graph $G^0$ is listed by exactly one $u_i$ of its centers. See \Cref{def:parameters} for the definition of centers. Whenever a node $v$ is deleted, each neighbor of $v$ sends a $O(\log n)$-bit message to its respective neighbors containing the $\ID$ of $v$, allowing them to determine exactly which node has been deleted. If a center $u_i$ determines that a node $w \in V(H_i)$ has been deleted, it stops listing $H_i$ immediately.

    \paragraph{Proof of correctness} Initially, all copies $H_i$ of $H$ are correctly listed by exactly one node $u_i$.

    Suppose during a round, $H_j$ is destroyed due to the deletion of some node $v$. Since $\rad(H_j) = 2$ with $u_j$ being a center, $v$ must be adjacent to $u_j$ or a neighbor of $u_j$. In either case, $u_j$ is able to determine the exact identity of the deleted node and stop listing $H_j$ immediately.
As no new copies of $H$ can be formed by node deletions, the algorithm is correct.
\end{proof}

\section{Conclusions and Open Problems}\label{sect:conclusions}
In this work, we substantially extend the study of dynamic distributed subgraph finding initiated by Bonne and Censor-Hillel~\cite{bonne2019distributed} in the \emph{deterministic} setting. We establish \emph{tight} one-round bandwidth bounds for triangle finding in bounded-degree dynamic networks: $\Theta(\log \log n)$ for membership-detection under edge insertions only (\Cref{cor-1}), and $\Theta(\log \log \log n)$ for detection when both edge and node insertions are allowed (\Cref{cor-2}). Before our work, no lower bound was known for these two problems. Moreover, we provide a \emph{complete characterization} of the $r$-round bandwidth complexity of the membership-listing problem across all subgraphs and types of topological changes (\Cref{table:memList1}). Despite these advances, many intriguing open problems remain.

\begin{description}
    \item[Beyond bounded-degree networks] While we obtain tight bounds for triangle finding in bounded-degree networks, the current upper and lower bounds remain unmatched for general unbounded-degree networks. Can stronger lower bounds be established for networks with higher degrees?

    Specifically, for membership-detection, in \Cref{lem: memdect lb ks}, we establish a new lower bound of $\Omega(\log\log n)$ for the one-round bandwidth complexity of $\MemDetect(K_s)$ under edge insertions for any $s \geq 3$. While we provide a matching upper bound for $\MemList(K_3)$ in \emph{bounded-degree} networks in \Cref{thm: Delta memlist ub loglogn}, this lower bound is not yet known to be tight for \emph{unbounded-degree} networks, where the current best upper bound is $O(\log n)$ for $s = 3$ and $O(\sqrt{n})$ for $s \geq 4$~\cite{bonne2019distributed}. Closing these gaps remains an intriguing open question.

    \item[Randomized algorithms] While we focus on deterministic algorithms in this paper, many of our lower bounds extend to randomized algorithms, as shown in \Cref{sect:random}. Yet the role of randomness in reducing bandwidth complexity for dynamic distributed subgraph finding is not well understood: Which problems exhibit an advantage for randomized over deterministic algorithms?

    \item[Round-bandwidth tradeoffs] While our complete characterization of the membership-listing problem applies to $r$-round algorithms with an arbitrary round number $r$, the remainder of our results---and much of the existing literature---primarily focuses on the one-round scenario.

    A particularly illustrative case is the membership-listing of cliques: In the one-round setting, the bandwidth complexity is $\Theta(\sqrt{n})$, whereas allowing two rounds reduces the bandwidth complexity to $\Theta(1)$~\cite{bonne2019distributed}. This stark contrast shows the potential benefits of additional communication rounds in lowering bandwidth requirements. Exploring how increased round numbers influence bandwidth complexity remains an interesting avenue for future research.

    \item[Toward complete characterizations of the remaining problems] In this work, we provide partial characterizations for one-round membership-detection and listing. Can these characterizations be completed? What can be said about the detection problem? We discuss several specific open questions in \Cref{sect:memdect_open,sect:list_Problems}.
\end{description}

\subsection{Membership Detection}\label{sect:memdect_open}
In this work, we provide a complete characterization of the one-round bandwidth complexity of $\MemDetect(H)$ for all choices of $H$ under \emph{node insertions}. However, as shown in \Cref{tab:memDect}, several open questions remain for the other three types of topological changes.

For edge deletions, there is still a gap between the current upper bound of $O(\log n)$ (\Cref{lem: memlist edge deletion}) and the lower bound of $\Omega(1)$ (\Cref{memdetect_trivial_LB2}) for any complete multipartite graph $H$ that is neither a clique nor a star. In particular, the following question remains unresolved.

\begin{question}
    What is the one-round bandwidth complexity of $\MemDetect(C_4)$ under edge deletions?
\end{question}

For node deletions, the only unsolved case is when $\diam(H) = \rad(H) =2$ and $\nediam(H)=3$: There is still a gap between the current upper bound of $O(\log n)$ (\Cref{lem: memlist node deletion}) and the lower bound of $\Omega(1)$ (\Cref{memdetect_trivial_LB2}). In particular, the following question remains unresolved.

\begin{question}
    What is the one-round bandwidth complexity of $\MemDetect(C_5)$ under node deletions?
\end{question}

\subsection{Listing}\label{sect:list_Problems}
For $\List(H)$, in \Cref{tab:listing}, the only remaining open problem is to determine the one-round bandwidth complexity of $\List(H)$ under node deletions
for the case of $\rad(H) = \diam(H) = 2$. 

\begin{question}
    What is the one-round bandwidth complexity of $\List(H)$ under node deletions
for the case of $\rad(H) = \diam(H) = 2$. 
\end{question}

It is challenging to obtain a non-trivial lower bound for this case, as the high degree of symmetry in $H$ makes it difficult to construct a ``worst-case'' initial graph $G^0$. %Hence the lower bound for a general one-round $\List(H)$ algorithm under node deletions is still unknown in this case. 
However, we conjecture that there is a matching lower bound $\Omega(\log n)$ due to some of our observations below.

First, all the $\List(H)$ algorithms we have used in our upper bounds so far are \underline{stable} ones, where a \underline{stable} algorithm is defined as follows.

\begin{definition}
    A one-round $\List(H)$ algorithm, under edge deletions or node deletions, is \underline{stable} if it satisfies the following property: For all $u \in V(G^0)$ in the initial graph $G^0$, if $u$ lists some copy $H_i$ of $H$ initially, then $u$ only stops listing $H_i$ in the exact round when $H_i$ is destroyed. That is, $u$ does not stop listing $H_i$ prematurely. 
\end{definition}

If we restrict our attention to \underline{stable} algorithms, the desired $\Omega(\log n)$ lower bound can be obtained.

\begin{theorem}\label{thm:stableLB}
    If $\rad(H) = \diam(H) = 2$, then any one-round \underline{stable} algorithm for $\List(H)$ under node deletions must have bandwidth complexity $\Omega(\log n)$.    
\end{theorem}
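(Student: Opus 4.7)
The plan is to adapt the template of \Cref{thm:ND_list_LB} to the symmetric setting $\rad(H)=\diam(H)=2$, using stability in place of the distance-$3$ structure that is no longer available. Since $\diam(H)=2$, there exist non-adjacent nodes $v,w\in V(H)$, and by assumption $\dist_H(v,w)=2$. Writing $V(H)=\{v,w,u_1,\dots,u_{m-2}\}$, I would construct $G^0$ by replacing $v$ with an independent set $S_v=\{v_1,\dots,v_n\}$ and $w$ with $S_w=\{w_1,\dots,w_n\}$, where each $v_i$ inherits the neighborhood $N_H(v)$ and each $w_j$ inherits $N_H(w)$, keeping the $u_k$'s unchanged. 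This yields $n^2$ copies $H_{i,j}=\{v_i,w_j,u_1,\dots,u_{m-2}\}$ of $H$ in $G^0$.

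By stability, every copy has an initial lister that persists until the copy is destroyed; counting (lister, copy) pairs and applying the pigeonhole principle to the $2n+m-2$ nodes, some node $L$ must initially list $k=\Omega(n)$ copies. Writing $I$ (respectively $J$) for the set of $v$-indices (respectively $w$-indices) that appear in the copies listed by $L$, the inequality $k\le |I|\cdot|J|$ forces $\max(|I|,|J|)=\Omega(\sqrt{n})$; without loss of generality $|I|=\Omega(\sqrt{n})$.

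The central use of stability is the following. For each $i\in I$, deletion of $v_i$ must cause $L$ to stop listing exactly the copies in its list that contain $v_i$; because these subsets differ across $i\in I$, $L$'s decision function in the round of the deletion must distinguish $\Omega(\sqrt{n})$ distinct events. If $L$ is not adjacent to the $v_i$'s (for instance $L\in S_w$, or $L=v_{i'}$ for some $i'\in I$, or $L=u_k$ with $u_k\notin N_H(v)$), then the only information about $v_i$'s deletion reaches $L$ through its $O(1)$ neighbors in the constant-size set $N_H(v)$, giving $O(B)$ bits of input per round and forcing $B=\Omega(\log|I|)=\Omega(\log n)$. A symmetric argument, using $|J|$ whenever $|J|=\Omega(\sqrt{n})$, handles the case where $L$ is adjacent to all $v_i$'s but not to all $w_j$'s.

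The hard part will be the universal case $L\in N_H(v)\cap N_H(w)$, where $L$ detects every $v_i$ and every $w_j$ deletion directly and the above counting yields no bound. I would attack this by choosing the pair $(v,w)$ so as to minimize $|N_H(v)\cap N_H(w)|$, combined with a refined pigeonhole that charges copies to non-universal listers: since $|N_H(v)\cap N_H(w)|$ is a constant depending only on $H$, and each universal lister is additionally constrained by the limited information it receives about deletions of the $u_k\notin N_H(L)$, a charging/exchange argument should show that $\Omega(\sqrt{n})$ of the $n^2$ copies must have their initial lister outside $N_H(v)\cap N_H(w)$, reducing the situation to the case already handled. A fallback, in case this reduction fails for specific highly symmetric $H$, is to augment $G^0$ with small gadgets attached to each $u_k$ so that any lister (universal or not) must solve a nontrivial $\Theta(\log n)$-bit indexing problem when some $u_k$ is deleted, propagating the lower bound through.
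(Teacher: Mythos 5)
Your proposal correctly identifies the two key ingredients — a blown-up hard instance plus the use of stability to preserve the listing assignment through intermediate deletions — but the construction you choose does not close the gap, and the "hard part" you flag is the real gap.

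The structural problem is that you replace only $v$ and $w$ with independent sets, keeping $u_1,\dots,u_{m-2}$ as single nodes. If the heavy lister $L$ is a node $u_k$ with $u_k\in N_H(v)\cap N_H(w)$, then $L$ is directly adjacent to every $v_i$ and every $w_j$ in $G^0$, so $L$ observes every deletion of a blown-up node with no information bottleneck at all and can legitimately stop listing correctly with zero bandwidth. Your proposed repairs do not resolve this: minimizing $|N_H(v)\cap N_H(w)|$ does not help because even a single universal $u_k$ can absorb all $n^2$ copies in the pigeonhole, and there is no "additional constraint" coming from deletions of the $u_{k'}$ — those are single, non-replicated nodes, and their deletions carry only constant information. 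Moreover, even in the intermediate subcase $L=u_k\in N_H(v)\setminus N_H(w)$, your argument needs $|J|=\Omega(\sqrt n)$, but the pigeonhole only guarantees $\max(|I|,|J|)=\Omega(\sqrt n)$; if the large one is $|I|$ you are stuck, because $L$ sees the $v_i$'s directly.

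The paper sidesteps all of this with a different construction and a more active use of stability. It replaces \emph{every} node $u_i$ of $H$ with an independent set $S_i$ of size $n$ (giving $n^m$ copies), uses pigeonhole repeatedly to pin down a heavy lister $u_{1,1}$ and a family of $n/m$ copies of the shape $\{u_{1,i},u_{2,1},\dots,u_{m-1,1},u_{m,j}\}$, and chooses $u_m$ at distance $2$ from $u_1$ (possible since $\rad(H)=\diam(H)=2$ forces $\ecc_H(u_1)=2$, so no node of $H$ is universal). It then exploits stability in a way your proof does not: it deletes \emph{all} blown-up middle nodes $u_{i,j}$ for $i\in[2,m-1]$, $j\geq 2$, over several rounds. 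Stability guarantees $u_{1,1}$ still lists the $n/m$ surviving copies afterward, but the graph has been thinned so that $u_{1,1}$ now has only the $\leq m-2$ neighbors $u_{2,1},\dots,u_{m-1,1}$. Since $u_{1,1}$ is a blown-up copy of $u_1$, it is not adjacent to the $\Omega(\sqrt n)$ siblings $u_{1,i}$ nor to any $u_{m,j}$, so deleting a sibling or an endpoint gives the required $\Omega(\log n)$ indexing bottleneck through $O(1)$ edges. In short: stability is not just "the lister persists," it is the license to prune the instance to a shape where the heavy lister, wherever it happens to sit, has constant degree — exactly the property your two-set blow-up cannot force when $L$ is a universal middle node. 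Your proposal therefore has a genuine gap and would need to be rebuilt around the full-blow-up-plus-thinning construction.
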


\begin{proof}
    The setup for this proof is identical to the one for \Cref{thm list edel nerad=rad=2}, so we omit the details and only highlight the important parts.
    
    Let $|V(H)| = m$, and $V(H) = \{u_1, u_2, \ldots, u_m\}$. Construct the initial graph $G^0$ in the same way as in the proof for \Cref{thm list edel nerad=rad=2}. Now we have $|V(G^0)| = mn = \Theta(n)$, and $G^0$ contains $n^m$ copies of $H$ in the form $\{u_{1,i_1}, u_{2,i_2}, \ldots, u_{m,i_m}\}$.

    Using the same repeated application of the pigeonhole principle, we may assume, without loss of generality, that the node $u_{1,1}$ that is listing at least $n/m$ copies of $H$ in the form $\{u_{1,i}, u_{2,1}, u_{3,1}, \ldots, u_{m-1, 1}, u_{m, j}\}$. Additionally, $\dist(u_1, u_m) = 2$ in $H$.

    Now we delete all nodes $u_{i,j}$, for $i \in [2,m-1]$ and $j \in [2,n]$, over several rounds. Note that no copy of $H$ in the form $\{u_{1,i}, u_{2,1}, u_{3,1}, \ldots, u_{m-1, 1}, u_{m, j}\}$ is destroyed in this process. Since we assume the algorithm is \underline{stable}, $u_{1,1}$ must still be listing at least $n/m$ copies of $H$ in this form.

    Using the same analysis as in the proof for \Cref{thm list edel nerad=rad=2}, we only need to consider the following two scenarios.

    \paragraph{Case 1} Suppose $u_{1,1}$ is listing at least $\sqrt{n/m}$ copies of $H$ in the form $\{u_{1,i}$, $u_{2,1}$, $u_{3,1}$, $\ldots$, $u_{m-1, 1}$, $u_{m, j}\}$ where the values of $i$ are distinct. Let $I$ be the set of distinct values of $i$ here.

    We now delete the node $u_{1,i^*}$ for some $i^* \in I \setminus \{1\}$, then, $u_{1,1}$ must stop listing any copies of $H$ containing $u_{1,i^*}$ by the end of this round, and there is at least one such copy. However, $u_{1,i^*}$ is not adjacent to $u_{1,1}$ or any $u_{m,j}$, so $u_{1,i^*}$ and $u_{1,1}$ have at most $m-2$ common neighbors, $\{u_{2,1}, u_{3,1}, \ldots, u_{m-1, 1}\}$, in the current network that can detect the deletion of $u_{1,i^*}$ immediately. Thus, $u_{1,1}$ can only receive at most $m-2$ messages to help it distinguish $i^*$ from the total pool of at least $\sqrt{n/m}-1  = \Omega(\sqrt{n})$ possibilities, thus requiring $\Omega(\log \sqrt{n}) = \Omega(\log n)$ bandwidth.

    \paragraph{Case 2} Suppose $u_{1,1}$ is listing at least $\sqrt{n/m}$ copies of $H$ in the form $\{u_{1,i}$, $u_{2,1}$, $u_{3,1}$, $\ldots$, $u_{m-1, 1}$, $u_{m, j}\}$ where the values of $j$ are distinct. Let $J$ be the set of distinct values of $j$ here.

     We now delete the node $u_{m,j^*}$ for some $j^* \in J$. By a similar argument to the previous case, $u_{1,1}$ can only receive at most $m-2$ messages to help it distinguish $j^*$ from the total pool of at least $\sqrt{n/m}$ possibilities, thus requiring $\Omega(\log n)$ bandwidth as well.
\end{proof}

It appears to be quite challenging to design a \underline{non-stable} algorithm for $\List(H)$ under node deletions to break the lower bound of \Cref{thm:stableLB}, so we conjecture that the same lower bound also applies to general algorithms.
%for $\rad(H) = \diam(H) = 2$, any one-round algorithm for $\List(H)$, under node deletions, also has bandwidth complexity $\Omega(\log n)$. 

\bibliographystyle{alpha}
\bibliography{reference}
%\newpage
\appendix
% \yanyu{add back due to some missing reference, can discuss if we include. Technically this should not be new, but maybe cuz it's too simple i could not find a citation. The only citation is the website from wikipedia 
% \href{https://www.graphclasses.org/classes/gc_1237.html}{link}}
% \input{archive/structure}
% \input{archive/backup}
% \input{archive/archive}

\section{Randomized Subgraph Finding}\label{sect:random}

In the appendix, we show that many of our lower bounds can be extended to the randomized setting using the approach of Bonne and Censor-Hillel~\cite{bonne2019distributed}. In their work~\cite{bonne2019distributed}, the lower bounds are proved via hard dynamic instances that consist of a sequence of topological changes, where the final topological change creates or removes a copy of $H$. In the deterministic setting, the algorithm must be correct for \emph{every} admissible sequence. To argue about randomized algorithms, they~\cite{bonne2019distributed} allow a constant error probability: The creation or removal of a copy of $H$ due to the final topological change is correctly identified with probability at least $1-\varepsilon$, for some constant $\varepsilon>0$. While Bonne and Censor-Hillel~\cite{bonne2019distributed} did not specify a single formal randomized model, the above setting is compatible with several natural variants: global vs.\ local failure probabilities; one-sided vs.\ two-sided error; local error relative to nodes, topological changes, or copies of $H$.

All the deterministic lower bounds by Bonne and Censor-Hillel~\cite{bonne2019distributed} follow a common template: There is a \emph{fixed} node $v$ such that each admissible update sequence yields a distinct output (e.g., creation or removal of a particular copy of $H$) for $v$. Since $v$ must output correctly in all cases, if there are $X$ possible sequences, $v$ must acquire $\Omega(\log X)$ bits of information.

To extend such a lower bound to the randomized setting, let $\mathcal{D}$ be the uniform distribution over the $X$ admissible update sequences in the hard instance. If there is a randomized algorithm that allows $v$ to output correctly with probability at least $1-\varepsilon$ on $\mathcal{D}$, then there exists a fixing of the randomness that yields a deterministic algorithm that is correct on a $(1-\varepsilon)$-fraction of the sequences. The deterministic lower bound still applies to this derandomized algorithm after replacing $X$ with $(1-\varepsilon)X$, which preserves the asymptotic $\Omega(\log X)$ bound.

\paragraph{\Cref{sect:memList,sect:memDect}: The lower bounds extend verbatim}
All lower bounds in \Cref{sect:memList,sect:memDect} follow exactly the paradigm above: Each sequence of topological updates maps to a distinct outcome; a predetermined node $v$ must be correct in every case; and the proof quantifies over the admissible update sequences. Hence, by the randomized extension of Bonne and Censor-Hillel~\cite{bonne2019distributed}, all these lower bounds extend to the randomized setting with the same asymptotics.

\paragraph{\Cref{sect:list}: Extension with a small modification}
In the lower bound proofs (\Cref{thm list edel nerad=rad=2} and \Cref{thm:ND_list_LB}) in \Cref{sect:list}, the node $v$ responsible for producing the output is not fixed \emph{a priori}, so the preceding randomized extension does not apply immediately. However, a simple workaround suffices. In the following discussion, we focus on the edge deletion model (\Cref{thm list edel nerad=rad=2}). The case of node deletion (\Cref{thm:ND_list_LB}) can be handled similarly.

We use the same hard instance in the proof of \Cref{thm list edel nerad=rad=2}, with a minor modification that, in the final step, a \emph{uniformly random} edge in the constructed graph is deleted. Suppose a randomized algorithm succeeds with probability at least $1-\varepsilon$ over the hard instance. 

By a union bound over all $e \in E(H)$, there exists a fixing of the randomness so that, for at least a $1-\varepsilon|E(H)|$ fraction of the copies of $H$, the algorithm correctly reacts to the deletion of \emph{any} edge of that copy. As long as $\varepsilon|E(H)|<1$, the original argument in the proof of \Cref{thm list edel nerad=rad=2} can be applied to the derandomized algorithm after restricting attention to a $1-\varepsilon|E(H)|$ fraction of the copies of $H$. The asymptotic lower bounds therefore extend to randomized algorithms.

\paragraph{\Cref{sect:cliqueLB}: The randomized extension fails}
It can be formally proved that the randomized extension technique does not work for our clique-finding lower bounds (\Cref{lem: memdect lb ks} and \Cref{thm:mixed}) in \Cref{sect:cliqueLB}, as we can design randomized algorithms that break the two lower bounds on our hard instances. 

The reason that the technique does not work is that our current lower bound PROOF relies on the correctness of the algorithm on nearly all instances, and not just a constant fraction of the instances. For example, one part of the lower bound argument is that we want to argue that for a length-2 path $u\!-\!x\!-\!v$, if the bandwidth and the number of rounds are not enough, one endpoint $u$ cannot learn the $\ID$ of the other endpoint $v$, so when a new edge incident to $u$ appears, $u$ cannot tell whether the edge is $\{u,v\}$ (so that a triangle is formed) or the edge is $\{u,w\}$ for some other node $w$ (so that a triangle is not formed).  However, as long as $u$ can learn $X$ bits of information from $v$, this is already sufficient for $u$ to differentiate $\ID(v)$ from $1 - O(1/2^X)$ fraction of the identifiers, meaning that the algorithm may already succeed on an overwhelming fraction of update sequences. Our lower bound argument crucially requires that we choose $w$ adversarially according to the behavior of the algorithm. This is fundamentally different from all other lower bound proofs in this paper.

For simplicity, we restrict attention to triangles (i.e., $\MemDetect(K_3)$ and $\Detect(K_3)$) in the subsequent discussion. The hard instance in our lower bounds in \Cref{sect:cliqueLB} is constructed as follows. Start with $t$ length-$2$ paths. Then perform one edge insertion. There are two cases: (1) inserting an edge connecting the two endpoints of a length-$2$ path, forming a triangle, or (2) inserting an edge concatenating two length-$2$ paths, not forming a triangle. If node insertion is allowed (in the setting of \Cref{thm:mixed}), there is a third case: (3) a new node is inserted with edges incident to the two endpoints of a length-$2$ path, not forming a triangle.

We give a randomized algorithm that succeeds with probability $1 - O(1/\poly\log\log\log n)$ using bandwidth $O(\log\log\log\log n)$ over our hard instances, confirming that the lower bound proofs in \Cref{sect:cliqueLB} do \emph{not} admit a randomized extension. Let each node $v$ generate a short identifier $\ID^\ast(v)$ consisting of $O(\log\log\log\log n)$ uniformly random bits. Whenever a new edge appears (either by edge insertion or due to node insertion), the two endpoints first exchange their own $\ID^\ast$s as well as the multiset of their neighbors' $\ID^\ast$s. In addition, each endpoint of the inserted edge sends a signal to all of its neighbors.

Suppose the inserted edge $\{u,v\}$ closes a triangle $\{u,v,w\}$ (Case 2 above). We show that the triangle can be detected in the setting of both \Cref{lem: memdect lb ks} and  \Cref{thm:mixed}.
\begin{itemize}
    \item Consider $\MemDetect(K_3)$ under edge insertion (the setting of \Cref{lem: memdect lb ks}).  Then $w$ deterministically detects the triangle because it receives two signals from its neighbors. Moreover, $u$ (and symmetrically $v$) detects the triangle provided that $\ID^\ast(u), \ID^\ast(v), \ID^\ast(w)$ are all distinct, which occurs with probability $1 - O(1/\poly\log\log\log n)$.
    \item Consider $\Detect(K_3)$ under both node and edge insertion (the setting of \Cref{thm:mixed}). In this setting,  We only let $u$ and $v$ (the endpoints of the inserted edge)  detect the triangle, as node $w$ cannot do this because it cannot differentiate between Cases 1 and 3. Similarly, $u$ and $v$ successfully detect the triangle with probability $1 - O(1/\poly\log\log\log n)$.
\end{itemize}
 
\paragraph{An alternative approach} An alternative way to show that our lower bounds in \Cref{sect:memList,sect:memDect} extend to the randomized setting is via reductions from two-party communication problems. We take \Cref{lem: memlist lb edge insertions complete} as an example and demonstrate its randomized extension via a reduction from the \emph{set-disjointness} problem: Alice holds a subset $S_A \in [n]$, Bob holds a subset of $S_B \in [n]$, and they want to decide whether $S_A \cap S_B = \emptyset$. It is well-known~\cite{roughgarden2016communication} that the problem requires $\Omega(n)$ bits of communication, even allowing a constant error probability of $\varepsilon < 1/2$.

In the proof of \Cref{lem: memlist lb edge insertions complete}, the hard instance construction is parameterized by a choice of subset $U' \subseteq U = \{u_1, u_2, \ldots u_n\}$. Different choice of $U'$ leads to different outputs of $v$: Each $u_i \in U'$ corresponds to a distinct copy $H_i$ of $H$ that must be listed by $v$. Suppose there is an $r$-round algorithm $\mathcal{A}$ with bandwidth $B=o(n/r)$ that succeeds with probability $1-\varepsilon$ over the hard instance. We show how to use the algorithm to obtain an $o(n)$-bit set-disjointness protocol with the same success probability $1-\varepsilon$, contradicting the known $\Omega(n)$ set-disjointness lower bound and yields the desired $\Omega(n/r)$ bandwidth lower bound.

We simulate the dynamic graph construction and algorithm $\mathcal{A}$ between Alice and Bob with $U' = \{u_i \, | \, i \in S_A\}$. When above process finishes, Bob decides that $S_A \cap S_B = \emptyset$ if $i \notin S_B$ for all $H_i$ listed by $v$. Since  $\mathcal{A}$ is an $r$-round algorithm with bandwidth $B=o(n/r)$ and node $v$ has degree at most two in our hard instance construction, the number of bits communicated between Alice and Bob is $O(Br)=o(n)$.
 
A similar reduction from set-disjointness applies to all other lower bound proofs in \Cref{sect:memList}. For the lower bounds proofs (\Cref{lem: memdect lb edge insertion} and \Cref{lem: memdect lb node insertion}) in \Cref{,sect:memDect}, we need to consider a different communication problem: In the \emph{index} problem, Alice holds a subset $S_A \subseteq [n]$, Bob holds an index $b \in [n]$,  Bob needs to decide whether $b \in S_A$ after receiving some message from Alice (the communication is only one-way). It is well-known~\cite{roughgarden2016communication} that the problem requires $\Omega(n)$ bits of communication, even allowing a constant error probability of $\varepsilon < 1$.

We take \Cref{lem: memdect lb edge insertion} as an example, where the main difference from the proof of \Cref{lem: memlist lb edge insertions complete} is that here the hard instance construction is parameterized by not only $U' \subseteq U = \{u_1, u_2, \ldots, u_n\}$ but also a choice of $u_j \in U$, and then $v$ should decide whether $u_j \in U'$. The reduction from the index problem is then obtained by setting $U' = \{v_i \, | \, i \in S_A\}$ and $j = b$.

\end{document}